\documentclass[11pt,a4paper,oneside]{scrartcl}

\usepackage[T1]{fontenc}
\usepackage[utf8]{inputenc}
\usepackage{libertinus}
\usepackage{amsmath,amsthm}
\usepackage[libertinus,slantedGreek,varbb]{newtxmath}
\usepackage[scr=rsfs,frak=euler]{mathalpha}
\DeclareOldFontCommand{\rm}{\normalfont\rmfamily}{\mathrm}
\usepackage{microtype}
\usepackage[a4paper,margin=2.65cm]{geometry}
\usepackage{enumitem}
\usepackage{xcolor}
\usepackage{tikz}
\usetikzlibrary{arrows.meta,calc}
\usepackage{tikz-cd}
\definecolor{phys}{HTML}{0072B2}   
\definecolor{aux}{HTML}{D55E00}    
\definecolor{gsblk}{HTML}{E69F00}  
\definecolor{odblk}{HTML}{CC79A7}  
\definecolor{exblk}{HTML}{56B4E9}  
\colorlet{notecol}{black!62}
\colorlet{framecol}{black!4}
\tikzset{
 stability diagram/.style={x=1cm,y=1cm,font=\small,
   >={Latex[length=4pt,width=3pt]},line width=.5pt,line cap=round,line join=round},
 stability arrow/.style={->,line width=.6pt},
 fig note/.style={font=\footnotesize,text=notecol},
 fig title/.style={font=\small},
 blk/.style={fill=#1,draw=white,line width=.8pt},
}
\tikzcdset{arrow style=tikz,
  diagrams={>={Latex[length=4pt,width=3pt]},line width=.5pt,font=\small},
  panel label/.style={execute at end picture={
   \node[fig note,anchor=base] at ([yshift=-2.3em]
     $(\tikzcdmatrixname-2-1.base)!.5!(\tikzcdmatrixname-2-2.base)$) {#1};}}}
\newcommand\bmSide{1.75}\newcommand\bmGs{.4}
\colorlet{blkzero}{black!3}
\newcommand\blkz{\textcolor{black!45}{$0$}}
\newcommand\blockmat[9]{%
 \begin{scope}[shift={(#1-.5*\bmSide,.5*\bmSide)}]
  \path[blk=#2] (0,0) rectangle (\bmGs,-\bmGs);
  \path[blk=#3] (\bmGs,0) rectangle (\bmSide,-\bmGs);
  \path[blk=#4] (0,-\bmGs) rectangle (\bmGs,-\bmSide);
  \path[blk=#5] (\bmGs,-\bmGs) rectangle (\bmSide,-\bmSide);
  \node[font=\footnotesize] at (.5*\bmGs,-.5*\bmGs) {#6};
  \node[font=\footnotesize] at ({.5*(\bmSide+\bmGs)},-.5*\bmGs) {#7};
  \node[font=\footnotesize] at (.5*\bmGs,{-.5*(\bmSide+\bmGs)}) {#8};
  \node[font=\footnotesize] at ({.5*(\bmSide+\bmGs)},{-.5*(\bmSide+\bmGs)}) {#9};
  \draw[line width=.6pt] (.1,.04) -- (-.04,.04) -- (-.04,-\bmSide-.04) -- (.1,-\bmSide-.04);
  \draw[line width=.6pt] (\bmSide-.1,.04) -- (\bmSide+.04,.04)
                         -- (\bmSide+.04,-\bmSide-.04) -- (\bmSide-.1,-\bmSide-.04);
 \end{scope}}
\usepackage{hyperref}

\hypersetup{
 hidelinks,
 bookmarksnumbered=true,
 pdftitle={Bulk gap stability and spectral flow for fermions with weak long-range interactions},
 pdfauthor={Stefan Teufel}
}

\setkomafont{author}{\sffamily}
\setkomafont{date}{\sffamily}
\setkomafont{sectionentry}{\normalfont\normalsize\sffamily}
\DeclareTOCStyleEntry[beforeskip=0pt,linefill=\TOCLineLeaderFill]{tocline}{section}
\DeclareTOCStyleEntry[entryformat=\normalfont\sffamily,pagenumberformat=\normalfont\sffamily]{tocline}{subsection}
\setlist[itemize]{topsep=4pt,itemsep=2pt,parsep=1pt}
\setlist[enumerate]{topsep=4pt,itemsep=2pt,parsep=1pt}
\allowdisplaybreaks

\newtheorem{theorem}{Theorem}[section]
\newtheorem{proposition}[theorem]{Proposition}
\newtheorem{lemma}[theorem]{Lemma}
\newtheorem{corollary}[theorem]{Corollary}
\theoremstyle{definition}
\newtheorem{definition}[theorem]{Definition}
\newtheorem{hypothesis}[theorem]{Hypothesis}
\theoremstyle{remark}
\newtheorem{remark}[theorem]{Remark}

\let\proofqedsymbol\qedsymbol
\newcommand{\pushStatementQED}{\renewcommand{\qedsymbol}{\ensuremath{\diamond}}\pushQED{\qed}}
\AddToHook{env/hypothesis/begin}{\pushStatementQED}
\AddToHook{env/hypothesis/end}{\popQED}
\AddToHook{env/definition/begin}{\pushStatementQED}
\AddToHook{env/definition/end}{\popQED}
\AddToHook{env/remark/begin}{\pushStatementQED}
\AddToHook{env/remark/end}{\popQED}
\AddToHook{env/proof/begin}{\let\qedsymbol\proofqedsymbol}

\makeatletter
\newcommand{\keepwithproofpointer}{\@endparpenalty=10000\relax}
\makeatother

\newcommand{\A}{\mathcal A}
\newcommand{\Hsp}{\mathfrak h}
\newcommand{\C}{\mathbb C}
\newcommand{\R}{\mathbb R}
\newcommand{\Z}{\mathbb Z}
\newcommand{\1}{\mathbf 1}
\newcommand{\dd}{\mathrm d}
\newcommand{\I}{\mathrm{i}}
\newcommand{\E}{\mathrm{e}}
\newcommand{\D}{\mathrm D}
\newcommand{\dist}{\operatorname{dist}}
\newcommand{\supp}{\operatorname{supp}}
\newcommand{\diam}{\operatorname{diam}}
\newcommand{\spec}{\operatorname{spec}}
\DeclareMathOperator{\ran}{ran}
\newcommand{\ball}{\mathrm B}
\newcommand{\vac}{\mathrm{vac}}
\newcommand{\nf}{\mathrm{nf}}
\newcommand{\at}{\mathrm{pv}}
\newcommand{\st}{\mathrm{st}}
\newcommand{\sgnh}{\mathsf S}
\newcommand{\ph}{\mathrm{ph}}
\newcommand{\loc}{\mathrm{loc}}
\newcommand{\pol}{\mathrm{pol}}
\newcommand{\N}{\mathbb N}
\newcommand{\F}{\mathbb F}
\newcommand{\EE}{\mathbb E}
\newcommand{\cB}{\mathcal B}
\newcommand{\cC}{\mathcal C}
\newcommand{\cG}{\mathcal G}
\newcommand{\cH}{\mathcal H}
\newcommand{\cI}{\mathcal I}
\newcommand{\cL}{\mathcal L}
\newcommand{\cP}{\mathcal P}
\newcommand{\cS}{\mathcal S}
\newcommand{\cT}{\mathcal T}
\newcommand{\cV}{\mathcal V}
\newcommand{\frakb}{\mathfrak b}
\newcommand{\id}{\mathrm{id}}
\newcommand{\CAR}{\operatorname{CAR}}
\newcommand{\dvol}{d_{\rm vol}}
\newcommand{\Ccell}{C_{\rm cell}}
\newcommand{\Dcell}{D_{\rm cell}}
\title{Bulk gap stability and spectral flow\\
for fermions with weak long-range interactions}
\author{Stefan Teufel\thanks{Fachbereich Mathematik, Universit\"at
T\"ubingen, Auf der Morgenstelle 10, 72076 T\"ubingen, Germany.}}
\date{{October 29, 2026}}

\begin{document}

\maketitle

\begin{abstract}
We prove that weak long-range many-body interactions preserve the bulk gap of a free fermion system and that a quasi-local spectral flow transports the free ground state to ground states of the interacting system.
Let \(\Phi_h\) be a free fermionic interaction, possibly including pairing terms, whose infinite-volume Majorana coefficient operator \(h\) has no spectrum in \((-g,g)\), and let \(\cV\) be an even self-adjoint many-body interaction.
For sufficiently small \(|\lambda|\), we construct pure ground states \(\omega_{\lambda}\) of the interacting bulk system \(\Phi_{\lambda}=\Phi_h+\lambda\cV\) and quasi-local automorphisms \(\beta_{0,\lambda}\) such that \(\omega_{\lambda}=\omega_0\circ\beta_{0,\lambda}\), where \(\omega_0\) is the free ground state of \(\Phi_h\).
Each \(\omega_\lambda\) has a GNS Hamiltonian with gap at least \(g-C_*|\lambda|\|\cV\|_*\), where \(\|\cdot\|_*\) is the relevant interaction norm.

Our result allows polynomial decay of both the free coefficients and the many-body interaction.
For two-site interactions on a uniformly discrete point set \(\Gamma\subset\R^d\), a sufficient condition is decay as \((1+|x-y|)^{-\varpi}\) with \(\varpi>2d\), for both the hopping and pairing coefficients and the interaction terms.
The spectral-flow generator preserves the spatial exponent of the polynomial interaction norm, and its automorphisms satisfy locality bounds at that exponent.
If \(h\) and \(\cV\) have stretched-exponential decay, so does the spectral-flow generator.

The proof constructs the spectral flow and establishes the gap through an excitation-number comparison, using polynomial propagation bounds for evolved interactions proved in a companion paper.
\end{abstract}

\clearpage
\tableofcontents
\clearpage

\section{Introduction}
\label{sec:introduction}

For a quadratic fermionic system, a one-particle gap yields a gap above its quasifree many-body ground state.
This includes band insulators, whose ground states are Fermi seas, as well as systems with pairing, whose ground states are Bogoliubov vacua.
We prove stability of this bulk gap under the addition of weak, even, long-range many-body interactions and construct a smooth quasi-local automorphic path connecting the free ground state to interacting ground states.
The difficulty is that the added interactions are extensive: smallness in an interaction norm does not imply a small bounded perturbation of the many-body Hamiltonian.
We obtain both conclusions directly in infinite volume, assuming only a bulk gap for the free fermion system.

Gap stability for weakly interacting fermions has been established by Hastings \cite{Hastings19}, De Roeck and Salmhofer \cite{DRS19}, and Koma \cite{Koma20}, with estimates uniform in the size of the finite system.
The result of De Roeck and Salmhofer also allows polynomial spatial decay.
We build on Hastings' free-fermion reduction and the spectral-flow strategy of Bravyi and Hastings \cite{BH11,Hastings19}, using the infinite-volume approach of Nachtergaele, Sims, and Young \cite{NSY19,NSY22,NSY24}.
The new contribution here is to prove stability of the gap together with quasi-local ground-state transport for weakly interacting fermions directly in infinite volume, assuming only a bulk gap for the free system.
For polynomially decaying free and many-body interactions, the transport is smooth in the coupling, and both its generator and its automorphisms satisfy quantitative locality bounds with the same spatial decay exponent.

Our infinite-volume approach to gap stability and spectral flow has three motivations.
First, the bulk-gap assumption also covers models for which uniformly gapped finite-volume realizations are unavailable.
Second, quantitative spectral flow provides the regular ground-state paths needed in applications such as adiabatic and response theory.
Third, and most importantly for us, the gap and transport results provide the foundation for the companion theorems on ground-state uniqueness and a strong form of the LPPL principle (local perturbations perturb locally) for low energy states proved in~\cite{LPPLV3}.

While band insulators are an important application and provide useful intuition for the proof, we formulate the results more generally for gapped free fermion systems, including Hamiltonians with pairing.
Let \(\Phi_h\) be a free interaction on a countable metric space of uniform polynomial volume growth, with spatially decaying Majorana coefficient operator \(h\) and \(\spec(h)\cap(-g,g)=\varnothing\) for some \(g>0\).
Section~\ref{sec:setting} explains how \(\Phi_h\) is defined in terms of \(h\) using Majorana coordinates.
The free ground state \(\omega_0\) of \(\Phi_h\) is pure and quasifree.
In the number-conserving case, \(\omega_0\) is the Fermi sea obtained by filling the negative-energy one-particle states.
Adding a self-adjoint, even, spatially decaying interaction \(\mathcal V\) with coupling \(\lambda\) gives the interacting family
\[
\Phi_\lambda:=\Phi_h+\lambda\mathcal V.
\]
For a suitable constant \(C_*\ge1\), independent of \(\mathcal V\), we construct a cocycle \((\beta_{\lambda,\lambda'})\) of quasi-local automorphisms on the coupling interval
\[
I_*:=[-\lambda_*,\lambda_*],
\qquad
\lambda_*:=\frac{g}{4C_*\|\mathcal V\|_*},
\]
such that the states
\[
\omega_\lambda:=\omega_0\circ\beta_{0,\lambda}
\]
are pure gapped ground states of \(\Phi_\lambda\), whose GNS Hamiltonians satisfy
\[
\operatorname{spec}(L_\lambda)\subset\{0\}\cup
\bigl[g-C_*|\lambda|\|\mathcal V\|_*,\infty\bigr).
\]
Here \(L_{\lambda}\ge0\) is the GNS Hamiltonian of the pair \((\omega_{\lambda},\Phi_{\lambda})\), with one-dimensional kernel, and \(\|\cdot\|_*\) is the relevant weighted interaction norm.

We prove this result in two locality regimes.
In the polynomial case, both \(h\) and \(\cV\) have a finite weighted spatial norm at exponent \(p_\Phi>\dvol\).
The norm \eqref{eq:card-norm} for \(\cV\) includes an exponential weight in the support size, which imposes no additional restriction for finite-body interactions.
The metric space must also satisfy the cell-partition condition of Hypothesis~\ref{hyp:cell-partitions}, which holds, for example, for uniformly discrete Euclidean point sets.
For two-site terms satisfying \(\|\cV(\{x,y\})\|\le C(1+d(x,y))^{-\varpi}\) and hopping and pairing coefficients with the same pointwise decay, it suffices that
\[
  \varpi>2\dvol.
\]
The second regime is a short-range result requiring stretched-exponential decay of both \(h\) and \(\cV\).
If \(h\) and \(\cV\) have stretched-exponential decay of order \(0<\theta_{\rm in}\le1\), the spectral-flow generator and its automorphisms have locality of every order \(0<\theta<\theta_{\rm in}\).

Our proof uses a quasi-local Bogoliubov transformation of the doubled free system, within Araki's quasifree CAR framework \cite{Araki70} and with the explicit construction used by Hastings in his gap-stability proof \cite{Hastings19}.
We also follow Nachtergaele, Sims, and Young in restricting perturbations to finite regions within a fixed GNS representation \cite{NSY24}.
Subsection~\ref{sec:proof-roadmap} gives a detailed account of the proof strategy.
The key tools for dealing with slow polynomial decay are the interaction-propagation estimates of \cite{FermionicLR}, which control evolved interactions in weighted norms without loss of the spatial exponent and with polynomial growth in time.

The argument also applies to interacting fermion systems with an even pure product ground state.
We assume that excitations cost at least a fixed positive energy per site excited away from this product state.
Under this assumption, Theorem~\ref{thm:product-reference-stability} gives gap stability and quantitative transport in both decay regimes.
The unperturbed interactions may themselves be strong.
Applications include onsite-pinned occupation models with possibly strong repulsive interactions and arrays of uniformly gapped singlet blocks, such as Fidkowski--Kitaev blocks, Hubbard dimers, and interlayer singlets.
Subsection~\ref{sec:product-references} gives a sufficient condition for this energy assumption and specifies these examples.
For quantum spin systems with product ground states satisfying the same energy assumption, the theorem extends earlier short-range stability results \cite[Example~2]{Yarotsky06} and \cite[Theorem~2.8]{NSY24} to polynomially decaying interactions and provides the quantitative spectral flow.
Examples include Heisenberg ferromagnets in a positive field and arrays of antiferromagnetic dimers (see Subsection~\ref{sec:product-references}).

In \cite{LPPLV3} we use the ground-state transport constructed here to prove uniqueness and strong LPPL for weakly interacting gapped fermions.
Strong LPPL controls the effect of arbitrarily large regional perturbations on observables far from the perturbed region.
Earlier strong LPPL results for spin systems were proved by Henheik, Teufel, and Wessel \cite{HenheikTeufelWessel22} and by Bachmann, De Roeck, Donvil, and Fraas \cite{BachmannDeRoeckDonvilFraas22}.
In \cite{LPPLPT4} we obtain uniqueness and strong LPPL for KMS states and, at weak coupling, for ground states, through a direct comparison argument using free-fermion identities and propagation estimates.
That paper also proves exponential local convergence of KMS states to the ground state at low temperature.

A bulk gap and a regular ground-state path are standard hypotheses in adiabatic and response theory.
Finite-system formulations \cite{BDF18,MonacoTeufel19,Teufel20} use volume-uniform gaps, while infinite-volume formulations \cite{BTWAd25,HenheikTeufel22,TeufelWesle26,WMMMT25} start from bulk-gapped states.
In infinite volume, pointwise GNS gaps alone do not select a continuous ground-state path.
The transport constructed here provides such a path, with continuously differentiable local expectations.
Bulk gaps and local transport are also used in constructions of Hall conductance and topological invariants \cite{BachmannCorbelliFraasOgata25,KapustinSopenko20, KapustinSopenko22}.
For rapidly decaying interactions, the automorphisms constructed here have the locality required by these results, showing that the interacting invariants equal their free-fermion values (see Remark~\ref{rem:Hall-response-indices}).

The bulk-gap assumption is particularly useful for topologically nontrivial band insulators, where open boundaries may support gapless edge states; see, e.g., \cite{KellendonkRichterSchulzBaldes02}.
We can also work directly with generalized Hofstadter models at irrational magnetic flux \cite{MMMTW26} and with quasicrystals.
For the latter, local-patch methods can establish one-particle gaps even when ordinary finite restrictions suffer from spectral pollution \cite{HMT22,HMT26}.
Such one-particle information thus suffices to obtain an interacting bulk gap and ground-state transport for these applications.

Section~\ref{sec:setting} introduces the mathematical setup, and Section~\ref{sec:results} states the results.
We recommend Subsection~\ref{sec:proof-roadmap} as a first reading, also for experts: it presents the proof's ideas and their logical dependencies before the technical details.
Sections~\ref{sec:atomicization}--\ref{sec:limit} give the proof for exponentially decaying \(h\) and \(\cV\) and deduce the stretched-exponential case by a change of metric.
Section~\ref{sec:card-extension} proves the polynomial result using the propagation estimates of \cite{FermionicLR} and the same normal form, bootstrap, and cutoff limit.
Section~\ref{sec:product-reference-proof} applies these arguments to interacting systems with product ground states.

\section{Setting}
\label{sec:setting}

\subsection{Algebraic and geometric preliminaries}

Let \((\Gamma,d)\) be a countable metric space, and write \(\ball_R(x):=\{y\in\Gamma\,|\,d(x,y)\le R\}\).
We assume uniform polynomial growth: there are \(\dvol\in[0,\infty)\) and \(C_\Gamma\ge1\) such that
\begin{equation}
 |\ball_R(x)|\le C_\Gamma(1+R)^{\dvol},
 \qquad x\in\Gamma,\quad R\ge0.
 \label{eq:volume-growth}
\end{equation}
For \(X\subset\Gamma\) and \(R\ge0\), set
\begin{align*}
 \dist(x,X):=\inf_{y\in X}d(x,y), \quad
  X^{(R)} :=\{y\in\Gamma\,|\,\dist(y,X)\le R\},\quad
 \diam X:=\sup_{x,y\in X}d(x,y).
\end{align*}
For nonempty sets \(X,Y\subset\Gamma\), define \(\dist(X,Y):=\inf_{x\in X}\dist(x,Y)\).
Fix the number of fermionic orbitals per site \(r\in\N\), and put
\begin{equation*}
 \Hsp:=\ell^2(\Gamma;\C^r),
 \qquad
 \A:=\CAR(\Hsp).
\end{equation*}
For \(X\subseteq\Gamma\), put
\begin{equation*}
 \A_X:=\CAR(\ell^2(X;\C^r))\subset\A,
 \qquad
 \A_{\loc}:=\bigcup_{X\Subset\Gamma}\A_X.
\end{equation*}
We write \(X\Subset\Gamma\) when \(X\) is a finite subset of \(\Gamma\).

Let \(a:\Hsp\to\A\) be the canonical antilinear annihilation map.
The parity automorphism \(\Theta\) is determined by \(\Theta(a(f))=-a(f)\) for all \(f\in\Hsp\), and
\begin{equation*}
 \A_X^\pm:=\{A\in\A_X\,|\,\Theta(A)=\pm A\}.
\end{equation*}

An \emph{interaction} is a map \(\Phi:X\mapsto\Phi(X)=\Phi(X)^*\in\A_X^+\) on the finite subsets of \(\Gamma\), where \(\Phi(\varnothing)=0\).
For \(p,k\ge0\), define the polynomial interaction norm
\begin{equation}
 \|\Phi\|^{\pol}_{p;k}
 :=\sup_{x\in\Gamma}
 \sum_{\substack{X\Subset\Gamma\\x\in X}}
 |X|^k(1+\diam X)^p\|\Phi(X)\|
 \label{eq:polynomial-interaction-norm}
\end{equation}
and, for \(\kappa>0\), the norm with an exponential support-size weight
\begin{equation}
 \|\Phi\|^{\#}_{p;\kappa}
 :=\sup_{x\in\Gamma}\sum_{\substack{X\Subset\Gamma\\x\in X}}
 \E^{\kappa|X|}(1+\diam X)^p\|\Phi(X)\|.
 \label{eq:card-norm}
\end{equation}
The norm \( \|\cdot\|^{\#}_{p;\kappa}\) controls \(\|\cdot\|^{\pol}_{p;k}\) for every fixed \(k\).

For \(\mu>0\), \(0<\theta\le1\), and \(k\ge0\), define the stretched-exponential interaction norm
\begin{equation}
 \|\Phi\|_{\mu,\theta;k}
 :=\sup_{x\in\Gamma}
 \sum_{\substack{X\Subset\Gamma\\x\in X}}
 |X|^k\E^{\mu\diam(X)^\theta}\|\Phi(X)\|.
 \label{eq:interaction-norm}
\end{equation}
A bound is called \emph{rooted} when it sums contributions whose supports contain a fixed site, uniformly in that site, as in \eqref{eq:polynomial-interaction-norm}.
The same norms will also be used for local operator families \(X\mapsto B(X)\in\A_X\), without assumptions of parity or self-adjointness.
A path \(s\mapsto\Phi_s\) is called \emph{componentwise norm-continuous} if \(s\mapsto\Phi_s(X)\) is norm-continuous for every finite \(X\subset\Gamma\).
For \(A\in\A_{\loc}\), set
\begin{equation}
 \delta_\Phi(A):=
 \sum_{X\Subset\Gamma}[\Phi(X),A].
 \label{eq:local-derivation}
\end{equation}
Under the decay hypotheses imposed below, \(\delta_\Phi\) is closable, and we write \(\overline\delta_\Phi\) for its closure, the closed derivation generated by \(\Phi\).
Then \(\I\overline\delta_\Phi\) generates the strongly continuous Heisenberg dynamics \((\tau_t^\Phi)_{t\in\R}\), and \(\A_{\loc}\) is a core; see Appendix~\ref{app:nonautonomous-background} and Lemma~\ref{lem:local-form-core}.

For a parameter-dependent interaction \((\eta_u)_{u\in J}\), where \(J\subseteq\R\) is an interval, we order the parameters of its cocycle as \emph{initial point, endpoint}.
Our convention is
\begin{equation}
 \partial_t\beta_{s,t}(A)
 =\I\,\beta_{s,t}(\delta_{\eta_t}(A)),
 \qquad \beta_{s,s}=\id,
 \qquad \beta_{s,u}\circ\beta_{u,t}=\beta_{s,t}\,,
 \label{eq:path-cocycle-convention}
\end{equation}
where the differential equation is understood for local \(A\); existence and uniqueness of solutions under the locality assumptions used here are recalled in Appendix~\ref{app:nonautonomous-background}.
This is the convention of \cite{BTW26} and the companion LPPL paper \cite{LPPLV3}; our \(\beta_{s,t}\) corresponds to \(\tau_{t,s}\) in \cite{NSY18} and \cite{NSY19}.

For a bounded linear map \(\alpha:\A\to\A\), \(\mu>0\), \(0<\theta\leq 1\), and \(q\ge0\), define the (stretched-) exponential quasi-locality norm
\begin{equation}
 \|\alpha\|^{\rm ql}_{\mu,\theta;q}
 :=\|\alpha\|_{\rm op}
 +\sup_{\substack{X\Subset\Gamma,\ 0\ne A\in\A_X\\
                    R\in\N_0}}
 \frac{\E^{\mu R^\theta}}{(1+|X|)^q\|A\|}
 \inf_{B\in\A_{X^{(R)}}}\|\alpha(A)-B\|.
 \label{eq:stretched-ql-norm}
\end{equation}
Appendix~\ref{app:basic-background} gives further geometric estimates and introduces the weighted interaction topologies and graded tensor-product conventions used in the proof.

\begin{definition}[Ground states and GNS gap]
A state \(\omega\) is a ground state of \(\delta_\Phi\) if
\begin{equation}
 \operatorname{Re}\omega(A^*\delta_\Phi(A))\ge0,
 \qquad A\in\A_{\loc}.
 \label{eq:ground-form}
\end{equation}
It has a GNS gap at least \(g'>0\) if it satisfies the ground-state gap inequality
\begin{equation}
 \operatorname{Re}\omega(A^*\delta_\Phi(A))
 \ge g'\left(\omega(A^*A)-|\omega(A)|^2\right),
 \qquad A\in\A_{\loc}.
 \label{eq:GNS-form-gap}
 \qedhere
\end{equation}
\end{definition}

Proposition~\ref{prop:form-spectrum}, together with the local core property of Lemma~\ref{lem:local-form-core}, shows that \eqref{eq:GNS-form-gap} is equivalent to a spectral gap of at least \(g'\) and a one-dimensional ground space of the GNS Hamiltonian.
Both results are proved in Appendix~\ref{app:GNS-background}.

\subsection{Free fermion systems}
\label{sec:quadratic-references}

We use Majorana coordinates to treat number-conserving band Hamiltonians and quadratic Hamiltonians with pairing in one notation.
For \(x\in\Gamma\), let \(\iota_x:\C^r\to\Hsp\) be the canonical isometry.
For \(x\in\Gamma\) and \(1\le j\le r\), set \(a_{x,j}:=a(\iota_xe_j)\) and define
\begin{equation}
 \gamma_{x,2j-1}:=a_{x,j}+a_{x,j}^*,\qquad
 \gamma_{x,2j}:=\I(a_{x,j}^*-a_{x,j}).
 \label{eq:physical-Majoranas}
\end{equation}
For labels \(u=(x,j)\) with \(1\le j\le2r\), these self-adjoint generators satisfy \(\{\gamma_u,\gamma_v\}=2\delta_{uv}\1\).
Their coefficient space is
\[
 \Hsp_{\mathrm M}:=\ell^2(\Gamma;\C^{2r}).
\]
For a bounded operator \(k\) on \(\ell^2(\Gamma;\C^m)\), with \(m\in\N\), let \(k_{xy}\) be its block from the fiber at \(y\) to that at~\(x\).
For \(\mu\ge0\) and \(0<\theta\le1\), define the Schur norm
\begin{equation}
 \|k\|_{\mu,\theta}^{\rm Sch}
 :=\max\left\{
 \sup_{x\in\Gamma}\sum_{y\in\Gamma} \E^{\mu d(x,y)^\theta}\|k_{xy}\|,
 \sup_{y\in\Gamma}\sum_{x\in\Gamma} \E^{\mu d(x,y)^\theta}\|k_{xy}\|
 \right\}.
 \label{eq:weighted-Schur}
\end{equation}
We write \(\|k\|_\mu^{\rm Sch}:=\|k\|_{\mu,1}^{\rm Sch}\) for the exponential norm.
For \(p\ge0\), define also
\begin{equation}
 \|k\|_p^{\mathrm{Sch,pol}}
 :=\max\left\{
 \sup_{x\in\Gamma}\sum_{y\in\Gamma}(1+d(x,y))^p\|k_{xy}\|,
 \sup_{y\in\Gamma}\sum_{x\in\Gamma}(1+d(x,y))^p\|k_{xy}\|
 \right\}.
 \label{eq:polynomial-weighted-Schur}
\end{equation}
The free Hamiltonian is specified by a Majorana coefficient operator \(h\in\cB(\Hsp_{\mathrm M})\) satisfying
\begin{equation}
 h=h^*,\qquad \overline h=-h.
 \label{eq:Majorana-structure}
\end{equation}
Here \(\overline h\) is entrywise complex conjugation in the Majorana coefficient basis.
Thus \(h\) is purely imaginary and \(h^{\mathsf T}=-h\).
Under these conditions, the second-quantization formula \eqref{eq:quadratic-Majorana-interaction} below defines an even self-adjoint quadratic interaction \(\Phi_h\).
Fix \(g>0\), and assume the one-particle gap
\begin{equation}
 \spec(h)\cap(-g,g)=\varnothing
 \label{eq:one-particle-gap}
\end{equation}
and one of the following locality assumptions, specified in the respective theorem.
For the stretched-exponential stability theorem, fix \(0<\theta_{\rm in}\le1\) and \(\mu_h>0\), and assume
\begin{equation}
 C_h:=\|h\|_{\mu_h,\theta_{\rm in}}^{\rm Sch}<\infty.
 \label{eq:h-locality}
\end{equation}
For the polynomial theorem, fix a common spatial exponent \(p_\Phi>\dvol>0\) for \(h\) and \(\cV\), and assume
\begin{equation}
 C_h:=\|h\|_{p_\Phi}^{\mathrm{Sch,pol}}<\infty.
 \label{eq:h-polynomial-locality}
\end{equation}

For \(u=(x,j)\), write \(x_u:=x\) and let \(h_{uv}\) denote the scalar matrix entries of \(h\).
Define the canonical quadratic interaction by summing over ordered pairs \((u,v)\) of Majorana labels.
For \(X\Subset\Gamma\) with \(1\le|X|\le2\), set
\begin{equation}
 \Phi_h(X):=\tfrac14
 \sum_{\substack{(u,v)\\\{x_u,x_v\}=X}}h_{uv}\gamma_u\gamma_v,
 \label{eq:quadratic-Majorana-interaction}
\end{equation}
and set \(\Phi_h(X):=0\) otherwise.
Since \(\overline h_{uv}=-h_{uv}=h_{vu}\), each component is self-adjoint and even.
Summing the finitely many matrix entries in each onsite block gives, for every fixed \(k\ge0\),
\begin{equation}
\begin{aligned}
 \|\Phi_h\|_{\mu_h,\theta_{\rm in};k}&\le 2^{k+1}r^2C_h
       &&\text{under \eqref{eq:h-locality}},\\
 \|\Phi_h\|^{\pol}_{p_\Phi;k}&\le 2^{k+1}r^2C_h
       &&\text{under \eqref{eq:h-polynomial-locality}}.
\end{aligned}
 \label{eq:quadratic-interaction-bounds}
\end{equation}

The CAR identity \([\gamma_v\gamma_w,\gamma_u] =2\delta_{wu}\gamma_v-2\delta_{vu}\gamma_w\) gives, for every label \(u\) and \(t\in\R\),
\begin{equation}
 \delta_{\Phi_h}(\gamma_u)=-\sum_v h_{uv}\gamma_v,
 \qquad
 \tau_t^{\Phi_h}(\gamma_u)
 =\sum_v(\E^{-\I th})_{uv}\gamma_v.
 \label{eq:Majorana-free-dynamics}
\end{equation}
Both series in \eqref{eq:Majorana-free-dynamics} converge absolutely in norm, uniformly for \(t\) in compact intervals, by the Schur bound on \(h\) and \(\|\E^{-\I th}\|_0^{\rm Sch}\le\E^{|t|C_h}\).

For comparison with the usual notation, let \(h_{\mathrm{nc}}=h_{\mathrm{nc}}^*\) and \(\Delta^{\mathsf T}=-\Delta\) be bounded number-conserving and pairing operators on \(\Hsp\), respectively.
Here \(h_{\mathrm{nc}}\) includes the chemical-potential term.
Define the Bogoliubov--de Gennes (BdG) matrix by
\begin{equation}
 h_{\mathrm{BdG}}:=
 \begin{pmatrix}
 h_{\mathrm{nc}}&\Delta\\
 \Delta^*&-\overline{h_{\mathrm{nc}}}
 \end{pmatrix}
 \quad\text{on }\Hsp\oplus\Hsp.
 \label{eq:BdG-matrix}
\end{equation}
The onsite unitary \(T:\Hsp\oplus\Hsp\to\Hsp_{\mathrm M}\) is defined, for \(\psi_1,\psi_2\in\Hsp\), \(x\in\Gamma\), and \(1\le j\le r\), by
\begin{equation}
 (T(\psi_1\oplus\psi_2))_{x,2j-1}:=\frac{(\psi_1)_{x,j}+(\psi_2)_{x,j}}{\sqrt2},
 \qquad
 (T(\psi_1\oplus\psi_2))_{x,2j}:=\frac{-\I(\psi_1)_{x,j}+\I(\psi_2)_{x,j}}{\sqrt2}.
 \label{eq:BdG-Majorana-unitary}
\end{equation}
The corresponding Majorana coefficient operator is \(Th_{\mathrm{BdG}}T^*\).
Conversely, every \(h\) satisfying \eqref{eq:Majorana-structure} has this form, with \(h_{\mathrm{BdG}}=T^*hT\).
The onsite unitary preserves spectra and block Schur norms.
To check the quadratic normalization, fix \(\Lambda\Subset\Gamma\).
Substituting \(h=Th_{\mathrm{BdG}}T^*\) and the definitions of the Majorana generators, and using \(a_{x,j}a_{y,k}^*=\delta_{xy}\delta_{jk}\1-a_{y,k}^*a_{x,j}\), gives
\begin{equation}
 \begin{aligned}
 \sum_{X\subset\Lambda}\Phi_h(X)
 ={}&\sum_{\substack{x,y\in\Lambda\\1\le j,k\le r}}
 ((h_{\mathrm{nc}})_{xy})_{jk}a_{x,j}^*a_{y,k}
 -\tfrac12\sum_{x\in\Lambda}
 \operatorname{Tr}_{\C^r}((h_{\mathrm{nc}})_{xx})\,\1\\
 &+\tfrac12\sum_{\substack{x,y\in\Lambda\\1\le j,k\le r}}
 \left((\Delta_{xy})_{jk}a_{x,j}^*a_{y,k}^*
 +\overline{(\Delta_{xy})_{jk}}\,a_{y,k}a_{x,j}\right)\,.
 \end{aligned}
 \label{eq:quadratic-normal-ordering}
\end{equation}
Setting \(\Delta=0\) gives the number-conserving Hamiltonian up to a scalar, which does not affect the derivation.
For nonzero \(\Delta\), the gap assumption concerns the full BdG matrix and pairing is part of the free Hamiltonian and need not be small.
We use here the quasifree CAR formalism of \cite{Araki70}.
The same Majorana formulation, with a different Hamiltonian normalization, is used in \cite{Hastings19}.

Set
\begin{equation}
 \begin{aligned}
 \sgnh&:=\operatorname{sgn}(h)
 =\1_{(0,\infty)}(h)-\1_{(-\infty,0)}(h),\\
 |h|&:=(h^2)^{1/2}
 =h\,\1_{(0,\infty)}(h)-h\,\1_{(-\infty,0)}(h).
 \end{aligned}
 \label{eq:free-sign}
\end{equation}
Then \(\overline{\sgnh}=-\sgnh\), whereas \(|h|\) is real symmetric.
Let \(\omega_0\) be the even quasifree state determined by
\begin{equation}
 \omega_0(\gamma_u\gamma_v)=(\1+\sgnh)_{uv}.
 \label{eq:free-Majorana-state}
\end{equation}
Its odd correlations vanish, and its even correlations are given by Wick's rule.
Since \(\sgnh=\sgnh^*\), \(\sgnh^2=\1\), and \(\overline{\sgnh}=-\sgnh\), the matrix \((\1+\sgnh)/2\) is a basis projection: it is an orthogonal projection whose conjugate is its complement.
The quasifree construction therefore gives a pure state \cite{Araki70}.
It is a gapped ground state of \(\Phi_h\): for every \(A\in\A_{\loc}\),
\begin{equation}
 \operatorname{Re}\omega_0(A^*\delta_{\Phi_h}(A))
 \ge g\bigl(\omega_0(A^*A)-|\omega_0(A)|^2\bigr).
 \label{eq:free-ground-gap}
\end{equation}
Proposition~\ref{prop:free-GNS-Hamiltonian} proves the ground-state property and the gap using the product-vacuum representation of the stacked system.
In the number-conserving case \(\Delta=0\), \(\omega_0\) is the Fermi sea with one-particle density \(\1_{(-\infty,0)}(h_{\mathrm{nc}})\).

Under \eqref{eq:h-locality}, weighted functional calculus gives, for some \(\mu_1\in(0,\mu_h)\),
\begin{equation}
 \|\sgnh\|_{\mu_1,\theta_{\rm in}}^{\rm Sch}
 +\bigl\||h|\bigr\|_{\mu_1,\theta_{\rm in}}^{\rm Sch}<\infty.
 \label{eq:reference-functional-locality}
\end{equation}
Lemma~\ref{lem:weighted-free-functional-calculus} proves this weighted Schur estimate, with constants controlled by the geometric, one-particle locality, and gap data.
Under \eqref{eq:h-polynomial-locality}, Lemma~\ref{lem:polynomial-free-functional-calculus} gives \(\|\sgnh\|_{p_\Phi}^{\mathrm{Sch,pol}}+\||h|\|_{p_\Phi}^{\mathrm{Sch,pol}}<\infty\) at the same exponent \(p_\Phi\), with bounds controlled by the corresponding polynomial locality and gap data.

Finally, let \(\cV\ne0\) be an interaction and consider the interacting bulk systems
\begin{equation}
 \R\ni \lambda\mapsto\Phi_{\lambda}:=\Phi_h+\lambda\cV.
 \label{eq:physical-path}
\end{equation}

\section{Main results}
\label{sec:results}

In this section we first state the main stability results for weak perturbations of gapped free fermion systems.
Next, in Subsection~\ref{sec:product-references}, we formulate a more general result that also applies to perturbations of certain interacting systems with product ground states.
Since the proofs of the main results require several steps, some of which are quite technical, we explain the overall strategy of the proof in Subsection~\ref{sec:proof-roadmap}.
We end the section with a discussion of and comparison with related literature in Subsection~\ref{sec:comparison-gap-stability}.

\subsection{Gap stability and spectral flow}
\label{sec:stretched-exponential-theorem}

Our main results establish stability of the bulk gap under weak many-body interactions and provide quasi-local automorphisms connecting the free ground state to the interacting ground states.
We first state the stretched-exponential result and then the polynomial result.
In both cases the locality of the spectral flow depends on the decay of \(h\) and \(\cV\).

\begin{theorem}
\label{thm:main}

Fix \(0<\theta_{\rm in}\le1\) and \(\mu_h,\mu_{\cV}>0\), and assume \eqref{eq:one-particle-gap} and \eqref{eq:h-locality}.
There is a constant \(C_*\ge1\), depending only on \(C_\Gamma,\dvol,r,C_h/g,\mu_h,\mu_{\cV},\theta_{\rm in}\).
For each \(0<\theta<\theta_{\rm in}\), there are constants \(0<C_\beta<\infty\), \(\mu_\eta,\mu_\beta>0\), depending only on the same data, \(g\), and \(\theta\), and, for every \(k\ge0\), a constant \(C_{\eta,k}>0\) depending additionally on \(k\).

The following holds for every nonzero self-adjoint even interaction \(\cV\) satisfying \(\|\cV\|_{\mu_{\cV},\theta_{\rm in};2}<\infty\).
On the interval \(I_*:=[-\lambda_*,\lambda_*]\) defined by
\begin{equation*}
 \lambda_*:=\frac{g}{4C_*
 \|\cV\|_{\mu_{\cV},\theta_{\rm in};2}} \,,
\end{equation*}
there is a family \((\omega_{\lambda})_{\lambda\in I_*}\) of pure gapped ground states of \(\Phi_{\lambda}=\Phi_h+\lambda\cV\), independent of \(\theta\), such that for every \(\lambda\in I_*\) and every \(A\in\operatorname{Dom}(\overline\delta_{\Phi_{\lambda}})\)
\begin{equation}
 \operatorname{Re}\omega_{\lambda}\left(A^*\overline\delta_{\Phi_{\lambda}}(A)\right)
 \ge
 \left(g-C_*|\lambda|\|\cV\|_{\mu_{\cV},\theta_{\rm in};2}\right)
 \left(\omega_{\lambda}(A^*A)-|\omega_{\lambda}(A)|^2\right)\,.
 \label{eq:main-gap}
\end{equation}
For each such \(\theta\), there exists a componentwise norm-continuous path \(I_*\ni\lambda\mapsto\eta_\lambda\) of even self-adjoint interactions such that its unique strongly continuous, parity-preserving automorphism cocycle \((\beta_{\lambda,\lambda'})\) transports this family,
\begin{equation}
\forall \lambda\in I_*:\quad\omega_{\lambda}=\omega_0\circ\beta_{0,\lambda}\,,
 \label{eq:main-state-path}
\end{equation}
and satisfies the quantitative locality bounds
\begin{equation}
 \sup_{\lambda\in I_*}\|\eta_{\lambda}\|_{\mu_\eta,\theta;k}
 \le C_{\eta,k}\|\cV\|_{\mu_{\cV},\theta_{\rm in};2}
 \qquad(k\ge0),
 \label{eq:main-generator-locality}
\end{equation}
and
\begin{equation}
 \sup_{\lambda,\lambda'\in I_*}
 \|\beta_{\lambda,\lambda'}\|^{\rm ql}_{\mu_\beta,\theta;1}
 \le C_\beta.
 \label{eq:main-flow-locality}
\end{equation}
\keepwithproofpointer
\end{theorem}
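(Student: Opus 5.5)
The plan follows the Hastings/Bravyi--Hastings route, carried out directly in infinite volume in the style of Nachtergaele--Sims--Young, and first treats exponential decay (\(\theta_{\rm in}=1\)).

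\emph{Step 1: a product-vacuum frame.} Double the system by adding an auxiliary copy with coefficient operator \(-h\), or equivalently \(\overline h\). Then conjugate by a quasi-local Bogoliubov transformation built from \(\sgnh\) and \(|h|\), whose locality is provided by \eqref{eq:reference-functional-locality}. The aim of this ``atomicization'' is that the stacked free ground state becomes a product vacuum and the free Hamiltonian becomes, up to quasi-local terms, an onsite number operator with gap \(g\). In this frame, Proposition~\ref{prop:free-GNS-Hamiltonian} reads as the statement that every excitation costs energy at least \(g\) per quasiparticle. The transformed perturbation \(\cV\otimes\1\) is still an even interaction, with a norm controlled by \(\|\cV\|_{\mu_{\cV},1;2}\) up to a loss in the decay rate.

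\emph{Step 2: normal form and bootstrap.} Following Hastings and Bravyi--Hastings, I would construct \(\eta_\lambda\) through the quasi-adiabatic (spectral-flow) generator with a filter function \(W_g\) whose Fourier transform is supported in \((-g',g')\). The defining formula is
\[
\eta_\lambda=\int W(t)\,\tau_t^{\Phi_\lambda}(\cV)\,\dd t ,
\]
which is to be realized as an interaction. Its terms are obtained by decomposing \(\tau_t^{\Phi_\lambda}(\cV(X))\) into local pieces using Lieb--Robinson bounds. This step needs the gap of \(\Phi_\lambda\) as input, so I would argue by continuity (a bootstrap) in \(\lambda\). Assume that on \([0,\lambda_1]\) the gap is at least \(g/2\). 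The flow \(\beta_{0,\lambda}\) then maps \(\omega_0\) to a ground state. The gap is re-established by writing
\[
\beta_{0,\lambda}(\Phi_\lambda)=\Phi_h+\lambda\,\cV'_\lambda ,
\]
where \(\cV'_\lambda\) has a normal form: its part that annihilates or creates a single excitation relative to the vacuum is cancelled by the flow, up to terms that are small relative to the excitation number. An excitation-number comparison \(\cV'_\lambda\ge -C\|\cV\|\,N\) in the GNS representation of \(\omega_0\), combined with \(L_0\ge gN\), gives gap \(g-C_*|\lambda|\|\cV\|\). This improves the bootstrap hypothesis, so the set of good \(\lambda\) extends up to \(\lambda_*\).

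\emph{Step 3: infinite-volume limit and consequences.} To make the argument rigorous without finite-volume gaps, I would perturb only within finite regions \(\Lambda\) inside the fixed GNS representation, as in \cite{NSY24}. There the perturbation is bounded, the spectral theory is exact, and all constants are uniform in \(\Lambda\). I would then pass \(\Lambda\uparrow\Gamma\) using componentwise convergence of \(\eta^\Lambda_\lambda\) and convergence of the cocycles on \(\A_{\loc}\), which follows from the Appendix~\ref{app:nonautonomous-background} results. The properties transfer as follows:
\begin{itemize}
\item Gap inequality: it passes to the limit on \(\A_{\loc}\), and Lemma~\ref{lem:local-form-core} extends it to \(\operatorname{Dom}(\overline\delta_{\Phi_\lambda})\).
\item Purity: \(\omega_\lambda\) is pure because it is an automorphic image of the pure state \(\omega_0\).
\item Locality bounds: \eqref{eq:main-generator-locality} follows from the uniform bounds on the generator, with the loss from rate \(1\) to any \(\theta<1\) absorbed by the filter decay. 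Estimate \eqref{eq:main-flow-locality} then follows from standard Lieb--Robinson bounds for the cocycle.
\end{itemize}
The stretched-exponential case \(\theta_{\rm in}<1\) reduces to the case above by the change of metric \(d\mapsto d^{\theta_{\rm in}}\). This is a metric because \(\theta_{\rm in}\le1\), it still has polynomial volume growth, and it converts \(\theta\)-norms into exponential norms.

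\emph{Main obstacle.} I expect the hardest step to be the excitation-number comparison inside the bootstrap. The transformed perturbation is extensive, so it cannot be treated as a bounded perturbation of \(L_0\). One must show that every term of the normal form either annihilates the vacuum or is relatively bounded by \(N\), with a constant involving only the rooted norm \(\|\cV\|_{\cdot;2}\). The weight \(|X|^2\) enters here, because a term supported on \(X\) can change \(N\) by at most \(|X|\) and acts on up to \(|X|\) sites. Controlling the loss of decay rate in the iteration, so that the constants stay uniform, is the delicate point. I would try Hastings' block decomposition by excitation number, estimating each block separately.
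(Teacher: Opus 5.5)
Your architecture matches the paper's: doubling with \(-h\), Hastings' Bogoliubov map to a product vacuum with \(K_0=\dd\Gamma(|h|)\ge gN\), a filtered spectral flow, perturbations cut off to finite \(\Lambda\) inside the fixed GNS representation, a continuity bootstrap, cutoff removal, and the metric \(d^{\theta_{\rm in}}\). The gap is the step you yourself flag as the main obstacle. It is left open, and the mechanism you sketch cannot close it.

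The excitation count needs each local normal-form term \(R\), supported in \(X\), to satisfy \(RP_X^\vac=P_X^\vac R=0\) \emph{exactly}. Only then does \(-\|R\|N_X\le R\le\|R\|N_X\) hold, after which rooted summation gives \(|\lambda|fN\). Cancelling the vacuum--excitation couplings ``up to terms small relative to \(N\)'' is not a usable intermediate, because no nonzero vacuum-to-excited matrix element is form-bounded by \(cN\) for any \(c\). For \(B=c_{x,1}^*c_{y,1}^*+c_{y,1}c_{x,1}\) and \(\psi=\Omega_\at+\epsilon c_{x,1}^*c_{y,1}^*\Omega_\at\) one gets \(\langle\psi,B\psi\rangle=2\epsilon\) but \(\langle\psi,N\psi\rangle=2\epsilon^2\). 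So the flow must remove the whole ground--excited block exactly, not just single-excitation couplings. Annihilating only the global vacuum is also insufficient: bounding each term by \(\|G(Z)\|(\1-P^\vac)\) and summing over extensively many \(Z\) gives a constant that grows with \(|\Lambda|\). What is needed is annihilation of the \emph{local} vacuum on each term's support.

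The paper obtains this in a single continuity argument, with no iteration. It differentiates the full centered moving-frame Hamiltonian \(K^\nf_{\Lambda,\lambda}=U_{\Lambda,\lambda}^*(K_{\Lambda,\lambda}-E_{\Lambda,\lambda})U_{\Lambda,\lambda}\). This means the conjugated unbounded part, which your formula \(\beta_{0,\lambda}(\Phi_\lambda)=\Phi_h+\lambda\cV'_\lambda\) leaves unspecified, never has to be handled separately. The inverse-Liouvillian identity \(V+\I[\mathscr S,K]=V^{\D}\) together with Hellmann--Feynman turns the derivative into \(\sum_Z\pi_\at(G_{\Lambda,\lambda}(Z))\).

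Since \(\supp\widehat w\subset(-g/2,g/2)\) and \(\widehat w(0)=1\), each filtered term is exactly block diagonal with respect to the ground projection while the gap exceeds \(g/2\). Centering removes the ground block, and ground-state transport carries that block to \(\Omega_\at\). Hence each \(G_{\Lambda,\lambda}(Z)\) annihilates \(\Omega_\at\) on both sides.

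Localization must then preserve this, which tracial conditional expectations do not. The paper instead uses Araki's vacuum conditional expectations \(\EE^\vac_{Z^{(R_n)}}\) in dyadic shells, which convert global two-sided annihilation into local two-sided annihilation.

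Contrary to your Step~2, the locality bounds for the generator, the transport \(\sigma_{\Lambda,\lambda}\), and the centered terms hold on the \(\Lambda\)-independent interval \(J_{\rm loc}\) without any gap; only block diagonality and transport use \(I_\Lambda\). This decoupling keeps the bootstrap non-circular with uniform constants, so there is no decay-rate loss across iterations to control. Relatedly, the weight \(|Z|^2\) enters through these locality bounds, one factor from Lieb--Robinson for the filtered term and one from the locality of \(\sigma_{\Lambda,\lambda}\). It does not enter through the relative bound, which costs no power of \(|X|\).
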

The proof is given in Sections~\ref{sec:atomicization}--\ref{sec:limit}.
\smallskip

We now turn to polynomially decaying interactions.
For the corresponding stability theorem, we require that \(\Gamma\) can be partitioned into cells whose number of neighbors is bounded uniformly in the scale.

\begin{hypothesis}[Partitions into cells]
\label{hyp:cell-partitions}
Assume \eqref{eq:volume-growth} with \(\dvol>0\).
There are constants \(\Ccell\ge1\) and \(\Dcell\in\N_0\) such that, for every \(\ell\ge1\), there exist an index set \(\cB_\ell\) and pairwise disjoint, nonempty finite cells \(Q_b^{(\ell)}\subseteq\Gamma\), \(b\in\cB_\ell\), satisfying
\[
 \Gamma=\bigcup_{b\in\cB_\ell}Q_b^{(\ell)}
 \quad\text{with}\quad
 \diam Q_b^{(\ell)}\le \Ccell\ell
 \quad(b\in\cB_\ell).
\]
On the vertex set \(\cB_\ell\), define a graph by joining distinct \(b,c\in\cB_\ell\) precisely when \(\dist(Q_b^{(\ell)},Q_c^{(\ell)})\le\ell\).
We require this graph to have maximum degree at most \(\Dcell\).
\end{hypothesis}

Note that no compatibility between partitions at different scales is required.
Hypothesis~\ref{hyp:cell-partitions} holds for a broad class of physically relevant geometries, including all Euclidean point sets satisfying \eqref{eq:volume-growth}: intersecting half-open cubes of side length \(\ell\) with \(\Gamma\) gives finite cells with \(\Ccell=\sqrt n\) and \(\Dcell=3^n-1\).
Here \(n\) need not equal \(\dvol\); see \cite[Remark~2.2]{FermionicLR}.
The cell condition also holds if every metric ball can be covered by a uniformly bounded number of balls of half its radius; see \cite[Remark~2.3]{FermionicLR}.

\begin{theorem}
\label{thm:card-stability}
Fix \(p_\Phi>\dvol\), and assume \eqref{eq:one-particle-gap}, \eqref{eq:h-polynomial-locality}, and Hypothesis~\ref{hyp:cell-partitions}.
Choose \(\kappa_0=\kappa_0(\Dcell)>0\) as in \cite[Theorem~2.6]{FermionicLR}.
There is a constant \(C_*\ge1\), depending only on \(p_\Phi,\dvol,C_\Gamma,\Ccell,\Dcell,r,C_h/g\).
There are also constants \(C_\beta,c_{\#},C_{\#}>0\), depending only on these data and \(g\), and, for every \(k\ge0\), a constant \(C_{\eta,k}>0\) depending additionally on~\(k\).

The following holds for every nonzero self-adjoint even interaction \(\cV\) with \(\|\cV\|^{\#}_{p_\Phi;\kappa_0}<\infty\).
On the interval \(I_*:=[-\lambda_*,\lambda_*]\) defined by
\begin{equation}
 \lambda_*:=\frac{g}{4C_*
 \|\cV\|^{\#}_{p_\Phi;\kappa_0}},
 \label{eq:card-lambda-star}
\end{equation}
there is a family \((\omega_{\lambda})_{\lambda\in I_*}\) of pure gapped ground states of \(\Phi_{\lambda}=\Phi_h+\lambda\cV\) satisfying
\begin{equation}
 \operatorname{Re}\omega_{\lambda}
 \left(A^*\overline\delta_{\Phi_{\lambda}}(A)\right)
 \ge\left(g-C_*|\lambda|\|\cV\|^{\#}_{p_\Phi;\kappa_0}\right)
 \left(\omega_{\lambda}(A^*A)-|\omega_{\lambda}(A)|^2\right)
 \label{eq:polynomial-main-gap}
\end{equation}
for every \(\lambda\in I_*\) and \(A\in\operatorname{Dom}(\overline\delta_{\Phi_{\lambda}})\).

There also exists a path \(I_*\ni\lambda\mapsto\eta_\lambda\) of even self-adjoint interactions, \(C^\infty\) in each norm \(\|\cdot\|^{\pol}_{p_\Phi;k}\) for \(k\ge0\), such that its unique strongly continuous, parity-preserving automorphism cocycle \((\beta_{\lambda,\lambda'})\) transports this family,
\begin{equation}
\forall \lambda\in I_*:\quad\omega_{\lambda}=\omega_0\circ\beta_{0,\lambda}\,,
 \label{eq:polynomial-state-transport}
\end{equation}
and satisfies the quantitative locality bounds
\begin{equation}
 \sup_{\lambda\in I_*}\|\eta_{\lambda}\|^{\pol}_{p_\Phi;k}
 \le C_{\eta,k}\|\cV\|^{\#}_{p_\Phi;\kappa_0}
 \qquad(k\ge0),
 \label{eq:card-generator}
\end{equation}
and, for every \(X\Subset\Gamma\), \(A\in\A_X\), and \(R\ge0\),
\begin{equation}
 \sup_{\lambda,\lambda'\in I_*}\inf_{B\in\A_{X^{(R)}}}
 \|\beta_{\lambda,\lambda'}(A)-B\|
 \le C_\beta|X|\|A\|(1+R)^{-p_\Phi}.
 \label{eq:card-flow}
\end{equation}
The generator also satisfies
\begin{equation}
 \sup_{\lambda\in I_*}\sup_{x\in\Gamma}
 \sum_{X\ni x}\E^{c_{\#}|X|^{\theta_{\#}}}(1+\diam X)^{p_\Phi}
       \|\eta_{\lambda}(X)\|
 \le C_{\#}\|\cV\|^{\#}_{p_\Phi;\kappa_0},
 \label{eq:card-generator-support-weight}
\end{equation}
where
\[
 \theta_{\#}:=\frac{p_\Phi-\dvol}
 {p_\Phi-\dvol+2\dvol(p_\Phi+1)}\in(0,1).
\]
For every local \(A\) and fixed \(\lambda_0\in I_*\), the map \(\lambda\mapsto\beta_{\lambda_0,\lambda}(A)\) is norm-\(C^\infty\), and \(\lambda\mapsto\omega_{\lambda}(A)\) is \(C^\infty\).
\keepwithproofpointer
\end{theorem}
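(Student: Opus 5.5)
The plan follows the same architecture as the stretched-exponential Theorem~\ref{thm:main}. The main change is that every locality estimate runs through the polynomial propagation bounds of \cite[Theorem~2.6]{FermionicLR} in place of Lieb--Robinson bounds with exponential tails. This is why \(\kappa_0=\kappa_0(\Dcell)\) and Hypothesis~\ref{hyp:cell-partitions} appear in the statement.

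\textbf{Free reference.} Write \(\omega_0\) in the product-vacuum representation of the doubled (stacked) system. This is the quasi-local Bogoliubov transformation built from \(\sgnh\) and \(|h|\). Lemma~\ref{lem:polynomial-free-functional-calculus} gives these at the same exponent \(p_\Phi\), so the transformed coefficients stay in \(\|\cdot\|^{\mathrm{Sch,pol}}_{p_\Phi}\). In these coordinates \(\Phi_h\) becomes the number operator of excitations with energy \(\ge g\) per excitation, as in Proposition~\ref{prop:free-GNS-Hamiltonian}. The perturbation \(\cV\) is transported into this frame. The cost of expanding each Majorana into quasi-local pieces is absorbed by the weight \(\E^{\kappa_0|X|}\) in \(\|\cV\|^{\#}_{p_\Phi;\kappa_0}\), so the frame-transported \(\cV\) keeps the norm \(\|\cdot\|^{\#}_{p_\Phi;\kappa}\) at a slightly smaller \(\kappa\).

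\textbf{Normal form and bootstrap.} Work with finite cutoffs \(\Lambda\Subset\Gamma\), in the spirit of \cite{NSY24}. For the cut-off path, define the spectral-flow generator by the Hastings weighted time average
\[\eta_\lambda=\int W_g(t)\,\tau_t^{\Phi_\lambda}(\cV)\,\dd t,\]
with \(W_g\) of fast but only subexponential decay. Equivalently, construct it order by order through the Bravyi--Hastings normal form, which removes terms that create or annihilate excitations. The key estimate is that \(\tau_t(\cV)\) has \(\|\cdot\|^{\pol}_{p_\Phi;k}\) norm growing only polynomially in \(|t|\), with no loss in the exponent \(p_\Phi\). This is exactly \cite[Theorem~2.6]{FermionicLR}. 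Integrating against \(W_g\) then yields \eqref{eq:card-generator}.

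The gap is obtained by an excitation-number comparison. The dressed perturbation is split into a part that commutes with the excitation number and a remainder. Each term is then bounded relative to \(N\) by a rooted interaction norm, giving \(|\langle\psi,V\psi\rangle|\le C\|V\|\langle\psi,N\psi\rangle\). The resulting bound is \(g-C_*|\lambda|\|\cV\|^{\#}\). This is run as a bootstrap in \(\lambda\): the gap is assumed at least \(g/2\) on a subinterval, the generator bounds are derived, and the gap is improved. The choice \(\lambda_*=g/(4C_*\|\cV\|^{\#})\) closes the bootstrap.

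\textbf{Remaining bounds and limit.} The stretched-exponential support weight \eqref{eq:card-generator-support-weight} comes from interpolation. On one side, the propagated operators concentrate on sets of diameter \(\sim t\). On the other, \(|X|\le C_\Gamma(1+\diam X)^{\dvol}\) holds, and \(W_g\) decays subexponentially in \(t\). Optimizing the time cutoff against the polynomial tail then produces \(\theta_\#\).

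The automorphism bound \eqref{eq:card-flow} follows from the propagation estimate for the cocycle generated by \(\eta_\lambda\), whose norm is \(\|\cdot\|^{\pol}_{p_\Phi;1}\), again via \cite{FermionicLR}. Smoothness in \(\lambda\) comes from the fact that \(\tau^{\Phi_\lambda}\) depends smoothly on \(\lambda\) in these norms. Finally, let \(\Lambda\to\Gamma\). The uniform bounds and a Cauchy argument in the polynomial topologies give limits \(\eta_\lambda\) and \(\beta\). The form inequality \eqref{eq:polynomial-main-gap} passes to the limit on \(\A_{\loc}\), and by Lemma~\ref{lem:local-form-core} it extends to \(\operatorname{Dom}(\overline\delta_{\Phi_\lambda})\). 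Purity follows because \(\omega_\lambda=\omega_0\circ\beta_{0,\lambda}\) is an automorphic image of a pure state.

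\textbf{Main obstacle.} The hardest step is doing the propagation and bootstrap without losing the spatial exponent. Each commutator with a polynomially decaying interaction threatens to degrade \(p_\Phi\), and iterating this through the normal form would compound the loss. I would therefore first prove a single lemma: time-evolved and time-averaged interactions preserve \(\|\cdot\|^{\#}_{p_\Phi;\kappa}\) up to polynomial-in-\(t\) factors, with a controlled loss in \(\kappa\). All other steps would be routed through this lemma.
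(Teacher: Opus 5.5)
The outer architecture of your plan matches the paper's: stacking via the \(\sgnh\)-dependent Bogoliubov map, \(K_0=\dd\Gamma(|h|)\ge gN\), a quasi-adiabatic generator built from the physical dynamics of \(\Phi_h+\lambda\cV_\Lambda\), lossless propagation from \cite{FermionicLR}, a bootstrap closed at \(\lambda_*\), cutoff removal, and core extension. The step that actually produces the gap, however, is missing, and the substitute you describe fails.

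\textbf{The normal form.} For a term \(R\) supported in \(X\), a bound \(|\langle\psi,R\psi\rangle|\le C\langle\psi,N_X\psi\rangle\) holds if and only if \(RP_X^\vac=P_X^\vac R=0\); it then holds with \(C=\|R\|\) (Lemma~\ref{lem:one-local-term}). Splitting the dressed perturbation into an \(N\)-commuting part and a remainder does not give this. The \(N\)-commuting part still has to be centered. Any remainder with a vacuum-to-excited block violates every bound by \(C\langle\psi,N\psi\rangle\): take \(\psi=\Omega_\at+\epsilon\phi\) with \(\phi\perp\Omega_\at\); the left side has a term linear in \(\epsilon\), while the right side is quadratic. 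Moreover, the flow's terms are only quasi-local, so you need a localization that preserves vacuum annihilation.

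The paper supplies exactly this, in four steps.
\begin{enumerate}
\item It differentiates the centered moving-frame Hamiltonian \(U_{\Lambda,\lambda}^*(K_{\Lambda,\lambda}-E_{\Lambda,\lambda})U_{\Lambda,\lambda}\) once, with graph-norm domain control. The inverse-Liouvillian identity turns the derivative into \(\sum_Z\pi_\at\sigma_{\Lambda,\lambda}(F_{\Lambda,\lambda,Z}\widehat\otimes\1)-\dot E_{\Lambda,\lambda}\).
\item Since \(\supp\widehat w\subset(-g/2,g/2)\) and the bootstrap gap exceeds \(g/2\), each filtered term is block-diagonal for the rank-one ground projection and acts there as the scalar \(\omega_{\Lambda,\lambda}(F_{\Lambda,\lambda,Z})\).
\item Hellmann--Feynman shows these scalars sum to \(\dot E_{\Lambda,\lambda}\). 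Hence every centered \(G_{\Lambda,\lambda}(Z)\) annihilates \(\Omega_\at\) on both sides.
\item These quasi-local terms are cut into dyadic shells with Araki's vacuum conditional expectations. Unlike the tracial ones, these preserve two-sided vacuum annihilation (Lemma~\ref{lem:vacuum-localization}). The shells decay like \((1+R_n)^{-p_\Phi}\), and \(p_\Phi>\dvol\) makes the rooted bound of Proposition~\ref{prop:rooted-vacuum-normal-form} finite.
\end{enumerate}
This also corrects your bootstrap logic. The locality bounds hold on \(J_{\rm loc}\) with no gap assumption. The gap \(>g/2\) is used only for block-diagonality and the transport identity \(U_{\Lambda,\lambda}^*P_{\Lambda,\lambda}U_{\Lambda,\lambda}=P^\vac\). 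No order-by-order iteration occurs, so the compounding loss you worry about never arises.

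\textbf{Transporting \(\cV\) into the frame.} Your claim that the frame-transported \(\cV\) keeps a \(\|\cdot\|^{\#}\)-norm at slightly smaller \(\kappa\) is not justified under the theorem's hypotheses.
\begin{itemize}
\item Expanding \(\cV(X)\) in Majorana monomials and transforming each Majorana by \(O_\st\) costs a factor exponential in \(|X|\). Its rate is model-dependent, growing with \(r\) and the Schur norm of \(O_\st\), and cannot be absorbed into the fixed \(\kappa_0=\kappa_0(\Dcell)\).
\item Localizing \(\sigma_\st(\cV(X))\) with conditional expectations instead yields supports of size \(\sim|X|R^{\dvol}\) at radius \(R\). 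The weight \(\E^{\kappa|Y|}\) then overwhelms the \((1+R)^{-p_\Phi}\) tail.
\end{itemize}
The paper never needs this. All propagation is done in physical coordinates, where \(\Phi_h+\lambda\cV\) has finite \(\#\)-norm. The map \(\sigma_{\Lambda,\lambda}\) enters only through its quasi-locality bound, which carries a linear factor \(|X|\). That bound is applied to the filtered representatives \(F_Z(Y)\), which satisfy \(\sum_{Y\supseteq Z}|Y|(1+\diam Y)^{p_\Phi}\|F_Z(Y)\|\le C\E^{\kappa_0|Z|}(1+\diam Z)^{p_\Phi}\|\cV_\Lambda(Z)\|\).

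Finally, \(K_0\) is not a number operator, since it still hops between sites; only \(K_0\ge gN\) is used.
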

The proof follows that of Theorem~\ref{thm:main}, with the polynomial locality estimates proved in \hyperref[proof:card-stability]{Section~\ref*{sec:card-extension}}.

\begin{remark}[Smoothness of the transport]
\label{rem:transport-regularity}
Under Hypothesis~\ref{hyp:cell-partitions}, if \(\cV\) is a finite-body interaction, the transport constructed in Theorem~\ref{thm:main} also has smooth coupling dependence: for every local observable \(A\), the maps \(\lambda\mapsto\beta_{0,\lambda}(A)\) and \(\lambda\mapsto\omega_{\lambda}(A)\) are \(C^\infty\), the former in operator norm, on the same interval \(I_*\).
The proof is given in Section~\ref{sec:limit}, using \cite[Corollary~6.1(iii)]{FermionicLR}.
\end{remark}

For perturbations of bounded body order, say \(|X|\le m\) on the nonzero terms, the support-size condition follows from
\[
 \|\cV\|^{\#}_{p_\Phi;\kappa_0}
 \le\E^{\kappa_0 m}\|\cV\|^{\pol}_{p_\Phi;0}.
\]
In particular, uniformly bounded onsite terms and two-site terms with \(\|\cV(\{x,y\})\|\le C(1+d(x,y))^{-\varpi}\) give \(\|\cV\|^{\#}_{p_\Phi;\kappa_0}<\infty\) whenever \(p_\Phi<\varpi-\dvol\).
If the hopping and pairing blocks also obey a pointwise bound \(\|h_{xy}\|\le C(1+d(x,y))^{-\varpi_h}\), the one-particle norm is finite for \(p_\Phi<\varpi_h-\dvol\).
Thus \(\varpi,\varpi_h>2\dvol\) allow a common exponent \(\dvol<p_\Phi<\min\{\varpi_h,\varpi\}-\dvol\), so the theorem applies under Hypothesis~\ref{hyp:cell-partitions}.

Both theorems construct a quasi-adiabatic spectral flow.
For a suitable filter \(W\) from Subsection~\ref{sec:compatible-filter-pair},
its generator satisfies, for every local observable \(A\),
\begin{equation}
 \delta_{\eta_{\lambda}}(A)
 =\sum_{X\Subset\Gamma}
 \left[\int_\R W(t)\tau_t^{\Phi_{\lambda}}(\cV(X))\,\dd t,A\right],
 \qquad \lambda\in I_*,
 \label{eq:spectral-flow-derivation}
\end{equation}
where the series converges absolutely in norm.
The spectral-flow construction also applies on a compact coupling interval containing zero if the cutoff Hamiltonians \(K_{\Lambda,\lambda}\) defined in \eqref{eq:K-Lambda-s} have simple ground-state eigenvalues separated from the rest of the spectrum by a common positive gap, uniformly in \(\Lambda\) and \(\lambda\).
Choose the filter threshold below this gap.
The locality and cutoff-limit arguments then give quasi-local transport of the limiting ground states throughout that interval.

Write \(\|\cV\|_*\) for the norm of \(\cV\) in the respective theorem.
Let \((\pi_{\lambda},\cH_{\lambda},\Omega_{\lambda})\) be the GNS representation of \(\omega_{\lambda}\), and let \(L_{\lambda}\) be its positive GNS Hamiltonian.
By Proposition~\ref{prop:form-spectrum} and Lemma~\ref{lem:local-form-core}, the gap inequalities \eqref{eq:main-gap} and \eqref{eq:polynomial-main-gap} are equivalently expressed as
\[
 \ker L_{\lambda}=\C\Omega_{\lambda},
 \qquad
 \spec(L_{\lambda})\subseteq\{0\}\cup[g_{\lambda},\infty),
 \qquad g_{\lambda}:=g-C_*|\lambda|\|\cV\|_*.
\]

Applying the constructions with gap parameter one to \(h/g\) and \(\cV/g\), and multiplying the resulting gap inequality by \(g\), shows that \(C_*\) can be chosen to depend on \(C_h/g\), rather than on \(C_h\) and \(g\) separately.
The factor \(g\) in the formula for \(\lambda_*\) therefore expresses the behaviour under a common rescaling of the energy.
Under a common rescaling of \(h\), \(\cV\), and \(g\), the locality rates \(\mu_\eta,\mu_\beta\) and the automorphism constant \(C_\beta\) can be chosen unchanged, while the generator constants \(C_{\eta,k}\) scale inversely with the energy scale.
As \(g\) tends to zero at fixed \(C_h\), the admissible perturbation strength also depends on the behaviour of \(C_*\) as \(C_h/g\) increases.

The generator bounds \eqref{eq:main-generator-locality} and \eqref{eq:card-generator} also quantify closeness to the identity.
Write \(C_\eta:=C_{\eta,2}\) in the stretched-exponential case and \(C_\eta:=C_{\eta,1}\) in the polynomial case.
For \(A\in\A_X\), rooted summation gives \(\|\delta_{\eta_u}(A)\|\le2C_\eta|X|\|A\|\|\cV\|_*\).
Integrating \eqref{eq:path-cocycle-convention} therefore yields
\begin{equation}
 \|\beta_{\lambda,\lambda'}(A)-A\|
 \le2C_\eta|\lambda'-\lambda|\,|X|\,\|A\|\,\|\cV\|_*.
 \label{eq:transport-close-to-identity}
\end{equation}
Since \(\omega_{\lambda'}=\omega_{\lambda}\circ\beta_{\lambda,\lambda'}\), the same right-hand side bounds \(|\omega_{\lambda'}(A)-\omega_{\lambda}(A)|\).
For local \(A\), the cocycle equation \eqref{eq:path-cocycle-convention} also gives
\[
 \frac{\dd}{\dd\lambda}\beta_{0,\lambda}(A)
 =\I\beta_{0,\lambda}(\delta_{\eta_\lambda}(A)).
\]
Componentwise continuity and the uniform interaction bounds make the right-hand side norm-continuous, so \(\lambda\mapsto\beta_{0,\lambda}(A)\) and \(\lambda\mapsto\omega_\lambda(A)\) are continuously differentiable.

\begin{remark}[Decay threshold and propagation bounds]
\label{rem:card-transport-threshold}
The condition \(p_\Phi>\dvol\) in Theorem~\ref{thm:card-stability} is required by the interaction-propagation theorem \cite[Theorem~2.6]{FermionicLR} and by the excitation-number comparison: the bound \eqref{eq:common-vacuum-root-bound} contains the sum \(\sum_{n\ge0}2^{n(\dvol-p_\Phi)}\).
For two-site perturbations, the sufficient norm condition \(p_\Phi<\varpi-\dvol\) gives the threshold \(\varpi>2\dvol\).
This threshold also appears in the algebraic light-cone bounds of Foss-Feig, Gong, Clark, and Gorshkov~\cite{FossFeigEtAl15}.

In \cite[Theorem~6, Proposition~8, and Theorem~9]{TeufelWessel25} we prove polynomial-time Lieb--Robinson bounds for long-range fermions and construct a polynomially local spectral flow for finite-volume Hamiltonian paths with a uniformly isolated spectral sector.
That construction requires only polynomial support-size moments, but gives a spectral-flow generator with weaker spatial decay than the Hamiltonian.
In our gap proof here, this loss would require faster decay to close the excitation-number comparison.
The lossless interaction-propagation estimates from \cite{FermionicLR} we use here instead give locality of the generator and its automorphisms at the same exponent \(p_\Phi>\dvol\).

For integer \(\dvol\ge1\), the construction in \cite[Proposition~B.1]{FermionicLR}, based on the protocols of Tran et al.\ \cite[Eq.~(3) and Sections~III--IV]{TranStateTransfer21}, gives smooth time-dependent even two-site interactions on \(\Z^{\dvol}\) with uniformly bounded \(\|\cdot\|^{\#}_{p;\kappa_0}\)-norms for \(0<p\le\dvol\), but which transfer information over distance \(R\) in times growing more slowly than every positive power of \(R\).
The restriction \(p>\dvol\) is therefore sharp for general time-dependent propagation.
\end{remark}

\begin{remark}[Uniqueness]
\label{rem:ground-state-uniqueness}
The theorems above construct pure gapped ground-states \(\omega_\lambda\) and cocycles transporting them.
The companion paper \cite{LPPLV3} proves a new criterion called \emph{local excitation detectability} on a possibly smaller coupling interval and deduces uniqueness of \(\omega_\lambda\) among all algebraic ground states.
These conclusions hold under the assumptions of Theorems~\ref{thm:main} and~\ref{thm:card-stability}; see \cite[Proposition~2.6]{LPPLV3}.
Indeed, for fixed \(p_\Phi>\dvol\), \cite[Proposition~2.6(ii), Corollary~3.2, and Theorem~3.5]{LPPLV3} give an interval about zero, contained in \(I_*\), on which detectability and thus uniqueness hold.
A positive GNS gap and quasi-local spectral flow from the free ground state do not by themselves imply uniqueness among algebraic ground states.
In the example of \cite[Section~3.6]{LPPLV3}, the unique free ground state remains unchanged and gapped along the path, so the identity cocycle transports it.
At coupling \(\lambda=1\) of that example, a second gapped ground state exists.
\end{remark}

\begin{remark}[Hall response and topological indices]
\label{rem:Hall-response-indices}
Hall conductance, Hall conductivity, and the topological indices studied in \cite{KapustinSopenko20, WMMMT25,TeufelWesle26, MMMTW26, BachmannShapiroTauber25,MazharShapiro26} are constant in phases defined by locally generated automorphic equivalence, under the symmetry, geometry, and locality assumptions of the respective constructions.
Theorem~\ref{thm:main} provides the locality required by these results, so that, under their remaining hypotheses, the interacting invariants coincide with their free-fermion values.
\end{remark}

\begin{remark}[Symmetry covariance of the transport]
\label{rem:transport-symmetry}
The transport also respects onsite symmetries of the interaction.
More precisely, let \(\zeta\) be a parity-preserving automorphism with \(\zeta(\A_X)=\A_X\), \(\zeta(\Phi_h(X))=\Phi_h(X)\), and \(\zeta(\cV(X))=\cV(X)\) for every finite \(X\).
The generator \(\eta_\lambda\) can be chosen so that
\[
 \zeta(\eta_{\lambda}(Y))=\eta_{\lambda}(Y),
 \qquad \zeta\circ\beta_{\lambda,\lambda'}=\beta_{\lambda,\lambda'}\circ\zeta
 \quad(\lambda,\lambda'\in I_*,\ Y\Subset\Gamma).
\]
In particular, number-conserving free Hamiltonians with gauge-invariant perturbations admit gauge-equivariant transport.

The flow also commutes with a conjugate-linear, multiplicative, adjoint-preserving bijection \(\zeta:\A\to\A\) satisfying the same support, parity, and interaction-invariance conditions.

We prove this at the end of \hyperref[proof:transport-symmetry]{Section~\ref*{sec:card-extension}}.
\keepwithproofpointer
\end{remark}

\subsection{Interacting systems with product ground states}
\label{sec:product-references}

The proofs of Theorems~\ref{thm:main} and~\ref{thm:card-stability} use an onsite product ground state and a GNS Hamiltonian bounded below by a positive multiple of the number of excitations.
For free fermion systems, a quasi-local change of coordinates maps the stacked free ground state to such a product state, and the one-particle gap gives the required lower bound.
The following theorem assumes these two properties directly and gives gap stability and spectral flow, in both locality regimes, also for interacting systems that have them.

Let \(\Phi_0\) be an even self-adjoint interaction with a ground state \(\omega_0\) that is the graded product of even pure onsite states.
Its restriction to each finite region is the graded product of the corresponding onsite states.
Write \(P_x\in\A_{\{x\}}\) for their even rank-one projections.
In the GNS representation \((\pi_0,\cH_0,\Omega_0)\), define
\begin{equation}
 N:=\sum_{x\in\Gamma}\pi_0(\1-P_x)
 \label{eq:product-reference-number}
\end{equation}
by the increasing sum of nonnegative quadratic forms.
Thus \(N\) counts sites departing from the specified product state, rather than physical particles.
Let \(K_0\ge0\) be the GNS Hamiltonian of \((\omega_0,\Phi_0)\), and assume
\begin{equation}
 \operatorname{Dom}(K_0^{1/2})\subseteq\operatorname{Dom}(N^{1/2}),
 \qquad
 \|K_0^{1/2}\psi\|^2\ge g\|N^{1/2}\psi\|^2
 \quad(\psi\in\operatorname{Dom}(K_0^{1/2})).
 \label{eq:product-reference-coercivity}
\end{equation}
Since \(N\ge\1-|\Omega_0\rangle\langle\Omega_0|\), assumption \eqref{eq:product-reference-coercivity} implies that \(K_0\) has a simple ground state eigenvalue zero and a spectral gap at least \(g\); see the beginning of Section~\ref{sec:product-reference-proof}.

\begin{theorem}
\label{thm:product-reference-stability}
Fix \(g,C_0>0\) and the geometric and onsite data \(C_\Gamma,\dvol,r\).
For pairs \((\Phi_0,\omega_0)\) as above satisfying \eqref{eq:product-reference-coercivity}, the following hold.
\begin{enumerate}[label=(\alph*),leftmargin=*]
\item Fix \(0<\theta_{\rm in}\le1\) and \(\mu_0,\mu_{\cV}>0\).
There is \(C_*\ge1\), depending only on \(C_\Gamma,\dvol,r,g,C_0,\mu_0,\mu_{\cV}, \theta_{\rm in}\).
For each \(0<\theta<\theta_{\rm in}\), there are constants \(C_\beta,\mu_\eta,\mu_\beta>0\), depending only on the same data and \(\theta\), and, for every \(k\ge0\), a constant \(C_{\eta,k}>0\) depending additionally on \(k\).
The conclusions of Theorem~\ref{thm:main} hold for every such pair \((\Phi_0,\omega_0)\) with
\[
 \|\Phi_0\|_{\mu_0,\theta_{\rm in};2}\le C_0
\]
and every nonzero even self-adjoint \(\cV\) with \(\|\cV\|_{\mu_{\cV},\theta_{\rm in};2}<\infty\), with \(\Phi_{\lambda}=\Phi_0+\lambda\cV\) and
\[
 I_*:=[-\lambda_*,\lambda_*],\qquad
 \lambda_*:=\frac{g}{4C_*
       \|\cV\|_{\mu_{\cV},\theta_{\rm in};2}}.
\]

\item Assume Hypothesis~\ref{hyp:cell-partitions}, fix \(p_\Phi>\dvol\), and choose \(\kappa_0=\kappa_0(\Dcell)\) as in Theorem~\ref{thm:card-stability}.
There are \(C_*\ge1\) and \(C_\beta,c_{\#},C_{\#}>0\), depending only on \(p_\Phi,\dvol,C_\Gamma,\Ccell,\Dcell,r,g,C_0\).
For every \(k\ge0\), there is also \(C_{\eta,k}>0\), depending only on these data and \(k\).
With these constants, the conclusions of Theorem~\ref{thm:card-stability} hold for every such pair \((\Phi_0,\omega_0)\) with
\[
 \|\Phi_0\|^{\#}_{p_\Phi;\kappa_0}\le C_0
\]
and every nonzero even self-adjoint \(\cV\) with \(\|\cV\|^{\#}_{p_\Phi;\kappa_0}<\infty\), with \(\Phi_{\lambda}=\Phi_0+\lambda\cV\) and
\[
 I_*:=[-\lambda_*,\lambda_*],\qquad
 \lambda_*:=\frac{g}{4C_*
                  \|\cV\|^{\#}_{p_\Phi;\kappa_0}}.
\]
\end{enumerate}
\keepwithproofpointer
\end{theorem}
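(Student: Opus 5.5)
The plan is to rerun the proof of Theorems~\ref{thm:main} and~\ref{thm:card-stability} with the free-fermion input replaced by the two properties assumed here. In those proofs the free system enters only through two facts. First, after a quasi-local change of coordinates (the Bogoliubov transformation of the doubled system), the reference state is an even onsite product state with a GNS vacuum. Second, the GNS Hamiltonian dominates \(g\) times the excitation-number operator. Here both facts are assumed outright: \(\omega_0\) is a graded product of even pure onsite states, and \eqref{eq:product-reference-coercivity} gives \(K_0\ge gN\) in form sense. So I would skip the doubling and normal-form reduction and enter the argument at the point where the product-vacuum representation and the number operator \(N\) in \eqref{eq:product-reference-number} appear.

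\textbf{Step 1: gap of \(K_0\).} Since \(N\) counts sites away from the product state, \(\Omega_0\) is the unique vector with \(N\Omega_0=0\). Moreover \(N\ge \1-|\Omega_0\rangle\langle\Omega_0|\), because the orthogonal complement of \(\Omega_0\) has at least one excited site. Hence \(\ker K_0=\C\Omega_0\) and \(\spec K_0\setminus\{0\}\subset[g,\infty)\).

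\textbf{Step 2: cutoff Hamiltonians.} For finite \(\Lambda\), I would form \(K_{\Lambda,\lambda}=K_0+\lambda\pi_0(\sum_{X\subset\Lambda}\cV(X))\) with the vacuum expectation subtracted, as in \eqref{eq:K-Lambda-s}. The key estimate is the excitation-number comparison: a relative form bound of the form
\[
 \bigl|\langle\psi,\pi_0(\textstyle\sum_{X\subset\Lambda}\cV(X)-\omega_0(\cdot))\psi\rangle\bigr|\le C\|\cV\|_*\langle\psi,N\psi\rangle,
\]
uniformly in \(\Lambda\), for \(\psi\) orthogonal to a suitably dressed vacuum. To get this, I would decompose each \(\cV(X)\) with respect to the onsite projections \(P_x\). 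Terms acting trivially on the vacuum contribute nothing. Terms creating excitations are bounded by rooted sums, which give the support-size weights \(\E^{\kappa|X|}\) or \(|X|^2\). In the polynomial case I would use the dyadic sum \(\sum_n2^{n(\dvol-p_\Phi)}\) from \eqref{eq:common-vacuum-root-bound}. This is where the bootstrap/normal-form step of the earlier proof enters: the vacuum must first be dressed so that the linear (one-excitation-creating) terms are removed up to the order needed. I would reuse that argument verbatim. It only uses the product structure, onsite evenness and the Lieb--Robinson / interaction-propagation estimates, and these hold for \(\Phi_0\) by \cite{FermionicLR} under \(\|\Phi_0\|_*\le C_0\). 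Combining this bound with \(K_0\ge gN\) then gives a uniform gap \(g-C_*|\lambda|\|\cV\|_*\) for \(K_{\Lambda,\lambda}\) on \(I_*\).

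\textbf{Step 3: spectral flow and limit.} With the uniform gap, the construction of the quasi-adiabatic generator \eqref{eq:spectral-flow-derivation} goes through unchanged. Its locality, in norms \(\mu_\eta,\theta\) or \(\|\cdot\|^{\pol}_{p_\Phi;k}\), depends only on the locality of \(\Phi_\lambda=\Phi_0+\lambda\cV\). The constants now depend on \(C_0\) instead of \(C_h/g\). The cutoff limit \(\Lambda\to\Gamma\), the identification \(\omega_\lambda=\omega_0\circ\beta_{0,\lambda}\), purity, and the GNS gap inequality then follow exactly as in Section~\ref{sec:limit} and Section~\ref{sec:card-extension}. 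The stretched-exponential case is again reduced to the exponential one by the change of metric \(d\mapsto d^{\theta_{\rm in}}\).

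\textbf{Main obstacle.} I expect the main difficulty to be Step~2: checking that the excitation-number comparison never used quadratic structure. In the free case the dressing of the vacuum may have used Wick identities or the explicit form of \(|h|\). I would first try to phrase it purely through \(K_0\ge gN\) together with the resolvent \(K_0^{-1}\) on the complement of \(\Omega_0\), replacing any explicit free formula by the abstract coercivity bound.
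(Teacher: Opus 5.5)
Your outer frame matches the paper: the free system enters only through an even onsite product vacuum and \(K_0\ge gN\), the coordinate map is the identity (so \(\Phi_{k_\st}=0\)), the locality constants depend on \(\Phi_0\) only through the common energy bound \(C_0+g\), and the stretched-exponential case follows by the metric change. The gap is in Step~2, which is the heart of the argument and which you describe incorrectly.

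\textbf{Why the proposed relative bound fails.} A bound \(|\langle\psi,R\psi\rangle|\le f\langle\psi,N\psi\rangle\) for a sum of local terms can hold only if every term annihilates the vacuum exactly. To see this, test with \(\psi=\Omega_0+\epsilon\phi\). The cross term \(2\epsilon\operatorname{Re}\langle\Omega_0,R\phi\rangle\) is linear in \(\epsilon\), whereas \(\langle\psi,N\psi\rangle=\epsilon^2\langle\phi,N\phi\rangle\). Hence no such bound exists for \(\pi_0(V_\Lambda^\ph)-\omega_0(V_\Lambda^\ph)\). Restricting to \(\psi\) orthogonal to a ``dressed vacuum'' gives no gap either, unless the Hamiltonian is exactly block diagonal with respect to that vector. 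A dressing that removes excitation-creating terms only ``up to the order needed'' leaves remainders of order \(\lambda^{k+1}\) per site with the same defect. Summed over sites they are extensive and not controlled by \(N\). A generator built from \(K_0^{-1}\) has a further problem: the reduced resolvent of an interacting \(K_0\) gives no quasi-local control. This is why the paper filters the Heisenberg dynamics with \(W\) and uses Lieb--Robinson bounds instead.

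\textbf{What the paper does instead.} The missing idea is exact, two-sided, termwise vacuum annihilation, produced by the exact spectral flow together with a bootstrap.
\begin{enumerate}
\item For each finite \(\Lambda\), the unitary \(U_{\Lambda,\lambda}\) is generated by \(\int W(t)\tau_t^{\ph,\Lambda,\lambda}(V_\Lambda^\ph)\,\dd t\), using the full perturbed dynamics. It is defined for all \(\lambda\), and it is quasi-local on \(J_{\rm loc}\) without any gap input.
\item On the \(\Lambda\)-dependent interval \(I_\Lambda\), where the gap exceeds \(g/2\), the derivative of \(K^\nf_{\Lambda,\lambda}\) is \(\sum_Z\pi_0(G_{\Lambda,\lambda}(Z))\). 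Each filtered, centered, transported term annihilates \(\Omega_0\) on both sides (Proposition~\ref{prop:centered-derivative}, Lemma~\ref{lem:termwise-block}).
\item Araki's vacuum conditional expectations localize these terms while preserving that property (Lemma~\ref{lem:vacuum-localization}). Only now does Lemma~\ref{lem:summed-form-bound} apply.
\item The continuation argument of Proposition~\ref{prop:abstract-bootstrap} then forces \(I_*\subset I_\Lambda\).
\end{enumerate}
Your order, gap first and flow afterwards in Step~3, is therefore backwards: the flow is the tool that produces the gap.

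\textbf{What actually needs checking.} None of the above uses Wick identities or \(|h|\), so your ``main obstacle'' is not one, and literally reusing that argument requires no replacement. What must be verified for a general product state is different:
\begin{enumerate}
\item \(N\ge\1-|\Omega_0\rangle\langle\Omega_0|\) and \(\ker N=\C\Omega_0\), which follow from strong convergence of \(\prod_{x\in\Lambda}\pi_0(P_x)\) to \(|\Omega_0\rangle\langle\Omega_0|\);
\item purity of \(\omega_0\);
\item existence of Araki's expectation for even pure onsite states;
\item that \(K_0\) implements \(\tau^{\Phi_0}\).
\end{enumerate}
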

The proof is given in \hyperref[proof:product-reference-stability]{Section~\ref*{sec:product-reference-proof}}.
\smallskip

In both parts, the transport is the quasi-adiabatic spectral flow generated by time-filtering the perturbation terms with the full interacting dynamics of \(\Phi_{\lambda}=\Phi_0+\lambda\cV\), as in \eqref{eq:spectral-flow-derivation}.

A sufficient condition for \eqref{eq:product-reference-coercivity}, under the decay assumptions of Theorem~\ref{thm:product-reference-stability}, is that each onsite term \(\Phi_0(\{x\})\) has ground state energy \(e_x\), rank-one ground projection \(P_x\), and gap at least \(g\), while every term \(\Phi_0(X)\) with \(|X|\ge2\) is nonnegative and satisfies \(\Phi_0(X)\prod_{x\in X}P_x=0\).
Here \(\omega_0\) is the product state determined by these onsite ground projections.
After shifting each onsite term by its ground state energy \(e_x\), the interaction is frustration-free, with product ground state \(\omega_0\).
For every finite \(\Lambda\), this gives
\[
 \sum_{X\subseteq\Lambda}\Phi_0(X)
       -\sum_{x\in\Lambda}e_x\1
 \ge g\sum_{x\in\Lambda}(\1-P_x).
\]
In the product GNS representation, the operator on the left is nonnegative and annihilates \(\Omega_0\).
For local \(A\), absolute convergence of the local derivation therefore gives
\[
 \omega_0(A^*\delta_{\Phi_0}(A))
 =\lim_{\Lambda\nearrow\Gamma}
 \left\langle\pi_0(A)\Omega_0,
 \pi_0\!\left(\sum_{X\subseteq\Lambda}\Phi_0(X)
       -\sum_{x\in\Lambda}e_x\1\right)
 \pi_0(A)\Omega_0\right\rangle\ge0.
\]
Thus \(\omega_0\) is a ground state, and its positive GNS Hamiltonian \(K_0\) is the limit of these operators on local GNS vectors.
The inequality therefore passes to local vectors, and the form-core property of Lemma~\ref{lem:local-form-core}, together with closedness of the excitation form, gives \eqref{eq:product-reference-coercivity}.
This criterion covers pinned occupation models with nonnegative repulsive terms that annihilate the chosen product state.
It also applies to arrays of isolated blocks with nondegenerate ground states and a uniform gap, after grouping clusters of uniformly bounded size and diameter into sites.
Such blocks include the onsite singlets used in symmetric-mass-generation models \cite{HeEtAl16}, Fidkowski--Kitaev blocks \cite{FidkowskiKitaev10}, and Hubbard dimers or interlayer blocks in parameter regimes with an isolated singlet ground state.

Theorem~\ref{thm:product-reference-stability} also holds, under the same decay and excitation-number hypotheses and with parity qualifications omitted, for quantum spin systems with local algebras \(\A_X:=\cB(\bigotimes_{x\in X}\cH_x)\), where \(\sup_x\dim\cH_x<\infty\).
In this formulation, \(\omega_0\) is the ordinary tensor product of pure onsite states, the interactions are self-adjoint, and \(\sup_x\dim\cH_x\) replaces \(r\) in the constant dependencies.
The proof in Section~\ref{sec:product-reference-proof} includes this extension.

For example, on \(\Gamma=\Z^{\dvol}\) with integer \(\dvol\ge1\), consider the spin-\(1/2\) Heisenberg ferromagnet in a positive field, with formal Hamiltonian
\[
 H_0:=B\sum_{x\in\Gamma}\bigl(\tfrac12\1-S_x^z\bigr)
      +\sum_{\{x,y\}}J_{xy}
       \bigl(\tfrac14\1-\mathbf S_x\cdot\mathbf S_y\bigr),
 \qquad B>0,\quad J_{xy}\ge0.
\]
The second sum is over unordered pairs of distinct sites, and \(\mathbf S_x\) denotes the spin-\(1/2\) operators.
Assume \(J_{xy}\le C(1+d(x,y))^{-\varpi}\) with \(\varpi>2\dvol\).
Each exchange term is a nonnegative multiple of the singlet projection and annihilates the fully polarized product state, whose onsite projections are \(P_x=\tfrac12\1+S_x^z\).
The preceding criterion therefore gives \(K_0\ge BN\), even when the exchange is strong relative to the field.
Theorem~\ref{thm:product-reference-stability}(b) gives stability and polynomially local spectral flow under general weak two-site perturbations with the same decay.

Arrays of identical antiferromagnetic spin-\(1/2\) Heisenberg dimers give another example: grouping each dimer into a site gives a gapped singlet product state, to which weak inter-dimer interactions with the same decay may be added.
Stability for finite-range versions of these models with short-range perturbations is covered by Yarotsky \cite[Example~2]{Yarotsky06} and Nachtergaele, Sims, and Young \cite[Theorem~2.8]{NSY24}.

\subsection{Proof strategy}
\label{sec:proof-roadmap}

The seven steps below explain the common proof strategy of Theorems~\ref{thm:main} and~\ref{thm:card-stability}.
We give enough detail for readers to follow the full argument without having to work through every technical estimate in the later sections.
Write \(\|\cV\|_*\) for the perturbation norm in the respective theorem.

\paragraph{1. Construct the candidate physical spectral flow.}
Here and in the following \emph{physical} refers to the system defined by \(\Phi_\lambda\) on \(\A\).
For a bounded perturbation of \(\Phi_h\), quasi-adiabatic flow transports the free ground state \(\omega_0\) to ground states along any coupling interval about zero on which the ground-state eigenvalue remains simple and its spectral gap exceeds the filter threshold \cite{HastingsWen05,BMNS12}.
In our argument, we fix an odd inverse-Liouvillian filter \(W\) with threshold \(g/2\) and rapid time decay.
The resulting flow is defined for all \(\lambda\in\R\), and its locality estimates do not require a gap for the perturbed system.
We first apply it to bounded perturbations obtained by restricting the physical interaction to a region \(\Lambda\Subset\Gamma\), following the approach of Nachtergaele, Sims, and Young \cite{NSY24}:
\[
 \cV_\Lambda(X):=\1_{\{X\subset\Lambda\}}\cV(X),
 \qquad
 V_\Lambda^\ph:=\sum_{X\subset\Lambda}\cV(X).
\]
The free interaction \(\Phi_h\) is left unchanged and no boundaries or edges are introduced.

Write \(\tau_t^{\ph,\Lambda,\lambda}\) for the dynamics of \(\Phi_h+\lambda\cV_\Lambda\).
The corresponding quasi-adiabatic inverse Liouvillian gives the bounded generator
\[
 \mathscr S_{\Lambda,\lambda}^\ph
 :=\cI_{\Lambda,\lambda}^\ph(V_\Lambda^\ph)
 =\int_\R W(t)\,\tau_t^{\ph,\Lambda,\lambda}(V_\Lambda^\ph)\,\dd t.
\]
It defines an inner cocycle \(\beta_{\Lambda,\lambda,\lambda'}\), with
\[
 \partial_{\lambda}u_{\Lambda,\lambda}
 =-\I\mathscr S_{\Lambda,\lambda}^\ph u_{\Lambda,\lambda},
 \qquad u_{\Lambda,0}=\1,
 \qquad
 \beta_{\Lambda,0,\lambda}(A):=u_{\Lambda,\lambda}^*Au_{\Lambda,\lambda}.
\]
Define the transported state
\[
 \omega_{\Lambda,\lambda}:=\omega_0\circ\beta_{\Lambda,0,\lambda}.
\]
Bounded perturbation theory \cite[Chapters~IV and~VII]{Kato} keeps the ground-state eigenvalue simple and the gap above \(g/2\) for sufficiently small \(|\lambda|\).
The ground-state transport identity of Proposition~\ref{prop:bootstrap-flow} then shows that \(\omega_{\Lambda,\lambda}\) is a ground state of \(\Phi_h+\lambda\cV_\Lambda\).
This interval may depend on \(\Lambda\), since \(\|V_\Lambda^\ph\|\) can grow with \(|\Lambda|\).
It remains to prove that the gap stays above \(g/2\) on an interval independent of \(\Lambda\), and then remove the cutoff.

\paragraph{2. Pass to product-vacuum coordinates and establish the excitation-number bound.}
Topological obstructions can prevent a quasi-local transformation of the original ground state to an onsite product vacuum.
For a number-conserving band Hamiltonian \(h_{\mathrm{nc}}\), the occupied projection \(P:=\1_{(-\infty,0)}(h_{\mathrm{nc}})\) may be topologically nontrivial, obstructing an exponentially localized orthonormal basis of occupied states.
One can remove this obstruction by adjoining a free system with one-particle Hamiltonian \(-h_{\mathrm{nc}}\), whose occupied projection is \(\1-P\).

Hastings' explicit transformation of the doubled free system \cite{Hastings19} implements this idea for general quadratic Hamiltonians, including pairing terms.
It maps the doubled ground state to an onsite product vacuum by a quasi-local change of coordinates, in which the Hamiltonian describes positive-energy excitations above that vacuum.

To apply this construction, let \(\Hsp_{\rm aux}\) be a copy of \(\Hsp\), with auxiliary algebra \(\A_{\rm aux}:=\CAR(\Hsp_{\rm aux})\).
Let \(J_{\rm aux}:\Hsp_{\mathrm M}\to\Hsp_{\mathrm{M,aux}}\) be the canonical unitary identification of their Majorana coefficient spaces.
On this copy, introduce \(h_{\rm aux}:=-J_{\rm aux}hJ_{\rm aux}^*\), and denote its free ground state by \(\omega_{\rm aux}\).
An explicit real orthogonal transformation \(O_\st\) of the Majorana generators induces a parity-preserving Bogoliubov automorphism \(\sigma_\st\) on \(\A^\st:=\A\widehat\otimes\A_{\rm aux}\), where \(\st\) denotes the stacked system.
It sends the stacked free dynamics to a free dynamics with positive excitation matrix \(|h|\).
In these coordinates, \(\omega_\at:=(\omega_0\widehat\otimes\omega_{\rm aux}) \circ\sigma_\st^{-1}\) is an onsite product vacuum state, where the subscript \(\at\) stands for product-vacuum.
Figure~\ref{fig:stable-product-vacuum} depicts this change of coordinates.
We use one-particle functional calculus in Proposition~\ref{prop:exponential-stable-locality} to control the locality of \(\sigma_\st\).

Let \((\pi_\at,\cH_\at,\Omega_\at)\) denote the GNS representation of \(\omega_\at\).
The free excitation Hamiltonian in this representation is
\begin{equation}
 K_0=\dd\Gamma(|h|),
 \qquad g\1\le |h|\le\|h\|\1,
 \qquad\text{and hence}\qquad K_0\ge gN.
 \label{eq:abstract-roadmap-free-domination}
\end{equation}
Here \(\dd\Gamma(|h|)\) denotes fermionic second quantization, and \(N:=\sum_x n_x\) counts excitations above the product vacuum, with \(n_x\) the excitation number at site \(x\).
The bounds on \(|h|\) mean that each excitation carries energy between \(g\) and \(\|h\|\).
They therefore give \(gN\le K_0\le\|h\|N\) as quadratic forms and \(\operatorname{Dom}(K_0^{1/2})=\operatorname{Dom}(N^{1/2})\).
Since \(\ker N=\C\Omega_\at\), the lower bound also shows that \(\Omega_\at\) spans the ground space of \(K_0\).

On this fixed GNS Hilbert space, now set
\[
 V_\Lambda:=\pi_\at\sigma_\st(V_\Lambda^\ph\widehat\otimes\1),
 \qquad K_{\Lambda,\lambda}:=K_0+\lambda V_\Lambda.
\]
Since \(V_\Lambda\) is bounded, all \(K_{\Lambda,\lambda}\) are self-adjoint on \(\operatorname{Dom}(K_0)\).
Let \(I_\Lambda\) be the connected component containing zero of the set of parameters \(\lambda\) for which the lowest eigenvalue of \(K_{\Lambda,\lambda}\) is simple and is separated from the rest of the spectrum by more than the \emph{bootstrap threshold} \(g/2\).
For \(\lambda\in I_\Lambda\), denote this lowest eigenvalue by \(E_{\Lambda,\lambda}\) and its rank-one spectral projection by \(P_{\Lambda,\lambda}\).

We now explain the \emph{excitation-counting argument} that allows us to control the spectral gap uniformly in \(\Lambda\).
In this comparison, we identify local observables with their representatives under \(\pi_\at\).
Suppose that on the form domain of \(K_0\) a Hamiltonian has the normal form
\begin{equation}\label{eq:roadmap-normal-form}
 K_{\lambda}^\nf=K_0+\lambda\sum_j R_{j,\lambda},
\end{equation}
where each \(R_{j,\lambda}=R_{j,\lambda}^*\) is supported in a finite set \(X_j\), but need not be positive.
Let \(P^\vac\) project onto the global product vacuum, and set \(P_X^\vac:=\1_{\{0\}}\!\left(\sum_{x\in X}n_x\right)\), the projection onto states with no excitations in \(X\).
Assume further that each \(R_{j,\lambda}\) annihilates the local vacuum on its support \(X_j\) on both sides,
\begin{equation}
 R_{j,\lambda}P_{X_j}^{\vac}=P_{X_j}^{\vac}R_{j,\lambda}=0,
 \qquad\text{and}\quad
 f_{\lambda}:=\sup_{x\in\Gamma}\sum_{j:\,x\in X_j}\|R_{j,\lambda}\|<\infty.
 \label{eq:abstract-roadmap-rooted-strength}
\end{equation}
The first relation in \eqref{eq:abstract-roadmap-rooted-strength} implies, for every \(j\),
\begin{equation*}
 -\|R_{j,\lambda}\|(\1-P_{X_j}^{\vac})\le R_{j,\lambda}
 \le\|R_{j,\lambda}\|(\1-P_{X_j}^{\vac}),
 \qquad
 \1-P_{X_j}^{\vac}\le N_{X_j}:=\sum_{x\in X_j}n_x\,.
\end{equation*}
The second inequality holds because every vector orthogonal to the local vacuum contains at least one excitation in \(X_j\).

Define \(R_{\lambda}:=\sum_jR_{j,\lambda}\) as a quadratic form on \(\operatorname{Dom}(N^{1/2})\).
Summing the estimates \(|\langle\psi,R_{j,\lambda}\psi\rangle|\le\|R_{j,\lambda}\|\langle\psi,N_{X_j}\psi\rangle\) over \(j\) and regrouping the resulting nonnegative contributions according to the sites \(x\in X_j\) gives, for every \(\psi\in\operatorname{Dom}(N^{1/2})\),
\begin{equation}
 \begin{aligned}
 |\langle\psi,R_{\lambda}\psi\rangle|
  \le \sum_j\|R_{j,\lambda}\|\langle\psi,N_{X_j}\psi\rangle =\sum_{x\in\Gamma}
   \Big(\sum_{j:\,x\in X_j}\|R_{j,\lambda}\|\Big)
   \langle\psi,n_x\psi\rangle
 \le f_{\lambda}\langle\psi,N\psi\rangle.
\end{aligned}
 \label{eq:abstract-roadmap-relative-form-bound}
\end{equation}
The series \(\sum_j\langle\psi,R_{j,\lambda}\psi\rangle\) therefore converges absolutely for every \(\psi\in\operatorname{Dom}(N^{1/2})\), so \(R_\lambda\) defines a quadratic form on this domain.
Using \(K_0\ge gN\), we obtain
\begin{equation}
  K_{\lambda}^\nf=K_0+\lambda R_{\lambda}\ge(g-|\lambda|f_{\lambda})N.
 \label{eq:abstract-roadmap-form-bound}
\end{equation}
The vacuum has energy zero because \(K_0\) and every \(R_{j,\lambda}\) annihilate it.
On the orthogonal complement of the vacuum we have \(N\ge\1\).
If \(|\lambda|f_\lambda<g\), the vacuum is therefore the unique ground state of \(K_\lambda^\nf\) and the gap is at least \(g-|\lambda|f_\lambda\).

Hence, to apply this estimate, we must bring \(K_{\Lambda,\lambda}-E_{\Lambda,\lambda}\) unitarily into the normal form \eqref{eq:roadmap-normal-form}, with local terms satisfying \eqref{eq:abstract-roadmap-rooted-strength} uniformly in \(\Lambda\) and along the relevant coupling interval.

\paragraph{3. Differentiate the centered Hamiltonian to construct the normal form.}
We express the physical spectral flow of Step~1 in the fixed GNS representation by setting
\[
 \mathscr S_{\Lambda,\lambda}
 :=\pi_\at\sigma_\st(\mathscr S_{\Lambda,\lambda}^\ph\widehat\otimes\1),
 \qquad
 U_{\Lambda,\lambda}:=\pi_\at\sigma_\st(u_{\Lambda,\lambda}\widehat\otimes\1),\qquad \sigma_{\Lambda,\lambda}
 :=\sigma_\st\circ
       (\beta_{\Lambda,0,\lambda}\widehat\otimes\id_{\rm aux}).
\]
By \eqref{eq:algebra-GNS-dynamics}, \(\mathscr S_{\Lambda,\lambda}\) is given by the inverse-filter formula \eqref{eq:spectral-flow-generator} for \(K_{\Lambda,\lambda}\) and \(V_\Lambda\).
The definitions above and the ground-state transport explained in Step~1 therefore give
\begin{align}
 \pi_\at(\sigma_{\Lambda,\lambda}(A))
 &=U_{\Lambda,\lambda}^*\pi_\at(\sigma_\st(A))U_{\Lambda,\lambda},
 && A\in\A^\st,\ \lambda\in\R,
 \label{eq:roadmap-standard-spectral-flow}\\
 U_{\Lambda,\lambda}^*P_{\Lambda,\lambda}U_{\Lambda,\lambda}
 &=P^\vac,
 &&\lambda\in I_\Lambda.
 \label{eq:roadmap-ground-state-transport}
\end{align}
On \(I_\Lambda\), the centered moving-frame Hamiltonian
\[
 K_{\Lambda,\lambda}^\nf
 :=U_{\Lambda,\lambda}^*(K_{\Lambda,\lambda}-E_{\Lambda,\lambda})U_{\Lambda,\lambda}
\]
has the same gap as \(K_{\Lambda,\lambda}\), with ground state energy zero and fixed ground state vector \(\Omega_\at\).

It remains to show that \(K_{\Lambda,\lambda}^\nf\) can be written in the form \eqref{eq:roadmap-normal-form}, with local terms satisfying \eqref{eq:abstract-roadmap-rooted-strength}.
Since \(K_{\Lambda,0}^\nf=K_0\), it suffices to decompose its derivative with respect to \(\lambda\) into quasi-local terms that individually annihilate the global vacuum.
The spectral-flow generator \(\mathscr S_{\Lambda,\lambda}\) is chosen so that the commutator below removes the couplings between the ground state of \(K_{\Lambda,\lambda}\) and its orthogonal complement.
The resulting filtered terms act as scalars on the ground-state subspace, so subtracting their individual expectations makes them annihilate it.
Differentiation gives the first equality below, and the inverse-Liouvillian identity \eqref{eq:filtered-commutator} gives the second \cite[Proposition~6.9]{NSY19}:
\begin{align*}
 \frac{\dd}{\dd \lambda}K_{\Lambda,\lambda}^\nf
  =U_{\Lambda,\lambda}^*
   \bigl(V_\Lambda+\I[\mathscr S_{\Lambda,\lambda},K_{\Lambda,\lambda}]
                         -\dot E_{\Lambda,\lambda}\1\bigr)U_{\Lambda,\lambda} =\sum_{Z\subset\Lambda}
       \pi_\at\sigma_{\Lambda,\lambda}\bigl(
       \cV_\Lambda(Z)^{\D_{\Lambda,\lambda}^\ph}\widehat\otimes\1\bigr)
       -\dot E_{\Lambda,\lambda}\1.
\end{align*}
Here the filtered perturbation terms are
\begin{equation}
 \cV_\Lambda(Z)^{\D_{\Lambda,\lambda}^\ph}
 :=\int_\R w(t)\tau_t^{\ph,\Lambda,\lambda}
                   (\cV_\Lambda(Z))\,\dd t,
 \label{eq:roadmap-filtered-perturbation}
\end{equation}
where the real filter \(w\) satisfies \(W'=w-\delta_0\) distributionally, \(\widehat w(0)=1\), and \(\supp\widehat w\subset(-g/2,g/2)\).
On \(I_\Lambda\), this support condition makes each term \(\pi_\at\sigma_\st(\cV_\Lambda(Z)^{\D_{\Lambda,\lambda}^\ph} \widehat\otimes\1)\) block diagonal between \(\ran P_{\Lambda,\lambda}\) and its orthogonal complement; see \eqref{eq:D-ground-blocks}.
Since \(\widehat w(0)=1\), filtering preserves ground-state expectations.
Together with \(\omega_\at\circ\sigma_\st=\omega_0\widehat\otimes\omega_{\rm aux}\), \eqref{eq:roadmap-standard-spectral-flow} shows that \(\omega_{\Lambda,\lambda}(A)\) equals the expectation of \(\pi_\at\sigma_\st(A\widehat\otimes\1)\) in \(U_{\Lambda,\lambda}\Omega_\at\), for every physical observable \(A\).
The Hellmann--Feynman formula therefore gives
\[
 \dot E_{\Lambda,\lambda}
 =\omega_{\Lambda,\lambda}(V_\Lambda^\ph)
 =\sum_{Z\subset\Lambda}\omega_{\Lambda,\lambda}
       (\cV_\Lambda(Z)^{\D_{\Lambda,\lambda}^\ph}).
\]
For every real \(\lambda\) let
\[
 G_{\Lambda,\lambda}(Z)
 :=\sigma_{\Lambda,\lambda}\bigl(
   \cV_\Lambda(Z)^{\D_{\Lambda,\lambda}^\ph}\widehat\otimes\1\bigr)
   -\omega_{\Lambda,\lambda}\bigl(
   \cV_\Lambda(Z)^{\D_{\Lambda,\lambda}^\ph}\bigr)\1.
\]
On \(I_\Lambda\), subtracting the individual expectations removes the ground-state block of each filtered term.
Ground-state transport \eqref{eq:roadmap-ground-state-transport} then gives
\[
 \pi_\at(G_{\Lambda,\lambda}(Z))P^\vac
 =P^\vac\pi_\at(G_{\Lambda,\lambda}(Z))=0.
\]
Thus the centered derivative is a sum of quasi-local terms that individually annihilate the fixed vacuum:
\begin{equation}
 \frac{\dd}{\dd \lambda}K_{\Lambda,\lambda}^\nf
 =\sum_{Z\subset\Lambda}\pi_\at(G_{\Lambda,\lambda}(Z)),
 \qquad \lambda\in I_\Lambda.
 \label{eq:abstract-roadmap-centered-derivative}
\end{equation}

After integrating \eqref{eq:abstract-roadmap-centered-derivative}, we localize each summand with the vacuum conditional expectation \cite{Araki04}, which replaces the degrees of freedom outside a region by their vacuum state rather than averaging them with the normalized trace.
If a self-adjoint observable annihilates the global vacuum, its vacuum conditional expectation onto a finite region annihilates the local vacuum on both sides; see Lemma~\ref{lem:vacuum-localization}.
For a nonempty set \(Z\), use its neighborhoods \(Z^{(R_n)}\) at the dyadic radii \(R_0:=0\), \(R_n:=2^{n-1}\) for \(n\ge1\), and define
\begin{equation*}
 \Delta_{Z,0}^\vac:=\EE_Z^\vac,
 \qquad
 \Delta_{Z,n}^\vac
 :=\EE_{Z^{(R_n)}}^\vac-\EE_{Z^{(R_{n-1})}}^\vac
 \quad(n\ge1).
\end{equation*}
For every real \(\lambda\), set
\begin{equation*}
 R_{\Lambda,\lambda}(Z,n)
 :=\Delta_{Z,n}^\vac\!\left(
   \int_0^1G_{\Lambda,t\lambda}(Z)\,\dd t\right).
\end{equation*}
Integrating \eqref{eq:abstract-roadmap-centered-derivative} and summing the telescopic shells gives the following normal form on \(I_\Lambda\cap J_{\rm loc}\), with \(J_{\rm loc}\) defined in Step~4:
\begin{equation}
 K_{\Lambda,\lambda}^\nf
 =K_0+\lambda\sum_{Z\subset\Lambda}\sum_{n\ge0}
   \pi_\at(R_{\Lambda,\lambda}(Z,n)),
 \qquad
 R_{\Lambda,\lambda}(Z,n)P_{Z^{(R_n)}}^{\vac}
 =P_{Z^{(R_n)}}^{\vac}R_{\Lambda,\lambda}(Z,n)=0.
 \label{eq:abstract-roadmap-vacuum-normal-form}
\end{equation}
The estimates in Step~4 prove absolute norm convergence of the sums over \(Z\) and \(n\) in \eqref{eq:abstract-roadmap-vacuum-normal-form}, for each fixed \(\Lambda\), and give the bound on the sum of norms in \eqref{eq:abstract-roadmap-rooted-strength}, uniformly in \(\Lambda\).

\paragraph{4. Prove locality uniformly in \(\Lambda\).}
The uniform locality estimates will be proved on the \(\Lambda\)-independent coupling domain
\begin{equation*}
 J_{\rm loc}:=
 \left\{\lambda\in\R\,|\,
 |\lambda|\|\cV\|_*\le g\right\}.
\end{equation*}
We decompose the physical generator \(\mathscr S_{\Lambda,\lambda}^\ph\) into local terms \(\eta_{\Lambda,\lambda}(Y)\), whose sum converges in norm, and obtain local decompositions of the filtered perturbation terms \(\cV_\Lambda(Z)^{\D_{\Lambda,\lambda}^\ph}\) defined in \eqref{eq:roadmap-filtered-perturbation}.
For stretched-exponential decay, we integrate standard Lieb--Robinson bounds \cite{LR72,NSY18} for \(\tau_t^{\ph,\Lambda,\lambda}\) against \(W\) and \(w\), respectively, and use tracial conditional expectations to form local shells.
For polynomial decay, we use the lossless interaction representatives of \cite[Theorems~2.6--2.7 and Corollary~6.1]{FermionicLR} directly.
Section~\ref{sec:card-extension} shows that these decompositions preserve the spatial exponent \(p_\Phi\).
In both decay regimes, the interaction \(\eta_{\Lambda,\lambda}\) obtained from these decompositions represents the physical spectral-flow generator \(\mathscr S_{\Lambda,\lambda}^\ph\) and has weighted norm bounded by a constant times \(\|\cV\|_*\).
Since \(J_{\rm loc}\) has length \(2g/\|\cV\|_*\), the integral of this norm is bounded independently of \(\Lambda\) and the perturbation strength.
Applying Lieb--Robinson bounds to the evolution generated by \(\eta_{\Lambda,\lambda}\) therefore gives uniform locality estimates for the physical flow \(\beta_{\Lambda,0,\lambda}\).
Combining these estimates with the locality of the fixed coordinate transformation \(\sigma_\st\) gives locality of \(\sigma_{\Lambda,\lambda}\); see Lemma~\ref{lem:uniform-alpha-locality}.
Apply the locality bound for \(\sigma_{\Lambda,\lambda}\) to the local decomposition of \(\cV_\Lambda(Z)^{\D_{\Lambda,\lambda}^\ph}\) in \eqref{eq:roadmap-filtered-perturbation}.
Subtracting its expectation in \(\omega_{\Lambda,\lambda}\) gives the locality estimate for the centered observable \(G_{\Lambda,\lambda}(Z)\).
The vacuum conditional expectations from Step 3 then give decay of the norms \(\|R_{\Lambda,\lambda}(Z,n)\|\) as the shell radius \(R_n\) increases.

These locality estimates yield a constant \(C_*\ge1\), depending only on the fixed locality, geometric, and free data, but not on \(\Lambda\), \(\lambda\), or the perturbation strength \(\|\cV\|_*\), such that
\begin{equation}
 f_{\Lambda,\lambda}:=
 \sup_{y\in\Gamma}
 \sum_{\substack{Z\subset\Lambda,\ n\ge0\\y\in Z^{(R_n)}}}
 \|R_{\Lambda,\lambda}(Z,n)\|
 \le C_*\|\cV\|_*,
 \label{eq:abstract-roadmap-rooted-normal-form}
\end{equation}
for all \(\Lambda\Subset\Gamma\) and \(\lambda\in J_{\rm loc}\).

\paragraph{5. Closing the bootstrap.}
Applying the excitation-counting estimate \eqref{eq:abstract-roadmap-form-bound} to the vacuum normal form \eqref{eq:abstract-roadmap-vacuum-normal-form} gives
\begin{equation}
 K_{\Lambda,\lambda}^\nf
 \ge\bigl(g-|\lambda|C_*
 \|\cV\|_*\bigr)N
 \label{eq:abstract-roadmap-gap-improvement}
\end{equation}
for \(\lambda\in I_\Lambda\cap J_{\rm loc}\).
Define, as in the main theorems,
\begin{equation*}
 \lambda_*:=
 \frac{g}{4C_*\|\cV\|_*},
 \qquad
 I_*:=[-\lambda_*,\lambda_*].
\end{equation*}
Since \(C_*\ge1\), one has \(I_*\subset J_{\rm loc}\), and the right-hand side of \eqref{eq:abstract-roadmap-gap-improvement} is at least \((3g/4)N\) on \(I_\Lambda\cap I_*\).
If \(I_\Lambda\) had an endpoint in \(I_*\), the margin \(3g/4>g/2\) would allow bounded perturbation theory to extend the simple ground state eigenvalue beyond it while keeping the gap above \(g/2\).
The spectral flow with our fixed filter would therefore still satisfy the ground-state transport identity \eqref{eq:roadmap-ground-state-transport} on the extended interval.
This continuation argument gives \(I_*\subset I_\Lambda\) for every finite \(\Lambda\), and the cutoff Hamiltonians have a simple ground state eigenvalue and a gap at least \(3g/4\) throughout \(I_*\).
Figure~\ref{fig:bootstrap} illustrates this continuity argument.

\begin{figure}[htbp]
\centering
\begin{tikzpicture}[stability diagram,x=.9cm,y=3cm]
 \def\xL{-4.7}\def\xRt{4.7}
 \fill[gsblk!16] (-1,0) rectangle (1,1.1);
 \draw[black!45,dash pattern=on 3pt off 2pt] (\xL,.5) -- (\xRt,.5)
   node[fig note,anchor=west] {bootstrap threshold};
 \draw[black!45,dash pattern=on 3pt off 2pt] (\xL,.75) -- (\xRt,.75);
 \draw[black!30,dash pattern=on 1pt off 1.5pt] (\xL,1) -- (0,1);
 \foreach \s in {-1,1} \draw[black!40,dash pattern=on 1pt off 1.5pt] (\s,0) -- (\s,.75);
 \draw[phys,line width=.9pt] (-3.2,.2) -- (0,1) -- (2.7,.325);
 \draw[phys,line width=.9pt,dash pattern=on 1.5pt off 1.5pt]
   (-4,0) -- (-3.2,.2) (2.7,.325) -- (4,0);
 \foreach \s/\v in {-1/.75,1/.75,-2/.5,2/.5}
   \filldraw[fill=white,draw=phys,line width=.6pt] (\s,\v) circle[radius=1.6pt];
 \node[phys!85!black,font=\footnotesize,anchor=west,align=left] at (3.35,.28)
   {$g-C_*|\lambda|\,\|\cV\|_*$\\[-1pt]proved on $I_\Lambda\cap J_{\rm loc}$};
 \draw[aux,line width=.8pt,smooth] plot coordinates
   {(-3.2,.5) (-2.6,.62) (-1.9,.78) (-1.2,.9) (-.5,.975) (0,1)
    (.6,.972) (1.3,.885) (1.9,.72) (2.3,.605) (2.7,.5)};
 \draw[aux,line width=.8pt,dash pattern=on 1.5pt off 1.5pt,smooth] plot coordinates
   {(-3.75,.37) (-3.45,.44) (-3.2,.5)};
 \draw[aux,line width=.8pt,dash pattern=on 1.5pt off 1.5pt,smooth] plot coordinates
   {(2.7,.5) (2.95,.43) (3.2,.37)};
 \foreach \s in {-3.2,2.7} \fill[aux] (\s,.5) circle[radius=1.6pt];
 \node[aux!85!black,font=\footnotesize,anchor=south west] at (1.55,.9)
   {gap of $K_{\Lambda,\lambda}$ (schematic)};
 \draw[->] (\xL,0) -- (\xRt+.3,0) node[below left=1pt] {$\lambda$};
 \draw[->] (\xL,0) -- (\xL,1.2);
 \foreach \v/\t in {.5/{g/2},.75/{3g/4},1/g}
   \draw (\xL,\v) -- ++(-.1,0) node[left,font=\footnotesize] {$\t$};
 \foreach \s/\t in {-4/-4\lambda_*,-2/-2\lambda_*,-1/-\lambda_*,0/0,1/\lambda_*,2/2\lambda_*,4/4\lambda_*}
   \draw (\s,0) -- ++(0,-.025) node[below=-1pt,font=\footnotesize] {$\t$};
 \draw[gsblk!75!black,line width=.9pt] (-1,-.2) -- (1,-.2);
 \foreach \s in {-1,1} \fill[gsblk!75!black] (\s,-.2) circle[radius=1.6pt];
 \node[gsblk!70!black,font=\footnotesize,anchor=west] at (1.15,-.2) {$I_*$};
 \draw[aux,line width=.9pt] (-3.2,-.3) -- (2.7,-.3);
 \foreach \s in {-3.2,2.7}
   \filldraw[fill=white,draw=aux,line width=.6pt] (\s,-.3) circle[radius=1.6pt];
 \node[aux!85!black,font=\footnotesize,anchor=west] at (2.85,-.3) {$I_\Lambda$};
 \draw[phys,line width=.9pt] (-4,-.4) -- (4,-.4);
 \draw[phys,line width=.9pt,dash pattern=on 1.5pt off 1.5pt]
   (\xL,-.4) -- (-4,-.4) (4,-.4) -- (\xRt,-.4);
 \node[phys!85!black,font=\footnotesize,anchor=west] at (\xRt+.15,-.4) {$J_{\rm loc}$};
\end{tikzpicture}
\caption{Closing the bootstrap for a fixed cutoff region \(\Lambda\).
The orange curve shows a schematic gap of \(K_{\Lambda,\lambda}\), and the blue curve shows the lower bound \(g-C_*|\lambda|\|\cV\|_*\), proved on \(I_\Lambda\cap J_{\rm loc}\).
Since \(C_*\ge1\), \([-4\lambda_*,4\lambda_*]\subseteq J_{\rm loc}\), so the locality estimates apply throughout the range of the blue curve.
The lower blue line marks this guaranteed interval, while its dashed extensions indicate that \(J_{\rm loc}=[-4C_*\lambda_*,4C_*\lambda_*]\) may be larger.
The shaded interval is \(I_*=[-\lambda_*,\lambda_*]\), where the gap bound is at least \(3g/4\).
By continuity from within \(I_\Lambda\), an endpoint of \(I_\Lambda\) in \(I_*\) would have gap at least \(3g/4>g/2\), allowing continuation beyond it and proving \(I_*\subset I_\Lambda\); see Proposition~\ref{prop:abstract-bootstrap} and Corollary~\ref{cor:cutoff-gap-form}.}
\label{fig:bootstrap}
\end{figure}

\paragraph{6. Transfer the gap to the physical cutoff state.}
Recall that \(\pi_\at\) is the product-vacuum GNS representation with vacuum vector \(\Omega_\at\).
On \(I_*\), the unitary \(U_{\Lambda,\lambda}\) maps this vacuum to a ground state vector of \(K_{\Lambda,\lambda}\).
Its vector state
\[
 \widehat\omega_{\Lambda,\lambda}(A)
 :=\langle U_{\Lambda,\lambda}\Omega_\at,
              \pi_\at(A)U_{\Lambda,\lambda}\Omega_\at\rangle,
 \qquad A\in\A^\st,
\]
satisfies
\begin{equation}
 \widehat\omega_{\Lambda,\lambda}\circ\sigma_\st
 =\omega_{\Lambda,\lambda}\widehat\otimes\omega_{\rm aux}.
 \label{eq:abstract-roadmap-flow-descent}
\end{equation}
Here \(\omega_{\Lambda,\lambda}=\omega_0\circ\beta_{\Lambda,0,\lambda}\) is the transported physical state from Step~1, and \(\omega_{\rm aux}\) is the auxiliary free ground state.
The factorization follows from the flow identity \eqref{eq:roadmap-standard-spectral-flow} and \(\omega_\at\circ\sigma_\st=\omega_0\widehat\otimes\omega_{\rm aux}\).
To transfer the gap bound for \(K_{\Lambda,\lambda}\) to the physical state \(\omega_{\Lambda,\lambda}\), fix \(A\in\A_{\loc}\) and set
\[
 \xi:=\pi_\at\sigma_\st(A\widehat\otimes\1)
             U_{\Lambda,\lambda}\Omega_\at.
\]
The transformation \(\sigma_\st\) intertwines the physical dynamics with the dynamics generated by \(K_{\Lambda,\lambda}\).
Applying the gap bound from Step~5 and the state factorization \eqref{eq:abstract-roadmap-flow-descent} gives
\begin{align*}
 \operatorname{Re}\omega_{\Lambda,\lambda}
       \bigl(A^*\delta_{\Phi_h+\lambda\cV_\Lambda}(A)\bigr)
 &=\langle\xi,(K_{\Lambda,\lambda}-E_{\Lambda,\lambda})\xi\rangle\\
 &\ge(g-C_*|\lambda|\|\cV\|_*)
       \bigl(\|\xi\|^2-|\langle U_{\Lambda,\lambda}\Omega_\at,\xi\rangle|^2\bigr)\\
 &=(g-C_*|\lambda|\|\cV\|_*)
       \bigl(\omega_{\Lambda,\lambda}(A^*A)-|\omega_{\Lambda,\lambda}(A)|^2\bigr).
\end{align*}
This proves the ground-state gap inequality for the physical state \(\omega_{\Lambda,\lambda}\), uniformly in \(\Lambda\) on \(I_*\).
Proposition~\ref{prop:decay-independent-descent} proves that \(\xi\in\operatorname{Dom}(K_{\Lambda,\lambda})\), which justifies applying the spectral gap inequality.

\paragraph{7. Remove the perturbation cutoff.}
We construct the full spectral-flow generator \(\eta_{\lambda}\) by applying Step~4 to the full physical dynamics, with \(\cV\) in place of \(\cV_\Lambda\).
To compare \(\eta_{\lambda}\) with the cutoff generators, choose any increasing sequence \((\Lambda_n)_{n\in\N}\) of finite subsets of \(\Gamma\) with \(\bigcup_n\Lambda_n=\Gamma\).
Along this exhaustion, the cutoff dynamics \(\tau_t^{\ph,\Lambda_n,\lambda}\) of \(\Phi_h+\lambda\cV_{\Lambda_n}\) converge in norm on each local observable to the full dynamics \(\tau_t^{\ph,\lambda}\) of \(\Phi_h+\lambda\cV\), uniformly in \(\lambda\in I_*\) and on compact time intervals.
Combining this convergence with the time decay of \(W\) and the uniform locality estimates of Step~4 gives
\[
 \eta_{\Lambda_n,\lambda}\stackrel{n\to\infty}{\longrightarrow}\eta_{\lambda}
 \qquad\text{locally and uniformly for }\lambda\in I_*\,.
\]
The convergence holds in the local weighted interaction norms of Appendix~\ref{app:nonautonomous-background}, with a slightly weaker spatial weight.
Proposition~\ref{prop:nonautonomous-local-limit-package} gives existence of the full physical cocycle \(\beta_{\lambda,\lambda'}\) generated by \(\eta_{\lambda}\) and convergence of the cutoff cocycles:
\[
 \sup_{\lambda,\lambda'\in I_*}
 \|\beta_{\Lambda_n,\lambda,\lambda'}(A)-\beta_{\lambda,\lambda'}(A)\|\stackrel{n\to\infty}{\longrightarrow}0
 \qquad(A\in\A).
\]
Hence the physical cutoff states converge pointwise to \(\omega_{\lambda}:=\omega_0\circ\beta_{0,\lambda}\), independently of the exhaustion.
This state is pure because \(\omega_0\) is pure and \(\beta_{0,\lambda}\) is an automorphism.

Passing the uniform gap inequality from Step 6 to the limit shows that \(\omega_\lambda\) is a ground state of \(\Phi_h+\lambda\mathcal V\) with the claimed gap bound.
All admissible filters give the same cutoff ground states and hence the same limiting ground-state path; see \eqref{eq:filter-independent-cutoff-state}.
The corresponding cocycles may differ.

Sections~\ref{sec:atomicization}--\ref{sec:limit} carry out the full proof first for exponential decay and then deduce the stretched-exponential case by a change of metric.
Section~\ref{sec:card-extension} proves the polynomial case, using the same algebraic and bootstrap arguments with the appropriate locality estimates.
For the interacting systems of Theorem~\ref{thm:product-reference-stability}, the unperturbed ground state is already a product state and its GNS Hamiltonian satisfies \(K_0\ge gN\); the coordinate map is the identity, and the subsequent normal-form, bootstrap, and limit arguments apply directly.

\subsection{Discussion of related literature}
\label{sec:comparison-gap-stability}

\paragraph{Stability of free fermion systems.}
Hastings \cite{Hastings19} and Koma \cite{Koma20} work on finite cubic lattices, assuming exponential spatial decay of the free Hamiltonian and perturbation and a free spectral gap bounded below independently of the volume.
Hastings allows free Hamiltonians with pairing and general even perturbations.
Koma develops Hastings' approach for number-conserving free Hamiltonians and assumes either bounded interaction order or an exponential support-size weight on the perturbation \cite[Assumptions~2.1--2.3]{Koma20}.
The use of doubling to overcome the topological obstruction to a product-state reduction goes back to Hastings \cite{Hastings08} and we use the explicit Majorana transformation of \(h\oplus(-h)\) from his gap-stability proof \cite[Section~III, (8)--(15)]{Hastings19}.
Hastings' argument already treats unperturbed Hamiltonians with dispersion by comparison with \(H_{\rm proj}=\sum_j n_j\): his Theorem~4 assumes \(H_0\ge H_{\rm proj}\) with the same ground space and gives linear gap loss \cite[Sections~III--IV]{Hastings19}.
The relative-bound argument for block-diagonal perturbations appears in the work of Bravyi, Hastings, and Michalakis \cite[Section~3]{BHM10}.
The spectral-flow bootstrap and the passage from quasi-local terms preserving the ground space to local terms annihilating local ground states follow Bravyi and Hastings \cite[Sections~4--7]{BH11}.
Hastings estimates \(U_{\lambda}^*H_0U_{\lambda}-H_0\) and \(\lambda U_{\lambda}^*VU_{\lambda}\) separately, using differentiation for the first term, and then filters the contributions from \(H_0\) and \(\lambda V\) \cite[(30)--(31) and Lemma~2]{Hastings19}.
We simplify this construction by differentiating the full centered moving-frame Hamiltonian: the inverse-Liouvillian identity expresses its derivative directly through centered, filtered perturbation terms.

\paragraph{Product-state perturbation theory.}
Yarotsky \cite{Yarotsky04} proves stability of uniformly gapped onsite spin Hamiltonians with nondegenerate ground states under weak bounded finite-range perturbations, using a Kirkwood--Thomas-type expansion \cite{KirkwoodThomas83}.
An alternative is the local Lie--Schwinger block-diagonalization method introduced by Fr\"ohlich and Pizzo \cite{FrohlichPizzo20} and developed by Del Vecchio, Fr\"ohlich, Pizzo, and Rossi \cite{DelVecchioFrohlichPizzoRossi22}, which gives volume-independent gap stability for weak short-range perturbations of gapped onsite Hamiltonians.
Yarotsky's later result \cite{Yarotsky06} also treats interacting Hamiltonians with product ground states under relative bounds; its Example~2 allows noncommuting finite-range terms in the unperturbed interaction.
In our free-fermion proof, the reduction produces a product ground state, but the GNS Hamiltonian \(K_0=\dd\Gamma(|h|)\) generally includes hopping between distinct sites.
The onsite results therefore do not apply directly, and we obtain the required relative form estimate using \(K_0\ge gN\).

\paragraph{Frustration-free systems.}
Michalakis and Zwolak \cite[Theorem~1]{MichalakisZwolak13} prove stability of gapped frustration-free Hamiltonians under local topological quantum order (LTQO) and a local-gap condition, using spectral flow and a relative form bound.
Jackson, Nachtergaele, and Young prove exponential local indistinguishability and hence LTQO for AKLT models on the hexagonal and Lieb lattices \cite[Theorem~1.1 and Corollary~1.2]{JacksonNachtergaeleYoung26}; their Corollary~1.3 then gives gap stability under a uniform finite-volume gap assumption for the unperturbed system.
Moon and Nachtergaele \cite[Theorems~3.11 and~4.4]{MoonNachtergaele18} prove stability for one-dimensional spin and fermion chains with uniformly gapped finite-range frustration-free interactions, under the respective LTQO conditions and weak stretched-exponentially decaying perturbations.
The bulk-gap theorem of Nachtergaele, Sims, and Young \cite{NSY24}, building on their earlier work \cite{NSY22}, assumes a finite-range, uniformly bounded, frustration-free interaction with a bulk gap, LTQO, and polynomial lower bounds on suitable local finite-region gaps; the perturbation decays stretched-exponentially.
Its Theorem~2.8 also gives a transported ground-state path.
We use their construction of the transport in a fixed GNS representation.
Their theorem does not cover our general setting, since the free interaction \(\Phi_h\) need not admit a finite-range, frustration-free presentation satisfying their local-gap and LTQO assumptions.
At sufficiently weak coupling, the companion paper \cite{LPPLV3} proves an LTQO-type estimate for these fermion systems: local observables act approximately as scalars on finite-sample ground spaces, with an error that decays with distance to the boundary.

\paragraph{Stability of quantum codes.}
Yin and Lucas \cite[Theorem~2]{YinLucas25} prove gap stability for commuting low-density parity-check code Hamiltonians under quantitative conditions on the code distance and local checks.
They allow spatially nonlocal perturbations, whose strength is measured by a norm with an exponential support-size weight.
Using iterative Schrieffer--Wolff transformations, they bound the splitting within the possibly degenerate low-energy sector and show that a gap separates it from the rest of the spectrum.
Their construction also gives a unitary relating the unperturbed and perturbed spectral projections up to an error that vanishes as the system size grows; its generator satisfies a support-size-weighted bound.
For perturbations local on a graph, Lavasani, Gullans, Albert, and Barkeshli \cite{LavasaniEtAl24} and De Roeck, Khemani, Li, O'Dea, and Rakovszky \cite{DeRoeckEtAl25} prove related stability results under assumptions of local indistinguishability or consistency of local ground spaces.
The latter result also applies to expander graphs.

\paragraph{Constructive fermionic expansions.}
De Roeck and Salmhofer \cite[Theorems~2--3]{DRS19} use a Grassmann tree--determinant expansion to prove correlation decay and a spectral gap uniform in the volume for weakly interacting fermions.
Their argument requires a uniformly gapped finite-volume free Hamiltonian, weighted summability of its imaginary-time covariance, and sufficiently small interaction kernels.
These assumptions allow polynomial decay and non-self-adjoint perturbations.
For compatible finite-volume families, the uniform gap passes to the bulk by \cite[Lemma~G.1]{WMMMT25}; see also \cite[Lemma~3.1]{MMMTW26}.
Our operator argument constructs the bulk ground state together with its quasi-local transport directly.
In their exponentially localized lattice example \cite[Theorem~1]{DRS19}, the stated gap loss is of H\"older order in the coupling, whereas ours is linear.

Giuliani, Mastropietro, and Porta \cite{GMP17} consider two-dimensional periodic band insulators, with hopping and weak density--density interactions decaying faster than any power.
Their multiscale Grassmann expansion proves that the interacting Hall conductivity equals its free-fermion value.\footnote{The authors explained to the present author (personal communication) that exponential imaginary-time estimates within this expansion, followed by finite-dimensional continuation, also yield a gap uniform in the torus size; this consequence is not stated as a theorem in \cite{GMP17}.}

Rayudu et al.\ \cite[Theorem~3]{RayuduEtAl26} prove that adding a positive field to interacting fermion Hamiltonians in the Lee--Yang class gives a gap bounded below by a fixed multiple of the field strength, uniformly in the volume and without a weak-coupling assumption.

\paragraph{Long-range perturbations of Ising phases.}
Lapa and Levin \cite{LapaLevin23} use a convergent polymer expansion to prove gap stability and exponentially small ground-sector splitting for Ising systems with symmetry breaking under weak symmetry-preserving long-range perturbations.
Their summability condition includes one-dimensional two-site interactions decaying as \((1+|x-y|)^{-\varpi}\) with \(\varpi>1\).
Via the Jordan--Wigner transformation, their result also applies to the Kitaev wire near its exactly solvable point, with weak long-range density--density interactions \cite[Section~II\,C]{LapaLevin23}.

\section{The stacked free system and the product-vacuum representation}
\label{sec:atomicization}

In Sections~\ref{sec:atomicization}--\ref{sec:limit}, we first prove Theorem~\ref{thm:main} for \(\theta_{\rm in}=1\); the final part of its proof treats \(\theta_{\rm in}<1\).
This section carries out the first part of Step~2 of Subsection~\ref{sec:proof-roadmap}: the transformation to product-vacuum coordinates, its locality, and the free GNS Hamiltonian.

In Majorana coordinates, \(\A\) already has Araki's self-dual CAR structure, with coefficientwise conjugation on \(\Hsp_{\mathrm M}\).
Araki shows that any two basis projections are related by a Bogoliubov transformation and that their Fock states are related by the induced automorphism \cite[Sections~2 and~4]{Araki70}.
Thus the free ground state can be transformed algebraically into a prescribed product vacuum, but this statement provides no spatial locality estimates.
Writing creation and annihilation operators together in the self-dual formalism does not introduce independent fermions.

Following Hastings' doubling argument, we adjoin an independent auxiliary free system with the opposite Majorana coefficient operator to obtain a spatially controlled transformation without assuming topological triviality.
For a band insulator, this supplies the complementary occupied band.
We use Hastings' explicit transformation of \(h\oplus(-h)\), whose matrix depends on \(\sgnh\), from his gap-stability proof \cite[Section~III, (8)--(15)]{Hastings19}.
We implement this transformation on the infinite-volume CAR algebra; Proposition~\ref{prop:exponential-stable-locality} gives the quantitative locality estimates used below.

We use the graded tensor-product and composite-label metric conventions of Appendix~\ref{app:CAR-geometry}.
The sign \(\sgnh\) and modulus \(|h|\) defined in \eqref{eq:free-sign} satisfy
\[
 \sgnh^2=\1,\qquad \overline{\sgnh}=-\sgnh,\qquad
 h=\sgnh|h|,\qquad \overline{|h|}=|h|,\qquad |h|\ge g\1.
\]
Their locality follows from Lemmas~\ref{lem:weighted-free-functional-calculus} and~\ref{lem:polynomial-free-functional-calculus}.

\subsection{The auxiliary free system and the change of coordinates}

Let \(\Hsp_{\rm aux}\) be a labeled copy of the physical space \(\Hsp\), and set \(\A_{\rm aux}:=\CAR(\Hsp_{\rm aux})\).
Let \(J_{\rm aux}:\Hsp_{\mathrm M}\to\Hsp_{\mathrm{M,aux}}\) be the canonical identification of their Majorana coefficient spaces, and define
\begin{equation}
 h_{\rm aux}:=-J_{\rm aux}hJ_{\rm aux}^*.
 \label{eq:auxiliary}
\end{equation}
Its interaction \(\Phi_{h_{\rm aux}}\) is defined by \eqref{eq:quadratic-Majorana-interaction}, and its pure quasifree ground state is denoted by \(\omega_{\rm aux}\).
The subscript \(\st\) denotes the stacked system.
For \(X\subseteq\Gamma\), set
\[
 \A^\st:=\CAR(\Hsp\oplus\Hsp_{\rm aux})=\A\widehat\otimes\A_{\rm aux},\qquad
 \A_X^\st:=\CAR(\ell^2(X;\C^{2r})),\qquad
 \A_{\loc}^\st:=\bigcup_{X\Subset\Gamma}\A_X^\st.
\]
Write \(\gamma_{u,1},\gamma_{u,2}\) for the two copies of the Majorana generators.
In block operator matrices on the stacked coefficient space, we identify the auxiliary copy with \(\Hsp_{\mathrm M}\) using \(J_{\rm aux}\).
Define
\begin{equation}
 O_\st:=\frac1{\sqrt2}
 \begin{pmatrix}\1&\I\sgnh\\ \I\sgnh&\1\end{pmatrix}.
 \label{eq:O-st}
\end{equation}

\begin{lemma}
\label{lem:stable-atomicization}
The operator \(O_\st\) is real orthogonal and satisfies
\begin{equation}
 O_\st^{\mathsf T}
 \begin{pmatrix}h&0\\0&-h\end{pmatrix}O_\st
 =
 \begin{pmatrix}0&\I|h|\\-\I|h|&0\end{pmatrix}.
 \label{eq:hamiltonian-atomicization}
\end{equation}
It defines a parity-preserving Bogoliubov automorphism of \(\A^\st\) by
\begin{equation}
 \sigma_\st(\gamma_u):=\sum_v(O_\st)_{uv}\gamma_v,
 \label{eq:Majorana-atomicizing-action}
\end{equation}
where \(u,v\) combine the site, Majorana, and layer indices, following the convention of Appendix~\ref{app:CAR-geometry}.
The inverse automorphism has coefficient matrix \(O_\st^{\mathsf T}\).
\end{lemma}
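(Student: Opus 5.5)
The proof is a direct matrix computation, followed by the standard Bogoliubov construction for real orthogonal maps on self-dual CAR algebras.

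\emph{Reality and orthogonality.} Since \(\overline{\sgnh}=-\sgnh\), the entries \(\I\sgnh\) are real, because \(\overline{\I\sgnh}=-\I\,\overline{\sgnh}=\I\sgnh\). Hence \(O_\st\) is real. Self-adjointness of \(\sgnh\) gives \((\I\sgnh)^{\mathsf T}=(\I\sgnh)^*=-\I\sgnh\) for the real matrix \(\I\sgnh\). Therefore
\[
 O_\st^{\mathsf T}=\tfrac1{\sqrt2}\begin{pmatrix}\1&-\I\sgnh\\-\I\sgnh&\1\end{pmatrix}.
\]
Multiplying out with \(\sgnh^2=\1\) gives \(O_\st^{\mathsf T}O_\st=\tfrac12\bigl(\begin{smallmatrix}\1+\sgnh^2&\I\sgnh-\I\sgnh\\ -\I\sgnh+\I\sgnh&\sgnh^2+\1\end{smallmatrix}\bigr)=\1\). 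The same computation gives \(O_\st O_\st^{\mathsf T}=\1\).

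\emph{The conjugation identity.} All blocks are functions of \(h\), so they commute. Using \(h\sgnh=|h|\) and \(\sgnh h\sgnh=h\), we first compute
\[
 \operatorname{diag}(h,-h)\,O_\st=\tfrac1{\sqrt2}\begin{pmatrix}h&\I|h|\\-\I|h|&-h\end{pmatrix}.
\]
Left-multiplying by \(O_\st^{\mathsf T}\) gives the following blocks. The \((1,1)\) block is \(\tfrac12(h-\I\sgnh(-\I|h|))=\tfrac12(h-h)=0\), since \(\sgnh|h|=h\). The \((1,2)\) block is \(\tfrac12(\I|h|-\I\sgnh(-h))=\tfrac12(\I|h|+\I|h|)=\I|h|\). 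The remaining two blocks follow symmetrically, which proves \eqref{eq:hamiltonian-atomicization}.

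\emph{The Bogoliubov automorphism.} On the stacked Majorana space, the generators \(\gamma_u\) satisfy \(\{\gamma_u,\gamma_v\}=2\delta_{uv}\1\). Put \(\gamma(f):=\sum_u f_u\gamma_u\) for \(f\in\ell^2\). This map is linear, satisfies \(\gamma(f)^*=\gamma(\overline f)\) and \(\{\gamma(f),\gamma(f')\}=2\langle\overline f,f'\rangle\1\), and extends continuously in \(f\) because \(\|\gamma(f)\|\le\sqrt2\|f\|\).

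Because \(O_\st\) is real orthogonal, the map \(\gamma'(f):=\gamma(O_\st^{\mathsf T}f)\) satisfies the same self-dual CAR relations. Its coefficient on \(\gamma_v\) for \(f=e_u\) is \((O_\st)_{uv}\), matching \eqref{eq:Majorana-atomicizing-action}. By Araki's uniqueness of the self-dual CAR algebra \cite{Araki70}, or equivalently simplicity of \(\CAR\), \(\gamma'\) extends to a unique injective \(*\)-homomorphism \(\sigma_\st\).

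The same construction with \(O_\st^{\mathsf T}\) produces a \(*\)-homomorphism that inverts \(\sigma_\st\) on generators. Since the generators generate \(\A^\st\), the two maps are inverse on all of \(\A^\st\). This shows that \(\sigma_\st\) is an automorphism whose inverse has coefficient matrix \(O_\st^{\mathsf T}\). Convergence of the sum in \eqref{eq:Majorana-atomicizing-action} holds in norm, because each row of the bounded operator \(O_\st\) is an \(\ell^2\) vector.

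\emph{Parity.} \(\sigma_\st\) maps each generator to a linear combination of odd generators, so \(\sigma_\st\circ\Theta=\Theta\circ\sigma_\st\) on generators. Hence this identity holds on all of \(\A^\st\), and \(\sigma_\st\) is parity-preserving.

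The only subtle point is the transpose bookkeeping: one must get \((\I\sgnh)^{\mathsf T}=-\I\sgnh\) right, since \(\I\sgnh\) is real antisymmetric. Once that is fixed, the rest is routine.
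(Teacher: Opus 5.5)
Your proof is correct and follows the paper's route: you obtain orthogonality and the conjugation identity by direct block multiplication, using that \(\I\sgnh\) is real antisymmetric with \((\I\sgnh)^2=-\1\) and that \(h\sgnh=\sgnh h=|h|\), and you obtain the automorphism from the standard Bogoliubov construction for a real orthogonal coefficient matrix. The only difference is that you write out the self-dual CAR check, the inverse on generators, and parity preservation, whereas the paper cites Araki for these.
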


\begin{proof}
The matrix \(\I\sgnh\) is real antisymmetric and satisfies \((\I\sgnh)^2=-\1\), so direct multiplication gives orthogonality of \(O_\st\) and, using \(h\sgnh=\sgnh h=|h|\), \eqref{eq:hamiltonian-atomicization}.
The real orthogonal operator \(O_\st\) defines a parity-preserving Bogoliubov automorphism, with inverse induced by \(O_\st^{\mathsf T}\) \cite[Section~2]{Araki70}.
\end{proof}

The weighted Schur bound on \(\sgnh\) also controls the action of \(\sigma_\st\) on arbitrary local observables.

\begin{proposition}
\label{prop:exponential-stable-locality}
Define
\begin{equation}
 k_\st:=-\frac\pi4\begin{pmatrix}0&\sgnh\\\sgnh&0\end{pmatrix}.
 \label{eq:atomicizing-path-generator}
\end{equation}
Under either locality hypothesis below, \(\Phi_{k_\st}\) is an even self-adjoint interaction with \(\sigma_\st=\tau_1^{\Phi_{k_\st}}\).
If \eqref{eq:h-locality} holds, then, for the rate \(\mu_1\) in Lemma~\ref{lem:weighted-free-functional-calculus},
\[
 \|\Phi_{k_\st}\|_{\mu_1,\theta_{\rm in};2}
 \le 2\|\Phi_{k_\st}\|_{\mu_1,\theta_{\rm in};1}
 \le 8\pi r^2\|\sgnh\|_{\mu_1,\theta_{\rm in}}^{\rm Sch}<\infty.
\]
There are also \(a_\sigma,C_\sigma>0\), depending only on \(\dvol,C_\Gamma,r,\mu_h,C_h/g, \theta_{\rm in}\), such that
\begin{equation}
 \|O_\st\|_{a_\sigma,\theta_{\rm in}}^{\rm Sch}\le C_\sigma,
 \qquad
 \|\sigma_\st^{\pm1}\|_{a_\sigma,\theta_{\rm in};1}^{\rm ql}\le C_\sigma.
 \label{eq:stable-atomicization-quantitative}
\end{equation}
If instead \(\dvol>0\), \(p>\dvol\), and \(C_h:=\|h\|^{\mathrm{Sch,pol}}_p<\infty\), then
\[
 \|\Phi_{k_\st}\|^{\pol}_{p;1}
 \le 4\pi r^2\|\sgnh\|^{\mathrm{Sch,pol}}_p<\infty,
\]
and
\begin{equation}
 \|O_\st\|^{\mathrm{Sch,pol}}_p\le C_\sigma,
 \qquad
 \inf_{B\in\A_{X^{(R)}}^\st}\|\sigma_\st^{\pm1}(A)-B\|
 \le C_\sigma|X|\|A\|(1+R)^{-p}
 \label{eq:stable-atomicization-polynomial}
\end{equation}
for every nonempty finite \(X\), \(A\in\A_X^\st\), and \(R\ge0\).
Here \(C_\sigma\) depends only on \(p,\dvol,C_\Gamma,r,C_h/g\).
\end{proposition}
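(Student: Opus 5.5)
The proof splits into three parts. First, identify \(\sigma_\st\) as the time-one map of the free dynamics generated by \(k_\st\). Second, read off the interaction bounds from the Schur bounds on \(\sgnh\). Third, prove quasi-locality of \(\sigma_\st^{\pm1}\) directly from the Schur decay of \(O_\st\) by a graded-commutator argument. The third part avoids Lieb--Robinson bounds, and it handles the stretched-exponential and polynomial cases in the same way.

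\emph{Step 1: generator.} Write \(M:=\begin{pmatrix}0&\sgnh\\\sgnh&0\end{pmatrix}\).
Since \(\sgnh=\sgnh^*\) and \(\overline{\sgnh}=-\sgnh\), the operator \(k_\st=-\frac\pi4M\) satisfies the Majorana structure \eqref{eq:Majorana-structure} on the stacked coefficient space. Hence \eqref{eq:quadratic-Majorana-interaction} defines an even self-adjoint interaction \(\Phi_{k_\st}\).
Since \(\sgnh^2=\1\), we have \(M^2=\1\), and so
\[
 \E^{-\I tk_\st}=\E^{\I\pi tM/4}=\cos(\pi t/4)\1+\I\sin(\pi t/4)M .
\]
At \(t=1\) this equals \(\frac1{\sqrt2}(\1+\I M)=O_\st\).
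By \eqref{eq:Majorana-free-dynamics}, applied to the stacked system, \(\tau_1^{\Phi_{k_\st}}\) and \(\sigma_\st\) therefore agree on all Majorana generators. Both are automorphisms, so they coincide.

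\emph{Step 2: norm bounds.} The components \(\Phi_{k_\st}(X)\) vanish unless \(|X|\le2\). Each one is a sum of at most \((4r)^2\) Majorana products per pair of sites, with coefficients equal to the entries of \(\frac\pi4\sgnh\) placed in the off-diagonal layer blocks. The same rooted summation that gave \eqref{eq:quadratic-interaction-bounds} yields the stated bounds. The factor \(2\) between \(k=1\) and \(k=2\) comes from \(|X|\le2\).
The Schur bounds on \(O_\st\) follow from \(\|O_\st\|^{\rm Sch}\le\frac1{\sqrt2}(1+\|\sgnh\|^{\rm Sch})\) in the respective norm, together with Lemma~\ref{lem:weighted-free-functional-calculus} (where we take \(a_\sigma:=\mu_1\)) or Lemma~\ref{lem:polynomial-free-functional-calculus}. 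Their constants depend only on the listed data.

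\emph{Step 3: quasi-locality, the main step.} Let \(A\in\A^\st_X\) and let \(w\) be a stacked Majorana label with \(x_w\notin X\). Since \(\sigma_\st^{-1}\) has coefficient matrix \(O_\st^{\mathsf T}\), we can write \(\gamma_w=\sum_v(O_\st)_{vw}\sigma_\st(\gamma_v)\).
For \(x_v\notin X\), the graded commutator of \(\sigma_\st(A)\) with \(\sigma_\st(\gamma_v)\) vanishes, because it is the image under \(\sigma_\st\) of the graded commutator of \(A\) with \(\gamma_v\). Consequently
\[
 \bigl\|[\sigma_\st(A),\gamma_w]_{\rm gr}\bigr\|\le2\|A\|\sum_{v:\,x_v\in X}|(O_\st)_{vw}|.
\]
Next we remove the degrees of freedom outside \(X^{(R)}\) one Majorana at a time. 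We use the parity-twisted averaging map \(B\mapsto\frac12(B+\gamma_w\Theta(B)\gamma_w)\). Each such map changes \(B\) by at most \(\frac12\|[B,\gamma_w]_{\rm gr}\|\), and it does not increase the graded commutators with the remaining generators.
Along an exhaustion, these maps converge to the tracial conditional expectation onto \(\A^\st_{X^{(R)}}\). Telescoping gives
\[
 \inf_{B}\|\sigma_\st(A)-B\|\le\|A\|\sum_{x_v\in X}\ \sum_{w:\,d(x_v,x_w)>R}|(O_\st)_{vw}|
 \le 4r|X|\|A\|\,(1+R)^{-p}\|O_\st\|^{\rm Sch,pol}_p .
\]
Here \(4r\) counts the Majorana labels per site of the stacked system, and the last inequality uses the weighted Schur norm of \(O_\st\). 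In the stretched-exponential case the same estimate holds with the weight \(\E^{-a_\sigma R^{\theta_{\rm in}}}\) and \(\|O_\st\|^{\rm Sch}_{a_\sigma,\theta_{\rm in}}\) in place of \((1+R)^{-p}\) and the polynomial Schur norm. Together with \(\|\sigma_\st\|_{\rm op}=1\), this gives \eqref{eq:stable-atomicization-quantitative} and \eqref{eq:stable-atomicization-polynomial}.
For \(\sigma_\st^{-1}\), the same argument applies with \(O_\st^{\mathsf T}\) in place of \(O_\st\); it has the same Schur norm.

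The point requiring care is the fermionic localization lemma in Step 3: a bound on graded commutators with all generators outside \(X^{(R)}\) must imply approximation by an element of \(\A^\st_{X^{(R)}}\). This works through the twisted averaging maps because their error terms telescope.
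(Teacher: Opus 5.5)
Your proof is correct. Steps~1 and~2 follow the paper's argument: exponentiating the real antisymmetric block matrix, and applying \eqref{eq:quadratic-interaction-bounds} with $2r$ modes per site and $\|k_\st\|^{\rm Sch}=\frac\pi4\|\sgnh\|^{\rm Sch}$.

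The difference lies in Step~3. The paper proves quasi-locality of $\sigma_\st^{\pm1}$ by applying the finite-region approximation theorem \cite[Theorem~2.9 and Remark~2.10]{FermionicLR} to the constant path $\Phi_{k_\st}$ on $[0,1]$, fed by the first-support-moment bounds from Step~2. It then reverses the path for the inverse.

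You instead exploit that $\sigma_\st$ is a Bogoliubov map with explicit coefficients. The identity $[\sigma_\st(A),\gamma_w]_{\rm gr}=\sigma_\st([A,\sigma_\st^{-1}(\gamma_w)]_{\rm gr})$ only sees the part of the column $(O_\st)_{\cdot w}$ over $X$. The maps $E_w=\frac12(\id+\operatorname{Ad}\gamma_w\circ\Theta)$ are conditional expectations that commute with $\Theta$ and are bimodular over their range. Hence $B-E_w(B)=\frac12[B,\gamma_w]_{\rm gr}\gamma_w$, and $[E_w(B),\gamma_u]_{\rm gr}=E_w([B,\gamma_u]_{\rm gr})$ for $u\ne w$. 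These identities make your telescoping valid, and the compositions indeed converge strongly to $\F^{\rm tr}_{X^{(R)}}$.

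Your route is elementary and lossless. It gives $a_\sigma=\mu_1$ and the exponent exactly $p$. In the polynomial case it uses only the Schur bound of Lemma~\ref{lem:polynomial-free-functional-calculus}, so the hypothesis $p>\dvol$ plays no role in your Step~3.

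The paper's route is more economical in context. The representation $\sigma_\st=\tau_1^{\Phi_{k_\st}}$, together with support-moment bounds, is exactly what Lemma~\ref{lem:uniform-alpha-locality} needs when $\sigma_\st$ is concatenated with the non-quasifree spectral flow. One cited theorem therefore covers both uses, and it does not rely on the linear action on generators.

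One minor slip: summing $|(O_\st)_{vw}|$ over the $(4r)^2$ entries of a site block gives at most $(4r)^{3/2}$ times the block's operator norm, not $4r$ times. This only changes the $r$-dependence of $C_\sigma$.
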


\begin{proof}
Lemmas~\ref{lem:weighted-free-functional-calculus} and~\ref{lem:polynomial-free-functional-calculus} give the respective weighted Schur bounds for \(\sgnh\), hence for \(O_\st\) and its transpose.
To control the automorphism on local observables, consider the real orthogonal path
\[
 O_\st(t):=\exp\!\left[
 \frac{\pi t}{4}
 \begin{pmatrix}0&\I \sgnh\\ \I \sgnh&0\end{pmatrix}\right],
 \qquad 0\le t\le1.
\]
The block matrix in the exponent is real antisymmetric.
Using \(\sgnh^2=\1\) to evaluate the exponential gives \(O_\st(1)=O_\st\).
By \eqref{eq:Majorana-free-dynamics}, this path is the Majorana action of the dynamics generated by \(\Phi_{k_\st}\), which proves \(\sigma_\st=\tau_1^{\Phi_{k_\st}}\).
The quadratic-interaction bounds \eqref{eq:quadratic-interaction-bounds}, applied with \(2r\) modes per site, give both stated interaction-norm bounds; here \(\Phi_{k_\st}(X)=0\) for \(|X|>2\) gives the comparison between the first and second support moments.
Apply \cite[Theorem~2.9 and Remark~2.10]{FermionicLR} to the constant path \(\Phi_{k_\st}\) on the unit time interval, with spatial weight \(\E^{\mu_1R^{\theta_{\rm in}}}\) in the stretched-exponential case and \((1+R)^p\) in the polynomial case.
The first-support-moment bounds above give the automorphism estimates in \eqref{eq:stable-atomicization-quantitative} and \eqref{eq:stable-atomicization-polynomial}; reversing the path gives the same bounds for the inverse.
\end{proof}

Define
\begin{equation}
 h_\at:=
 \begin{pmatrix}0&\I|h|\\-\I|h|&0\end{pmatrix}.
 \label{eq:atomic-one-particle-Hamiltonian}
\end{equation}
Let \(\Phi_{h_\at}\) be the corresponding quadratic interaction on the stacked algebra, defined by \eqref{eq:quadratic-Majorana-interaction}.
Write \(\tau_t^{\at,0}:=\tau_t^{\Phi_{h_\at}}\), and let \(\tau_t^{\rm st,0}\) be the stacked free dynamics generated by \(\Phi_h\widehat\otimes\1+\1\widehat\otimes\Phi_{h_{\rm aux}}\).
The field action \eqref{eq:Majorana-atomicizing-action} sends a quadratic coefficient matrix \(k\) to \(O_\st^{\mathsf T}kO_\st\).
Combining \eqref{eq:Majorana-free-dynamics} with \eqref{eq:hamiltonian-atomicization} gives
\begin{equation}
 \tau_t^{\at,0}\circ\sigma_\st
 =\sigma_\st\circ\tau_t^{\rm st,0},
 \qquad t\in\R.
 \label{eq:free-atomic-dynamics-covariance}
\end{equation}
We also write \(\tau_t^{\rm aux}:=\tau_t^{\Phi_{h_{\rm aux}}}\).

Differentiating \eqref{eq:free-atomic-dynamics-covariance} on local observables gives
\begin{equation}
 \sigma_\st(\A_{\loc}^\st)
 \subset\operatorname{Dom}(\overline\delta_{\Phi_{h_\at}}),
 \qquad
 \overline\delta_{\Phi_{h_\at}}(\sigma_\st(B))
 =\sigma_\st\!\left(
 \delta_{\Phi_h\widehat\otimes\1+
 \1\widehat\otimes\Phi_{h_{\rm aux}}}(B)\right),
 \quad B\in\A_{\loc}^\st.
 \label{eq:free-atomic-derivation-covariance}
\end{equation}

\subsection{The product vacuum and the excitation Hamiltonian}
\label{sec:excitation-GNS}

We show that the transformed state is an onsite product vacuum and prove the comparison \(K_0\ge gN\) used in the excitation count.
Put
\begin{equation}
 \omega_\at:=(\omega_0\widehat\otimes\omega_{\rm aux})
 \circ\sigma_\st^{-1}.
 \label{eq:atomic-state}
\end{equation}
The inverse field transformation and \eqref{eq:free-Majorana-state} give its two-point matrix
\begin{equation}
 O_\st^{\mathsf T}
 \left[\1+\begin{pmatrix}\sgnh&0\\0&-\sgnh\end{pmatrix}\right]O_\st
 =\1+\begin{pmatrix}0&\I\1\\-\I\1&0\end{pmatrix}.
 \label{eq:atomic-Majorana-covariance}
\end{equation}
Define the onsite excitation annihilation operators by
\begin{equation}
 c_{x,j}:=\frac{\gamma_{x,j,1}+\I\gamma_{x,j,2}}2,
 \qquad 1\le j\le2r.
 \label{eq:atomic-fields}
\end{equation}
They satisfy the usual creation--annihilation CAR, and \eqref{eq:atomic-Majorana-covariance} gives \(\omega_\at(c_{x,j}^*c_{y,k})=0\).
Since the state is quasifree, it is the empty product state for these onsite modes, hence a pure onsite product state on \(\A^\st\).
In its GNS triple \((\pi_\at,\cH_\at,\Omega_\at)\),
\begin{equation}
 \pi_\at(c_{x,j})\Omega_\at=0.
 \label{eq:vacuum-annihilation}
\end{equation}
Define
\begin{equation}
 n_x:=\sum_{j=1}^{2r}c_{x,j}^*c_{x,j},
 \qquad N_X:=\sum_{x\in X}n_x,
 \qquad N:=\sum_{x\in\Gamma}\pi_\at(n_x).
 \label{eq:number-operators}
\end{equation}
The last sum, understood in the sense of nonnegative quadratic forms, defines the number operator on \(\cH_\at\).
We also write \(N_X\) for its represented operator when comparing it with \(K_0\) or \(N\).
For finite \(X\), the local vacuum projection is
\begin{equation}
 P_X^\vac:=\prod_{x\in X}\prod_{j=1}^{2r}
 (\1-c_{x,j}^*c_{x,j}).
 \label{eq:local-vacuum-projection}
\end{equation}

We use the standard Fock implementation of positive quasifree dynamics \cite[Sections~5.2.3--5.2.4]{BR97} to determine \(K_0\) and its form domain.

\begin{proposition}
\label{prop:free-GNS-Hamiltonian}
The GNS representation of \(\omega_\at\) is the Fock representation over \(\Hsp_{\mathrm M}\), with annihilation operators \eqref{eq:atomic-fields}.
Its free dynamics is implemented by
\begin{equation}
 K_0=\dd\Gamma(|h|)
 \label{eq:K0}
\end{equation}
and the following inequalities hold as quadratic forms:
\begin{equation}
 gN\le K_0\le\|h\|N,
 \qquad
 K_0\ge g\left(\1-|\Omega_\at\rangle\langle\Omega_\at|\right).
 \label{eq:K0-domination}
\end{equation}
In particular, \(\operatorname{Dom}(K_0^{1/2})=\operatorname{Dom}(N^{1/2})\), with equivalent form norms.
The linear span of vectors obtained by applying finitely many local creation operators \(\pi_\at(c_{x,j}^*)\) to \(\Omega_\at\) is a common form core.
The physical free ground state \(\omega_0\) satisfies the gap inequality \eqref{eq:free-ground-gap}.
\end{proposition}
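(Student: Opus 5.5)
The plan is to identify the GNS representation concretely, compute the free dynamics there, and then transfer the gap back through \(\sigma_\st\).

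\textbf{Step 1: the GNS representation is Fock.} By \eqref{eq:atomic-Majorana-covariance}, \(\omega_\at\) is the quasifree state with \(\omega_\at(c^*_{x,j}c_{y,k})=0\) and vanishing odd moments. Wick's rule then shows that \(\omega_\at\) agrees with the Fock vacuum state of the CAR algebra generated by the \(c_{x,j}\), whose one-particle space is \(\ell^2(\Gamma;\C^{2r})\cong\Hsp_{\mathrm M}\). Since the Fock representation is irreducible and its vacuum is cyclic, uniqueness of the GNS triple identifies \((\pi_\at,\cH_\at,\Omega_\at)\) with Fock space and gives \eqref{eq:vacuum-annihilation}.

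\textbf{Step 2: the dynamics is \(\dd\Gamma(|h|)\).} Rewrite \(h_\at\) in the \(c\)-variables. Because \(|h|\) is real symmetric, a direct substitution of \(\gamma_{\cdot,1}=c+c^*\) and \(\gamma_{\cdot,2}=-\I(c-c^*)\) into \eqref{eq:quadratic-Majorana-interaction} should give
\[
\sum_{X\subset\Lambda}\Phi_{h_\at}(X)=\sum_{u,v}|h|_{uv}c_u^*c_v+\text{const},
\]
with no \(cc\) or \(c^*c^*\) terms. This is the point where the block form of \(h_\at\) is used. Consequently \(\tau^{\at,0}_t(c_u)=\sum_v(\E^{-\I t|h|})_{uv}c_v\), a gauge-type Bogoliubov dynamics with positive one-particle generator \(|h|\). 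By \cite[Sections~5.2.3--5.2.4]{BR97}, this dynamics is implemented on Fock space by \(\E^{\I t\dd\Gamma(|h|)}\), and \(\dd\Gamma(|h|)\Omega_\at=0\). To check that \(K_0=\dd\Gamma(|h|)\) is indeed the GNS Hamiltonian, I will verify that the two unitary groups agree on \(\pi_\at(A)\Omega_\at\) for local \(A\), and then use cyclicity.

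\textbf{Step 3: the inequalities, domains and core.} The operator inequalities \(g\le|h|\le\|h\|\) lift to each \(n\)-particle sector, where \(\dd\Gamma(|h|)\) is a sum of one-particle copies. This gives \(gN\le K_0\le\|h\|N\) on \(\bigoplus_n\) finite-particle vectors, and hence equality of form domains with equivalent norms. On the orthogonal complement of \(\Omega_\at\) one has \(N\ge1\), which yields the second bound in \eqref{eq:K0-domination}. For the core, use that finite-particle vectors with local creation operators applied to \(\Omega_\at\) are dense in each \(n\)-particle sector in the \(N\)-graph norm. The number operator preserves sectors, so approximating sectorwise and truncating the particle number gives a form core for \(N\), hence for \(K_0\).

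\textbf{Step 4: transfer the gap to \(\omega_0\).} The main technical point is here. Take \(A\in\A_{\loc}\) and set \(B:=\sigma_\st(A\widehat\otimes\1)\). By \eqref{eq:free-atomic-derivation-covariance}, and because \(\1\widehat\otimes\Phi_{h_{\rm aux}}\) commutes with \(A\widehat\otimes\1\),
\[
\omega_0(A^*\delta_{\Phi_h}(A))=\omega_\at\bigl(B^*\,\overline\delta_{\Phi_{h_\at}}(B)\bigr).
\]
I would then show that \(\pi_\at(B)\Omega_\at\in\operatorname{Dom}(K_0)\) and that the right-hand side equals \(\langle\pi_\at(B)\Omega_\at,K_0\pi_\at(B)\Omega_\at\rangle\). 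For this, differentiate \(t\mapsto\pi_\at(\tau^{\at,0}_t(B))\Omega_\at=\E^{\I tK_0}\pi_\at(B)\Omega_\at\) at \(t=0\), using that \(B\) lies in the domain of the closed generator. The difficulty is that \(B\) is only quasi-local, so this step must go through the closure rather than through local manipulations. Once the identity is established, the second bound in \eqref{eq:K0-domination} gives the lower bound
\[
g\bigl(\|\pi_\at(B)\Omega_\at\|^2-|\langle\Omega_\at,\pi_\at(B)\Omega_\at\rangle|^2\bigr),
\]
which equals \(g\bigl(\omega_0(A^*A)-|\omega_0(A)|^2\bigr)\) by \eqref{eq:atomic-state}. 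Non-negativity of this expression also shows the ground-state property of \(\omega_0\).
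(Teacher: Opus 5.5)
Your proposal is correct and follows the paper's proof essentially step for step. You identify $\omega_\at$ as the Fock vacuum of the $c$-fields, obtain $K_0=\dd\Gamma(|h|)$ from the gauge-type dynamics $\tau^{\at,0}_t(c_u)=\sum_v(\E^{-\I t|h|})_{uv}c_v$, lift $g\le|h|\le\|h\|$ sectorwise, and transfer the gap through $B=\sigma_\st(A\widehat\otimes\1)$ and the GNS implementation. Your explicit normal-ordered form of $\Phi_{h_\at}$ in the $c$-variables and your Stone-theorem argument for $\pi_\at(B)\Omega_\at\in\operatorname{Dom}(K_0)$ just spell out what the paper takes directly from \eqref{eq:Majorana-free-dynamics} and \eqref{eq:free-atomic-dynamics-covariance}.
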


\begin{proof}
The empty state of the \(c\)-fields has the Fock representation.
Equations~\eqref{eq:Majorana-free-dynamics} and \eqref{eq:atomic-one-particle-Hamiltonian}, together with the real symmetry of \(|h|\), give
\[
 \tau_t^{\at,0}(c(\psi))=c(\E^{\I t|h|}\psi),
 \qquad c(\psi):=\sum_u\overline{\psi_u}c_u.
\]
Second quantization \cite[Sections~5.2.3--5.2.4]{BR97} gives \(K_0=\dd\Gamma(|h|)\).
The bounds \(g\1\le|h|\le\|h\|\1\) imply \eqref{eq:K0-domination} and equality of the form domains.
Finite-particle vectors with finitely supported wave functions form a form core for \(N\), and \eqref{eq:K0-domination} makes the form norms of \(N\) and \(K_0\) equivalent.

For local \(A\), set \(B:=\sigma_\st(A\widehat\otimes\1)\).
By \eqref{eq:free-atomic-dynamics-covariance}, \(B\in\operatorname{Dom}(\overline\delta_{\Phi_{h_\at}})\).
The GNS implementation of the dynamics and \eqref{eq:K0-domination} then give
\[
 \omega_\at(B^*\overline\delta_{\Phi_{h_\at}}(B))
 =\langle\pi_\at(B)\Omega_\at,K_0\pi_\at(B)\Omega_\at\rangle
 \ge g\bigl(\omega_\at(B^*B)-|\omega_\at(B)|^2\bigr).
\]
Using \eqref{eq:free-atomic-dynamics-covariance} and \eqref{eq:atomic-state}, we rewrite the energy expectation as \(\omega_0(A^*\delta_{\Phi_h}(A))\) and the variance as \(\omega_0(A^*A)-|\omega_0(A)|^2\).
This proves \eqref{eq:free-ground-gap}.
\end{proof}

\begin{figure}[htbp]
\centering
\begin{tikzpicture}[stability diagram]
 \def\dx{.8}   
 \def\yP{0}     
 \def\yA{1.2}   
 \def\hb{.3}    
 \def\xR{7.75}   
 \foreach \y/\c in {\yP/phys,\yA/aux}{
  \shade[left color=white,right color=\c!13]
    (-1.3*\dx,\y-\hb) rectangle (-.5*\dx+.01,\y+\hb);
  \fill[\c!13] (-.5*\dx,\y-\hb) rectangle (3.5*\dx,\y+\hb);
  \shade[left color=\c!13,right color=white]
    (3.5*\dx-.01,\y-\hb) rectangle (4.3*\dx,\y+\hb);
 }
 \foreach \i/\op in {-1/.3,0/.75,1/.75,2/.75,3/.3}{
  \pgfmathtruncatemacro\j{\i+1}
  \draw[phys,opacity=\op] (\i*\dx,\yP) .. controls +(.2,-.36) and +(-.2,-.36) .. (\j*\dx,\yP);
  \draw[aux,opacity=\op]  (\i*\dx,\yA) .. controls +(.2,.36)  and +(-.2,.36)  .. (\j*\dx,\yA);
 }
 \foreach \i/\op in {-1/.14,0/.3,1/.3,2/.14}{
  \pgfmathtruncatemacro\j{\i+2}
  \draw[phys,opacity=\op] (\i*\dx,\yP) .. controls +(.35,-.62) and +(-.35,-.62) .. (\j*\dx,\yP);
  \draw[aux,opacity=\op]  (\i*\dx,\yA) .. controls +(.35,.62)  and +(-.35,.62)  .. (\j*\dx,\yA);
 }
 \foreach \i/\op in {-1/.35,0/1,1/1,2/1,3/1,4/.35}{
  \begin{scope}[opacity=\op]
   \filldraw[fill=phys,draw=phys!60!black] (\i*\dx,\yP) circle[radius=.085];
   \filldraw[fill=aux,draw=aux!60!black]
     (\i*\dx-.08,\yA-.08) rectangle (\i*\dx+.08,\yA+.08);
  \end{scope}
 }
 \node[anchor=east,align=right,text=aux!85!black,font=\footnotesize]
   at (-1.45*\dx,\yA) {auxiliary\\[-1pt]$h_{\rm aux}=-h$};
 \node[anchor=east,align=right,text=phys!85!black,font=\footnotesize]
   at (-1.45*\dx,\yP) {physical\\[-1pt]$h$};
 \node[fig title] at (1.5*\dx,\yA+1.0)
   {$\omega_0\widehat\otimes\omega_{\rm aux}$};
 \draw[stability arrow] (4.55*\dx,.5*\yA) -- (\xR-1.6*\dx,.5*\yA)
   node[midway,above=2pt] {change coordinates}
   node[midway,below=2pt] {$\rho\mapsto\rho\circ\sigma_\st^{-1}$};
 \foreach \i/\op in {-1/.35,0/1,1/1,2/1,3/1,4/.35}{
  \begin{scope}[opacity=\op,shift={(\xR+\i*\dx,0)}]
   \fill[black!8,rounded corners=3pt] (-.3,\yP-\hb) rectangle (.3,\yA+\hb);
   \draw[black!55,line width=.7pt] (0,\yP) -- (0,\yA);
   \filldraw[fill=phys,draw=phys!60!black] (0,\yP) circle[radius=.085];
   \filldraw[fill=aux,draw=aux!60!black] (-.08,\yA-.08) rectangle (.08,\yA+.08);
  \end{scope}
 }
 \node[fig title] at (\xR+1.5*\dx,\yA+1.0) {$\omega_\at$};
\end{tikzpicture}
\caption{Stacking and the product vacuum.
Circles represent the physical layer and squares the auxiliary layer at the same sites; shaded regions indicate factors of the state.
Initially, the two free layers are independent, while the curves indicate possible entanglement between sites within each layer.
After the change of coordinates \eqref{eq:atomic-state}, the state factorizes over sites, with entanglement between the physical and auxiliary degrees of freedom at each site.
The transformed state is the empty state of the \(c\)-fields.
The GNS Hamiltonian \(K_0=\dd\Gamma(|h|)\) generally still couples distinct sites, but satisfies \eqref{eq:K0-domination}.}
\label{fig:stable-product-vacuum}
\end{figure}

\section{Spectral flow in a fixed GNS representation}
\label{sec:local-cutoffs}

We write \(\Phi_0\) for the unperturbed physical interaction and set \(\Phi_{\lambda}=\Phi_0+\lambda\cV\).
For the weakly interacting fermion systems of Theorems~\ref{thm:main} and~\ref{thm:card-stability}, we take \(\Phi_0=\Phi_h\) and use the auxiliary system and the map \(\sigma_\st\) from Section~\ref{sec:atomicization}.
For the interacting systems with product ground states in Theorem~\ref{thm:product-reference-stability}, we take the auxiliary algebra to be \(\C\), with its unique state and identity dynamics, and set \(\sigma_\st=\id\).
In this case \((\A^\st,\omega_\at,\pi_\at,\cH_\at,\Omega_\at) =(\A,\omega_0,\pi_0,\cH_0,\Omega_0)\).
In both cases the unperturbed dynamics in product-vacuum coordinates is implemented by \(K_0\), and \(\omega_\at\circ\sigma_\st=\omega_0\widehat\otimes\omega_{\rm aux}\).

Write \(P_x\) for the rank-one onsite support projections of \(\omega_\at\).
We use nonnegative even local excitation operators \(n_x\) with \(n_xP_x=0\) and \(n_x\ge\1-P_x\), and write
\[
 P_X^\vac=\prod_{x\in X}P_x,\qquad
 N_X=\sum_{x\in X}n_x,\qquad
 N=\sum_{x\in\Gamma}\pi_\at(n_x).
\]
The last sum is understood as an increasing sum of nonnegative forms.
The quadratic case uses \eqref{eq:number-operators}; the interacting product-state case uses \(n_x:=\1-P_x\).
In each application,
\[
 N\ge\1-P^\vac,\qquad \ker N=\C\Omega_\at,\qquad K_0\ge gN,
 \quad P^\vac=|\Omega_\at\rangle\langle\Omega_\at|.
\]
The last inequality includes \(\operatorname{Dom}(K_0^{1/2})\subseteq\operatorname{Dom}(N^{1/2})\).
Proposition~\ref{prop:free-GNS-Hamiltonian} proves these facts for free fermion systems; the corresponding assertions for interacting systems with product ground states are proved in Section~\ref{sec:product-reference-proof}.

We restrict the physical perturbation to a finite region while keeping the product-vacuum GNS representation \(\pi_\at\) fixed.
The corresponding Hamiltonians define the bootstrap interval \(I_\Lambda\) introduced in Step~2 of the roadmap.

\subsection{Cutoff Hamiltonians and the bootstrap interval}
\label{sec:bootstrap-interval}

For a finite perturbation region \(\Lambda\), define
\begin{align}
 \cV_\Lambda(X)&:=\1_{\{X\subset\Lambda\}}\cV(X),
 \qquad V_\Lambda^\ph:=\sum_{X\subset\Lambda}\cV(X),
 \nonumber\\
 V_\Lambda&:=\pi_\at\sigma_\st(V_\Lambda^\ph\widehat\otimes\1).
 \label{eq:V-Lambda}
\end{align}
On \(\cH_\at\), set
\begin{equation}
 K_{\Lambda,\lambda}:=K_0+\lambda V_\Lambda,\qquad \lambda\in\R.
 \label{eq:K-Lambda-s}
\end{equation}
Write \(\tau_t^{\ph,\Lambda,\lambda}\) for the dynamics of \(\Phi_0+\lambda\cV_\Lambda\).
Its expression in product-vacuum coordinates is
\begin{equation}
 \tau_t^{\Lambda,\lambda}
 :=\sigma_\st\circ
  (\tau_t^{\ph,\Lambda,\lambda}\widehat\otimes\tau_t^{\rm aux})
       \circ\sigma_\st^{-1}.
 \label{eq:atomic-cutoff-dynamics}
\end{equation}

\begin{lemma}
\label{lem:operator-facts}
For every finite \(\Lambda\), the following hold.
\begin{enumerate}[label=(\alph*),leftmargin=*]
\item \(V_\Lambda\) is bounded and self-adjoint, and \(K_{\Lambda,\lambda}\) is self-adjoint on \(\operatorname{Dom}(K_0)\) for every \(\lambda\in\R\).
\item For every \(\lambda\in\R\), the dynamics \(\tau^{\Lambda,\lambda}\) is implemented by \(K_{\Lambda,\lambda}\) in the representation \(\pi_\at\):
  \begin{equation}
   \pi_\at(\tau_t^{\Lambda,\lambda}(A))
   =\E^{\I tK_{\Lambda,\lambda}}\pi_\at(A)\E^{-\I tK_{\Lambda,\lambda}},
   \qquad A\in\A^\st.
   \label{eq:algebra-GNS-dynamics}
  \end{equation}
\item \(K_{\Lambda,0}=K_0\) has the simple ground state eigenvalue zero and gap at least \(g\).
\end{enumerate}
\end{lemma}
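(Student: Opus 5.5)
Part (a) is direct. Since \(\Lambda\) is finite, \(V_\Lambda^\ph=\sum_{X\subset\Lambda}\cV(X)\) is a finite sum of self-adjoint even elements of \(\A_\Lambda\). It is therefore a bounded self-adjoint element of \(\A\). Because \(\sigma_\st\) is a \(*\)-automorphism and \(\pi_\at\) a \(*\)-representation, \(V_\Lambda=\pi_\at\sigma_\st(V_\Lambda^\ph\widehat\otimes\1)\) is bounded and self-adjoint with \(\|V_\Lambda\|\le\|V_\Lambda^\ph\|\). The Kato--Rellich theorem, with relative bound zero for bounded perturbations, makes \(K_0+\lambda V_\Lambda\) self-adjoint on \(\operatorname{Dom}(K_0)\) for every real \(\lambda\).

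For part (b), I first record the case \(\lambda=0\). There, \eqref{eq:atomic-cutoff-dynamics} combined with \eqref{eq:free-atomic-dynamics-covariance} gives \(\tau_t^{\Lambda,0}=\tau_t^{\at,0}\). Proposition~\ref{prop:free-GNS-Hamiltonian} states that this dynamics is implemented by \(K_0\) in \(\pi_\at\).

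For general \(\lambda\), I use the Dyson expansion. The dynamics of \(\Phi_0+\lambda\cV_\Lambda\) differs from that of \(\Phi_0\) by the bounded inner derivation \(\I\lambda[V_\Lambda^\ph,\cdot\,]\). Hence \(\tau_t^{\ph,\Lambda,\lambda}\) is the bounded perturbation of \(\tau_t^{\Phi_0}\) by \(\lambda V_\Lambda^\ph\). Concretely, \(\tau_t^{\ph,\Lambda,\lambda}(A)=\Gamma_t^*\tau_t^{\Phi_0}(A)\Gamma_t\), where \(\Gamma_t\) is the unitary Dyson cocycle solving \(\partial_t\Gamma_t=-\I\lambda\,\Gamma_t\tau_t^{\Phi_0}(V_\Lambda^\ph)\) (up to the sign convention) with \(\Gamma_0=\1\). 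This is standard \cite{BR97}, with generators agreeing on \(\A_{\loc}\), which is a core by Lemma~\ref{lem:local-form-core}. Tensoring with \(\tau^{\rm aux}_t\) and conjugating with \(\sigma_\st\) yields \(\tau_t^{\Lambda,\lambda}\) as the bounded perturbation of \(\tau_t^{\Lambda,0}\) by \(\sigma_\st(V_\Lambda^\ph\widehat\otimes\1)\). Equivalently, the generator of \(\tau^{\Lambda,\lambda}\) is \(\overline\delta_{\Phi_{h_\at}}+\lambda[\sigma_\st(V^\ph_\Lambda\widehat\otimes\1),\cdot\,]\), using \eqref{eq:free-atomic-derivation-covariance}.

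On the Hilbert space side, the perturbed unitary group \(\E^{\I tK_{\Lambda,\lambda}}\) satisfies the same Dyson expansion relative to \(\E^{\I tK_0}\), with perturbation \(\pi_\at(\sigma_\st(V_\Lambda^\ph\widehat\otimes\1))=V_\Lambda\). This is the interaction-picture formula for bounded perturbations. Both sides of \eqref{eq:algebra-GNS-dynamics} are given by norm-convergent Dyson series whose terms match termwise under \(\pi_\at\). Alternatively, both sides are strongly continuous automorphism groups of \(\pi_\at(\A^\st)\) solving the same integral equation, and uniqueness via Gronwall identifies them. The main technical point is this matching of the algebraic and Hilbert-space interaction pictures. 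It requires checking that the free generator on the Hilbert side is exactly \(K_0\), including its domain, which Proposition~\ref{prop:free-GNS-Hamiltonian} provides, and that the algebraic perturbation is the claimed bounded derivation. I would verify the latter on the core \(\A_{\loc}\) and then invoke uniqueness of strongly continuous groups with a given closed generator.

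Part (c) is immediate from \eqref{eq:K0-domination}. We have \(K_0\ge g(\1-|\Omega_\at\rangle\langle\Omega_\at|)\) and \(K_0\Omega_\at=0\), since \(\dd\Gamma\) annihilates the vacuum. In the product-reference case the same holds because \(K_0\ge gN\), \(N\ge\1-P^\vac\), and \(\ker N=\C\Omega_\at\). Hence \(0\) is a simple eigenvalue and the remaining spectrum lies in \([g,\infty)\).
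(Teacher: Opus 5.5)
Your proposal is correct and follows the paper's proof essentially step for step. You use bounded perturbation for (a). For (b), you identify \(\tau^{\ph,\Lambda,\lambda}\) as the bounded perturbation of \(\tau^{\Phi_0}\) because the generators agree on the core \(\A_{\loc}\), pass through \(\sigma_\st\), and invoke uniqueness of the bounded-perturbation evolution with \(K_0\) implementing the unperturbed dynamics. For (c), you use \(K_0\Omega_\at=0\) together with \(K_0\ge gN\ge g(\1-P^\vac)\). Two small points are left implicit. First, the Dyson cocycle is even, which is what lets it pass through the graded tensor product with \(\tau^{\rm aux}\). Second, in the product-state case \(K_0\) implements \(\tau^{\Phi_0}\) by definition, not by Proposition~\ref{prop:free-GNS-Hamiltonian}.
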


\begin{proof}
Part~(a) follows from the bounded perturbation theorem.
The interaction dynamics \(\tau^{\ph,\Lambda,\lambda}\) agrees with the bounded perturbation of \(\tau^{\Phi_0}\) by \(\lambda V_\Lambda^\ph\), since their generators agree on the local algebra, a common core by Lemma~\ref{lem:local-form-core}.
The same core argument shows that \(\tau^{\ph,\Lambda,\lambda}\widehat\otimes\tau^{\rm aux}\) is the dynamics generated by the graded tensor sum of the interactions.
For part~(b), conjugate its bounded-perturbation integral equation by \(\sigma_\st\).
The unperturbed dynamics in these coordinates is implemented by \(K_0\), and the perturbation becomes \(\lambda\pi_\at\sigma_\st(V_\Lambda^\ph\widehat\otimes\1) =\lambda V_\Lambda\).
Uniqueness of the bounded-perturbation evolution \cite[Section~5.4]{BR97} proves \eqref{eq:algebra-GNS-dynamics}.
Part~(c) follows from \(K_0\Omega_\at=0\) and \(K_0\ge gN\ge g(\1-P^\vac)\).
\end{proof}

For fixed \(\Lambda\), let \(\mathcal O_\Lambda\subset\R\) be the set of \(\lambda\) for which \(E_{\Lambda,\lambda}:=\inf\spec K_{\Lambda,\lambda}\) is a simple eigenvalue and
\begin{equation}
 \dist\!\left(E_{\Lambda,\lambda},
 \spec(K_{\Lambda,\lambda})\setminus\{E_{\Lambda,\lambda}\}\right)
 >\frac g2.
 \label{eq:bootstrap-gap}
\end{equation}
Lemma~\ref{lem:operator-facts}(c) gives \(0\in\mathcal O_\Lambda\).
Norm-resolvent continuity and stability of Riesz projections under bounded perturbations imply that \(\mathcal O_\Lambda\) is open \cite[Chapter~IV]{Kato}.
Its connected component containing zero is therefore an open interval; we denote it by \(I_\Lambda\) and call it the \emph{connected bootstrap interval}.
For \(\lambda\in I_\Lambda\), let \(P_{\Lambda,\lambda}:=\1_{\{E_{\Lambda,\lambda}\}}(K_{\Lambda,\lambda})\) be the rank-one ground-state projection on \(\cH_\at\).
The eigenvalue \(E_{\Lambda,\lambda}\) and projection \(P_{\Lambda,\lambda}\) depend real-analytically on \(\lambda\in I_\Lambda\), by analytic perturbation theory for the bounded perturbation \(K_0+\lambda V_\Lambda\) \cite[Chapter~VII]{Kato}.

\subsection{The physical spectral flow}
\label{sec:compatible-filter-pair}

Quasi-adiabatic continuation was introduced by Hastings \cite{Hastings04} and developed by Hastings and Wen \cite{HastingsWen05}.
We use the exact spectral-flow construction of Bachmann, Michalakis, Nachtergaele, and Sims \cite{BMNS12}, with the physical generator defined by the inverse Liouvillian of \cite[Appendix~B]{BTW26}.
A Gevrey filter gives the stretched-exponential time decay needed for the locality estimates; Gaussian filters, as in \cite{BachmannDuFraasWessel25}, give faster time decay at the expense of approximate rather than exact spectral separation.

Fix the Fourier convention
\begin{equation*}
 \widehat f(\xi):=\int_\R \E^{\I t\xi}f(t)\,\dd t.
\end{equation*}
For every \(0<\theta<1\), choose a standard real even Gevrey profile \(\widehat\chi_\theta\in G_c^{1/\theta}(\R)\) with \(\widehat\chi_\theta=1\) on \([-1/4,1/4]\) and \(\supp(\widehat\chi_\theta)\subset(-1/2,1/2)\).
Here the Gevrey condition means that, for suitable \(A,C<\infty\), \(\|\widehat\chi_\theta^{(n)}\|_\infty \le CA^n(n!)^{1/\theta}\) for every \(n\).
This choice depends on \(\theta\) but not on the model.
Given such a \(\theta\), set \(\widehat w(\xi):=\widehat\chi_\theta(\xi/g)\).
The locality estimates below hold for every \(0<\theta<1\).
For the gap bootstrap we choose \(\theta=1/2\) in Section~\ref{sec:form-bound}; Section~\ref{sec:limit} then constructs transport at every exponent \(0<\theta<1\) on the resulting interval.
The polynomial proof also uses \(\theta=1/2\).
The inverse Fourier transform \(w\) is real and even, and
\begin{equation}
 \widehat w(\xi)=1\quad (|\xi|\le g/4),
 \qquad
 \supp(\widehat w)\subset(-g/2,g/2).
 \label{eq:low-pass-filter}
\end{equation}
Define the real odd function \(W\) from \(w\) by
\begin{equation}
 W(t):=
 \begin{cases}
  -\displaystyle\int_t^\infty w(u)\,\dd u,&t>0,\\[4pt]
  0,&t=0,\\[2pt]
  \displaystyle\int_{-\infty}^t w(u)\,\dd u,&t<0.
 \end{cases}
 \label{eq:inverse-filter-kernel}
\end{equation}
Then, distributionally, \(W'=w-\delta_0\), and hence
\begin{equation}
 \widehat W(\xi)=\frac{1-\widehat w(\xi)}{\I\xi}.
 \label{eq:compatible-filter-identity}
\end{equation}
The quotient is continuously extended by zero at \(\xi=0\).
In particular, \(\widehat W(\xi)=-\I/\xi\) for \(|\xi|\ge g/2\).
The construction of compactly supported Gevrey cutoffs of order \(1/\theta>1\) and the Gevrey Paley--Wiener estimate are given in \cite[Chapter~1]{Rodino93}.
Applied to \(\widehat w\), the Paley--Wiener estimate gives constants \(C_w,c_w>0\) such that
\begin{equation}
 |w(t)|\le C_w\E^{-c_w|t|^\theta}.
 \label{eq:w-stretched-time}
\end{equation}
The associated function \(W\) has the same type of decay, with a possible loss of rate:
\begin{equation}
 |W(t)|\le C_W\E^{-c_W|t|^\theta}
 \label{eq:W-stretched-time}
\end{equation}
for some \(C_W,c_W>0\).
Indeed, for \(t>0\),
\begin{equation*}
 |W(t)|\le C_w\int_t^\infty \E^{-c_wu^\theta}\,\dd u
 \le C(1+t)^{1-\theta}\E^{-c_wt^\theta}
 \le C_W\E^{-c_Wt^\theta},
\end{equation*}
where \(0<c_W<c_w\); the case \(t<0\) follows by oddness.
The decay bounds \eqref{eq:w-stretched-time}--\eqref{eq:W-stretched-time} imply that both \(w\) and \(W\) have every absolute polynomial moment.
On the physical algebra, define
\begin{equation}
 A^{\D_{\Lambda,\lambda}^\ph}
 :=\int_\R w(t)\tau_t^{\ph,\Lambda,\lambda}(A)\,\dd t,
 \qquad
 \cI_{\Lambda,\lambda}^\ph(A)
 :=\int_\R W(t)\tau_t^{\ph,\Lambda,\lambda}(A)\,\dd t.
 \label{eq:physical-inverse-filter}
\end{equation}
The bounded physical generator
\begin{equation}
 \mathscr S_{\Lambda,\lambda}^\ph
 :=\cI_{\Lambda,\lambda}^\ph(V_\Lambda^\ph)
 \label{eq:physical-cutoff-generator}
\end{equation}
is even, self-adjoint, and norm-continuous in \(\lambda\).
Indeed, bounded-perturbation comparison and \(\int|tW(t)|\,\dd t<\infty\) give norm continuity.
Let \(u_{\Lambda,\lambda}\) solve \(\partial_{\lambda}u_{\Lambda,\lambda}=-\I\mathscr S_{\Lambda,\lambda}^\ph u_{\Lambda,\lambda}\), \(u_{\Lambda,0}=\1\), and define \(\beta_{\Lambda,0,\lambda}(A):=u_{\Lambda,\lambda}^*Au_{\Lambda,\lambda}\).
The unitary \(u_{\Lambda,\lambda}\) is even, since its generator is even.
Its two-parameter extension follows the convention \eqref{eq:path-cocycle-convention}.
In the fixed GNS representation, set
\begin{align}
 \mathscr S_{\Lambda,\lambda}
 &:=\pi_\at\sigma_\st(\mathscr S_{\Lambda,\lambda}^\ph\widehat\otimes\1)
   =\int_\R W(t)\E^{\I tK_{\Lambda,\lambda}}V_\Lambda
                    \E^{-\I tK_{\Lambda,\lambda}}\,\dd t,
 \label{eq:spectral-flow-generator}
\\
 U_{\Lambda,\lambda}
 &:=\pi_\at\sigma_\st(u_{\Lambda,\lambda}\widehat\otimes\1),
 \qquad
 \partial_{\lambda}U_{\Lambda,\lambda}=-\I\mathscr S_{\Lambda,\lambda}U_{\Lambda,\lambda},
 \quad U_{\Lambda,0}=\1.
 \label{eq:U-flow}
\end{align}
The integral identity follows from \eqref{eq:algebra-GNS-dynamics}.
Both flows are defined for every real \(\lambda\), independently of the bootstrap gap.

For \(\lambda\in I_\Lambda\), exact quasi-adiabatic continuation gives
\begin{equation}
 U_{\Lambda,\lambda}^*P_{\Lambda,\lambda}U_{\Lambda,\lambda}
 =P^\vac=|\Omega_\at\rangle\langle\Omega_\at|.
 \label{eq:projection-transport}
\end{equation}
The identity \eqref{eq:projection-transport} is proved in the following proposition.

Define the automorphisms
\begin{equation}
 \sigma_{\Lambda,\lambda}
 :=\sigma_\st\circ
       (\beta_{\Lambda,0,\lambda}\widehat\otimes\id_{\rm aux}),
 \qquad
 \alpha_{\Lambda,\lambda,\lambda'}
 :=\sigma_\st\circ
       (\beta_{\Lambda,\lambda,\lambda'}\widehat\otimes\id_{\rm aux})
           \circ\sigma_\st^{-1},
 \label{eq:cutoff-flow-descent}
\end{equation}
and write \(\alpha_{\Lambda,\lambda}:=\alpha_{\Lambda,0,\lambda}\).
Thus \(\sigma_{\Lambda,\lambda}=\alpha_{\Lambda,\lambda}\circ\sigma_\st\).

\begin{proposition}
\label{prop:bootstrap-flow}
For every finite \(\Lambda\), \(\alpha_{\Lambda,\lambda}\) is a parity-preserving inner automorphism satisfying
\begin{equation}
 \pi_\at(\alpha_{\Lambda,\lambda}(A))
 =U_{\Lambda,\lambda}^*\pi_\at(A)U_{\Lambda,\lambda},
 \qquad A\in\A^\st.
 \label{eq:flow-automorphism}
\end{equation}
For \(\lambda\in I_\Lambda\), the ground-state transport identity \eqref{eq:projection-transport} holds.
\end{proposition}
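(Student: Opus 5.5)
The proof has two parts: the implementation identity \eqref{eq:flow-automorphism}, which is algebraic, and the ground-state transport identity \eqref{eq:projection-transport}, which is the standard exact quasi-adiabatic continuation argument carried out on \(\cH_\at\).

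\emph{Implementation identity.} Since \(u_{\Lambda,\lambda}\) is an even unitary in \(\A\), the element \(u_{\Lambda,\lambda}\widehat\otimes\1\) is an even unitary in \(\A^\st\). Parity of \(u_{\Lambda,\lambda}\) gives
\[
 \beta_{\Lambda,0,\lambda}\widehat\otimes\id_{\rm aux}
 =\operatorname{Ad}\bigl((u_{\Lambda,\lambda}\widehat\otimes\1)^*\bigr)
\]
on \(\A^\st\). No graded signs appear, because even elements commute with the auxiliary factor in the graded tensor product. Conjugating by \(\sigma_\st\) yields
\[
 \alpha_{\Lambda,\lambda}=\operatorname{Ad}\bigl(\sigma_\st(u_{\Lambda,\lambda}\widehat\otimes\1)^*\bigr).
\]
This is an inner automorphism, and it is parity preserving because \(\sigma_\st\) preserves parity and the implementing unitary is even. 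Applying \(\pi_\at\) and using the definition of \(U_{\Lambda,\lambda}\) in \eqref{eq:U-flow} gives \eqref{eq:flow-automorphism}. The differential equation in \eqref{eq:U-flow} follows by applying the continuous map \(\pi_\at\sigma_\st(\,\cdot\,\widehat\otimes\1)\) to the norm-differentiable equation for \(u_{\Lambda,\lambda}\). The integral formula \eqref{eq:spectral-flow-generator} follows from \eqref{eq:algebra-GNS-dynamics}, since \(\tau^{\Lambda,\lambda}_t\circ\sigma_\st=\sigma_\st\circ(\tau^{\ph,\Lambda,\lambda}_t\widehat\otimes\tau^{\rm aux}_t)\) and \(\tau^{\rm aux}_t(\1)=\1\).

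\emph{Transport identity on \(I_\Lambda\).} The plan is to show that
\[
 \partial_\lambda P_{\Lambda,\lambda}=-\I[\mathscr S_{\Lambda,\lambda},P_{\Lambda,\lambda}],\qquad \lambda\in I_\Lambda.
\]
Then \(\lambda\mapsto U_{\Lambda,\lambda}^*P_{\Lambda,\lambda}U_{\Lambda,\lambda}\) has zero derivative on \(I_\Lambda\), since \(\partial_\lambda U_{\Lambda,\lambda}^*=\I U_{\Lambda,\lambda}^*\mathscr S_{\Lambda,\lambda}\). At \(\lambda=0\) it equals \(P_{\Lambda,0}=P^\vac\) by Lemma~\ref{lem:operator-facts}(c), and \(I_\Lambda\) is connected, so it equals \(P^\vac\) throughout.

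The projection derivative is established in four steps.
\begin{enumerate}[label=(\roman*),leftmargin=*]
\item By analytic perturbation theory, \(P=P_{\Lambda,\lambda}\) is differentiable with \(\dot P=P\dot P(\1-P)+(\1-P)\dot PP\). Differentiating \(PK_{\Lambda,\lambda}(\1-P)=0\) gives
\[
 P\dot P(\1-P)=-P V_\Lambda(K-E)^{-1}(\1-P),
\]
where the reduced resolvent is bounded by \(2/g\) on \(\ran(\1-P)\). The other off-diagonal block is the adjoint expression.
\item Compute the off-diagonal blocks of \(-\I[\mathscr S,P]\) with the spectral theorem. For \(\psi\in\ran P\) and \(\phi\in\ran(\1-P)\) in the spectral subspace near energy \(E'\),
\[
 \langle\psi,\mathscr S\phi\rangle=\widehat W(E-E')\langle\psi,V_\Lambda\phi\rangle.
\]
Since \(|E'-E|>g/2\), \eqref{eq:compatible-filter-identity} gives \(\widehat W(E-E')=-\I/(E-E')\). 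Integrating over the spectral measure of \(K\) restricted to \(\ran(\1-P)\) reproduces \(-P V_\Lambda(K-E)^{-1}(\1-P)\) after the factor \(-\I\). The sign and Fourier conventions must be tracked carefully here.
\item The diagonal blocks of \([\mathscr S,P]\) vanish identically, because \(P\) is rank one: \(P\mathscr SP-P\mathscr SP=0\), and \((\1-P)[\mathscr S,P](\1-P)=0\).
\item Justify exchanging the time integral with the spectral decomposition. This is done by Fubini for \(\int_\R|W(t)|\,\dd t<\infty\) together with boundedness of \(V_\Lambda\).
\end{enumerate}

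Hence the projection derivative holds, and the constancy argument above completes the proof.

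\emph{Main obstacle.} The main obstacle is step~(ii), specifically verifying that the identity \eqref{eq:compatible-filter-identity} for \(\widehat W\), with the paper's Fourier convention and the sign in \(\E^{\I tK}V\E^{-\I tK}\), produces exactly \(-\I/(E-E')\) and hence the correct off-diagonal block. I would first check it on finite-dimensional spectral projections of \(K\) and then pass to the spectral integral. The requirement that \(\dist(E,\spec K\setminus\{E\})>g/2\) on \(I_\Lambda\) is exactly what places the frequency \(E'-E\) in the region where \(\widehat w=0\), i.e. outside \(\supp\widehat w\subset(-g/2,g/2)\).
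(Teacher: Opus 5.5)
Your proposal is correct and follows the same route as the paper. The implementation identity \eqref{eq:flow-automorphism} comes from \(\alpha_{\Lambda,\lambda}\) being conjugation by the even unitary \(\sigma_\st(u_{\Lambda,\lambda}\widehat\otimes\1)\). The transport identity is the exact quasi-adiabatic continuation theorem: the paper cites it from \cite[Theorem~6.3]{NSY19}, while you reprove it via \(\partial_\lambda P_{\Lambda,\lambda}=-\I[\mathscr S_{\Lambda,\lambda},P_{\Lambda,\lambda}]\). Your sign bookkeeping in step~(ii) closes: \(P\mathscr S_{\Lambda,\lambda}(\1-P)=\I PV_\Lambda(K_{\Lambda,\lambda}-E_{\Lambda,\lambda})^{-1}(\1-P)\), so the corresponding block of \(-\I[\mathscr S_{\Lambda,\lambda},P]\) is \(-PV_\Lambda(K_{\Lambda,\lambda}-E_{\Lambda,\lambda})^{-1}(\1-P)\), exactly as in step~(i). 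In step~(iii), the diagonal blocks of the commutator vanish for any projection, so rank one is not needed there.
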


\begin{proof}
The bounded-perturbation family \(K_0+\lambda V_\Lambda\) has a simple ground-state eigenvalue with gap greater than \(g/2\) on \(I_\Lambda\).
Fix \(\lambda\in I_\Lambda\) and apply \cite[Theorem~6.3 and Section~6.3.2]{NSY19} to the family \(u\mapsto K_{\Lambda,u}\) on a compact subinterval of \(I_\Lambda\) containing \(0\) and \(\lambda\).
With the filters \eqref{eq:low-pass-filter}--\eqref{eq:compatible-filter-identity}, the evolution \eqref{eq:U-flow} satisfies the transport identity \eqref{eq:projection-transport}.
By \eqref{eq:cutoff-flow-descent}, \(\alpha_{\Lambda,\lambda}\) is conjugation by the even unitary \(\sigma_\st(u_{\Lambda,\lambda}\widehat\otimes\1)\).
Applying \(\pi_\at\) and using \eqref{eq:U-flow} gives \eqref{eq:flow-automorphism}.
\end{proof}

For \(A\in\A^\st\), define the filtered observable in product-vacuum coordinates by
\begin{equation}
 A^{\D_{\Lambda,\lambda}}
 :=\int_\R w(t)\tau_t^{\Lambda,\lambda}(A)\,\dd t.
 \label{eq:diagonal-filter}
\end{equation}
If \(\lambda\in I_\Lambda\) and \(P_{\Lambda,\lambda}^\perp=\1-P_{\Lambda,\lambda}\), then the support condition \eqref{eq:low-pass-filter} and \(\widehat w(0)=1\) give
\begin{align}
 P_{\Lambda,\lambda}\pi_\at(A^{\D_{\Lambda,\lambda}})P_{\Lambda,\lambda}^\perp
 &=P_{\Lambda,\lambda}^\perp\pi_\at(A^{\D_{\Lambda,\lambda}})P_{\Lambda,\lambda}
 =0,\nonumber\\
 P_{\Lambda,\lambda}\pi_\at(A^{\D_{\Lambda,\lambda}})P_{\Lambda,\lambda}
 &=P_{\Lambda,\lambda}\pi_\at(A)P_{\Lambda,\lambda}.
 \label{eq:D-ground-blocks}
\end{align}
These identities follow from \(\widehat w(K_{\Lambda,\lambda}-E_{\Lambda,\lambda})=P_{\Lambda,\lambda}\) and \eqref{eq:algebra-GNS-dynamics}; see \cite[Lemmas~3.1--3.2]{MoonOgata20} and \cite[proof of Proposition~4.1]{BTW26}.
The excited--excited block may also change under this filter.

\section{The vacuum normal form}
\label{sec:normal-form}

\subsection{Differentiating the moving-frame Hamiltonian}

For \(\lambda\in I_\Lambda\), define the centered moving-frame Hamiltonian
\begin{equation}
 K_{\Lambda,\lambda}^\nf
 :=U_{\Lambda,\lambda}^*
 (K_{\Lambda,\lambda}-E_{\Lambda,\lambda})U_{\Lambda,\lambda}.
 \label{eq:K-nf}
\end{equation}
In this subsection we express the \(\lambda\)-derivative of \(K_{\Lambda,\lambda}^\nf\) as a sum of centered filtered perturbation terms; see Proposition~\ref{prop:centered-derivative}.
Let
\begin{equation*}
 \widehat\omega_{\Lambda,\lambda}(A)
 :=\langle U_{\Lambda,\lambda}\Omega_\at,
 \pi_\at(A)U_{\Lambda,\lambda}\Omega_\at\rangle
\end{equation*}
and set \(\omega_{\Lambda,\lambda}:=\omega_0\circ\beta_{\Lambda,0,\lambda}\).
By \eqref{eq:flow-automorphism} and \(\omega_\at\circ\sigma_\st=\omega_0\widehat\otimes\omega_{\rm aux}\),
\begin{equation}
 \widehat\omega_{\Lambda,\lambda}\circ\sigma_\st
 =\omega_{\Lambda,\lambda}\widehat\otimes\omega_{\rm aux}.
 \label{eq:cutoff-state-factorization}
\end{equation}
For \(\lambda\in I_\Lambda\), \eqref{eq:projection-transport} shows that \(\widehat\omega_{\Lambda,\lambda}\) is the ground-state vector state of \(K_{\Lambda,\lambda}\).

\begin{figure}[htbp]
\centering
\begin{tikzcd}[column sep=3.3em,row sep=2.8em,panel label={(a) observables}]
 \A\widehat\otimes\A_{\rm aux} \arrow[r,"\sigma_\st"]
   \arrow[d,"{\beta_{\Lambda,\lambda,\lambda'}\widehat\otimes\id_{\rm aux}}"']
 & \A^\st \arrow[d,"{\alpha_{\Lambda,\lambda,\lambda'}}"]\\
 \A\widehat\otimes\A_{\rm aux} \arrow[r,"\sigma_\st"'] & \A^\st
\end{tikzcd}%
\hspace{2.6em}%
\begin{tikzcd}[column sep=3.3em,row sep=2.8em,panel label={(b) states}]
 \omega_0\widehat\otimes\omega_{\rm aux} \arrow[r,mapsto,"\circ\,\sigma_\st^{-1}"]
   \arrow[d,mapsto,"\circ\,(\beta_{\Lambda,0,\lambda}\widehat\otimes\id_{\rm aux})"']
 & \omega_\at \arrow[d,mapsto,"\circ\,\alpha_{\Lambda,\lambda}"]\\
 \omega_{\Lambda,\lambda}\widehat\otimes\omega_{\rm aux} \arrow[r,mapsto,"\circ\,\sigma_\st^{-1}"']
 & \widehat\omega_{\Lambda,\lambda}
\end{tikzcd}
\caption{The cutoff flow on observables (a) and states (b).
In each square, the left column uses physical coordinates and the right column product-vacuum coordinates.
In (b), \(\circ\,\gamma\) denotes \(\rho\mapsto\rho\circ\gamma\), and \(\widehat\omega_{\Lambda,\lambda}\) is the vector state of \(U_{\Lambda,\lambda}\Omega_\at\).
The diagrams summarize \eqref{eq:cutoff-flow-descent} and \eqref{eq:cutoff-state-factorization}.}
\label{fig:physical-flow-descent}
\end{figure}

For fixed \(\Lambda\), \(U_{\Lambda,\lambda}\) and \(U_{\Lambda,\lambda}^*\) are norm continuously differentiable, and \(E_{\Lambda,\lambda}\) is continuously differentiable on \(I_\Lambda\).
The difficulty in differentiating \(K_{\Lambda,\lambda}^\nf\) in \eqref{eq:K-nf} is the unbounded factor \(K_{\Lambda,\lambda}=K_0+\lambda V_\Lambda\).
For \(\psi\in\operatorname{Dom}(K_0)\), norm differentiability of \(U_{\Lambda,\lambda}\) alone does not justify
\begin{equation*}
 \frac{\dd}{\dd \lambda}\bigl(K_{\Lambda,\lambda}U_{\Lambda,\lambda}\psi\bigr)
 =V_\Lambda U_{\Lambda,\lambda}\psi
  +K_{\Lambda,\lambda}\dot U_{\Lambda,\lambda}\psi.
\end{equation*}
We must show that \(U_{\Lambda,\lambda}\psi\) stays in \(\operatorname{Dom}(K_0)\) and is differentiable in the graph norm \(\|\psi\|+\|K_0\psi\|\).
In particular, \(\dot U_{\Lambda,\lambda}\psi\) then also belongs to this domain.

The inverse-filter identity \eqref{eq:abstract-filtered-commutator} gives a bounded extension of \([K_{\Lambda,\lambda},\mathscr S_{\Lambda,\lambda}]\); compare \cite[Proposition~6.9]{NSY19} and \cite[Lemma~B.3 and Theorem~3.2]{BTW26}.
Lemma~\ref{lem:filtered-commutator} uses this identity and \(K_{\Lambda,\lambda}=K_0+\lambda V_\Lambda\) to prove norm continuity of \(\mathscr S_{\Lambda,\lambda}\) and the bounded commutator \([K_0,\mathscr S_{\Lambda,\lambda}]\).
Lemma~\ref{lem:nonautonomous-domain-invariance} then gives the required graph-norm differentiability.

For a bounded operator \(B\), we write \(B\in C^1(K)\) when \(t\mapsto\E^{\I tK}B\E^{-\I tK}\) is strongly continuously differentiable.
Equivalently, the commutator form \(\langle K\varphi,B\psi\rangle- \langle B^*\varphi,K\psi\rangle\), initially defined for \(\varphi,\psi\in\operatorname{Dom}(K)\), extends to a bounded form.
In particular, \(B\in C^1(K)\) implies \(B\operatorname{Dom}(K)\subset\operatorname{Dom}(K)\); see \cite[Section~6.2]{ABG96}.

\begin{samepage}
\begin{lemma}
\label{lem:abstract-filtered-commutator}
Let \(K=K^*\) on a Hilbert space, let \(V=V^*\) be bounded, and assume that \(t\mapsto\tau_t(V):=\E^{\I tK}V\E^{-\I tK}\) is norm-continuous.
Let \(W\in L^1(\R)\cap BV(\R)\) be real and suppose that its distributional derivative is the finite signed measure
\begin{equation*}
 \dd W=w(t)\,\dd t-\delta_0,
 \qquad w\in L^1(\R).
\end{equation*}
Define the bounded operators
\begin{equation*}
 \cI_W^K(V):=\int_\R W(t)\tau_t(V)\,\dd t,
 \qquad
 V^{\D_w^K}:=\int_\R w(t)\tau_t(V)\,\dd t
\end{equation*}
by Bochner integration.
Then \(\cI_W^K(V)=\cI_W^K(V)^*\) and \(\cI_W^K(V)\in C^1(K)\).
More precisely,
\begin{equation}
 \I[K,\cI_W^K(V)]=V-V^{\D_w^K},
 \qquad
 V+\I[\cI_W^K(V),K]=V^{\D_w^K}.
 \label{eq:abstract-filtered-commutator}
\end{equation}
\end{lemma}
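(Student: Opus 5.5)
The plan is to work with the one-parameter group \(\tau_s\) and reduce the commutator identity to an integration by parts in time against the measure \(\dd W\). Self-adjointness is immediate. Since \(W\) is real, \(\tau_t(V)^*=\tau_t(V)\), and Bochner integration commutes with the adjoint, we get \(\cI_W^K(V)^*=\cI_W^K(V)\). The integrals converge in norm because \(W,w\in L^1\), \(\|\tau_t(V)\|=\|V\|\), and \(t\mapsto\tau_t(V)\) is norm-continuous.

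The main computation concerns \(\tau_s(\cI_W^K(V))\). Since \(\tau_s\) is isometric and commutes with Bochner integrals, a change of variables gives
\[
 \tau_s(\cI_W^K(V))=\int_\R W(t-s)\,\tau_t(V)\,\dd t .
\]
We then take the difference quotient at \(s=0\):
\[
 \frac{\tau_s(\cI_W^K(V))-\cI_W^K(V)}{s}
 =\int_\R\frac{W(t-s)-W(t)}{s}\,\tau_t(V)\,\dd t
 =-\int_\R\Bigl(\frac1s\int_0^s\tau_{t+r}(V)\,\dd r\Bigr)\,\dd W(t).
\]
The second equality comes from writing \(W(t-s)-W(t)=-\int_{(t-s,t]}\dd W\), valid because \(W\in BV\). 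One then applies Fubini for the finite measure \(|\dd W|\) against the bounded, norm-continuous integrand. Equivalently, one can pair with \(\varphi,\psi\) and use the scalar Fubini theorem.

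As \(s\to0\), the inner average converges in norm to \(\tau_t(V)\), uniformly in \(t\), since \(t\mapsto\tau_t(V)\) is uniformly continuous. The reason is \(\|\tau_{t+r}(V)-\tau_t(V)\|=\|\tau_r(V)-V\|\). Because \(|\dd W|\) is finite, dominated convergence gives norm convergence of the difference quotient to
\[
 -\int_\R\tau_t(V)\,\dd W(t)=-\int_\R w(t)\tau_t(V)\,\dd t+\tau_0(V)=V-V^{\D_w^K}.
\]
In particular, \(s\mapsto\tau_s(\cI_W^K(V))\) is norm differentiable at \(0\), and by the group property it is differentiable everywhere. Hence \(\cI_W^K(V)\in C^1(K)\).

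Finally we identify this derivative with the commutator. For \(\varphi,\psi\in\operatorname{Dom}(K)\) and bounded \(B\),
\[
 \frac{\dd}{\dd s}\Big|_{s=0}\langle\varphi,\E^{\I sK}B\E^{-\I sK}\psi\rangle
 =\I\bigl(\langle K\varphi,B\psi\rangle-\langle B^*\varphi,K\psi\rangle\bigr),
\]
which is the form of \(\I[K,B]\). Applied to \(B=\cI_W^K(V)\), this gives the first identity in \eqref{eq:abstract-filtered-commutator} as a bounded extension on \(\operatorname{Dom}(K)\). The second identity follows by rearranging the first, since \(\I[\cI_W^K(V),K]=-\I[K,\cI_W^K(V)]\). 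As recalled before the lemma, \(\cI_W^K(V)\) then maps \(\operatorname{Dom}(K)\) into itself, so both commutators are genuine operators there.

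The only delicate point is the integration-by-parts step with the atom \(-\delta_0\) in \(\dd W\). I would handle it via the half-open-interval Fubini formula above, rather than differentiating \(W\) pointwise, so that the jump of \(W\) at \(0\) is absorbed correctly into the term \(\tau_0(V)=V\).
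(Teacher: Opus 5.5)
Your proposal is correct and follows essentially the same route as the paper: you move the time shift onto \(W\), rewrite the difference quotient as \(-\int_\R\bigl(\tfrac1s\int_0^s\tau_{t+r}(V)\,\dd r\bigr)\,\dd W(t)\), and use uniform norm continuity of the isometric orbit together with finiteness of \(|\dd W|\) to obtain the limit \(V-V^{\D_w^K}\). From there you conclude, as the paper does, by the bounded-commutator criterion, with self-adjointness following from reality of \(W\); your explicit half-open-interval Fubini step and form computation simply spell out what the paper compresses into one line.
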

\end{samepage}

\begin{proof}
Transferring the time shift to \(W\) and using its distributional derivative gives
\begin{align*}
 \frac{\tau_\varepsilon(\cI_W^K(V))-\cI_W^K(V)}{\varepsilon}
 =-\int_\R\frac1\varepsilon\int_0^\varepsilon\tau_{t+u}(V)\,\dd u\,\dd W(t) \stackrel{\varepsilon\to0}{\longrightarrow}
 -\int_\R\tau_t(V)\,\dd W(t)=V-V^{\D_w^K}.
\end{align*}
The convergence is in norm because the isometric orbit \(t\mapsto\tau_t(V)\) is uniformly norm-continuous and \(|\dd W|\) is finite.
The bounded-commutator criterion then gives \(\cI_W^K(V)\in C^1(K)\) and \eqref{eq:abstract-filtered-commutator}; compare \cite[Proposition~6.9]{NSY19}.
Since \(W\) is real and \(V\) is self-adjoint, \(\cI_W^K(V)\) is self-adjoint.
\end{proof}

To control \(U_{\Lambda,\lambda}\) on \(\operatorname{Dom}(K_0)\), we will use the following graph-norm evolution lemma with \(H=K_0\) and generator \(\mathscr S_{\Lambda,\lambda}\).

\begin{lemma}
\label{lem:nonautonomous-domain-invariance}
Let \(H=H^*\), let \(J\subset\R\) be an interval, and suppose \(s\mapsto \mathscr S_s=\mathscr S_s^*\) is norm-continuous, \(\mathscr S_s\in C^1(H)\), and \(s\mapsto[H,\mathscr S_s]^\circ\) is norm-continuous, where the superscript denotes the bounded extension.
If
\begin{equation*}
 \partial_tU_{s,t}=-\I \mathscr S_tU_{s,t},
 \qquad U_{s,s}=\1,
\end{equation*}
then
\begin{equation*}
 U_{s,t}\operatorname{Dom}(H)=\operatorname{Dom}(H),
 \qquad s,t\in J,
\end{equation*}
and, for every \(\psi\in\operatorname{Dom}(H)\), the map \(t\mapsto U_{s,t}\psi\) is continuously differentiable in the graph norm of \(H\), with derivative
\begin{equation*}
 \frac{\dd}{\dd t}U_{s,t}\psi=-\I \mathscr S_tU_{s,t}\psi.
\end{equation*}
\end{lemma}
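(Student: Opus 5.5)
The plan is to conjugate the evolution by the resolvent of \(H\) and reduce domain invariance to an evolution equation with bounded generator. Fix \(z:=\I\) (any point off the real axis) and write \(R:=(H-z)^{-1}\). Since \(\mathscr S_t\in C^1(H)\), each \(\mathscr S_t\) maps \(\operatorname{Dom}(H)\) into itself. On \(\operatorname{Dom}(H)\) we have the identity
\[
 (H-z)\mathscr S_tR=\mathscr S_t+[H,\mathscr S_t]^\circ R .
\]
Hence \(\widetilde{\mathscr S}_t:=(H-z)\mathscr S_tR\) extends to a bounded operator \(\mathscr S_t+B_t\), where \(B_t:=[H,\mathscr S_t]^\circ R\). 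Under the hypotheses, \(t\mapsto\widetilde{\mathscr S}_t\) is norm-continuous. Let \(\widetilde U_{s,t}\) be the Dyson evolution solving \(\partial_t\widetilde U_{s,t}=-\I\widetilde{\mathscr S}_t\widetilde U_{s,t}\) with \(\widetilde U_{s,s}=\1\). It is norm-\(C^1\), bounded, and invertible, with inverse given by the reversed evolution. Note that \(\widetilde U\) need not be unitary, because \(\widetilde{\mathscr S}_t\) need not be self-adjoint.

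The key step is to prove the intertwining relation \(U_{s,t}R=R\,\widetilde U_{s,t}\). Set \(X_t:=U_{s,t}R-R\widetilde U_{s,t}\), a norm-differentiable bounded family with \(X_s=0\). Differentiating gives
\[
 \partial_tX_t=-\I\mathscr S_tU_{s,t}R+\I R\widetilde{\mathscr S}_t\widetilde U_{s,t}.
\]
Since \(R\widetilde{\mathscr S}_t=\mathscr S_tR\) on the whole space (check this on \(\operatorname{Ran}(H-z)\) via the commutator identity), this becomes \(\partial_tX_t=-\I\mathscr S_tX_t\). Uniqueness for bounded norm-continuous generators (Gronwall) then gives \(X_t=0\). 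Consequently \(U_{s,t}\operatorname{Dom}(H)=U_{s,t}\operatorname{Ran}R\subseteq\operatorname{Ran}R=\operatorname{Dom}(H)\). Applying the same argument with \(s\) and \(t\) swapped, using \(U_{s,t}^{-1}=U_{t,s}\), gives equality of the two domains.

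For graph-norm differentiability, write \(\psi=R\phi\). Then \((H-z)U_{s,t}\psi=\widetilde U_{s,t}\phi\), which is norm-\(C^1\) with derivative \(-\I\widetilde{\mathscr S}_t\widetilde U_{s,t}\phi\). This derivative equals \((H-z)\bigl(-\I\mathscr S_tU_{s,t}\psi\bigr)\), again by the identity \(R\widetilde{\mathscr S}_t=\mathscr S_tR\). Together with the norm differentiability of \(U_{s,t}\psi\), this yields \(C^1\) in the graph norm with the stated derivative.

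I expect the main obstacle to be justifying \(R\widetilde{\mathscr S}_t=\mathscr S_tR\) rigorously. This requires the commutator \([H,\mathscr S_t]\) to be understood as a form on \(\operatorname{Dom}(H)\) and \(\mathscr S_t\operatorname{Dom}(H)\subseteq\operatorname{Dom}(H)\), both of which the \(C^1(H)\) characterization recalled before the lemma supplies \cite[Section~6.2]{ABG96}. The remaining steps are routine bounded-operator ODE arguments.
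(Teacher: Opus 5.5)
Your proof is correct and is essentially the paper's argument. Conjugating by \(R=(H-\I)^{-1}\) transports the paper's bounded evolution of \(\mathscr S_t\) on \(\operatorname{Dom}(H)\) with its graph norm (obtained there from Pazy's evolution theorem) to the Dyson evolution of \(\widetilde{\mathscr S}_t\) on the Hilbert space, and your Gronwall proof of \(U_{s,t}R=R\,\widetilde U_{s,t}\) is the paper's step ``uniqueness of the Hilbert-space evolution shows this propagator equals \(U_{s,t}\)''. The step you flag as the main obstacle is immediate: \(\mathscr S_tR\) maps into \(\operatorname{Dom}(H)\), where \(R(H-\I)=\1\), so \(R\widetilde{\mathscr S}_t=\mathscr S_tR\) on the whole space (and invertibility of \(\widetilde U_{s,t}\) already gives \(U_{s,t}\operatorname{Dom}(H)=R\,\widetilde U_{s,t}(\operatorname{Dom}(H)^{-}) \) with \(\widetilde U_{s,t}\) onto, i.e.\ \(U_{s,t}\operatorname{Dom}(H)=\operatorname{Dom}(H)\), without swapping \(s\) and \(t\)).
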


\begin{proof}
On \(\operatorname{Dom}(H)\) with its graph norm, the identity \(H\mathscr S_s\psi=\mathscr S_sH\psi+[H,\mathscr S_s]^\circ\psi\) and the continuity assumptions make \(s\mapsto\mathscr S_s\) a norm-continuous family of bounded operators.
The Banach-space evolution theorem \cite[Chapter~5, Theorem~5.1, p.~127]{Pazy83} gives a graph-norm continuously differentiable propagator satisfying
\begin{equation}
 U_{s,t}=\1-\I\int_s^t\mathscr S_uU_{s,u}\,\dd u.
 \label{eq:bounded-propagator-integral}
\end{equation}
Uniqueness of the Hilbert-space evolution shows that this propagator equals \(U_{s,t}\), so \(U_{s,t}\operatorname{Dom}(H)\subseteq\operatorname{Dom}(H)\).
Applying the same argument to \(U_{t,s}=U_{s,t}^{-1}\) gives the reverse inclusion.
\end{proof}

For every real \(\lambda\) and nonempty \(Z\Subset\Gamma\), write
\[
 F_{\Lambda,\lambda,Z}:=\cV_\Lambda(Z)^{\D_{\Lambda,\lambda}^\ph}
\]
for the filtered physical perturbation term.

\begin{lemma}
\label{lem:filtered-commutator}
For every finite \(\Lambda\) and \(\lambda\in\R\), \(\mathscr S_{\Lambda,\lambda}\in C^1(K_{\Lambda,\lambda})\cap C^1(K_0)\), and
\begin{equation}
 V_\Lambda+\I[\mathscr S_{\Lambda,\lambda},K_{\Lambda,\lambda}]
 =\sum_{Z\subset\Lambda}\pi_\at\!\left(
 \sigma_\st(F_{\Lambda,\lambda,Z}\widehat\otimes\1)\right).
 \label{eq:filtered-commutator}
\end{equation}
The commutator in \eqref{eq:filtered-commutator} denotes its bounded extension.
For fixed \(\Lambda\), the maps \(\lambda\mapsto\mathscr S_{\Lambda,\lambda}\) and \(\lambda\mapsto[K_0,\mathscr S_{\Lambda,\lambda}]^\circ\) are norm-continuous, where the superscript again denotes the bounded extension.
Moreover,
\[
 U_{\Lambda,\lambda}\operatorname{Dom}(K_0)
 =\operatorname{Dom}(K_0),\qquad \lambda\in\R,
\]
and, for every \(\psi\in\operatorname{Dom}(K_0)\), the map \(\lambda\mapsto U_{\Lambda,\lambda}\psi\) is continuously differentiable in the graph norm of \(K_0\).
\end{lemma}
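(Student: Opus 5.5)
The plan is to reduce everything to the abstract Lemma~\ref{lem:abstract-filtered-commutator} and then to Lemma~\ref{lem:nonautonomous-domain-invariance}.

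\emph{Step 1: the commutator identity.} Apply Lemma~\ref{lem:abstract-filtered-commutator} with \(K=K_{\Lambda,\lambda}\) and \(V=V_\Lambda\). Its hypotheses hold:
\begin{itemize}
\item \(t\mapsto\E^{\I tK_{\Lambda,\lambda}}V_\Lambda\E^{-\I tK_{\Lambda,\lambda}}\) equals \(\pi_\at\sigma_\st(\tau_t^{\ph,\Lambda,\lambda}(V_\Lambda^\ph)\widehat\otimes\1)\) by \eqref{eq:algebra-GNS-dynamics} and \eqref{eq:atomic-cutoff-dynamics}, using that \(\tau^{\rm aux}\) fixes \(\1\). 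It is therefore norm-continuous, by strong continuity of the \(C^*\)-dynamics.
\item \(W\in L^1\cap BV\) with \(\dd W=w\,\dd t-\delta_0\), by \eqref{eq:inverse-filter-kernel} and the decay bounds \eqref{eq:w-stretched-time}--\eqref{eq:W-stretched-time}.
\end{itemize}
By \eqref{eq:spectral-flow-generator} the lemma yields \(\mathscr S_{\Lambda,\lambda}\in C^1(K_{\Lambda,\lambda})\) and \(V_\Lambda+\I[\mathscr S_{\Lambda,\lambda},K_{\Lambda,\lambda}]=V_\Lambda^{\D_w}\).

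Since \(V_\Lambda^\ph=\sum_{Z\subset\Lambda}\cV_\Lambda(Z)\) is a finite sum, filtering is linear, and \(\pi_\at\sigma_\st(\cdot\,\widehat\otimes\1)\) commutes with Bochner integration, we get
\[
V_\Lambda^{\D_w}=\sum_Z\pi_\at\sigma_\st(F_{\Lambda,\lambda,Z}\widehat\otimes\1).
\]
This is \eqref{eq:filtered-commutator}.

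Membership in \(C^1(K_0)\) follows from the fact that \(\operatorname{Dom}(K_{\Lambda,\lambda})=\operatorname{Dom}(K_0)\) and \(K_0=K_{\Lambda,\lambda}-\lambda V_\Lambda\) with \(V_\Lambda\) bounded. The commutator form therefore changes by the bounded form of \(-\lambda[V_\Lambda,\mathscr S_{\Lambda,\lambda}]\), which gives
\[
[K_0,\mathscr S_{\Lambda,\lambda}]^\circ=[K_{\Lambda,\lambda},\mathscr S_{\Lambda,\lambda}]^\circ-\lambda[V_\Lambda,\mathscr S_{\Lambda,\lambda}].
\]

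\emph{Step 2: continuity in \(\lambda\).} Norm continuity of \(\lambda\mapsto\mathscr S_{\Lambda,\lambda}\) was already noted after \eqref{eq:physical-cutoff-generator}. It follows from the Duhamel bound
\[
\|\tau_t^{\ph,\Lambda,\lambda}(A)-\tau_t^{\ph,\Lambda,\lambda'}(A)\|\le|t|\,|\lambda-\lambda'|\,\|V_\Lambda^\ph\|\,\|A\|
\]
combined with \(\int|tW(t)|\,\dd t<\infty\). The same bound with \(w\) in place of \(W\) shows that \(\lambda\mapsto V_\Lambda^{\D_w}\) is norm-continuous, because \(w\) has all moments. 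From Step 1,
\[
[K_0,\mathscr S_{\Lambda,\lambda}]^\circ=\I\bigl(V_\Lambda-V_\Lambda^{\D_w}\bigr)-\lambda[V_\Lambda,\mathscr S_{\Lambda,\lambda}],
\]
and every term on the right is norm-continuous in \(\lambda\).

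\emph{Step 3: domain invariance and differentiability.} Apply Lemma~\ref{lem:nonautonomous-domain-invariance} with \(H=K_0\), \(\mathscr S_s=\mathscr S_{\Lambda,s}\) on any compact interval \(J\ni0\), and \(U_{0,\lambda}=U_{\Lambda,\lambda}\) from \eqref{eq:U-flow}. This gives \(U_{\Lambda,\lambda}\operatorname{Dom}(K_0)=\operatorname{Dom}(K_0)\) and graph-norm \(C^1\) dependence on \(\lambda\).

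The only delicate point is the passage from \(C^1(K_{\Lambda,\lambda})\) to \(C^1(K_0)\) in Step 1, that is, checking that the commutator forms agree on the common domain. This is immediate from the equality of domains under bounded perturbation, which Lemma~\ref{lem:operator-facts}(a) provides.
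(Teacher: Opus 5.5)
Your proposal is correct and follows the paper's proof essentially step for step: Lemma~\ref{lem:abstract-filtered-commutator} with \(K=K_{\Lambda,\lambda}\) and \(V=V_\Lambda\), expansion of the filtered \(V_\Lambda\) via \eqref{eq:atomic-cutoff-dynamics}, the bounded-perturbation identity for \([K_0,\mathscr S_{\Lambda,\lambda}]\), Duhamel-based norm continuity, and finally Lemma~\ref{lem:nonautonomous-domain-invariance} with \(H=K_0\). There are two harmless slips. The Duhamel bound needs a factor \(2\), coming from \(\|[V,B]\|\le2\|V\|\|B\|\). Also, since \(\I[K,\cI_W^K(V)]=V-V^{\D_w^K}\), the first term in your Step~2 formula should read \(-\I(V_\Lambda-V_\Lambda^{\D_w})\). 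Neither affects the continuity argument.
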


\begin{proof}
The covariance identity \eqref{eq:algebra-GNS-dynamics} makes \(t\mapsto \E^{\I tK_{\Lambda,\lambda}}V_\Lambda \E^{-\I tK_{\Lambda,\lambda}}\) norm-continuous: \(V_\Lambda\) is a represented algebra element and the algebraic dynamics is pointwise norm-continuous.
The function \(W\) in \eqref{eq:inverse-filter-kernel} belongs to \(L^1\cap BV\) and has distributional derivative \(w(t)\,\dd t-\delta_0\).
Lemma~\ref{lem:abstract-filtered-commutator}, with \(K=K_{\Lambda,\lambda}\) and \(V=V_\Lambda\), therefore gives
\begin{equation*}
 V_\Lambda+\I[\mathscr S_{\Lambda,\lambda},K_{\Lambda,\lambda}]
 =\int_\R w(t)\E^{\I tK_{\Lambda,\lambda}}V_\Lambda
 \E^{-\I tK_{\Lambda,\lambda}}\,\dd t.
\end{equation*}
Expanding \(V_\Lambda\) by \eqref{eq:V-Lambda} under this integral and using \eqref{eq:atomic-cutoff-dynamics} gives \eqref{eq:filtered-commutator}.

Since \(K_{\Lambda,\lambda}=K_0+\lambda V_\Lambda\), on \(\operatorname{Dom}(K_0)\) we have
\begin{equation*}
 [\mathscr S_{\Lambda,\lambda},K_0]
 =[\mathscr S_{\Lambda,\lambda},K_{\Lambda,\lambda}]
 -\lambda[\mathscr S_{\Lambda,\lambda},V_\Lambda].
\end{equation*}
Both commutators on the right have bounded extensions: the first by Lemma~\ref{lem:abstract-filtered-commutator}, and the second because \(\mathscr S_{\Lambda,\lambda}\) and \(V_\Lambda\) are bounded.
Hence \(\mathscr S_{\Lambda,\lambda}\in C^1(K_0)\).
At fixed \(\Lambda\), the bounded-perturbation Duhamel formula \cite[Section~2.3.2]{NSY19} gives
\begin{equation*}
 \|\tau_t^{\Lambda,\lambda}(A)-\tau_t^{\Lambda,\lambda'}(A)\|
 \le2|\lambda-\lambda'|\,|t|\,\|V_\Lambda\|\,\|A\|.
\end{equation*}
Norm continuity of \(\mathscr S_{\Lambda,\lambda}\) follows from \eqref{eq:spectral-flow-generator} and norm continuity of the physical generator \eqref{eq:physical-cutoff-generator}.
Integrating the Duhamel bound against \(|w(t)|\,\dd t\) proves norm continuity of the right side of \eqref{eq:filtered-commutator}.
Equation~\eqref{eq:filtered-commutator} therefore gives norm continuity of \([\mathscr S_{\Lambda,\lambda},K_{\Lambda,\lambda}]^\circ\), and the displayed commutator identity gives norm continuity of \([K_0,\mathscr S_{\Lambda,\lambda}]^\circ\).
Lemma~\ref{lem:nonautonomous-domain-invariance}, with \(H=K_0\) and generator \(\mathscr S_{\Lambda,\lambda}\), now proves the domain equality and graph-norm differentiability.
\end{proof}

For every real \(\lambda\) and nonempty \(Z\Subset\Gamma\), define the centered transported term
\begin{equation}
 G_{\Lambda,\lambda}(Z)
 :=\sigma_{\Lambda,\lambda}(F_{\Lambda,\lambda,Z}\widehat\otimes\1)
       -\omega_{\Lambda,\lambda}(F_{\Lambda,\lambda,Z})\1.
 \label{eq:G-Lambda-s}
\end{equation}
Set \(G_{\Lambda,\lambda}(\varnothing):=0\).
Integrating the change-of-coordinates identity \eqref{eq:atomic-cutoff-dynamics} against the filter \(w\) gives
\[
 \bigl(\sigma_\st(\cV_\Lambda(Z)\widehat\otimes\1)\bigr)^{\D_{\Lambda,\lambda}}
 =\sigma_\st(F_{\Lambda,\lambda,Z}\widehat\otimes\1).
\]
Thus the ground-state block identities \eqref{eq:D-ground-blocks} apply to these physical filtered terms in the fixed GNS representation.
Moreover, \(\omega_\at(G_{\Lambda,\lambda}(Z))=0\) for every real \(\lambda\).

Figure~\ref{fig:filtering-centering} illustrates the filtering, scalar subtraction, and transport used in the following proof.

\begin{samepage}
\begin{lemma}
\label{lem:termwise-block}
For every \(\lambda\in I_\Lambda\) and \(Z\Subset\Gamma\),
\begin{equation}
 \pi_\at(G_{\Lambda,\lambda}(Z))\Omega_\at=0,
 \qquad
 \pi_\at(G_{\Lambda,\lambda}(Z)^*)\Omega_\at=0.
 \label{eq:G-vacuum-annihilation}
\end{equation}
\end{lemma}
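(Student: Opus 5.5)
The plan is to move the statement into the moving frame, where the vacuum becomes the ground-state vector of \(K_{\Lambda,\lambda}\), and then use the block structure \eqref{eq:D-ground-blocks} of filtered observables.

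Fix \(\lambda\in I_\Lambda\) and \(Z\). Set \(B:=\sigma_\st(F_{\Lambda,\lambda,Z}\widehat\otimes\1)\in\A^\st\). Since \(\sigma_{\Lambda,\lambda}=\alpha_{\Lambda,\lambda}\circ\sigma_\st\), the flow identity \eqref{eq:flow-automorphism} gives
\[
 \pi_\at\bigl(\sigma_{\Lambda,\lambda}(F_{\Lambda,\lambda,Z}\widehat\otimes\1)\bigr)
 =U_{\Lambda,\lambda}^*\pi_\at(B)U_{\Lambda,\lambda}.
\]
Next I identify the scalar. By the state factorization \eqref{eq:cutoff-state-factorization} with \(\omega_{\rm aux}(\1)=1\), we have \(\omega_{\Lambda,\lambda}(F_{\Lambda,\lambda,Z})=\widehat\omega_{\Lambda,\lambda}(B)=\langle\Psi,\pi_\at(B)\Psi\rangle\), where \(\Psi:=U_{\Lambda,\lambda}\Omega_\at\). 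Hence
\[
 \pi_\at(G_{\Lambda,\lambda}(Z))
 =U_{\Lambda,\lambda}^*\bigl(\pi_\at(B)-\langle\Psi,\pi_\at(B)\Psi\rangle\1\bigr)U_{\Lambda,\lambda}.
\]

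The key input is the transport identity \eqref{eq:projection-transport}. It gives \(P_{\Lambda,\lambda}=U_{\Lambda,\lambda}P^\vac U_{\Lambda,\lambda}^*=|\Psi\rangle\langle\Psi|\), so \(\Psi\) is the normalized ground-state vector of \(K_{\Lambda,\lambda}\). As noted after \eqref{eq:G-Lambda-s}, \(B\) is the filtered observable \((\sigma_\st(\cV_\Lambda(Z)\widehat\otimes\1))^{\D_{\Lambda,\lambda}}\). Therefore \eqref{eq:D-ground-blocks} applies and yields \(P^\perp_{\Lambda,\lambda}\pi_\at(B)P_{\Lambda,\lambda}=0\). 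Equivalently,
\[
 \pi_\at(B)\Psi=P_{\Lambda,\lambda}\pi_\at(B)\Psi=\langle\Psi,\pi_\at(B)\Psi\rangle\Psi.
\]
So \(\bigl(\pi_\at(B)-\langle\Psi,\pi_\at(B)\Psi\rangle\bigr)\Psi=0\). Applying \(U_{\Lambda,\lambda}^*\) and using \(U_{\Lambda,\lambda}\Omega_\at=\Psi\) gives the first identity in \eqref{eq:G-vacuum-annihilation}.

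For the adjoint, I repeat the argument with \(B^*\), which equals \(\sigma_\st(F_{\Lambda,\lambda,Z}^*\widehat\otimes\1)\). The function \(w\) is real and \(\tau\) is a \(*\)-automorphism, so \(F_{\Lambda,\lambda,Z}^*=(\cV_\Lambda(Z)^*)^{\D^\ph_{\Lambda,\lambda}}\). Thus \(B^*\) is again a filtered observable, and the other off-diagonal block identity in \eqref{eq:D-ground-blocks} applies to it. Since \(G_{\Lambda,\lambda}(Z)^*\) is the centered version of \(B^*\) with the conjugate scalar, the same computation gives the second identity. Alternatively, since \(\cV\) is self-adjoint, \(B\) is self-adjoint and \(G^*=G\) directly. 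The case \(Z=\varnothing\) is trivial.

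The only delicate point is ensuring that the subtracted scalar \(\omega_{\Lambda,\lambda}(F_{\Lambda,\lambda,Z})\) is exactly the expectation in \(\Psi\). This is handled by \eqref{eq:cutoff-state-factorization}. The rest is bookkeeping of \(U_{\Lambda,\lambda}\) conjugations.
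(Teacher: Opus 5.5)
Your proof is correct and follows the paper's argument. You identify the subtracted scalar with the ground-state expectation via \eqref{eq:cutoff-state-factorization}, use the block identities \eqref{eq:D-ground-blocks} for the filtered term \(\sigma_\st(F_{\Lambda,\lambda,Z}\widehat\otimes\1)\) to show that the centered operator annihilates the rank-one ground space \(\ran P_{\Lambda,\lambda}\), and then conjugate by \(U_{\Lambda,\lambda}\) using \eqref{eq:projection-transport} and \(\sigma_{\Lambda,\lambda}=\alpha_{\Lambda,\lambda}\circ\sigma_\st\). The paper states this as two-sided vanishing of the products with \(P_{\Lambda,\lambda}\); your separate treatment of \(B\) and \(B^*\), or alternatively the self-adjointness of \(\cV_\Lambda(Z)\), gives exactly the same conclusion.
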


\begin{proof}
Since \(P_{\Lambda,\lambda}\) has rank one, applying \eqref{eq:D-ground-blocks} with \(A=\sigma_\st(\cV_\Lambda(Z)\widehat\otimes\1)\) gives
\begin{equation*}
 P_{\Lambda,\lambda}\pi_\at\sigma_\st(F_{\Lambda,\lambda,Z}\widehat\otimes\1)P_{\Lambda,\lambda}
 =\omega_{\Lambda,\lambda}(F_{\Lambda,\lambda,Z})P_{\Lambda,\lambda}
 =\omega_{\Lambda,\lambda}(\cV_\Lambda(Z))P_{\Lambda,\lambda}.
\end{equation*}
Here we used \eqref{eq:cutoff-state-factorization}.
Together with the vanishing off-diagonal blocks, this shows that
\[
 \pi_\at\sigma_\st(F_{\Lambda,\lambda,Z}\widehat\otimes\1)
       -\omega_{\Lambda,\lambda}(F_{\Lambda,\lambda,Z})\1
\]
has zero product with \(P_{\Lambda,\lambda}\) on either side.
Conjugation by \(U_{\Lambda,\lambda}^*\), using \eqref{eq:projection-transport} and \(\sigma_{\Lambda,\lambda}=\alpha_{\Lambda,\lambda}\circ\sigma_\st\), gives \eqref{eq:G-vacuum-annihilation}.
\end{proof}
\end{samepage}

\begin{figure}[htbp]
\centering
\begin{tikzpicture}[stability diagram]
 \fill[framecol,rounded corners=4pt] (-1.5,-2.25) rectangle (8.55,1.5);
 \node[fig note] at (3.525,-1.9)
   {$\ran P_{\Lambda,\lambda}\oplus\ran P_{\Lambda,\lambda}^\perp$};
 \fill[framecol,rounded corners=4pt] (9.9,-2.25) rectangle (13.3,1.5);
 \node[fig note] at (11.6,-1.9)
   {$\C\Omega_\at\oplus\Omega_\at^\perp$};
 \blockmat{0}{gsblk!35}{odblk!35}{odblk!35}{exblk!35}
   {$b$}{$v^*$}{$v$}{$C$}
 \blockmat{3.45}{gsblk!35}{blkzero}{blkzero}{exblk!35}
   {$b$}{\blkz}{\blkz}{$C_\D$}
 \blockmat{6.9}{blkzero}{blkzero}{blkzero}{exblk!35}
   {\blkz}{\blkz}{\blkz}{$C_\D-b\1$}
 \blockmat{11.6}{blkzero}{blkzero}{blkzero}{exblk!35}
   {\blkz}{\blkz}{\blkz}{$\ast$}
 \foreach \x/\t in {0/perturbation term,3.45/filtered term,
                    6.9/centered term,11.6/transported term}
   \node[fig title] at (\x,1.17) {\t};
 \node[font=\footnotesize] at (0,-1.3)
   {$\pi_\at\sigma_\st(\cV_\Lambda(Z)\widehat\otimes\1)$};
 \node[font=\footnotesize] at (3.45,-1.3)
   {$\pi_\at\sigma_\st(F_{\Lambda,\lambda,Z}\widehat\otimes\1)$};
 \node[font=\footnotesize] at (6.9,-1.3)
   {$\pi_\at\sigma_\st(F_{\Lambda,\lambda,Z}\widehat\otimes\1)-b\1$};
 \node[font=\footnotesize] at (11.6,-1.3) {$\pi_\at(G_{\Lambda,\lambda}(Z))$};
 \draw[stability arrow] (.98,0) -- (2.47,0)
   node[midway,above=1pt,font=\footnotesize] {filter}
   node[midway,below=1pt,font=\footnotesize] {$\D_{\Lambda,\lambda}$};
 \draw[stability arrow] (4.43,0) -- (5.92,0)
   node[midway,above=1pt,font=\footnotesize] {center}
   node[midway,below=1pt,font=\footnotesize] {$-\,b\1$};
 \draw[stability arrow] (7.88,0) -- (10.62,0)
   node[midway,above=1pt,font=\footnotesize] {transport}
   node[midway,below=1pt,font=\footnotesize]
     {$U_{\Lambda,\lambda}^*(\,\cdot\,)\,U_{\Lambda,\lambda}$};
\end{tikzpicture}
\caption{Filtering, centering, and transport of a single perturbation term, for \(\lambda\in I_\Lambda\).
In the first three panels, block matrices are taken with respect to the one-dimensional moving ground space \(\ran P_{\Lambda,\lambda}\) and its orthogonal complement; in the last panel, with respect to the fixed vacuum \(\C\Omega_\at\) and its orthogonal complement.
The filter \(A\mapsto\int_\R w(t)\tau_t^{\Lambda,\lambda}(A)\,\dd t\) removes the ground--excited blocks and preserves the ground-state expectation \(b=\omega_{\Lambda,\lambda}(\cV_\Lambda(Z))\); \(C_{\D}\) denotes the resulting excited block.
Subtracting \(b\1\) removes the ground-state block.
Since \(U_{\Lambda,\lambda}^*P_{\Lambda,\lambda}U_{\Lambda,\lambda}=P^\vac\), the transported term \(\pi_\at(G_{\Lambda,\lambda}(Z))\) has zero product with \(P^\vac\) on either side, as proved in Lemma~\ref{lem:termwise-block}.}
\label{fig:filtering-centering}
\end{figure}

For every \(\lambda\in\R\), define the integrated centered term
\begin{equation}
 \cG_{\Lambda,\lambda}(Z):=\int_0^{\lambda}G_{\Lambda,u}(Z)\,\dd u.
 \label{eq:C-Lambda-s}
\end{equation}

\begin{samepage}
\begin{proposition}
\label{prop:centered-derivative}
For \(\lambda\in I_\Lambda\), the operator identity
\begin{equation}
 K_{\Lambda,\lambda}^\nf
 =K_0+\sum_{Z\subset\Lambda}\pi_\at(\cG_{\Lambda,\lambda}(Z))
 \label{eq:bounded-normal-form}
\end{equation}
holds on \(\operatorname{Dom}(K_0)\).
For every \(\psi\in\operatorname{Dom}(K_0)\), the map \(\lambda\mapsto K_{\Lambda,\lambda}^\nf\psi\) is continuously differentiable on \(I_\Lambda\), with
\begin{equation}
 \frac{\dd}{\dd\lambda}\bigl(K_{\Lambda,\lambda}^\nf\psi\bigr)
 =\sum_{Z\subset\Lambda}\pi_\at(G_{\Lambda,\lambda}(Z))\psi.
 \label{eq:centered-derivative}
\end{equation}
The correction in \eqref{eq:bounded-normal-form} is bounded.
Every \(\cG_{\Lambda,\lambda}(Z)\) is even and self-adjoint, and \(\pi_\at(\cG_{\Lambda,\lambda}(Z))\Omega_\at=0\).
Moreover,
\[
 \operatorname{Dom}\bigl((K_{\Lambda,\lambda}^\nf)^{1/2}\bigr)
 =\operatorname{Dom}(K_0^{1/2}),\qquad
 U_{\Lambda,\lambda}\operatorname{Dom}(K_0^{1/2})
 =\operatorname{Dom}(K_0^{1/2}),
\]
and \eqref{eq:bounded-normal-form} also holds as a quadratic-form identity on this domain.
\end{proposition}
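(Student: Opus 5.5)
The plan is to differentiate \(K_{\Lambda,\lambda}^\nf\psi\) for \(\psi\in\operatorname{Dom}(K_0)\), identify the derivative with \eqref{eq:centered-derivative}, and integrate from \(0\), using \(K_{\Lambda,0}^\nf=K_0\) (since \(U_{\Lambda,0}=\1\) and \(E_{\Lambda,0}=0\) by Lemma~\ref{lem:operator-facts}(c)).

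\textbf{Differentiability.} By Lemma~\ref{lem:filtered-commutator}, \(U_{\Lambda,\lambda}\psi\in\operatorname{Dom}(K_0)\), and \(\lambda\mapsto U_{\Lambda,\lambda}\psi\) is \(C^1\) in the graph norm of \(K_0\), with derivative \(-\I\mathscr S_{\Lambda,\lambda}U_{\Lambda,\lambda}\psi\). Since \(K_{\Lambda,\lambda}=K_0+\lambda V_\Lambda\) with \(V_\Lambda\) bounded, the map \(\lambda\mapsto(K_{\Lambda,\lambda}-E_{\Lambda,\lambda})U_{\Lambda,\lambda}\psi\) is \(C^1\) in norm on \(I_\Lambda\). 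Here we use that \(E_{\Lambda,\lambda}\) is real-analytic. Multiplying by the norm-\(C^1\) operator \(U_{\Lambda,\lambda}^*\), whose derivative is \(\I U_{\Lambda,\lambda}^*\mathscr S_{\Lambda,\lambda}\), gives
\[
 \frac{\dd}{\dd\lambda}K_{\Lambda,\lambda}^\nf\psi
 =U_{\Lambda,\lambda}^*\bigl(V_\Lambda+\I[\mathscr S_{\Lambda,\lambda},K_{\Lambda,\lambda}]-\dot E_{\Lambda,\lambda}\bigr)U_{\Lambda,\lambda}\psi .
\]
On \(\operatorname{Dom}(K_0)\) the commutator is its bounded extension.

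\textbf{Identifying the derivative.} Lemma~\ref{lem:filtered-commutator} rewrites the bracket as \(\sum_Z\pi_\at\sigma_\st(F_{\Lambda,\lambda,Z}\widehat\otimes\1)-\dot E_{\Lambda,\lambda}\). Conjugation by \(U_{\Lambda,\lambda}\) is handled by \eqref{eq:flow-automorphism} together with \(\sigma_{\Lambda,\lambda}=\alpha_{\Lambda,\lambda}\circ\sigma_\st\), and turns each term into \(\pi_\at\sigma_{\Lambda,\lambda}(F_{\Lambda,\lambda,Z}\widehat\otimes\1)\).

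For the energy term I use Hellmann--Feynman for the analytic simple eigenvalue: \(\dot E_{\Lambda,\lambda}=\langle\phi,V_\Lambda\phi\rangle\) with \(\phi=U_{\Lambda,\lambda}\Omega_\at\), by \eqref{eq:projection-transport}. By \eqref{eq:cutoff-state-factorization} this equals \(\sum_Z\omega_{\Lambda,\lambda}(\cV_\Lambda(Z))\). Since \(\widehat w(0)=1\) and \(\omega_{\Lambda,\lambda}\) is the ground state of the cutoff dynamics, the same computation as in Lemma~\ref{lem:termwise-block} replaces \(\cV_\Lambda(Z)\) by \(F_{\Lambda,\lambda,Z}\). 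This yields \eqref{eq:centered-derivative}; the finitely many \(Z\subset\Lambda\) make all sums finite.

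Continuity of the right side of \eqref{eq:centered-derivative} follows from norm-continuity of \(\lambda\mapsto G_{\Lambda,\lambda}(Z)\), which comes from the Duhamel bound in the proof of Lemma~\ref{lem:filtered-commutator} and continuity of \(\beta_{\Lambda,0,\lambda}\). Integrating from \(0\) to \(\lambda\) (\(I_\Lambda\) is an interval containing \(0\)) gives \eqref{eq:bounded-normal-form}, with a bounded correction.

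\textbf{Properties of \(\cG_{\Lambda,\lambda}(Z)\).} Evenness and self-adjointness hold because \(w\) is real, \(\cV(Z)\) is even and self-adjoint, \(\sigma_{\Lambda,\lambda}\) is parity-preserving, and the subtracted scalar is real. The annihilation \(\pi_\at(\cG_{\Lambda,\lambda}(Z))\Omega_\at=0\) follows by integrating Lemma~\ref{lem:termwise-block} over \([0,\lambda]\subset I_\Lambda\).

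\textbf{Form domains.} Since \(K_{\Lambda,\lambda}^\nf\) is \(K_0\) plus a bounded self-adjoint operator, the two form domains coincide, and the operator identity extends to a form identity on \(\operatorname{Dom}(K_0^{1/2})\) because \(\operatorname{Dom}(K_0)\) is a form core. Invariance of \(\operatorname{Dom}(K_0^{1/2})\) under \(U_{\Lambda,\lambda}\) is the step I expect to be the main obstacle. My approach is unitary equivalence: \(K_{\Lambda,\lambda}^\nf=U_{\Lambda,\lambda}^*(K_{\Lambda,\lambda}-E_{\Lambda,\lambda})U_{\Lambda,\lambda}\), so \(U_{\Lambda,\lambda}\) maps \(\operatorname{Dom}((K_{\Lambda,\lambda}^\nf)^{1/2})\) onto \(\operatorname{Dom}(|K_{\Lambda,\lambda}-E_{\Lambda,\lambda}|^{1/2})\). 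Both of these domains equal \(\operatorname{Dom}(K_0^{1/2})\) by bounded-perturbation stability of form domains (KLMN). For \(\lambda\notin I_\Lambda\), one can alternatively interpolate the \(\operatorname{Dom}(K_0)\) invariance of Lemma~\ref{lem:filtered-commutator}, but the statement only requires \(\lambda\in I_\Lambda\).
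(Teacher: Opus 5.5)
Your proposal is correct and follows essentially the same route as the paper. The shared steps are: graph-norm differentiability of \(U_{\Lambda,\lambda}\psi\) from Lemma~\ref{lem:filtered-commutator}; the product rule; the filtered commutator identity together with Hellmann--Feynman to produce the centered terms \(G_{\Lambda,\lambda}(Z)\); integration from \(K_{\Lambda,0}^\nf=K_0\); and unitary equivalence combined with bounded-perturbation stability of form domains to get \(U_{\Lambda,\lambda}\operatorname{Dom}(K_0^{1/2})=\operatorname{Dom}(K_0^{1/2})\). The only cosmetic difference is that you replace \(\omega_{\Lambda,\lambda}(\cV_\Lambda(Z))\) by \(\omega_{\Lambda,\lambda}(F_{\Lambda,\lambda,Z})\) using stationarity of the ground-state vector and \(\widehat w(0)=1\), which is exactly what the ground-block identity in \eqref{eq:D-ground-blocks} encodes.
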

\end{samepage}

\begin{proof}
Lemma~\ref{lem:filtered-commutator} gives continuous differentiability of \(\lambda\mapsto U_{\Lambda,\lambda}\psi\) in the \(K_0\)-graph norm for every \(\psi\in\operatorname{Dom}(K_0)\).
The affine family \(K_{\Lambda,\lambda}=K_0+\lambda V_\Lambda\) is continuously differentiable as a family of bounded maps from this graph-norm space into \(\cH_\at\).
Together with differentiability of \(U_{\Lambda,\lambda}^*\) and \(E_{\Lambda,\lambda}\), this justifies the product rule in Hilbert-space norm on the fixed domain \(\operatorname{Dom}(K_0)\).
Differentiating \eqref{eq:K-nf} and using \eqref{eq:U-flow} gives the first equality below; Lemma~\ref{lem:filtered-commutator} gives the second:
\begin{align*}
 \frac{\dd}{\dd \lambda}K_{\Lambda,\lambda}^\nf
  =U_{\Lambda,\lambda}^*
 \bigl(V_\Lambda+\I[\mathscr S_{\Lambda,\lambda},K_{\Lambda,\lambda}]
 -\dot E_{\Lambda,\lambda}\1\bigr)U_{\Lambda,\lambda} =\sum_{Z\subset\Lambda}
 \pi_\at\sigma_{\Lambda,\lambda}(F_{\Lambda,\lambda,Z}\widehat\otimes\1)
 -\dot E_{\Lambda,\lambda}\1.
\end{align*}
Differentiating the eigenvalue equation for the normalized ground state vector \(U_{\Lambda,\lambda}\Omega_\at\) gives the Hellmann--Feynman identity
\[
 \dot E_{\Lambda,\lambda}
 =\omega_{\Lambda,\lambda}(V_\Lambda^\ph)
 =\sum_{Z\subset\Lambda}\omega_{\Lambda,\lambda}(F_{\Lambda,\lambda,Z}).
\]
The last equality uses \eqref{eq:D-ground-blocks} and \eqref{eq:cutoff-state-factorization}.
Substituting the Hellmann--Feynman identity and \eqref{eq:G-Lambda-s} proves \eqref{eq:centered-derivative}.

For fixed \(\Lambda\), the Duhamel estimate in the proof of Lemma~\ref{lem:filtered-commutator} gives norm continuity of \(u\mapsto F_{\Lambda,u,Z}\).
Norm continuity of the cutoff flow in \eqref{eq:cutoff-flow-descent} and the definition \eqref{eq:G-Lambda-s} then give norm continuity of \(u\mapsto G_{\Lambda,u}(Z)\).
Moreover,
\[
 \|G_{\Lambda,u}(Z)\|\le2\|w\|_1 \|\cV_\Lambda(Z)\|.
\]
The norm continuity and this bound show that \eqref{eq:C-Lambda-s} defines a Bochner integral in \(\A^\st\) and that the finite sum on the right of \eqref{eq:centered-derivative} is norm-continuous.
Since \(K_{\Lambda,0}^\nf=K_0\), integration of \eqref{eq:centered-derivative} from \(0\) to \(\lambda\) proves \eqref{eq:bounded-normal-form}.
Evenness, self-adjointness, and vacuum annihilation follow from the definition and Lemma~\ref{lem:termwise-block}.
The correction in \eqref{eq:bounded-normal-form} is bounded and self-adjoint, so \(K_{\Lambda,\lambda}^\nf\) has form domain \(\operatorname{Dom}(K_0^{1/2})\).
The same holds for \(K_{\Lambda,\lambda}-E_{\Lambda,\lambda}\), which differs from \(K_0\) by a bounded operator.
Their unitary equivalence \eqref{eq:K-nf} therefore gives \(U_{\Lambda,\lambda}\operatorname{Dom}(K_0^{1/2})=\operatorname{Dom}(K_0^{1/2})\).
\end{proof}

\subsection{Localization by vacuum conditional expectations}

To localize the centered terms \(\cG_{\Lambda,\lambda}(Z)\), we require a map that preserves two-sided vacuum annihilation, a property the normalized-trace conditional expectation does not have.
We therefore use Araki's conditional expectation for an even product state \cite[Theorem~3.1]{Araki04}.

\begin{lemma}
\label{lem:vacuum-conditional-expectation}
Let \(\omega_\at\) be the graded product of even pure onsite states, and let \(X\Subset\Gamma\).
Under the graded identification \(\A^\st\cong\A_X^\st\widehat\otimes\A_{X^c}^\st\), the map \(A\widehat\otimes B\mapsto A\omega_\at(B)\) on homogeneous simple tensors has a unique norm-continuous extension
\begin{equation}
 \EE_X^\vac
 :=\id_{\A_X^\st}\widehat\otimes
 \omega_\at|_{\A_{X^c}^\st}
 \colon\A^\st\longrightarrow\A_X^\st.
 \label{eq:vacuum-expectation}
\end{equation}
Then \(\EE_X^\vac\) is a unital completely positive contraction and a conditional expectation onto \(\A_X^\st\).
More explicitly:
\begin{enumerate}[label=(\alph*),leftmargin=*]
\item it fixes \(\A_X^\st\) pointwise, has range \(\A_X^\st\), and \((\EE_X^\vac)^2=\EE_X^\vac\);
\item it is an \(\A_X^\st\)-bimodule map,
  \begin{equation}
   \EE_X^\vac(C_1AC_2)
   =C_1\EE_X^\vac(A)C_2,
   \qquad A\in\A^\st,\quad C_1,C_2\in\A_X^\st;
   \label{eq:vacuum-expectation-bimodule}
  \end{equation}
\item it preserves parity,
  \begin{equation}
   \EE_X^\vac\circ\Theta
   =\Theta\circ\EE_X^\vac,
   \qquad
   \EE_X^\vac\bigl((\A^\st)^+\bigr)
   \subset(\A_X^\st)^+;
   \label{eq:vacuum-expectation-parity}
  \end{equation}
\item it preserves the product state \(\omega_\at\),
  \begin{equation}
   \omega_\at\circ\EE_X^\vac=\omega_\at.
   \label{eq:vacuum-expectation-state}
  \end{equation}
\item if \(X\subset Y\Subset\Gamma\), then the expectations are consistent:
  \begin{equation}
   \EE_X^\vac\EE_Y^\vac
   =\EE_Y^\vac\EE_X^\vac
   =\EE_X^\vac;
   \label{eq:vacuum-expectation-tower}
  \end{equation}
\item for every \(A\in\A^\st\),
  \begin{equation}
   \|A-\EE_X^\vac(A)\|
   \le2\inf_{C\in\A_X^\st}\|A-C\|.
   \label{eq:vacuum-expectation-approximation}
  \end{equation}
\end{enumerate}
\end{lemma}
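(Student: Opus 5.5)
The plan is to construct \(\EE_X^\vac\) concretely, using the fact that \(\omega_\at\) is a graded product of even pure onsite states, and then read off properties (a)--(f) from that construction. Since \(X\) is finite, \(\A_X^\st\) is a full matrix algebra \(M_{2^{2r|X|}}(\C)\), and \(\A^\st\) is the graded tensor product \(\A_X^\st\widehat\otimes\A_{X^c}^\st\), as in Appendix~\ref{app:CAR-geometry}.

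Because \(\omega_\at\) is even, its restriction \(\omega_{X^c}:=\omega_\at|_{\A_{X^c}^\st}\) vanishes on odd elements. We can then use Araki's construction \cite[Theorem~3.1]{Araki04}, or build the map directly, as follows.

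First, choose matrix units \(e_{ij}\) of \(\A_X^\st\) that are homogeneous, for example products of local creation/annihilation monomials. Every \(A\in\A^\st\) decomposes uniquely as \(A=\sum_{ij}e_{ij}B_{ij}\). Here the coefficients are \(B_{ij}=\sum_k e_{ki}Ae_{jk}\), twisted by \(\Theta\) where needed so that they lie in the graded commutant of \(\A_X^\st\), which is generated by \(\A_{X^c}^\st\) together with the parity unitary of \(X\).

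A cleaner route avoids this bookkeeping. Let \(u_X\in\A_X^{\st,+}\) be the self-adjoint parity unitary of \(X\). The even part of \(\A_{X^c}^\st\) commutes with \(\A_X^\st\), and \(u_X b\) commutes with \(\A_X^\st\) for odd \(b\in\A_{X^c}^\st\). Hence \(\A^\st\cong\A_X^\st\otimes\widetilde\A_{X^c}\) as an ordinary tensor product, where \(\widetilde\A_{X^c}\) is the twisted copy (Klein transformation). The state \(\omega_{X^c}\), transported to \(\widetilde\A_{X^c}\), is still a state because the twist is a \(*\)-isomorphism. Define
\[
 \EE_X^\vac:=\id\otimes\widetilde\omega_{X^c}.
\]
Because odd elements of \(\A_{X^c}^\st\) have zero expectation, this agrees on homogeneous simple tensors with \(A\widehat\otimes B\mapsto A\,\omega_\at(B)\). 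That gives existence, and density of the algebraic tensor product gives uniqueness.

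Once the map is identified as a slice map \(\id\otimes\varphi\) on a tensor product with a full matrix algebra, several properties are standard. It is unital and completely positive, hence a contraction. It fixes \(\A_X^\st\), which gives (a), and it is \(\A_X^\st\)-bimodular, which gives (b). This makes it a conditional expectation.

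For the remaining properties:
\begin{itemize}
\item \emph{Parity (c).} Check on homogeneous simple tensors: \(\Theta\) acts factorwise and \(\omega_\at\circ\Theta=\omega_\at\).
\item \emph{State preservation (d).} Since \(\omega_\at\) is a product state, \(\omega_\at(A\widehat\otimes B)=\omega_\at(A)\omega_\at(B)\) for homogeneous \(A,B\) with at least one of them even; the remaining case contributes zero on both sides. Extend by linearity and continuity.
\item \emph{Tower property (e).} For \(X\subset Y\), write \(Y^c\) and \(Y\setminus X\) as separate tensor factors. Both sides act as \(A_X\widehat\otimes A_{Y\setminus X}\widehat\otimes A_{Y^c}\mapsto A_X\,\omega_\at(A_{Y\setminus X})\,\omega_\at(A_{Y^c})\), using the product property of \(\omega_\at\) across \(Y\setminus X\) and \(Y^c\). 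Extend by continuity.
\item \emph{Approximation bound (f).} For any \(C\in\A_X^\st\),
\[
 \|A-\EE_X^\vac(A)\|\le\|A-C\|+\|\EE_X^\vac(C-A)\|\le2\|A-C\|,
\]
using (a) and contractivity. Take the infimum over \(C\).
\end{itemize}

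The only genuine obstacle is the graded/twisted tensor-product bookkeeping in the existence step: one must verify that the Klein-twisted slice map is completely positive and reproduces the stated formula on odd\(\otimes\)odd tensors. This is where evenness of \(\omega_\at\) is essential, since odd parts must be killed so that the twist does not matter. Pure onsite states are not needed for the lemma beyond giving a product state. If the direct check becomes messy, I would simply cite \cite[Theorem~3.1]{Araki04} for existence and complete positivity of the product-state conditional expectation and verify (a)--(f) as above.
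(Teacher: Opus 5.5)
Your argument is correct, but it obtains existence and (a)--(e) by a different route from the paper. The paper cites Araki's Theorem~3.1 \cite{Araki04}, applied to the local graded matrix algebras, to get \(\EE_X^\vac\) together with (a)--(e). It then infers complete positivity from the fact that the map is a conditional expectation, and proves (f) exactly as you do.

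You instead build the map by hand:
\begin{itemize}
\item Since \(X\) is finite, the parity unitary \(u_X\in(\A_X^\st)^+\) is available. The Klein twist \(b_++b_-\mapsto b_++u_Xb_-\) is then an injective unital \(*\)-homomorphism of \(\A_{X^c}^\st\) onto the relative commutant \(\widetilde\A_{X^c}\) of \(\A_X^\st\).
\item \(\A_X^\st\) is a full matrix algebra, and together with \(\widetilde\A_{X^c}\) it generates \(\A^\st\) (note \(b_-=u_X(u_Xb_-)\)). Hence the multiplication map \(\A_X^\st\otimes\widetilde\A_{X^c}\to\A^\st\) is a \(*\)-isomorphism. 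This is standard, but it is the one step you should state explicitly.
\item \(\EE_X^\vac=\id\otimes\widetilde\omega_{X^c}\) is then a slice map. Unitality, complete positivity, contractivity, idempotence, and bimodularity follow automatically.
\item Evenness of \(\omega_\at\) is exactly what makes the twist invisible. For odd \(B\in\A_{X^c}^\st\), the product \(AB=(Au_X)(u_XB)\) is sent to \(Au_X\,\omega_\at(B)=0=A\,\omega_\at(B)\).
\end{itemize}
Your checks of (c)--(e) on homogeneous simple tensors, using the product factorization of \(\omega_\at\) across \(X\), \(Y\setminus X\), and \(Y^c\), are fine.

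The citation buys brevity and places the map within Araki's theory of product-state conditional expectations. Your construction is self-contained and shows that only finiteness of \(X\) and evenness of the state are used. This confirms your remark that purity plays no role here; it enters only in Lemma~\ref{lem:vacuum-localization}, through the rank-one projections \(P_X^\vac\).

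One small slip in your discarded first route: the relative commutant of \(\A_X^\st\) is \(\widetilde\A_{X^c}\), not the algebra generated by \(\A_{X^c}^\st\) and \(u_X\). The latter contains odd elements of \(\A_{X^c}^\st\), which anticommute with odd elements of \(\A_X^\st\). This does not affect the route you actually use.
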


\begin{proof}
Araki's Theorem~3.1 \cite{Araki04}, applied to the local graded matrix algebras indexed by \(\Gamma\), gives the expectation and parts~(a)--(e).
As a conditional expectation, it is completely positive.
For \(C\in\A_X^\st\), contractivity and \(\EE_X^\vac(C)=C\) give
\[
 \|A-\EE_X^\vac(A)\|
 \le\|A-C\|+\|\EE_X^\vac(A-C)\|
 \le2\|A-C\|.
\]
Taking the infimum proves~(f).
\end{proof}

\begin{lemma}
\label{lem:vacuum-localization}
Let \(B\in\A^\st\) satisfy \(\pi_\at(B)\Omega_\at=0\).
Then, for every finite \(X\),
\begin{equation*}
 \EE_X^\vac(B)P_X^\vac=0.
\end{equation*}
If also \(\pi_\at(B^*)\Omega_\at=0\), then
\begin{equation}
 \EE_X^\vac(B)P_X^\vac
 =P_X^\vac\EE_X^\vac(B)=0.
 \label{eq:localized-vacuum-annihilation}
\end{equation}
\end{lemma}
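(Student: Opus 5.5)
Since $\omega_\at$ is a graded product of pure onsite states, the GNS space factorizes, $\cH_\at\cong\cH_X\otimes\cH_{X^c}$, with $\Omega_\at=\Omega_X\otimes\Omega_{X^c}$. Here $\cH_X$ is the finite-dimensional space on which $\A_X^\st$ acts irreducibly, and the graded structure is handled by the usual Jordan--Wigner type twist (parity operators). The plan is to compute $\EE_X^\vac(B)$ on the local vacuum vector and show it vanishes. The key identity is that, for $A\in\A^\st$ and $C_1,C_2\in\A_X^\st$,
\[
 \omega_\at\bigl(C_1^*\,\EE_X^\vac(A)\,C_2\bigr)
 =\omega_\at\bigl(\EE_X^\vac(C_1^*AC_2)\bigr)
 =\omega_\at(C_1^*AC_2).
\]
This follows directly from the bimodule property \eqref{eq:vacuum-expectation-bimodule} and state preservation \eqref{eq:vacuum-expectation-state}.

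First, apply this with $A=B$. Since $\pi_\at(B)\Omega_\at=0$, we may take $C_2:=P_X^\vac\in\A_X^\st$ and any $C_1\in\A_X^\st$. It remains to see that $\pi_\at(P_X^\vac)\Omega_\at=\Omega_\at$, which holds because $P_X^\vac=\prod_{x\in X}P_x$ and the $P_x$ are the onsite support projections of $\omega_\at$. Hence
\[
 \omega_\at\bigl(C_1^*\EE_X^\vac(B)P_X^\vac\bigr)
 =\langle\pi_\at(C_1)\Omega_\at,\pi_\at(B)\Omega_\at\rangle=0
 \qquad\text{for all } C_1\in\A_X^\st.
\]
Now set $D:=\EE_X^\vac(B)P_X^\vac\in\A_X^\st$ and choose $C_1:=D$. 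This gives $\omega_\at(D^*D)=0$. The restriction of $\omega_\at$ to $\A_X^\st$ is the product of the pure onsite vector states; it is faithful only on the range of $P_X^\vac$, not in general. So I would instead argue as follows. Because $D=DP_X^\vac$ and $P_X^\vac$ is a rank-one projection in the irreducible finite-dimensional representation of $\A_X^\st$, with range $\C\Omega_X$, we have
\[
 \|D\|=\|\pi_\at(D)\Omega_\at\|=\omega_\at(D^*D)^{1/2}.
\]
The first equality holds because $D$ vanishes on $\Omega_X^\perp$ and $\pi_\at|_{\A_X^\st}$ is faithful (it is a $*$-representation of a simple finite-dimensional algebra). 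Therefore $D=0$, which proves the first claim.

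For the two-sided statement, apply the first part to $B^*$ and use that $\EE_X^\vac$ is $*$-preserving, being completely positive. This gives $\EE_X^\vac(B)^*P_X^\vac=\EE_X^\vac(B^*)P_X^\vac=0$, and taking adjoints yields $P_X^\vac\EE_X^\vac(B)=0$, which is \eqref{eq:localized-vacuum-annihilation}.

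The main obstacle is the step $\|D\|=\omega_\at(D^*D)^{1/2}$. It needs the identification of $\pi_\at(P_X^\vac)$ as the rank-one projection onto $\Omega_X$ inside the finite-dimensional factor, including the grading. I would handle this by noting that $P_X^\vac$ is even, so it commutes with $\A_{X^c}^\st$. In the product GNS space, $\ran\pi_\at(P_X^\vac)$ is then $\Omega_X\otimes\cH_{X^c}$. Consequently, for $D\in\A_X^\st$ with $D=DP_X^\vac$, the norm $\|\pi_\at(D)\|$ is attained on $\Omega_X\otimes\xi$ and equals $\|\pi_\at(D)\Omega_\at\|$.
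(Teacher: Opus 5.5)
Your proof is correct and follows essentially the same route as the paper. The bimodule property and state preservation of \(\EE_X^\vac\) give a vanishing \(\omega_\at\)-expectation of \(D^*D\); the paper uses \(B_X^*B_X\) with \(B_X:=\EE_X^\vac(B)\), which is equivalent. The rank-one vacuum projection in a faithful irreducible representation of the matrix algebra \(\A_X^\st\) then gives \(D=0\), and applying the argument to \(B^*\) gives the two-sided statement. The paper works directly in an irreducible representation \(\pi_X\) of \(\A_X^\st\) rather than in \(\cH_\at\), so the step you flag as the main obstacle is immediate there and needs no graded factorization of the GNS space.
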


\begin{proof}
Put \(B_X:=\EE_X^\vac(B)\in\A_X^\st\).
The bimodule and state-preservation properties in Lemma~\ref{lem:vacuum-conditional-expectation} give
\begin{equation*}
 \omega_\at(B_X^*B_X)
 =\omega_\at(\EE_X^\vac(B_X^*B))
 =\omega_\at(B_X^*B)=0,
\end{equation*}
where the last equality follows from \(\pi_\at(B)\Omega_\at=0\).
Let \(\pi_X\) be an irreducible representation of the local matrix algebra \(\A_X^\st\).
It is faithful, and the product of the even pure onsite states is represented by a unit vector \(\Omega_X\), with rank-one support projection \(P_X^\vac=\prod_{x\in X}P_x\), so
\begin{equation*}
 \|\pi_X(B_X)\Omega_X\|^2=\omega_\at(B_X^*B_X)=0,
 \qquad
 \pi_X(P_X^\vac)=|\Omega_X\rangle\langle\Omega_X|.
\end{equation*}
Thus \(\pi_X(B_XP_X^\vac)=0\), and faithfulness gives \(B_XP_X^\vac=0\).
If \(\pi_\at(B^*)\Omega_\at=0\), apply the same argument to \(B^*\) and take adjoints to obtain \(P_X^\vac B_X=0\).
\end{proof}

To decompose quasi-local observables into local terms, choose the dyadic radii
\begin{equation}
 R_0:=0,\qquad R_n:=2^{n-1}\quad(n\ge1).
 \label{eq:localization-radii}
\end{equation}
For a nonempty finite set \(X\), define the vacuum shells by
\begin{equation}
 \Delta_{X,0}^\vac:=\EE_X^\vac,
 \qquad
 \Delta_{X,n}^\vac
 :=\EE_{X^{(R_n)}}^\vac-
 \EE_{X^{(R_{n-1})}}^\vac,
 \quad n\ge1.
 \label{eq:vacuum-shells}
\end{equation}
For every \(B\in\A^\st\), \(B=\sum_{n\ge0}\Delta_{X,n}^\vac(B)\) in norm.
Indeed, the partial sums telescope to \(\EE_{X^{(R_n)}}^\vac(B)\).
Given a local approximant to \(B\), its finite support lies in \(X^{(R_n)}\) for all sufficiently large \(n\); hence \eqref{eq:vacuum-expectation-approximation} shows that \(\EE_{X^{(R_n)}}^\vac(B)\) converges in norm to \(B\).
This is the standard conditional-expectation shell decomposition; compare \cite[Lemma~5.1]{NSY19} and the fermionic local approximations in \cite[Section~4]{NSY18}.

Contractivity bounds the zeroth vacuum shell by \(\|B\|\), every subsequent shell by \(2\|B\|\), and \(\|B-\EE_X^\vac(B)\|\) by \(2\|B\|\).

Suppose now that \(\pi_\at(B)\Omega_\at=0\).
Lemma~\ref{lem:vacuum-localization} shows that each expectation annihilates the local vacuum on its own region.
For the shell difference, both expectations must annihilate the vacuum on the larger region \(X^{(R_n)}\).
For \(n\ge1\), product-vacuum factorization gives
\begin{equation*}
 P_{X^{(R_n)}}^\vac
 =P_{X^{(R_{n-1})}}^\vac
  P_{X^{(R_n)}\setminus X^{(R_{n-1})}}^\vac
 \le P_{X^{(R_{n-1})}}^\vac.
\end{equation*}
Thus
\begin{equation*}
 \EE_{X^{(R_{n-1})}}^\vac(B)P_{X^{(R_n)}}^\vac
 =\EE_{X^{(R_{n-1})}}^\vac(B)
  P_{X^{(R_{n-1})}}^\vac P_{X^{(R_n)}}^\vac=0.
\end{equation*}
The larger-region expectation already annihilates \(P_{X^{(R_n)}}^\vac\) on the right by the lemma.
Subtracting the two expectations gives \(\Delta_{X,n}^\vac(B)P_{X^{(R_n)}}^\vac=0\).
If also \(\pi_\at(B^*)\Omega_\at=0\), apply this argument to \(B^*\) and take adjoints to obtain
\begin{equation}
 \Delta_{X,n}^\vac(B)P_{X^{(R_n)}}^\vac
 =P_{X^{(R_n)}}^\vac\Delta_{X,n}^\vac(B)=0,
 \qquad n\ge0.
 \label{eq:vacuum-shell-annihilation}
\end{equation}
The \(n=0\) case is Lemma~\ref{lem:vacuum-localization} itself.

For every \(\lambda\in\R\), define the normalized local shells by
\begin{equation}
 R_{\Lambda,\lambda}(Z,n)
 :=\Delta_{Z,n}^\vac\!\left(
   \int_0^1G_{\Lambda,t\lambda}(Z)\,\dd t\right),
 \qquad \varnothing\ne Z\Subset\Gamma,\quad n\ge0.
 \label{eq:R-shells}
\end{equation}
Since \(\cG_{\Lambda,\lambda}(Z)=\lambda\int_0^1G_{\Lambda,t\lambda}(Z)\,\dd t\), the normal-form terms are \(\lambda R_{\Lambda,\lambda}(Z,n)\).
We set \(R_{\Lambda,\lambda}(\varnothing,n):=0\).
The maps \(\Delta_{Z,n}^\vac\) preserve parity and adjoints and take values in \(\A_{Z^{(R_n)}}^\st\), so \(R_{\Lambda,\lambda}(Z,n)\) is even, self-adjoint, and supported in \(Z^{(R_n)}\).
If in addition \(\lambda\in I_\Lambda\), then \(t\lambda\in I_\Lambda\) for \(t\in[0,1]\), and the annihilation identities \eqref{eq:G-vacuum-annihilation} and \eqref{eq:vacuum-shell-annihilation} give
\begin{equation*}
 R_{\Lambda,\lambda}(Z,n)P_{Z^{(R_n)}}^\vac
 =P_{Z^{(R_n)}}^\vac R_{\Lambda,\lambda}(Z,n)=0.
\end{equation*}
For \(\lambda\in I_\Lambda\) and nonempty \(Z\Subset\Gamma\), the definitions \eqref{eq:R-shells} and \eqref{eq:vacuum-shells} give
\begin{equation*}
 \lambda\sum_{n=0}^m R_{\Lambda,\lambda}(Z,n)
 =\EE_{Z^{(R_m)}}^\vac\bigl(\cG_{\Lambda,\lambda}(Z)\bigr)
 \stackrel{m\to\infty}{\longrightarrow} \cG_{\Lambda,\lambda}(Z)
\end{equation*}
in norm.
Proposition~\ref{prop:rooted-vacuum-normal-form} below proves that \(\sum_{Z,n}\lambda\pi_\at(R_{\Lambda,\lambda}(Z,n))\) converges absolutely in norm to \(K_{\Lambda,\lambda}^\nf-K_0\).

\subsection{Uniform locality and the rooted normal-form bound}

We next prove the locality estimates from Step~4 for the generator \(\eta_{\Lambda,\lambda}\), the automorphism \(\sigma_{\Lambda,\lambda}\), and the centered terms \(G_{\Lambda,\lambda}(Z)\).
To make their constants independent of the perturbation strength, restrict the coupling to
\begin{equation}
 J_{\rm loc}
 :=\{\lambda\in\R\,|\,|\lambda|\|\cV\|_{\mu_{\cV},1;2}\le g\}.
 \label{eq:locality-coupling-ball}
\end{equation}
Choose \(0<\mu<\mu_{\cV}\) with \(\|\Phi_0\|_{\mu,1;2}<\infty\).
For the quadratic application, fix \(\mu:=\frac12\min\{\mu_h,\mu_{\cV}\}\); for the interacting product-state application with exponential decay choose \(\mu<\mu_0\).
The locality estimates hold throughout \(J_{\rm loc}\).
Combined with the two-sided vacuum annihilation on \(I_\Lambda\), they give the rooted vacuum normal form on \(I_\Lambda\cap J_{\rm loc}\); the locality estimates do not use the gap.

To construct local approximants, let \(\F_X^{\rm tr}:\A^\st\to\A_X^\st\) denote the canonical normalized-trace CAR conditional expectation, with \(\F_\varnothing^{\rm tr}\) the expectation onto the scalars.
It is unital, contractive, parity preserving, fixes \(\A_X^\st\), and satisfies
\begin{equation}
 \F_X^{\rm tr}\F_Y^{\rm tr}
 =\F_{X\cap Y}^{\rm tr},
 \qquad
 \lim_{R\to\infty}\F_{X^{(R)}}^{\rm tr}(A)=A
 \quad(A\in\A^\st,\ X\ne\varnothing).
 \label{eq:tracial-CAR-expectation}
\end{equation}
See \cite[Section~4]{NSY18}, \cite[Proposition~2.1]{BTW26}, and \cite[Theorem~4.7, Corollary~4.8, and Lemma~4.9]{ArakiMoriya03}.
We use the same notation for the normalized-trace expectations \(\F_X^{\rm tr}:\A\to\A_X\) on the physical algebra.

To bound the locality errors of \(A^{\D_{\Lambda,\lambda}^\ph}\) and \(\cI_{\Lambda,\lambda}^\ph(A)\), we use the standard quasi-adiabatic filtering argument: combine the Lieb--Robinson bound for \(\tau_t^{\ph,\Lambda,\lambda}(A)\) with the time decay of \(w\) and \(W\) \cite[Section~6.5.1]{NSY19}.
\begin{lemma}
\label{lem:weighted-filter-locality}
There are constants \(a_{\rm LR}\), \(v_{\rm LR}\), \(C>0\), depending only on the geometric data, \(r,\mu,g\), and \(\|\Phi_0\|_{\mu,1;2}\), such that, for every finite \(\Lambda\), every \(\lambda\in J_{\rm loc}\), every nonempty finite \(X\), every \(A\in\A_X\), every \(t\in\R\), and every \(R\in\N_0\),
\begin{equation}
 \bigl\|(\id-\F_{X^{(R)}}^{\rm tr})
 \tau_t^{\ph,\Lambda,\lambda}(A)\bigr\|
 \le C|X|\|A\|\ell(R,t),
 \qquad
 \ell(R,t):=\min\{1,\E^{-a_{\rm LR}(R-v_{\rm LR}|t|)}\}.
 \label{eq:uniform-dynamics-locality}
\end{equation}
For \(k\in L^1(\R)\) satisfying \(|k(t)|\le C_k\E^{-c_k|t|^\theta}\) for some \(0<\theta<1\), there are \(C_k',\mu_k>0\) such that
\begin{equation}
 \bigl\|(\id-\F_{X^{(R)}}^{\rm tr})
 \int_\R k(t)\tau_t^{\ph,\Lambda,\lambda}(A)\,\dd t\bigr\|
 \le CC_k'|X|\|A\|\E^{-\mu_kR^\theta}.
 \label{eq:weighted-map-locality}
\end{equation}
The constants \(C_k',\mu_k\) additionally depend on \(C_k,c_k,\theta\).
All constants are independent of \(\Lambda,\lambda,X\).
The same estimates hold for the full physical dynamics \(\tau_t^{\ph,\lambda}\) of \(\Phi_0+\lambda\cV\).
\end{lemma}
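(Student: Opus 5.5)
The plan has two steps: first prove a uniform Lieb--Robinson bound for the cutoff dynamics, then integrate it against the time-decaying kernel. For \eqref{eq:uniform-dynamics-locality} I would apply a standard fermionic Lieb--Robinson bound for exponentially decaying interactions \cite{NSY18,LR72} to \(\Phi_0+\lambda\cV_\Lambda\). The bound should be controlled by the norm \(\|\Phi_0+\lambda\cV_\Lambda\|_{\mu,1;1}\), which I would estimate uniformly. Since \(\cV_\Lambda(X)\) is either \(\cV(X)\) or \(0\), one has \(\|\cV_\Lambda\|_{\mu,1;k}\le\|\cV\|_{\mu,1;k}\le\|\cV\|_{\mu_{\cV},1;2}\) for \(k\le2\), using \(\mu<\mu_{\cV}\). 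On \(J_{\rm loc}\) this gives
\[
 \|\Phi_0+\lambda\cV_\Lambda\|_{\mu,1;2}\le\|\Phi_0\|_{\mu,1;2}+g,
\]
which is independent of \(\Lambda\), \(\lambda\) and \(\|\cV\|\). The Lieb--Robinson bound then estimates the commutator \(\|[\tau_t(A),B]\|\) for odd or even \(B\) supported outside \(X^{(R)}\) by \(C\|A\|\|B\|\sum_{x\in X}\E^{-a(R-v|t|)}\). Summing over \(x\in X\) produces the factor \(|X|\), and the volume growth \eqref{eq:volume-growth} absorbs the remaining sums over sites at distance at least \(R\) into a small loss of rate \(a_{\rm LR}<\mu\). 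I would then convert the commutator bound into an approximation bound by the normalized-trace expectation \(\F^{\rm tr}_{X^{(R)}}\). For fermions this goes through the standard representation of \(\id-\F^{\rm tr}\) as an average over unitaries (even and odd, handling the grading) in \(\A_{\Gamma\setminus X^{(R)}}\), as in \cite[Section~4]{NSY18}. Combined with the trivial bound \(2\|A\|\), this gives the \(\min\{1,\cdot\}\) form. For the full dynamics \(\tau^{\ph,\lambda}\) the argument is identical, since \(\|\Phi_0+\lambda\cV\|_{\mu,1;2}\) obeys the same bound.

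For \eqref{eq:weighted-map-locality} I would move \(\id-\F^{\rm tr}_{X^{(R)}}\), which is bounded and linear, inside the Bochner integral. This gives
\[
 \int_\R |k(t)|\,\|(\id-\F^{\rm tr}_{X^{(R)}})\tau_t(A)\|\,\dd t
 \le C|X|\|A\|\int_\R C_k\E^{-c_k|t|^\theta}\ell(R,t)\,\dd t .
\]
I would split the integral at \(|t|=T:=R/(2v_{\rm LR})\). For \(|t|\le T\), \(\ell(R,t)\le\E^{-a_{\rm LR}R/2}\), and integrating \(|k|\) over that range gives \(\|k\|_1\E^{-a_{\rm LR}R/2}\). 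For \(|t|>T\), I would bound \(\ell\le1\) and use the tail estimate \(\int_T^\infty\E^{-c_kt^\theta}\dd t\le C(1+T)^{1-\theta}\E^{-c_kT^\theta}\), exactly as in the estimate for \(W\). Since \(\theta<1\), the first contribution \(\E^{-a_{\rm LR}R/2}\) is dominated by \(C\E^{-\mu R^\theta}\) for any \(\mu\). Absorbing the polynomial prefactor of the tail term gives \(\E^{-\mu_kR^\theta}\) with \(\mu_k<c_k(2v_{\rm LR})^{-\theta}\). The constants \(C_k',\mu_k\) then depend only on \(C_k\), \(c_k\), \(\theta\) and the Lieb--Robinson data.

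The main obstacle is the first step: we need a clean Lieb--Robinson bound for non-finite-range interactions with exponential decay, with velocity and rate uniform in \(\Lambda\) and \(\lambda\), and with the fermionic conversion to \(\F^{\rm tr}\)-approximation carrying only the factor \(|X|\). I would take it directly from \cite{NSY18} (Theorem on exponential Lieb--Robinson bounds with \(F\)-functions \(F(r)=(1+r)^{-\dvol-1}\E^{-\mu r}\)) rather than rederive it. The only care needed is checking that the weighted norm with support weight \(|X|^2\) dominates the \(F\)-norm used there, uniformly through the growth constant \(C_\Gamma\).
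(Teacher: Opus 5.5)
Your argument is correct, but its first step takes a different route from the paper. The paper never goes through commutator bounds. It invokes the finite-region approximation estimate of \cite[Theorem~2.9 and Remark~2.10]{FermionicLR} with weight \(\E^{\mu R}\). Using only the uniform bound \(\|\Phi_0+\lambda\cV_\Lambda\|_{\mu,1;2}\le\|\Phi_0\|_{\mu,1;2}+g\) (which you also derive), that estimate directly supplies an approximant of \(\tau_t^{\ph,\Lambda,\lambda}(A)\) in \(\A_{X^{(R)}}\) with error of order \(|X|\|A\|\E^{-\mu(R-v_{\rm LR}|t|)}\). The paper then needs only that \(\F^{\rm tr}_{X^{(R)}}\) is a contraction fixing \(\A_{X^{(R)}}\), so \(\|(\id-\F^{\rm tr}_{X^{(R)}})B\|\le2\inf_{D\in\A_{X^{(R)}}}\|B-D\|\). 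This yields \eqref{eq:uniform-dynamics-locality} with the explicit constants \(C=2\), \(a_{\rm LR}=\mu\) and \(v_{\rm LR}=2(\|\Phi_0\|_{\mu,1;2}+g)/\mu\).

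Your route is the classical one: an \(F\)-function Lieb--Robinson bound from \cite{NSY18}, followed by the unitary-averaging conversion to \(\F^{\rm tr}\). It works here. The function \((1+r)^{-\dvol-1}\E^{-\mu' r}\) has the convolution property on this metric space by the usual half-distance splitting, and only the zeroth exponential moment of the interaction is needed. It costs a slight loss of rate (\(a_{\rm LR}<\mu\)), and it forces you to handle the fermionic grading in both halves.

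For the odd part of \(A\), the quantity to bound is the graded commutator, i.e.\ the anticommutator with odd \(B\). As you wrote it (\(\|[\tau_t(A),B]\|\) for odd or even \(B\)), the claim is false: two odd elements with disjoint supports do not commute even at \(t=0\). Correspondingly, the averaging formula for \(\F^{\rm tr}\) must combine even unitaries with a sign-twisted conjugation by an odd unitary. The paper's approach sidesteps all of this and produces explicit constants, while yours relies only on the classical Lieb--Robinson literature.

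Your time-splitting argument for \eqref{eq:weighted-map-locality}, at \(|t|=R/(2v_{\rm LR})\), is the same as the paper's.
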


\begin{proof}
On \(J_{\rm loc}\), the Hamiltonian interactions satisfy
\[
 \|\Phi_0+\lambda\cV_\Lambda\|_{\mu,1;2},\quad
 \|\Phi_0+\lambda\cV\|_{\mu,1;2}
 \le\|\Phi_0\|_{\mu,1;2}+g.
\]
This also bounds their first support moments.
The finite-region approximation estimate \cite[Theorem~2.9 and Remark~2.10]{FermionicLR}, with weight \(\E^{\mu R}\), and contractivity of the tracial expectation give \eqref{eq:uniform-dynamics-locality} with
\[
 C=2,\qquad a_{\rm LR}=\mu,\qquad
 v_{\rm LR}=\frac{2(\|\Phi_0\|_{\mu,1;2}+g)}{\mu}.
\]
Splitting the time integral at \(|t|=R/(2v_{\rm LR})\), as in \cite[Section~6.5.1]{NSY19}, gives
\begin{align}
 \int_\R|k(t)|\ell(R,t)\,\dd t
 &\le\|k\|_1\E^{-a_{\rm LR}R/2}
       +\int_{|t|>R/(2v_{\rm LR})}|k(t)|\,\dd t
 \label{eq:qk-time-split}\\
 &\le C_k'\E^{-\mu_kR^\theta},
 \label{eq:qk-stretched}
\end{align}
where the last inequality uses the assumed stretched-exponential decay of \(k\).
Integrating \eqref{eq:uniform-dynamics-locality} against \(|k(t)|\) proves \eqref{eq:weighted-map-locality}; this applies to \(w\) and \(W\) by \eqref{eq:w-stretched-time}--\eqref{eq:W-stretched-time}.
\end{proof}

\begin{lemma}
\label{lem:rooted-shell-summation}
Let \(J\subseteq\R\) be an interval, and let \(B_s(X,n)\in\A_{X^{(R_n)}}\) be a family of operators indexed by \(s\in J\), nonempty finite sets \(X\Subset\Gamma\), and \(n\in\N_0\), with radii \(R_n\) as in \eqref{eq:localization-radii}.
Let \(c\) be a nonnegative function on the nonempty finite subsets of \(\Gamma\), and let \((b_n)_{n\in\N_0}\) be a nonnegative sequence such that
\[
 \|B_s(X,n)\|\le c(X)b_n
 \qquad(s\in J,\ \varnothing\ne X\Subset\Gamma,\ n\in\N_0).
\]
Fix \(k\ge0\) and a nondecreasing function \(v:[0,\infty)\to[1,\infty)\) satisfying \(v(r+t)\le v(r)v(t)\) for \(r,t\ge0\), and assume
\[
 \sum_{n\ge0}(1+R_n)^{(k+1)\dvol}v(2R_n)b_n<\infty,
 \qquad
 \sup_{x\in\Gamma}\sum_{X\ni x}|X|^kv(\diam X)c(X)<\infty.
\]
Then, for each nonempty finite \(Y\Subset\Gamma\), the series
\[
 B_s(Y):=\sum_{\substack{\varnothing\ne X\Subset\Gamma,\ n\in\N_0\\X^{(R_n)}=Y}}B_s(X,n)
\]
converges absolutely in norm, uniformly in \(s\in J\).
Set \(B_s(\varnothing):=0\).
The resulting local operator family satisfies
\begin{align}
 \sup_{\substack{s\in J\\y\in\Gamma}}\sum_{Y\ni y}|Y|^k & v(\diam Y)\|B_s(Y)\|\nonumber\\*
 &\quad\le C_\Gamma^{k+1}
 \sum_{n\ge0}(1+R_n)^{(k+1)\dvol}v(2R_n)b_n
 \ \sup_{x\in\Gamma}\sum_{X\ni x}|X|^kv(\diam X)c(X).
 \label{eq:common-shell-root-bound}
\end{align}
The bound also holds when \(\|B_s(Y)\|\) on the left is replaced by \(\sum_{X,n:X^{(R_n)}=Y}\|B_s(X,n)\|\).
If \(s\mapsto B_s(X,n)\) is norm-continuous for every \(X,n\), then \(s\mapsto B_s(Y)\) is norm-continuous for every \(Y\).
All conclusions hold with the same bound after restricting \(X\) to any subfamily of nonempty finite sets, and the statement also holds with \(\A\) replaced throughout by \(\A^\st\).
If, in addition, \(\sum_Xc(X)<\infty\), the series \(\sum_{X,n}B_s(X,n)\) converges absolutely in norm, uniformly in \(s\in J\), and the sums over \(X\) and \(n\) may be interchanged.
\end{lemma}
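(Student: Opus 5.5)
The plan is to bound the regrouped family by a double counting argument. First fix \(y\) and sum over all pairs \((X,n)\) with \(y\in X^{(R_n)}\); then reorganize the sum by a root \(x\in X\) close to \(y\). Absolute convergence of each \(B_s(Y)\), uniformly in \(s\), will follow from the rooted bound itself: for fixed \(Y\) pick any \(y\in Y\), and every pair with \(X^{(R_n)}=Y\) contributes to the rooted sum at \(y\). So the main work is the estimate \eqref{eq:common-shell-root-bound}, taken with the sum of norms \(\sum_{X,n:X^{(R_n)}=Y}\|B_s(X,n)\|\) on the left. By the triangle inequality this also bounds the version with \(\|B_s(Y)\|\).

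\textbf{Step 1: geometric comparison of weights.} Let \(Y=X^{(R_n)}\). Then \(\diam Y\le\diam X+2R_n\), so monotonicity and submultiplicativity give \(v(\diam Y)\le v(\diam X)\,v(2R_n)\). For the cardinality we use \(Y\subseteq\bigcup_{x\in X}\ball_{R_n}(x)\) and \eqref{eq:volume-growth}. This gives \(|Y|\le C_\Gamma(1+R_n)^{\dvol}|X|\), and hence \(|Y|^k\le C_\Gamma^k(1+R_n)^{k\dvol}|X|^k\).

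\textbf{Step 2: rooting.} If \(y\in X^{(R_n)}\), there is some \(x\in X\) with \(d(x,y)\le R_n\), that is, \(x\in\ball_{R_n}(y)\). This yields
\[
\sum_{\substack{X,n\\ y\in X^{(R_n)}}}|X^{(R_n)}|^k v(\diam X^{(R_n)})\|B_s(X,n)\|
\le\sum_{n\ge0}\sum_{x\in\ball_{R_n}(y)}\sum_{X\ni x}C_\Gamma^k(1+R_n)^{k\dvol}|X|^kv(\diam X)v(2R_n)c(X)b_n .
\]
The ball has at most \(C_\Gamma(1+R_n)^{\dvol}\) elements. Each inner sum over \(X\ni x\) is bounded by the rooted supremum of \(|X|^kv(\diam X)c(X)\). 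Together these give exactly the product on the right of \eqref{eq:common-shell-root-bound}. The estimate is uniform in \(s\) and \(y\), since it uses only the \(s\)-independent bounds \(c(X)b_n\).

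\textbf{Step 3: the remaining assertions.} All summands are nonnegative, so restricting to a subfamily of sets \(X\) only decreases the sums and the same bound holds. Norm continuity of \(s\mapsto B_s(Y)\) follows from uniform absolute convergence: each term is continuous and the tails are dominated by summable \(s\)-independent majorants \(c(X)b_n\), so the Weierstrass M-test applies. The stacked case \(\A^\st\) is verbatim, since only norms and supports enter. For the last claim, note that \(\sum_{X,n}\|B_s(X,n)\|\le\sum_X c(X)\sum_n b_n\). Here \(\sum_nb_n\) is finite by the hypothesis, because \(v\ge1\) and \((1+R_n)\ge1\). Absolute convergence then justifies the interchange of the sums over \(X\) and \(n\) by Fubini for absolutely summable series in a Banach space.

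The only step requiring care is the cardinality comparison in Step 1 combined with the rooting in Step 2. Each of these contributes one factor \(C_\Gamma(1+R_n)^{\dvol}\), and the count of these factors must match the exponent \((k+1)\dvol\) and the constant \(C_\Gamma^{k+1}\).
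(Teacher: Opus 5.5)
Your proposal is correct and follows essentially the same argument as the paper. Both use the thickening bounds \(|X^{(R_n)}|\le C_\Gamma|X|(1+R_n)^{\dvol}\) and \(v(\diam X^{(R_n)})\le v(\diam X)\,v(2R_n)\), root through \(\ball_{R_n}(y)\) to pick up the second volume factor, and then derive absolute uniform convergence (the weight \(|Y|^kv(\diam Y)\ge1\) can be dropped), continuity, restriction to subfamilies, and the interchange of sums from the same \(s\)-independent majorants \(c(X)b_n\).
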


\begin{proof}
For fixed \(n\in\N_0\) and \(y\in\Gamma\), apply the geometric bounds \eqref{eq:metric-thickening}--\eqref{eq:metric-root-count} with \(R=R_n\).
Together with the assumptions on \(v\), they give
\begin{align*}
 \sum_{X:y\in X^{(R_n)}}|X^{(R_n)}|^k & v(\diam X^{(R_n)})c(X)b_n\\
 &\le C_\Gamma^k(1+R_n)^{k\dvol}v(2R_n)b_n
 \sum_{x\in\ball_{R_n}(y)}\sum_{X\ni x}|X|^kv(\diam X)c(X).
\end{align*}
Bounding \(|\ball_{R_n}(y)|\) by \(C_\Gamma(1+R_n)^{\dvol}\) and summing over \(n\) gives the right-hand side of \eqref{eq:common-shell-root-bound}.
For a fixed nonempty \(Y\), choose \(y\in Y\).
Since \(|X^{(R_n)}|^kv(\diam X^{(R_n)})\ge1\), the preceding estimate implies \(\sum_{X,n:X^{(R_n)}=Y}c(X)b_n<\infty\).
Thus the series defining \(B_s(Y)\) converges absolutely and uniformly in \(s\), and the triangle inequality proves \eqref{eq:common-shell-root-bound}, including its sum-of-norms version.
Uniform convergence gives norm continuity whenever each summand is norm-continuous.
Restricting the index set only decreases the nonnegative sums.
Finally, the additional assumption gives
\[
 \sup_{s\in J}\sum_{X,n}\|B_s(X,n)\|
 \le\Bigl(\sum_Xc(X)\Bigr)\Bigl(\sum_nb_n\Bigr)<\infty,
\]
which proves absolute uniform convergence of the full series and justifies interchanging the sums.
\end{proof}

We estimate the image of a sum of shells under a quasi-local automorphism by approximating each transformed shell separately.

\begin{lemma}
\label{lem:shell-composition}
Let \(\alpha\) be an automorphism of \(\A^\st\) and \(Z\Subset\Gamma\) be nonempty.
Let \((F_n)_{n\in\N_0}\) be a sequence with \(F_n\in\A_{Z^{(R_n)}}^\st\), where \(R_n\) are the radii in \eqref{eq:localization-radii}.
Suppose there are constants \(C_F,C_\alpha\ge0\), a nonnegative sequence \((b_n)_{n\in\N_0}\), and a nonincreasing function \(q_\alpha:[0,\infty)\to[0,\infty)\) such that
\[
 \|F_n\|\le C_Fb_n\quad(n\in\N_0),
 \qquad
 \sum_{n\ge0}(1+R_n)^{\dvol}b_n<\infty,
\]
and, for every \(n\in\N_0\) and \(R\ge0\),
\[
 \inf_{B\in\A_{(Z^{(R_n)})^{(R)}}^\st}\|\alpha(F_n)-B\|
 \le C_\alpha|Z^{(R_n)}|\|F_n\|q_\alpha(R).
\]
Set \(F:=\sum_{n\ge0}F_n\), where the series converges absolutely in norm.
Then, for every \(R\ge0\),
\begin{equation}
 \inf_{B\in\A_{Z^{(R)}}^\st}\|\alpha(F)-B\|
 \le \max\{1,C_\Gamma C_\alpha\}\,C_F|Z|\,
 \Bigg(q_\alpha(R/2)\sum_{n\ge0}(1+R_n)^{\dvol}b_n
 +\sum_{R_n>R/2}b_n\Bigg).
 \label{eq:common-shell-composition}
\end{equation}

If \(b_n\le C_b\E^{-\mu R_n^\theta}\) and \(q_\alpha(R)\le C_b\E^{-\mu R^\theta}\), with \(C_b,\mu>0\) and \(0<\theta\le1\), the right-hand side of \eqref{eq:common-shell-composition} is bounded by \(C'C_F|Z|\E^{-\mu'R^\theta}\), where \(C',\mu'>0\) depend only on \(C_\Gamma,\dvol,C_\alpha,C_b,\mu,\theta\).
If \(b_n\le C_b(1+R_n)^{-q}\) and \(q_\alpha(R)\le C_b(1+R)^{-p}\), with \(C_b>0\), \(q>\dvol\), and \(p>0\), that right-hand side is bounded by \(C'C_F|Z|(1+R)^{-\min\{p,q\}}\), where \(C'\) depends only on \(C_\Gamma,\dvol,C_\alpha,C_b,p,q\).

For \(c\in\C\), set \(G:=\alpha(F)-c\1\).
Then, for every \(R\ge0\),
\[
 \|G-\EE_{Z^{(R)}}^\vac(G)\|
 \le2\inf_{B\in\A_{Z^{(R)}}^\st}\|\alpha(F)-B\|.
\]
\end{lemma}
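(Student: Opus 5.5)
We approximate $\alpha(F)=\sum_n\alpha(F_n)$ shell by shell and split the shells at the scale $R_n\le R/2$ versus $R_n>R/2$.

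\emph{Inner shells, $R_n\le R/2$.} Apply the hypothesis on $\alpha$ with radius $R/2$. This gives $B_n\in\A^\st_{(Z^{(R_n)})^{(R/2)}}$ with
\[
 \|\alpha(F_n)-B_n\|\le C_\alpha|Z^{(R_n)}|\,C_Fb_n\,q_\alpha(R/2),
\]
possibly after an arbitrarily small slack in the infimum. Since $(Z^{(R_n)})^{(R/2)}\subseteq Z^{(R_n+R/2)}\subseteq Z^{(R)}$, each $B_n$ lies in $\A^\st_{Z^{(R)}}$. We bound $|Z^{(R_n)}|\le\sum_{z\in Z}|\ball_{R_n}(z)|\le C_\Gamma|Z|(1+R_n)^{\dvol}$ using \eqref{eq:volume-growth}. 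Summing over $n$ then gives the first term of \eqref{eq:common-shell-composition}, with constant $C_\Gamma C_\alpha C_F|Z|$.

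\emph{Outer shells, $R_n>R/2$.} We use the approximant $0$, at cost $\|\alpha(F_n)\|=\|F_n\|\le C_Fb_n$, since automorphisms are isometric. This gives the tail term, with constant $C_F\le C_F|Z|$ because $Z$ is nonempty.

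Set $B:=\sum_{R_n\le R/2}B_n$. This is a finite sum, since $R_n\to\infty$, so $B\in\A^\st_{Z^{(R)}}$. The triangle inequality applied to the absolutely convergent series then yields \eqref{eq:common-shell-composition} with prefactor $\max\{1,C_\Gamma C_\alpha\}$.

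\emph{Stretched-exponential case.} Write $R_n=2^{n-1}$. We have
\[
 \sum_n(1+R_n)^{\dvol}\E^{-\mu R_n^\theta}<\infty,
\]
so the first term is at most $C\E^{-\mu(R/2)^\theta}=C\E^{-\mu2^{-\theta}R^\theta}$. For the tail $\sum_{R_n>R/2}\E^{-\mu R_n^\theta}$, the dyadic growth gives geometric domination. Bounding $\E^{-\mu R_n^\theta}\le\E^{-\frac\mu2 R_n^\theta}\E^{-\frac\mu2(R/2)^\theta}$ and summing the first factor over all $n$ gives $C\E^{-\mu'R^\theta}$.

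\emph{Polynomial case.} The first term is $\lesssim(1+R)^{-p}$, using $q>\dvol$ for summability. The tail is $\sum_{R_n>R/2}(1+R_n)^{-q}\lesssim(1+R)^{-q}$, since the dyadic sum is dominated by its first term up to a constant $(1-2^{-q})^{-1}$. The combined rate is $(1+R)^{-\min\{p,q\}}$.

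\emph{The final claim.} Note that $c\1\in\A^\st_{Z^{(R)}}$. By Lemma~\ref{lem:vacuum-conditional-expectation}(f),
\[
 \|G-\EE^\vac_{Z^{(R)}}(G)\|\le2\inf_{C\in\A^\st_{Z^{(R)}}}\|G-C\|.
\]
Substituting $C=B-c\1$ shows this infimum equals $\inf_B\|\alpha(F)-B\|$, which proves the claim.

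No step is a serious obstacle. The only care needed is the set inclusion $(Z^{(R_n)})^{(R/2)}\subseteq Z^{(R)}$ for inner shells, and checking that the stated constants depend only on the listed data.
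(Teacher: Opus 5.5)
Your proposal is correct and follows the paper's proof essentially step for step. The shells with $R_n\le R/2$ are approximated at radius $R/2$ using the hypothesis on $\alpha$ and the bound $|Z^{(R_n)}|\le C_\Gamma|Z|(1+R_n)^{\dvol}$, the shells with $R_n>R/2$ are bounded by isometry, the dyadic tail estimates are the same (your split $\E^{-\mu R_n^\theta}\le\E^{-\mu R_n^\theta/2}\E^{-\mu(R/2)^\theta/2}$ gives exactly the paper's rate $\mu/2^{\theta+1}$), and the final claim follows from Lemma~\ref{lem:vacuum-conditional-expectation}(f) with the scalar absorbed into the approximant.

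One cosmetic correction concerns the polynomial tail. For $n\ge1$ the ratio $(1+R_{n+1})/(1+R_n)$ is only bounded below by $3/2$, not $2$. The geometric constant there should therefore be $(1-(2/3)^q)^{-1}$ rather than $(1-2^{-q})^{-1}$. This does not affect the conclusion $\sum_{R_n>R/2}(1+R_n)^{-q}\le C(1+R)^{-q}$.
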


\begin{proof}
For fixed \(R\ge0\), approximate \(\alpha(F_n)\), for each \(R_n\le R/2\), by \(B_{n,R}\in\A_{(Z^{(R_n)})^{(R/2)}}^\st\) with the error in the hypothesis.
The sum \(B_R:=\sum_{R_n\le R/2}B_{n,R}\) belongs to \(\A_{Z^{(R)}}^\st\), and
\[
 \|\alpha(F)-B_R\|
 \le C_\alpha C_Fq_\alpha(R/2)
       \sum_{R_n\le R/2}|Z^{(R_n)}|b_n
       +C_F\sum_{R_n>R/2}b_n.
\]
The thickening bound \(|Z^{(R_n)}|\le C_\Gamma|Z|(1+R_n)^{\dvol}\) gives \eqref{eq:common-shell-composition}.
Under either decay assumption, the weighted series in \eqref{eq:common-shell-composition} is finite, and the dyadic tails satisfy
\[
 \sum_{R_n>R/2}b_n\le
 \begin{cases}
  C\E^{-\mu R^\theta/2^{\theta+1}},&b_n\le C_b\E^{-\mu R_n^\theta},\\
  C(1+R)^{-q},&b_n\le C_b(1+R_n)^{-q}.
 \end{cases}
\]
Inserting these estimates and the bound on \(q_\alpha\) into \eqref{eq:common-shell-composition} gives the two decay estimates.
Since \(\EE_{Z^{(R)}}^\vac\) fixes the scalar term, \eqref{eq:vacuum-expectation-approximation} gives the final estimate for \(G\).
\end{proof}

Summability of the shell bounds gives both the rooted estimate and the local normal form.

\begin{proposition}
\label{prop:rooted-vacuum-normal-form}
Let \(J\) be an interval containing zero.
Suppose the centered terms \eqref{eq:G-Lambda-s} satisfy, uniformly for \(\lambda\in J\),
\[
 \|G_{\Lambda,\lambda}(Z)\|\le c_\Lambda(Z),\qquad
 \|G_{\Lambda,\lambda}(Z)-\EE_{Z^{(R)}}^\vac(G_{\Lambda,\lambda}(Z))\|
 \le c_\Lambda(Z)q_G(R),
\]
where \(c_\Lambda(Z)\ge0\) and \(q_G\) is nonnegative and nonincreasing.
Set \(b_0:=1\) and \(b_n:=2q_G(R_{n-1})\) for \(n\ge1\).
Assume \(\sum_Zc_\Lambda(Z)<\infty\) for each fixed \(\Lambda\) and \(\sum_n(1+R_n)^{\dvol}b_n<\infty\).
Then
\begin{equation}
 f_{\Lambda,\lambda}:=\sup_{y\in\Gamma}
 \sum_{\substack{Z\Subset\Gamma,\ n\ge0\\y\in Z^{(R_n)}}}
 \|R_{\Lambda,\lambda}(Z,n)\|
 \le C_\Gamma\sum_{n\ge0}(1+R_n)^{\dvol}b_n
                  \sup_{x\in\Gamma}\sum_{Z\ni x}c_\Lambda(Z).
 \label{eq:common-vacuum-root-bound}
\end{equation}
For fixed \(\Lambda\), \(\sum_{Z,n}\|R_{\Lambda,\lambda}(Z,n)\|<\infty\).
On \(I_\Lambda\cap J\), these terms annihilate their local vacua on both sides and
\begin{equation}
 K_{\Lambda,\lambda}^\nf
 =K_0+\lambda\sum_{Z\Subset\Gamma}\sum_{n\ge0}
          \pi_\at(R_{\Lambda,\lambda}(Z,n))
 \label{eq:normal-form-identity}
\end{equation}
holds on \(\operatorname{Dom}(K_0)\), and as a form identity on \(\operatorname{Dom}(K_0^{1/2})\).
\end{proposition}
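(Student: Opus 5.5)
The plan has four steps: bound each shell, sum them with Lemma~\ref{lem:rooted-shell-summation}, record the vacuum annihilation, and then resum to get the normal form.

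\emph{Step 1: shell bounds.} Fix \(\Lambda\) and \(\lambda\in J\), and set \(\bar G:=\int_0^1G_{\Lambda,t\lambda}(Z)\,\dd t\). Since \(t\lambda\in J\) for \(t\in[0,1]\) (as \(J\) is an interval containing zero), both hypotheses pass to \(\bar G\) by integrating in norm:
\[
 \|\bar G\|\le c_\Lambda(Z),\qquad \|\bar G-\EE^\vac_{Z^{(R)}}(\bar G)\|\le c_\Lambda(Z)q_G(R).
\]
The zeroth shell then satisfies \(\|R_{\Lambda,\lambda}(Z,0)\|\le\|\bar G\|\le c_\Lambda(Z)=c_\Lambda(Z)b_0\), using contractivity of \(\EE^\vac_Z\). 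For \(n\ge1\), the tower property \eqref{eq:vacuum-expectation-tower} gives
\[
 \Delta^\vac_{Z,n}(\bar G)=\EE^\vac_{Z^{(R_n)}}\bigl(\EE^\vac_{Z^{(R_n)}}(\bar G)-\EE^\vac_{Z^{(R_{n-1})}}(\bar G)\bigr).
\]
Writing the inner difference as \((\bar G-\EE^\vac_{Z^{(R_{n-1})}}(\bar G))-(\bar G-\EE^\vac_{Z^{(R_n)}}(\bar G))\) and using that \(q_G\) is nonincreasing, we get
\[
 \|R_{\Lambda,\lambda}(Z,n)\|\le c_\Lambda(Z)\bigl(q_G(R_{n-1})+q_G(R_n)\bigr)\le 2c_\Lambda(Z)q_G(R_{n-1})=c_\Lambda(Z)b_n.
\]

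\emph{Step 2: rooted summation.} Apply Lemma~\ref{lem:rooted-shell-summation} on \(\A^\st\) with \(B_\lambda(Z,n):=R_{\Lambda,\lambda}(Z,n)\), \(c:=c_\Lambda\), \(k=0\), and \(v\equiv1\). Its hypotheses are exactly \(\sum_n(1+R_n)^{\dvol}b_n<\infty\) and finiteness of \(\sup_x\sum_{Z\ni x}c_\Lambda(Z)\). If the latter supremum is infinite, \eqref{eq:common-vacuum-root-bound} holds trivially. Otherwise it is finite, and we use the sum-of-norms version of \eqref{eq:common-shell-root-bound}. The left side of that bound is
\[
 \sup_y\sum_{Y\ni y}\sum_{X^{(R_n)}=Y}\|R(X,n)\|=\sup_y\sum_{(Z,n):\,y\in Z^{(R_n)}}\|R(Z,n)\|=f_{\Lambda,\lambda},
\]
so we obtain \eqref{eq:common-vacuum-root-bound} with constant \(C_\Gamma^{k+1}=C_\Gamma\). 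The additional hypothesis \(\sum_Zc_\Lambda(Z)<\infty\), via the last clause of the lemma, gives absolute norm convergence of \(\sum_{Z,n}R_{\Lambda,\lambda}(Z,n)\), uniformly in \(\lambda\in J\), together with interchangeability of the sums over \(Z\) and \(n\).

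\emph{Step 3: vacuum annihilation.} For \(\lambda\in I_\Lambda\cap J\), every \(t\lambda\) with \(t\in[0,1]\) also lies in \(I_\Lambda\), since \(I_\Lambda\) is an interval containing zero. Lemma~\ref{lem:termwise-block} therefore gives two-sided vacuum annihilation for each \(G_{\Lambda,t\lambda}(Z)\), and integrating in \(t\) gives it for \(\bar G\). The shell identity \eqref{eq:vacuum-shell-annihilation} then yields the two-sided local vacuum annihilation of each \(R_{\Lambda,\lambda}(Z,n)\), as already noted after \eqref{eq:R-shells}.

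\emph{Step 4: normal form.} For each \(Z\), the telescoping identity stated after \eqref{eq:R-shells} shows that \(\lambda\sum_{n}R_{\Lambda,\lambda}(Z,n)=\cG_{\Lambda,\lambda}(Z)\) in norm. Moreover \(\cG_{\Lambda,\lambda}(Z)=0\) for \(Z\not\subset\Lambda\), because then \(\cV_\Lambda(Z)=0\), hence \(F_{\Lambda,\lambda,Z}=0\) and \(G_{\Lambda,\lambda}(Z)=0\). Using the interchange of sums from Step 2, the double sum equals \(\sum_{Z\subset\Lambda}\cG_{\Lambda,\lambda}(Z)\), and Proposition~\ref{prop:centered-derivative} turns this into \eqref{eq:normal-form-identity}. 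The identity holds on \(\operatorname{Dom}(K_0)\), and also as a form identity on \(\operatorname{Dom}(K_0^{1/2})\) because the correction term is bounded.

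The only delicate point is the index bookkeeping in Step 2, namely matching the grouping by \(Y=X^{(R_n)}\) in Lemma~\ref{lem:rooted-shell-summation} with the definition of \(f_{\Lambda,\lambda}\). The sum-of-norms version of that lemma handles this directly.
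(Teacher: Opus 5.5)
Your proposal is correct and follows the paper's proof essentially step by step: the shell bound $\|R_{\Lambda,\lambda}(Z,n)\|\le c_\Lambda(Z)b_n$, Lemma~\ref{lem:rooted-shell-summation} with $k=0$ and $v=1$ (including its sum-of-norms and absolute-convergence clauses), vacuum annihilation via \eqref{eq:vacuum-shell-annihilation}, and resummation of the telescoping identities into \eqref{eq:bounded-normal-form}. The only difference is that you average over $t$ before bounding the shells, whereas the paper bounds the shells first, which is equivalent by linearity. Two small simplifications are available: the infinite-supremum case cannot occur, since $\sup_x\sum_{Z\ni x}c_\Lambda(Z)\le\sum_Z c_\Lambda(Z)<\infty$, and the tower-property rewriting is unnecessary, because $\Delta^\vac_{Z,n}(\bar G)$ is already the difference of the two vacuum expectations.
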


\begin{proof}
For \(u\in J\) and \(n\ge1\), the assumed bounds on the two errors \(G_{\Lambda,u}(Z)-\EE_{Z^{(R_j)}}^\vac(G_{\Lambda,u}(Z))\), with \(j=n,n-1\), give
\[
 \|\Delta_{Z,n}^\vac(G_{\Lambda,u}(Z))\|
 \le c_\Lambda(Z)\bigl(q_G(R_n)+q_G(R_{n-1})\bigr)
 \le c_\Lambda(Z)b_n.
\]
For \(n=0\), contractivity gives \(\|\Delta_{Z,0}^\vac(G_{\Lambda,u}(Z))\|\le c_\Lambda(Z)b_0\).
Since \(t\lambda\in J\) for \(\lambda\in J\) and \(t\in[0,1]\), averaging these estimates in \eqref{eq:R-shells} yields \(\|R_{\Lambda,\lambda}(Z,n)\|\le c_\Lambda(Z)b_n\).
Apply Lemma~\ref{lem:rooted-shell-summation} with \(k=0\) and \(v=1\).
For fixed \(\Lambda\), \(\sum_{Z,n}\|R_{\Lambda,\lambda}(Z,n)\| \le(\sum_Zc_\Lambda(Z))(\sum_nb_n)<\infty\).
We may therefore sum the termwise telescoping identities following \eqref{eq:R-shells} over \(Z\).
Together with \eqref{eq:bounded-normal-form}, they give \eqref{eq:normal-form-identity}.
These terms annihilate their local vacua by the argument following \eqref{eq:R-shells}.
The operator series \(\lambda\sum_{Z,n}\pi_\at(R_{\Lambda,\lambda}(Z,n))\) converges absolutely in norm to the bounded perturbation in \eqref{eq:bounded-normal-form}.
By Proposition~\ref{prop:centered-derivative}, \(\operatorname{Dom}((K_{\Lambda,\lambda}^\nf)^{1/2})=\operatorname{Dom}(K_0^{1/2})\).
\end{proof}

Write \(\tau_t^{\ph,\lambda}\) for the dynamics of \(\Phi_0+\lambda\cV\), and define
\begin{equation}
 \cI_{\lambda}^\ph(A):=\int_\R W(t)\,\tau_t^{\ph,\lambda}(A)\,\dd t.
 \label{eq:full-physical-inverse-filter}
\end{equation}
For a nonempty finite set \(X\), use the radii \eqref{eq:localization-radii} to define the tracial shells
\begin{equation}
 \Delta_{X,0}^{\rm tr}:=\F_X^{\rm tr},\qquad
 \Delta_{X,n}^{\rm tr}:=
 \F_{X^{(R_n)}}^{\rm tr}
 -\F_{X^{(R_{n-1})}}^{\rm tr}\quad(n\ge1).
 \label{eq:tracial-shells}
\end{equation}
For \(B\) in either algebra, \(\sum_{j=0}^n\Delta_{X,j}^{\rm tr}(B)=\F_{X^{(R_n)}}^{\rm tr}(B)\stackrel{n\to\infty}{\longrightarrow} B\) in norm by \eqref{eq:tracial-CAR-expectation}.
On the physical algebra, put
\begin{align}
 \eta_{\lambda}(Y)
 &:=\sum_{X,n:X^{(R_n)}=Y}
       \Delta_{X,n}^{\rm tr}\cI_{\lambda}^\ph(\cV(X)),
 \label{eq:physical-generator-regrouping}\\
 \eta_{\Lambda,\lambda}(Y)
 &:=\sum_{X,n:X^{(R_n)}=Y}
       \Delta_{X,n}^{\rm tr}
           \cI_{\Lambda,\lambda}^\ph(\cV_\Lambda(X)).
 \label{eq:physical-cutoff-generator-regrouping}
\end{align}

We now bound the weighted interaction norms of \(\eta_\lambda\) and \(\eta_{\Lambda,\lambda}\), and show that the latter generates the cutoff flow \(\beta_{\Lambda,\lambda,\lambda'}\).

\begin{lemma}
\label{lem:inverse-filter-interaction}
The interactions \(\eta_{\lambda},\eta_{\Lambda,\lambda}\) in \eqref{eq:physical-generator-regrouping}--\eqref{eq:physical-cutoff-generator-regrouping} are even, self-adjoint, and componentwise norm-continuous.
For \(\lambda\in J_{\rm loc}\), define the cutoff shells by
\begin{equation}
 b_{\Lambda,\lambda}(Z,n):=\Delta_{Z,n}^{\rm tr}
       \cI_{\Lambda,\lambda}^\ph(\cV_\Lambda(Z)),
 \qquad \varnothing\ne Z\Subset\Gamma,\quad n\ge0,
 \label{eq:inverse-filter-shells}
\end{equation}
and set \(b_{\Lambda,\lambda}(\varnothing,n):=0\).
Thus
\begin{equation}
 \eta_{\Lambda,\lambda}(Y)
 =\sum_{\substack{\varnothing\ne Z\Subset\Gamma,\ n\ge0\\
                         Z^{(R_n)}=Y}}b_{\Lambda,\lambda}(Z,n).
 \label{eq:kappa-regrouping}
\end{equation}
There is a rate \(\mu_\eta>0\), depending only on \(C_\Gamma,\dvol,r,g,\mu,\mu_{\cV},\|\Phi_0\|_{\mu,1;2}\), and \(\theta\), and, for every \(k\ge0\), a constant \(C_{\eta,k}>0\) depending additionally on \(k\), such that
\begin{equation}
 \sup_{\lambda\in J_{\rm loc}}\|\eta_{\lambda}\|_{\mu_\eta,\theta;k},
 \quad
 \sup_{\substack{\Lambda\Subset\Gamma\\\lambda\in J_{\rm loc}}}
 \|\eta_{\Lambda,\lambda}\|_{\mu_\eta,\theta;k}
 \le C_{\eta,k}\|\cV\|_{\mu_{\cV},1;2}.
 \label{eq:kappa-stretched-bound}
\end{equation}
For every \(\lambda\in J_{\rm loc}\) and local \(A\), the derivation \(\delta_{\eta_{\lambda}}(A)\) is given by the series in \eqref{eq:spectral-flow-derivation}.
For fixed \(\Lambda\),
\begin{equation}
 \mathscr S_{\Lambda,\lambda}^\ph
 =\sum_Y\eta_{\Lambda,\lambda}(Y)
 \label{eq:S-interaction-decomposition}
\end{equation}
with absolute convergence in operator norm for every \(\lambda\in J_{\rm loc}\).
The interaction path \((\eta_{\Lambda,\lambda})_{\lambda\in J_{\rm loc}}\) generates the inner cocycle \(\beta_{\Lambda,\lambda,\lambda'}\) for \(\lambda,\lambda'\in J_{\rm loc}\).
\end{lemma}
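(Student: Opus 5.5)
The proof has three parts: bound the tracial shells \(b_{\Lambda,\lambda}(Z,n)\) using Lemma~\ref{lem:weighted-filter-locality}, regroup them by Lemma~\ref{lem:rooted-shell-summation}, and identify the sum with the generator through the telescoping identity for tracial shells.

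\textbf{Shell bounds.} Since \(\|W\|_1<\infty\), the zeroth shell satisfies
\[
 \|b_{\Lambda,\lambda}(Z,0)\|\le\|W\|_1\|\cV_\Lambda(Z)\|.
\]
For \(n\ge1\), write \(\Delta_{Z,n}^{\rm tr}=(\id-\F_{Z^{(R_{n-1})}}^{\rm tr})-(\id-\F_{Z^{(R_n)}}^{\rm tr})\). Applying \eqref{eq:weighted-map-locality} with \(k=W\) (admissible by \eqref{eq:W-stretched-time}) then gives
\[
 \|b_{\Lambda,\lambda}(Z,n)\|\le 2CC_W'|Z|\,\|\cV_\Lambda(Z)\|\,\E^{-\mu_WR_{n-1}^\theta},
\]
uniformly in \(\Lambda\) and \(\lambda\in J_{\rm loc}\). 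The same bound holds for the shells of \(\eta_\lambda\), because Lemma~\ref{lem:weighted-filter-locality} also covers the full dynamics.

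\textbf{Regrouping and the norm bound.} Apply Lemma~\ref{lem:rooted-shell-summation} with \(c(X)=|X|\|\cV(X)\|\), \(b_n\) the stretched-exponential sequence above, and \(v(r)=\E^{\mu_\eta r^\theta}\). This \(v\) is submultiplicative because \((r+t)^\theta\le r^\theta+t^\theta\). Choose \(\mu_\eta\) small relative to \(\mu_W\), so that the sum \(\sum_n(1+R_n)^{(k+1)\dvol}\E^{\mu_\eta(2R_n)^\theta}b_n\) converges. The remaining factor is \(\sup_x\sum_{X\ni x}|X|^{k+1}\E^{\mu_\eta\diam(X)^\theta}\|\cV(X)\|\). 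Taking \(\mu_\eta\le\mu_{\cV}\) and using \(\theta<1\), the weight \(|X|^{k+1}\E^{\mu_\eta\diam(X)^\theta}\) is at most \(C_k|X|^2\E^{\mu_{\cV}\diam X}\), so this factor is bounded by \(C_k\|\cV\|_{\mu_{\cV},1;2}\).

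\textbf{Expected obstacle.} The delicate step is controlling arbitrary \(k\) with only the second support moment of \(\cV\). For \(|X|>1\), the crude bound \(|X|\le C_\Gamma(1+\diam X)^{\dvol}\) from volume growth \eqref{eq:volume-growth} converts powers of \(|X|\) into polynomial factors in \(\diam X\). These are absorbed by the gap between \(\E^{\mu_\eta r^\theta}\) and \(\E^{\mu_{\cV}r}\). This absorption is why the constant \(C_{\eta,k}\) depends on \(k\), while \(\mu_\eta\) does not. Evenness, self-adjointness and continuity are then straightforward: tracial shells preserve parity and adjoints, \(W\) is real, and \(\lambda\mapsto\cI_{\Lambda,\lambda}^\ph(\cV_\Lambda(Z))\) is norm-continuous (by Duhamel for the cutoff; for the full dynamics, by dominated convergence with the Lieb--Robinson bounds). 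Uniform convergence in Lemma~\ref{lem:rooted-shell-summation} transfers this continuity to \(\eta\).

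\textbf{Identification with the generator.} For fixed \(\Lambda\), only the finitely many \(Z\subset\Lambda\) contribute. Telescoping the tracial shells gives \(\sum_n b_{\Lambda,\lambda}(Z,n)=\cI_{\Lambda,\lambda}^\ph(\cV_\Lambda(Z))\), and summing over \(Z\) yields \eqref{eq:S-interaction-decomposition}, with absolute convergence from the final clause of Lemma~\ref{lem:rooted-shell-summation}. For the same reason, \(\delta_{\eta_\lambda}(A)=\sum_Z[\cI_\lambda^\ph(\cV(Z)),A]\) for local \(A\), which is \eqref{eq:spectral-flow-derivation}. Here the rooted bound justifies interchanging the sums: commutators with a local \(A\) only see \(Z^{(R_n)}\) meeting \(\supp A\).

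\textbf{Generation of the cutoff cocycle.} Since \(\mathscr S_{\Lambda,\lambda}^\ph=\sum_Y\eta_{\Lambda,\lambda}(Y)\) is bounded, the derivation \(\delta_{\eta_{\Lambda,\lambda}}\) equals \([\mathscr S_{\Lambda,\lambda}^\ph,\cdot\,]\) on local observables. The inner cocycle \(u^*\cdot u\) solves \eqref{eq:path-cocycle-convention}, so uniqueness from Appendix~\ref{app:nonautonomous-background} identifies it with the cocycle generated by \(\eta_{\Lambda,\lambda}\).
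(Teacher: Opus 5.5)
Your proposal is correct and follows essentially the same route as the paper. Both arguments use stretched-exponential tracial-shell bounds from Lemma~\ref{lem:weighted-filter-locality}, regrouping via Lemma~\ref{lem:rooted-shell-summation} together with the support-moment trade of Lemma~\ref{lem:support-moment-rate-trade}, telescoping to identify \(\mathscr S_{\Lambda,\lambda}^\ph\) and \(\delta_{\eta_\lambda}\), and uniqueness of the cocycle; the differences are cosmetic (you bound the shells by the triangle inequality rather than the tracial tower identity, and the restriction \(\mu_\eta\le\mu_{\cV}\) is harmless but unnecessary since \(\theta<1\)).
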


\begin{proof}
The tracial-shell construction of the spectral-flow interaction is described in \cite[Section~6.5]{NSY19} and its fermionic version in \cite{NSY18}.
For \(n\ge2\), use the tracial tower identity \eqref{eq:tracial-CAR-expectation} and Lemma~\ref{lem:weighted-filter-locality} to estimate \(b_{\Lambda,\lambda}(Z,n)\).
Contractivity gives the same estimate for \(n=0,1\), after increasing \(C\).
Thus, for all \(n\ge0\),
\begin{equation}
 \|b_{\Lambda,\lambda}(Z,n)\|
 \le C|Z|\|\cV_\Lambda(Z)\|
                    \E^{-b_{\rm inv}R_n^\theta},
 \label{eq:inverse-shell-bound}
\end{equation}
with \(C,b_{\rm inv}>0\) independent of \(\Lambda\) and \(\lambda\in J_{\rm loc}\).
Choose
\begin{equation}
 0<\mu_\eta<2^{-\theta-1}b_{\rm inv}.
 \label{eq:mu-kappa-choice}
\end{equation}
Lemma~\ref{lem:rooted-shell-summation} and the support-moment estimate of Lemma~\ref{lem:support-moment-rate-trade} then give, for every \(k\ge0\),
\begin{align}
 \sup_{x\in\Gamma}\sum_{\substack{\varnothing\ne Z\Subset\Gamma,\ n\ge0\\x\in Z^{(R_n)}}}&
 |Z^{(R_n)}|^k\E^{\mu_\eta\diam(Z^{(R_n)})^\theta}
            \|b_{\Lambda,\lambda}(Z,n)\|\nonumber\\*
 & \le C_k\|\cV\|_{\mu_{\cV},1;2}
       \sum_{n\ge0}(1+R_n)^{(k+1)\dvol}
          \E^{-(b_{\rm inv}-2^\theta\mu_\eta)R_n^\theta}<\infty.
 \label{eq:kappa-root-count}
\end{align}
The same estimate holds for the full dynamics and sources \(\cV(Z)\).
By \eqref{eq:kappa-root-count}, the series in \eqref{eq:physical-generator-regrouping}--\eqref{eq:physical-cutoff-generator-regrouping} converge absolutely and satisfy \eqref{eq:kappa-stretched-bound}, with \(\mu_\eta\) independent of \(k\).
At \(k=0\), \eqref{eq:kappa-root-count} gives \(\sum_{Z,n}\|[\Delta_{Z,n}^{\rm tr}\cI_\lambda^\ph(\cV(Z)),A]\|<\infty\) for each local \(A\).
Grouping terms with support \(Y=Z^{(R_n)}\) and telescoping in \(n\) gives \eqref{eq:spectral-flow-derivation}.

Each filtered shell is norm-continuous in \(\lambda\) by Proposition~\ref{prop:nonautonomous-local-limit-package} and dominated convergence against \(|W|\).
For each fixed \(Y\), the bounds \eqref{eq:inverse-shell-bound} and \eqref{eq:kappa-root-count} give summable estimates independent of \(\lambda\).
The series defining \(\eta_\lambda(Y)\) and \(\eta_{\Lambda,\lambda}(Y)\) therefore converge uniformly in \(\lambda\), so both components are norm-continuous.
Since \(W\) is real and the dynamics and tracial expectations preserve parity and adjoints, the components \(\eta_\lambda(Y)\) and \(\eta_{\Lambda,\lambda}(Y)\) are even and self-adjoint.

At fixed \(\Lambda\), \eqref{eq:inverse-shell-bound} also gives
\[
 \sum_{Z,n}\|b_{\Lambda,\lambda}(Z,n)\|
 \le C\sum_{Z\subset\Lambda}|Z|\|\cV(Z)\|
          \sum_{n\ge0}\E^{-b_{\rm inv}R_n^\theta}<\infty.
\]
Regrouping and telescoping this series yields
\[
 \sum_Y\eta_{\Lambda,\lambda}(Y)
 =\sum_{Z\subset\Lambda}\cI_{\Lambda,\lambda}^\ph(\cV(Z))
 =\mathscr S_{\Lambda,\lambda}^\ph.
\]
By \eqref{eq:S-interaction-decomposition}, \(\delta_{\eta_{\Lambda,\lambda}}(A)=[\mathscr S_{\Lambda,\lambda}^\ph,A]\) for local \(A\).
Uniqueness in Proposition~\ref{prop:nonautonomous-local-limit-package} therefore shows that \(\eta_{\Lambda,\lambda}\) generates \(\beta_{\Lambda,\lambda,\lambda'}\).
\end{proof}

We also need locality after expressing the physical flow in product-vacuum coordinates.
The following lemma applies to both decay regimes.

\begin{lemma}
\label{lem:uniform-alpha-locality}
Let \(v(R):=\E^{cR^\theta}\), with \(c>0\), \(0<\theta<1\), or \(v(R):=(1+R)^p\), with \(p>0\).
Suppose the physical cutoff flow is generated by even, self-adjoint, componentwise norm-continuous interactions \(\eta_{\Lambda,u}\) on a compact interval \(J\) containing zero.
Assume
\begin{gather*}
 \sup_{\substack{\Lambda,\ u\in J\\x\in\Gamma}}
       \sum_{Y\ni x}v(\diam Y)\|\eta_{\Lambda,u}(Y)\|<\infty,\\
 \sup_{\Lambda}\int_J
 \sup_{x\in\Gamma}\sum_{Y\ni x}|Y|v(\diam Y)
                         \|\eta_{\Lambda,u}(Y)\|\,\dd u<\infty.
\end{gather*}
For the fixed coordinate path, use \(\Phi_{k_\st}\) from \eqref{eq:atomicizing-path-generator}; set \(\Phi_{k_\st}=0\) when \(\sigma_\st=\id\).
Assume
\[
 \sup_{x\in\Gamma}\sum_{Y\ni x}|Y|v(\diam Y)
                       \|\Phi_{k_\st}(Y)\|<\infty.
\]
Then, uniformly in finite \(\Lambda,X\), \(\lambda\in J\), \(A\in\A_X^\st\), and \(R\ge0\),
\begin{equation}
 \inf_{B\in\A_{X^{(R)}}^\st}
 \|\sigma_{\Lambda,\lambda}(A)-B\|
 \le C|X|\|A\|
 \begin{cases}
  \E^{-cR^\theta},&v(R)=\E^{cR^\theta},\\
  (1+R)^{-p},&v(R)=(1+R)^p.
 \end{cases}
 \label{eq:uniform-alpha-locality}
\end{equation}
The constant depends only on the integrated bound for \(\eta_{\Lambda,u}\) and the assumed bound on \(\Phi_{k_\st}\).
\end{lemma}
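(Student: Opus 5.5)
The plan is to write \(\sigma_{\Lambda,\lambda}=\sigma_\st\circ(\beta_{\Lambda,0,\lambda}\widehat\otimes\id_{\rm aux})\) and prove the locality of each factor separately, then compose. The estimate for \(\beta_{\Lambda,0,\lambda}\) comes first. By Lemma~\ref{lem:inverse-filter-interaction}, or by hypothesis in the general form stated, the cocycle \(\beta_{\Lambda,0,\lambda}\) is generated by the path \(u\mapsto\eta_{\Lambda,u}\). I apply the finite-region approximation estimate \cite[Theorem~2.9 and Remark~2.10]{FermionicLR} for time-dependent interactions, with weight \(v\), on the interval between \(0\) and \(\lambda\). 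Its input is exactly the two assumed bounds: the uniform weighted rooted norm, and the integrated first-support-moment weighted norm. Its output is
\[
 \inf_{B\in\A_{X^{(R)}}}\|\beta_{\Lambda,0,\lambda}(A)-B\|\le C_1|X|\|A\|\,v(R)^{-1},
\]
uniformly in \(\Lambda\) and \(\lambda\in J\). Tensoring with \(\id_{\rm aux}\) is harmless. The auxiliary layer sits at the same sites and the generator acts only on the physical layer. So the same bound holds on \(\A^\st_X\), after a tracial conditional-expectation argument that splits \(A\) into a finite sum of homogeneous products. Alternatively, one can view \(\eta_{\Lambda,u}\widehat\otimes\1\) as an interaction on the stacked algebra and apply the same theorem directly. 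I will take this second route, since it avoids any decomposition of \(A\).

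Next, \(\sigma_\st=\tau_1^{\Phi_{k_\st}}\) by Proposition~\ref{prop:exponential-stable-locality}. The same theorem applied to the constant path \(\Phi_{k_\st}\) gives \(\inf_{B\in\A^\st_{X^{(R)}}}\|\sigma_\st(A)-B\|\le C_2|X|\|A\|v(R)^{-1}\). In the case \(\sigma_\st=\id\) this step is trivial.

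The composition is the step I expect to be the main obstacle. Approximating \(\beta(A)\) by a single local \(B\) and then applying \(\sigma_\st\) to \(B\) loses too much: the error acquires the factor \(|X^{(R/2)}|\sim|X|(1+R)^{\dvol}\), which spoils the polynomial exponent \(p\). To avoid this, I decompose \(F:=(\beta_{\Lambda,0,\lambda}\widehat\otimes\id)(A)\) into vacuum or tracial shells \(F_n:=\Delta^{\rm tr}_{X,n}(F)\in\A^\st_{X^{(R_n)}}\). The first step gives \(\|F_n\|\le C|X|\|A\|b_n\) with \(b_n\lesssim v(R_{n-1})^{-1}\). I then apply Lemma~\ref{lem:shell-composition} with \(\alpha=\sigma_\st\) and \(q_\alpha=C_2v^{-1}\). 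This yields the bound
\[
 C|X|\|A\|\Bigl(v(R/2)^{-1}\sum_n(1+R_n)^{\dvol}b_n+\sum_{R_n>R/2}b_n\Bigr).
\]

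In the stretched-exponential case the composition lemma still loses a constant factor in the rate, so the result comes out as \(\E^{-c'R^\theta}\) with a smaller \(c'\). To recover the exact \(c\) stated, I would run the first step with a slightly larger weight \(c''>c\), which is affordable if the hypotheses hold with margin. Otherwise I would read the constant \(c\) in \eqref{eq:uniform-alpha-locality} as allowed to be reduced, and note this as the delicate point. In the polynomial case the first sum requires \(p>\dvol\), which holds in the intended application (\(p=p_\Phi>\dvol\)); the lemma then gives \((1+R)^{-p}\) directly. If \(p\le\dvol\) is needed, I would instead apply the propagation theorem once to the concatenated time-dependent path consisting of \(\eta_{\Lambda,u}\widehat\otimes\1\) on \(J\) followed by \(\Phi_{k_\st}\) on a unit interval. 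This single path generates \(\sigma_{\Lambda,\lambda}\) directly and avoids composition altogether. I expect this concatenation to be the cleanest route overall, with the constant depending only on the summed integrated norms.
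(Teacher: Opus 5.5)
Your main route, composing the two quasi-local bounds through Lemma~\ref{lem:shell-composition}, does not prove the lemma as stated. Its losses go beyond the two you flag.

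That lemma concludes with the prefactor $C_F|Z|$. With $Z=X$ and your $C_F=C|X|\|A\|$, it gives $C|X|^2\|A\|$, not the $C|X|\|A\|$ you display. This cannot be repaired within the scheme. The shells $F_n$ with $n\ge1$ have norm of order $|X|\|A\|v(R_{n-1})^{-1}$ and are supported in $X^{(R_n)}$, so the quasi-locality bound for $\sigma_\st$ charges each of them a further $|X^{(R_n)}|\ge|X|$. Avoiding shells instead costs $|X^{(R/2)}|\sim|X|(1+R)^{\dvol}$, as you observed yourself.

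The composition route also has the two losses you noted. In the stretched-exponential case the rate drops, and the hypotheses are stated at exactly the weight $v$, so there is no margin to spend. In the polynomial case it needs $p>\dvol$. Composition therefore only gives a weaker statement. The linear prefactor is used later: in the proof of Theorem~\ref{thm:card-stability}, the bound is summed against $|Y|(1+\diam Y)^a\|F_Z(Y)\|$, i.e.\ only the first support moment.

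The concatenation you mention at the end is exactly the paper's proof. It should be the argument, not a fallback, but three details need care.

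\emph{Ordering.} With the convention \eqref{eq:path-cocycle-convention}, $\beta_{0,2}=\beta_{0,1}\circ\beta_{1,2}$, so the segment run first is the outer factor. Since $\sigma_{\Lambda,\lambda}=\sigma_\st\circ(\beta_{\Lambda,0,\lambda}\widehat\otimes\id_{\rm aux})$, the $\Phi_{k_\st}$ segment must come first, not second as you wrote.

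\emph{Interval and orientation.} The $\eta$ segment must run from $0$ to $\lambda$, not over $J$, and $\lambda$ may be negative.

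\emph{Continuity.} The joined path must be componentwise continuous for the non-autonomous theory to apply.

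The paper handles the last two points as follows. Take a smooth nondecreasing $\chi$ on $[0,1]$ with $\chi(0)=0$, $\chi(1)=1$, $\chi'(0)=\chi'(1)=0$. Set $\Xi_t=\chi'(t)\Phi_{k_\st}$ on $[0,1]$ and $\Xi_t=\lambda\chi'(t-1)(\eta_{\Lambda,\lambda\chi(t-1)}\widehat\otimes\1)$ on $[1,2]$; the factor $\lambda$ fixes the orientation. A change of variables turns the integrated first-moment norm of $\Xi$ into the $\Phi_{k_\st}$ bound plus the integral of the $\eta$ bound between $0$ and $\lambda$. The first hypothesis and boundedness of $\chi'$ give the uniform zeroth-moment bound. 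A single application of \cite[Theorem~2.9 and Remark~2.10]{FermionicLR} then yields \eqref{eq:uniform-alpha-locality} with prefactor $|X|$, the exact rate $c$, and every $p>0$.
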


\begin{proof}
Choose a smooth nondecreasing function \(\chi:[0,1]\to[0,1]\) with \(\chi(0)=0\), \(\chi(1)=1\), and \(\chi'(0)=\chi'(1)=0\).
On the stacked algebra, define the interaction path
\[
 \Xi_t:=
 \begin{cases}
  \chi'(t)\Phi_{k_\st},&0\le t\le1,\\
  \lambda\chi'(t-1)(\eta_{\Lambda,\lambda\chi(t-1)}
                      \widehat\otimes\1),&1\le t\le2.
 \end{cases}
\]
It is componentwise continuous at the joining point.
In the convention \eqref{eq:path-cocycle-convention}, the cocycle generated by \(\Xi\) from \(0\) to \(2\) is \(\sigma_\st\circ(\beta_{\Lambda,0,\lambda}\widehat\otimes\id) =\sigma_{\Lambda,\lambda}\).
The factor \(\lambda\) also gives the correct orientation when \(\lambda<0\).
Changing variables in the two integrals gives
\begin{align*}
 &\int_0^2\sup_{x\in\Gamma}\sum_{Y\ni x}
           |Y|v(\diam Y)\|\Xi_t(Y)\|\,\dd t\\
 &\quad=\sup_{x\in\Gamma}\sum_{Y\ni x}
           |Y|v(\diam Y)\|\Phi_{k_\st}(Y)\|
   +\int_{\min(0,\lambda)}^{\max(0,\lambda)}\sup_{x\in\Gamma}\sum_{Y\ni x}
           |Y|v(\diam Y)\|\eta_{\Lambda,u}(Y)\|\,\dd u.
\end{align*}
The hypotheses bound this integral uniformly in \(\Lambda\) and \(\lambda\in J\).
Both weights are nondecreasing and satisfy \(v(0)=1\), \(v(R+S)\le v(R)v(S)\), and \(\lim_{R\to\infty}v(R)=\infty\).
The hypotheses and boundedness of \(\chi'\) also give a uniform zeroth-support-moment bound for \(\Xi_t\) on \([0,2]\).
The finite-region approximation estimate in \cite[Theorem~2.9 and Remark~2.10]{FermionicLR}, applied to this path, gives \eqref{eq:uniform-alpha-locality}.
\end{proof}

In the exponential case, the two bounds on \(\eta_{\Lambda,u}\) follow from Lemma~\ref{lem:inverse-filter-interaction} with \(c=\mu_\eta\).
In particular, with \(C_\eta:=C_{\eta,2}\),
\[
 \int_{J_{\rm loc}}\|\eta_{\Lambda,u}\|_{\mu_\eta,\theta;2}\,\dd u
 \le 2C_\eta g.
\]
The bound on the fixed coordinate path follows from Proposition~\ref{prop:exponential-stable-locality} in the quadratic case and is zero in the interacting product-state case.
Lemma~\ref{lem:uniform-alpha-locality} with \(c=\mu_\eta\) therefore gives stretched-exponential locality of \(\sigma_{\Lambda,\lambda}\).

\begin{proposition}
\label{prop:transported-D-locality}
There are \(C_{\rm ql}\ge2\|w\|_1\) and \(\mu_\zeta>0\) such that
\begin{align}
 \|G_{\Lambda,\lambda}(Z)\|
 &\le2\|w\|_1\|\cV_\Lambda(Z)\|,
 \label{eq:G-size}\\
 \|G_{\Lambda,\lambda}(Z)-\EE_{Z^{(R)}}^\vac(G_{\Lambda,\lambda}(Z))\|
 &\le C_{\rm ql}|Z|^2
       \|\cV_\Lambda(Z)\|\E^{-\mu_\zeta R^\theta}
 \label{eq:G-locality}
\end{align}
for all finite \(\Lambda\), \(\lambda\in J_{\rm loc}\), nonempty \(Z\Subset\Gamma\), and \(R\ge0\).
The constants depend only on the geometric and onsite data, \(g,\mu,\mu_{\cV},\theta\), \(\|\Phi_0\|_{\mu,1;2}\), and \(\|\Phi_{k_\st}\|_{\mu_\eta,\theta;1}\); they are independent of \(\cV,\Lambda,\lambda\).
\end{proposition}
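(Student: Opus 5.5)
The size bound \eqref{eq:G-size} is immediate. \(\sigma_{\Lambda,\lambda}\) is an automorphism, and \(\omega_{\Lambda,\lambda}\) is a state. Moreover \(\|F_{\Lambda,\lambda,Z}\|\le\|w\|_1\|\cV_\Lambda(Z)\|\), since each \(\tau_t^{\ph,\Lambda,\lambda}\) is isometric. The two terms in \eqref{eq:G-Lambda-s} therefore each contribute at most \(\|w\|_1\|\cV_\Lambda(Z)\|\).

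For the locality bound \eqref{eq:G-locality}, I plan to apply Lemma~\ref{lem:shell-composition} with \(\alpha:=\sigma_{\Lambda,\lambda}\) and \(c:=\omega_{\Lambda,\lambda}(F_{\Lambda,\lambda,Z})\). For the shells of \(F\) I take the tracial shells \(F_n:=\Delta_{Z,n}^{\rm tr}(F_{\Lambda,\lambda,Z})\widehat\otimes\1\in\A^\st_{Z^{(R_n)}}\), whose sum converges in norm to \(F_{\Lambda,\lambda,Z}\widehat\otimes\1\) by \eqref{eq:tracial-CAR-expectation}.

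\textbf{Shell bounds.} Lemma~\ref{lem:weighted-filter-locality}, applied with \(k=w\) (admissible by \eqref{eq:w-stretched-time}) and \(A=\cV_\Lambda(Z)\), gives
\[
\|(\id-\F^{\rm tr}_{Z^{(R)}})F_{\Lambda,\lambda,Z}\|\le CC_w'|Z|\|\cV_\Lambda(Z)\|\E^{-\mu_wR^\theta},
\]
uniformly in \(\Lambda\) and \(\lambda\in J_{\rm loc}\). For \(n\ge2\) I write \(\Delta^{\rm tr}_{Z,n}=(\id-\F^{\rm tr}_{Z^{(R_{n-1})}})-(\id-\F^{\rm tr}_{Z^{(R_n)}})\); for \(n=0,1\) I use contractivity. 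This yields
\[
\|F_n\|\le C_F b_n,\qquad C_F:=C|Z|\|\cV_\Lambda(Z)\|,\qquad b_n:=C_b\E^{-\mu_wR_n^\theta/2^\theta},
\]
which is summable against \((1+R_n)^{\dvol}\). Tracking \(C_F\) this way is where the first factor of \(|Z|\) comes from.

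\textbf{Automorphism input.} Lemma~\ref{lem:uniform-alpha-locality}, fed by the interaction bounds \eqref{eq:kappa-stretched-bound} of Lemma~\ref{lem:inverse-filter-interaction} (integrated over \(J_{\rm loc}\), with total at most \(2C_\eta g\)) and by the bound on \(\Phi_{k_\st}\) from Proposition~\ref{prop:exponential-stable-locality}, gives the hypothesis of Lemma~\ref{lem:shell-composition} with
\[
q_\alpha(R)=\E^{-\mu_\eta R^\theta},
\]
uniformly in \(\Lambda\) and \(\lambda\in J_{\rm loc}\). The factor \(|Z^{(R_n)}|\) that the hypothesis requires is absorbed through the thickening bound \(|Z^{(R_n)}|\le C_\Gamma|Z|(1+R_n)^{\dvol}\); this is already built into \eqref{eq:common-shell-composition} and produces the second factor of \(|Z|\).

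\textbf{Conclusion.} The stretched-exponential clause of Lemma~\ref{lem:shell-composition}, with common rate \(\min\{\mu_w2^{-\theta},\mu_\eta\}\), bounds \(\inf_B\|\sigma_{\Lambda,\lambda}(F)-B\|\) by \(C'C_F|Z|\E^{-\mu'R^\theta}\). Its final statement turns this into the same bound, up to a factor \(2\), for \(\|G-\EE^\vac_{Z^{(R)}}(G)\|\). Since \(C_F\) already carries one factor \(|Z|\), this gives \(|Z|^2\|\cV_\Lambda(Z)\|\E^{-\mu_\zeta R^\theta}\), and I set \(\mu_\zeta:=\mu'\). Finally I enlarge \(C_{\rm ql}\) so that it is at least \(2\|w\|_1\); this makes the bound trivially true for small \(R\) as well.

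The main thing to check carefully is \textbf{uniformity of the constants}. They must depend only on the listed data and not on \(\cV\). This holds because \(J_{\rm loc}\) caps \(|\lambda|\|\cV\|\) by \(g\), which fixes both the Lieb--Robinson velocity in Lemma~\ref{lem:weighted-filter-locality} and the integrated generator bound. Every other place \(\cV\) appears is through the linear factor \(\|\cV_\Lambda(Z)\|\).
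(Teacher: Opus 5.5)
Your proposal is correct and is essentially the paper's proof: tracial shells of \(F_{\Lambda,\lambda,Z}\widehat\otimes\1\) controlled by Lemma~\ref{lem:weighted-filter-locality}, locality of \(\sigma_{\Lambda,\lambda}\) from Lemma~\ref{lem:uniform-alpha-locality}, and then Lemma~\ref{lem:shell-composition} with \(c=\omega_{\Lambda,\lambda}(F_{\Lambda,\lambda,Z})\), with the two factors of \(|Z|\) arising exactly as you describe. One small correction: Lemma~\ref{lem:shell-composition} already gives the bound for every \(R\ge0\), so \(C_{\rm ql}\ge2\|w\|_1\) is not needed to cover small \(R\). Its actual role is to guarantee \(\|G_{\Lambda,\lambda}(Z)\|\le c_\Lambda(Z)=C_{\rm ql}|Z|^2\|\cV_\Lambda(Z)\|\) when this proposition is used in Proposition~\ref{prop:uniform-normal-form}.
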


\begin{proof}
The tracial expectations on the stacked and physical algebras satisfy
\[
 \F_Y^{\rm tr}(B\widehat\otimes\1)
 =\F_Y^{\rm tr}(B)\widehat\otimes\1,
 \qquad B\in\A,\quad Y\Subset\Gamma.
\]
The tracial shells of \(F_{\Lambda,\lambda,Z}\widehat\otimes\1\) obey \eqref{eq:inverse-shell-bound}, with \(w\) in place of \(W\), by Lemma~\ref{lem:weighted-filter-locality}.
Lemma~\ref{lem:uniform-alpha-locality} gives the stretched-exponential bound on \(\sigma_{\Lambda,\lambda}\).
Apply Lemma~\ref{lem:shell-composition} to \(F_{\Lambda,\lambda,Z}\widehat\otimes\1\), with \(C_F=C|Z|\|\cV_\Lambda(Z)\|\) and automorphism \(\sigma_{\Lambda,\lambda}\).
Its vacuum-expectation estimate with \(c=\omega_{\Lambda,\lambda}(F_{\Lambda,\lambda,Z})\) proves \eqref{eq:G-locality} for every \(R\ge0\).
Isometry and centering give \(\|G_{\Lambda,\lambda}(Z)\|\le2\|F_{\Lambda,\lambda,Z}\| \le2\|w\|_1\|\cV_\Lambda(Z)\|\).
\end{proof}

The factor \(|Z|^2\) in \eqref{eq:G-locality} is the support weight in the interaction norm of \(\cV\).
The following proposition is the main conclusion of this section.
It collects the vacuum normal-form identity and the uniform bound needed for the gap bootstrap in Section~\ref{sec:form-bound}.

\begin{proposition}
\label{prop:uniform-normal-form}
Let \(w,W\) be the filters constructed in Subsection~\ref{sec:compatible-filter-pair} for a fixed \(0<\theta<1\).
There is a constant \(C_{\rm nf}\ge1\), depending only on the fixed data of the unperturbed system, the geometry, the locality assumptions, and \(\theta\), such that for every finite \(\Lambda\) and \(\lambda\in J_{\rm loc}\),
\begin{equation}
 f_{\Lambda,\lambda}
 \le C_{\rm nf}\|\cV_\Lambda\|_{\mu_{\cV},1;2}
 \le C_{\rm nf}\|\cV\|_{\mu_{\cV},1;2}.
 \label{eq:normal-form-bound}
\end{equation}
On \(I_\Lambda\cap J_{\rm loc}\), the normalized shells annihilate their local vacua on both sides, and \eqref{eq:normal-form-identity} holds on \(\operatorname{Dom}(K_0^{1/2})\) as an absolutely convergent form series.
\end{proposition}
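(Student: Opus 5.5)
The plan is to combine Proposition~\ref{prop:transported-D-locality} with the abstract summation result Proposition~\ref{prop:rooted-vacuum-normal-form}, taking \(J:=J_{\rm loc}\). That interval contains zero and is an interval, so it satisfies the hypotheses. For each finite \(\Lambda\) we set
\[
 c_\Lambda(Z):=C_{\rm ql}|Z|^2\|\cV_\Lambda(Z)\|,
 \qquad
 q_G(R):=\E^{-\mu_\zeta R^\theta}.
\]
This \(q_G\) is nonnegative and nonincreasing. Since \(C_{\rm ql}\ge2\|w\|_1\) and \(|Z|^2\ge1\), the size bound \eqref{eq:G-size} gives \(\|G_{\Lambda,\lambda}(Z)\|\le c_\Lambda(Z)\). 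The locality bound \eqref{eq:G-locality} gives the required vacuum-expectation approximation, uniformly for \(\lambda\in J_{\rm loc}\).

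Next we verify the two summability hypotheses.

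\emph{Summability in \(Z\).} For fixed \(\Lambda\) only the finitely many \(Z\subset\Lambda\) contribute, so \(\sum_Zc_\Lambda(Z)<\infty\).

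\emph{Summability in \(n\).} With \(b_0=1\) and \(b_n=2\E^{-\mu_\zeta R_{n-1}^\theta}\), where \(R_n=2^{n-1}\), we get \(\sum_n(1+R_n)^{\dvol}b_n<\infty\), because stretched-exponential decay in \(2^{n}\) beats the polynomial factor. This sum depends only on \(\dvol,\mu_\zeta,\theta\).

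Proposition~\ref{prop:rooted-vacuum-normal-form} then yields
\[
 f_{\Lambda,\lambda}\le C_\Gamma\Bigl(\sum_n(1+R_n)^{\dvol}b_n\Bigr)C_{\rm ql}\sup_x\sum_{Z\ni x}|Z|^2\|\cV_\Lambda(Z)\|.
\]
The last supremum is bounded by \(\|\cV_\Lambda\|_{\mu_{\cV},1;2}\), since the weight \(\E^{\mu_{\cV}\diam Z}\ge1\). In turn \(\|\cV_\Lambda\|_{\mu_{\cV},1;2}\le\|\cV\|_{\mu_{\cV},1;2}\), since the cutoff only removes terms. We therefore set
\[
 C_{\rm nf}:=\max\Bigl\{1,\;C_\Gamma C_{\rm ql}\sum_n(1+R_n)^{\dvol}b_n\Bigr\}.
\]
This constant inherits its dependence only on the fixed data and \(\theta\) from Proposition~\ref{prop:transported-D-locality}, and it is independent of \(\Lambda\), \(\lambda\) and \(\cV\).

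The remaining assertions, namely two-sided vacuum annihilation on \(I_\Lambda\cap J_{\rm loc}\) and the identity \eqref{eq:normal-form-identity} on \(\operatorname{Dom}(K_0^{1/2})\), are the corresponding conclusions of Proposition~\ref{prop:rooted-vacuum-normal-form}. It remains to state that the form series converges absolutely. For \(\psi\in\operatorname{Dom}(K_0^{1/2})\), each term satisfies \(|\langle\psi,\pi_\at(R_{\Lambda,\lambda}(Z,n))\psi\rangle|\le\|R_{\Lambda,\lambda}(Z,n)\|\|\psi\|^2\). For fixed \(\Lambda\) the norms are summable, so the form series converges absolutely. Alternatively, the excitation-count estimate \eqref{eq:abstract-roadmap-relative-form-bound} gives the \(\Lambda\)-uniform bound by \(f_{\Lambda,\lambda}\langle\psi,N\psi\rangle\).

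The main obstacle is purely bookkeeping: confirming that the constants of Proposition~\ref{prop:transported-D-locality} are genuinely independent of \(\cV\). This relies on restricting to \(J_{\rm loc}\), where \(|\lambda|\|\cV\|\le g\), and on the hypothesis \(\|\Phi_{k_\st}\|_{\mu_\eta,\theta;1}<\infty\) from Proposition~\ref{prop:exponential-stable-locality}.
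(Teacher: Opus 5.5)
Your proposal is correct and follows the paper's proof: both apply Proposition~\ref{prop:rooted-vacuum-normal-form} on \(J_{\rm loc}\) with \(c_\Lambda(Z)=C_{\rm ql}|Z|^2\|\cV_\Lambda(Z)\|\) and \(q_G(R)=\E^{-\mu_\zeta R^\theta}\), and read off the annihilation and normal-form conclusions from it. Your constant coincides with the paper's \eqref{eq:explicit-C-nf}: since \(R_0=0\) and \(b_0=1\), your sum \(\sum_n(1+R_n)^{\dvol}b_n\) equals \(1+2\sum_{n\ge1}(1+R_n)^{\dvol}\E^{-\mu_\zeta R_{n-1}^\theta}\).
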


\begin{proof}
Apply Proposition~\ref{prop:rooted-vacuum-normal-form} with \(c_\Lambda(Z)=C_{\rm ql}|Z|^2 \|\cV_\Lambda(Z)\|\) and \(q_G(R)=\E^{-\mu_\zeta R^\theta}\).
The total sum of \(c_\Lambda\) is finite, and its rooted sum satisfies
\[
 \sup_{x\in\Gamma}\sum_{Z\ni x}c_\Lambda(Z)
 =C_{\rm ql}\sup_{x\in\Gamma}\sum_{Z\ni x}|Z|^2
              \|\cV_\Lambda(Z)\|
 \le C_{\rm ql}\|\cV_\Lambda\|_{\mu_{\cV},1;2}.
\]
Thus one may choose
\begin{equation}
 C_{\rm nf}:=\max\left\{1,\,
 C_\Gamma C_{\rm ql}
 \left[1+2\sum_{n\ge1}(1+R_n)^{\dvol}\E^{-\mu_\zeta R_{n-1}^\theta}\right]
 \right\}.
 \label{eq:explicit-C-nf}
\end{equation}
The series converges because its summands decay stretched-exponentially in \(R_{n-1}\).
The annihilation and normal-form identities follow from Proposition~\ref{prop:rooted-vacuum-normal-form}.
\end{proof}

\section{The relative form estimate and bootstrap closure}
\label{sec:form-bound}

\subsection{Vacuum annihilation and the excitation-number relative bound}

A self-adjoint term \(B_X\) supported in \(X\) that annihilates the local vacuum vanishes on the subspace with no excitations in \(X\), and therefore satisfies \(-\|B_X\|N_X\le B_X\le\|B_X\|N_X\), where \(N_X\) is the local excitation number.
This comparison with \(N\) follows Hastings \cite[Section~IV, Theorem~4]{Hastings19}; the use of local ground-state projections in relative form bounds is developed in \cite[Sections~4--7]{BH11}.

\begin{lemma}
\label{lem:one-local-term}
Let \(X\Subset\Gamma\) and let \(B_X=B_X^*\in\A_X^\st\) satisfy
\begin{equation*}
 B_XP_X^\vac
 =P_X^\vac B_X=0.
\end{equation*}
Then
\begin{equation}
 -\|B_X\|(\1-P_X^\vac)
 \le B_X\le
 \|B_X\|(\1-P_X^\vac),
 \qquad
 \1-P_X^\vac\le N_X.
 \label{eq:one-term-bound}
\end{equation}
\end{lemma}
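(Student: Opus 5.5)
The plan is to prove both inequalities in \eqref{eq:one-term-bound} by elementary operator arguments inside the local algebra \(\A_X^\st\). This algebra is a finite-dimensional matrix algebra, so everything reduces to matrix inequalities. Alternatively, one can work in any faithful representation, such as \(\pi_\at\), since positivity in a C\(^*\)-algebra is preserved and reflected by faithful representations.

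For the first inequality, set \(Q:=\1-P_X^\vac\), which is an orthogonal projection because \(P_X^\vac\) is. The hypothesis \(B_XP_X^\vac=P_X^\vac B_X=0\) gives \(B_X=QB_XQ\). Since \(B_X\) is self-adjoint, \(-\|B_X\|\1\le B_X\le\|B_X\|\1\). Compressing by \(Q\), that is, applying the positive map \(A\mapsto QAQ\), yields
\[
 -\|B_X\|Q\le QB_XQ=B_X\le\|B_X\|Q.
\]
This is exactly the first claim.

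For the second inequality \(\1-P_X^\vac\le N_X\), I use the product structure. The projections \(P_x\), \(x\in X\), commute pairwise, because they are even and supported on distinct sites, so graded locality gives commutation. The same holds for the \(n_x\), and each \(P_x\) commutes with \(n_y\) for \(y\ne x\). In the quadratic case \(P_x=\prod_j(\1-c_{x,j}^*c_{x,j})\) is a function of the commuting occupation numbers at \(x\), and \(n_x\) is their sum. In the product-state case, \(n_x=\1-P_x\).

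Setting \(Q_x:=\1-P_x\), we have \(P_X^\vac=\prod_{x\in X}(\1-Q_x)\) for commuting projections \(Q_x\). The elementary union bound
\[
 \1-\prod_{x}(\1-Q_x)\le\sum_x Q_x
\]
holds for commuting projections. It can be checked in a joint eigenbasis, where the left side is the indicator that some \(Q_x\) equals \(1\) and the right side counts how many do; alternatively it follows by induction on \(|X|\). Combining this with the standing assumption \(n_x\ge\1-P_x\) for each \(x\) gives \(\1-P_X^\vac\le\sum_{x\in X}n_x=N_X\).

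The only point requiring care is the commutativity of the local projections across sites in the fermionic graded setting. It follows from evenness: even elements of \(\A_{\{x\}}^\st\) and \(\A_{\{y\}}^\st\) commute for \(x\ne y\). Once that is established, the rest is routine finite-dimensional spectral calculus.
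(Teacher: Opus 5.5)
Your proposal is correct and follows the paper's proof. You use the same compression \(B_X=(\1-P_X^\vac)B_X(\1-P_X^\vac)\) for the two-sided bound, and the same union bound \(\1-P_X^\vac\le\sum_{x\in X}(\1-P_x)\) for commuting even onsite projections, combined with \(n_x\ge\1-P_x\).
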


\begin{proof}
The annihilation identities give \(B_X =(\1-P_X^\vac)B_X(\1-P_X^\vac)\), which proves the first two-sided operator inequality.
The even projections \(P_x\), \(x\in X\), commute, and hence
\[
 \1-P_X^\vac\le\sum_{x\in X}(\1-P_x)\le\sum_{x\in X}n_x=N_X.\qedhere
\]
\end{proof}

Summing this one-term estimate by sites converts a rooted bound on the local terms into a relative form bound by the total excitation number.
In the following form estimates, local observables act on \(\cH_\at\) through \(\pi_\at\); we suppress this representation in the notation.

\begin{lemma}
\label{lem:summed-form-bound}
Let \((B_j)_{j\in\cL}\) be a countable family of self-adjoint terms, with \(B_j\) supported in \(X_j\) and satisfying the hypotheses of Lemma~\ref{lem:one-local-term}.
If
\begin{equation}
 f:=\sup_{x\in\Gamma}\sum_{j:\,x\in X_j}
 \|B_j\|<\infty,
 \label{eq:f-def}
\end{equation}
then the series
\begin{equation*}
 \frakb(\varphi,\psi)
 :=\sum_{j\in\cL}
 \langle\varphi,B_j\psi\rangle
\end{equation*}
converges absolutely for \(\varphi,\psi\in\operatorname{Dom}(N^{1/2})\), defines a symmetric form continuous for the graph norm of \(N^{1/2}\), and satisfies
\begin{equation}
 |\frakb(\psi,\psi)|
 \le f\|N^{1/2}\psi\|^2.
 \label{eq:summed-N-bound}
\end{equation}
\end{lemma}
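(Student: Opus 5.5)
The plan is to work term by term with Lemma~\ref{lem:one-local-term} and then regroup by sites. The regrouping is justified by Tonelli, since all the quantities being rearranged are nonnegative.

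\textbf{Step 1: a Cauchy--Schwarz bound for one term.} Fix \(j\) and set \(Q_j:=\1-P_{X_j}^\vac\). The annihilation hypotheses give \(B_j=Q_jB_jQ_j\). Hence, for all \(\varphi,\psi\in\cH_\at\),
\[
 |\langle\varphi,B_j\psi\rangle|
 =|\langle Q_j\varphi,B_jQ_j\psi\rangle|
 \le\|B_j\|\,\|Q_j\varphi\|\,\|Q_j\psi\|
 \le\tfrac12\|B_j\|\bigl(\langle\varphi,Q_j\varphi\rangle+\langle\psi,Q_j\psi\rangle\bigr).
\]
For \(\psi\in\operatorname{Dom}(N^{1/2})\), Lemma~\ref{lem:one-local-term} gives \(\langle\psi,Q_j\psi\rangle\le\sum_{x\in X_j}\langle\psi,\pi_\at(n_x)\psi\rangle\). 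Each \(n_x\) is a bounded local operator, so these expectations are finite.

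\textbf{Step 2: summing over \(j\).} Sum the bound from Step~1 over \(j\) and exchange the order of summation, which is allowed because the terms are nonnegative. This gives
\[
 \sum_j\|B_j\|\langle\psi,Q_j\psi\rangle
 \le\sum_{x\in\Gamma}\Big(\sum_{j:\,x\in X_j}\|B_j\|\Big)\langle\psi,\pi_\at(n_x)\psi\rangle
 \le f\sum_x\langle\psi,\pi_\at(n_x)\psi\rangle
 =f\|N^{1/2}\psi\|^2.
\]
The last equality uses the definition of \(N\) as an increasing sum of nonnegative forms: for \(\psi\) in its form domain, \(\|N^{1/2}\psi\|^2=\sum_x\langle\psi,\pi_\at(n_x)\psi\rangle\), by monotone convergence for forms. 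I expect this identification to be the only point that needs care. I would justify it from the definition of \(N\) as the closed form given by the supremum of the finite partial sums, which is the standard monotone convergence theorem for nonnegative closed forms, as in Kato.

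Combining Steps~1 and~2, the series \(\sum_j|\langle\varphi,B_j\psi\rangle|\) is bounded by \(\tfrac f2(\|N^{1/2}\varphi\|^2+\|N^{1/2}\psi\|^2)\). So \(\frakb\) converges absolutely. Taking \(\varphi=\psi\) yields \eqref{eq:summed-N-bound}.

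\textbf{Step 3: structure of \(\frakb\).} Symmetry follows termwise from \(B_j=B_j^*\). For graph-norm continuity, use the product form of the Cauchy--Schwarz bound instead of the AM--GM form. Summing \(\|B_j\|^{1/2}\|Q_j\varphi\|\cdot\|B_j\|^{1/2}\|Q_j\psi\|\) over \(j\) and applying Cauchy--Schwarz in \(\ell^2\) gives
\[
 |\frakb(\varphi,\psi)|\le f\,\|N^{1/2}\varphi\|\,\|N^{1/2}\psi\|.
\]
This shows that \(\frakb\) is a bounded sesquilinear form on \(\operatorname{Dom}(N^{1/2})\) with its graph norm. In particular it is continuous, and sesquilinearity passes through the absolutely convergent series.
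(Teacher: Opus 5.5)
Your proposal is correct and follows essentially the same route as the paper. Both write \(B_j=Q_jB_jQ_j\), bound each term by \(\|B_j\|\,\|Q_j\varphi\|\,\|Q_j\psi\|\), regroup \(\sum_j\|B_j\|\langle\chi,Q_j\chi\rangle\) by sites using \(Q_j\le N_{X_j}\) and the rooted bound \(f\), and finish with Cauchy--Schwarz in \(\ell^2\) to get \(|\frakb(\varphi,\psi)|\le f\|N^{1/2}\varphi\|\,\|N^{1/2}\psi\|\); the only cosmetic difference is that the paper does the regrouping as a form inequality over finite subfamilies \(\cL_0\subset\cL\) and then lets \(\cL_0\) increase, which is the same monotone-convergence step you handle via Tonelli and the identification \(\|N^{1/2}\psi\|^2=\sum_x\langle\psi,\pi_\at(n_x)\psi\rangle\).
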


\begin{proof}
Put \(Q_j^{\rm exc}:=\1-P_{X_j}^\vac\).
Lemma~\ref{lem:one-local-term} gives \(B_j=Q_j^{\rm exc}B_jQ_j^{\rm exc}\) and \(Q_j^{\rm exc}\le N_{X_j}\).
For every finite \(\cL_0\subset\cL\),
\begin{equation*}
 \sum_{j\in\cL_0}\|B_j\| Q_j^{\rm exc} \le\sum_{j\in\cL_0}\|B_j\|N_{X_j}
 =\sum_{x\in\Gamma} \Bigg(\sum_{\substack{j\in\cL_0\\x\in X_j}}
 \|B_j\|\Bigg)n_x \le fN
\end{equation*}
as positive quadratic forms on \(\operatorname{Dom}(N^{1/2})\).
Letting \(\cL_0\) increase to \(\cL\) bounds the nonnegative series \(\sum_j\|B_j\|\langle\chi,Q_j^{\rm exc}\chi\rangle\) by \(f\|N^{1/2}\chi\|^2\), for \(\chi=\varphi,\psi\).
Cauchy--Schwarz for the scalar series therefore gives
\begin{align*}
 \sum_{j\in\cL}
 |\langle\varphi,B_j\psi\rangle|
 &\le\sum_j\|B_j\|\,
 \|Q_j^{\rm exc}\varphi\|\,
 \|Q_j^{\rm exc}\psi\|\\
 &\le
 \left(\sum_j\|B_j\|
 \langle\varphi,Q_j^{\rm exc}\varphi\rangle\right)^{1/2}
 \left(\sum_j\|B_j\|
 \langle\psi,Q_j^{\rm exc}\psi\rangle\right)^{1/2}\\
 &\le f\|N^{1/2}\varphi\|\,\|N^{1/2}\psi\|.
\end{align*}
This proves absolute convergence, continuity, and symmetry of the form.
Taking \(\varphi=\psi\) proves \eqref{eq:summed-N-bound}.
\end{proof}

\subsection{Gap improvement and closure of the bootstrap}

The following proposition closes the bootstrap by the continuation argument of \cite[Theorem~2.7]{NSY24}.

\begin{proposition}
\label{prop:abstract-bootstrap}
Use the finite-region Hamiltonians, bootstrap intervals, and spectral flow constructed in Section~\ref{sec:local-cutoffs}, and the moving-frame Hamiltonian \eqref{eq:K-nf}.
Fix \(M>0\), \(C_{\rm root}\ge1\), and \(J_M:=[-g/M,g/M]\).
Suppose there are bounded self-adjoint operators \(R_{\Lambda,\lambda,j}\in\pi_\at(\A_{X_j}^\st)\), with finite supports \(X_j\), such that, on \(I_\Lambda\cap J_M\), they annihilate their local vacua on both sides and satisfy
\[
 K_{\Lambda,\lambda}^{\nf}=K_0+\lambda\sum_jR_{\Lambda,\lambda,j}
\]
as a quadratic-form identity on \(\operatorname{Dom}(K_0^{1/2})\).
Assume that
\begin{equation}
 f_{\Lambda,\lambda}:=\sup_{x\in\Gamma}
       \sum_{j:\,x\in X_j}\|R_{\Lambda,\lambda,j}\|
 \le C_{\rm root}M
 \qquad(\Lambda\Subset\Gamma,\ \lambda\in J_M).
 \label{eq:abstract-rooted-normal-form}
\end{equation}
Then, on \(I_\Lambda\cap J_M\),
\[
 K_{\Lambda,\lambda}^{\nf}\ge(g-|\lambda|f_{\Lambda,\lambda})N.
\]
Moreover, with \(\lambda_{\rm boot}:=g/(4C_{\rm root}M)\), one has \([-\lambda_{\rm boot},\lambda_{\rm boot}]\subset I_\Lambda\) for every finite \(\Lambda\), and
\begin{equation}
 K_{\Lambda,\lambda}^{\nf}\ge(g-|\lambda|f_{\Lambda,\lambda})N
 \ge(g-C_{\rm root}|\lambda|M)N\ge\frac{3g}{4}N
 \qquad(|\lambda|\le\lambda_{\rm boot}).
 \label{eq:abstract-bootstrap-gap}
\end{equation}
These are quadratic-form inequalities on \(\operatorname{Dom}(K_0^{1/2})\).
In particular,
\[
 K_{\Lambda,\lambda}-E_{\Lambda,\lambda}
 \ge(g-|\lambda|f_{\Lambda,\lambda})(\1-P_{\Lambda,\lambda})
 \qquad(|\lambda|\le\lambda_{\rm boot}).
\]
\end{proposition}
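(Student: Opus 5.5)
The plan has three stages: a form estimate on the bootstrap interval, a continuity argument that forces this interval to contain \([-\lambda_{\rm boot},\lambda_{\rm boot}]\), and a transport of the resulting estimate back to \(K_{\Lambda,\lambda}\).

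\textbf{Form estimate on \(I_\Lambda\cap J_M\).}
Fix \(\lambda\in I_\Lambda\cap J_M\). The terms \(R_{\Lambda,\lambda,j}\) annihilate their local vacua on both sides and have rooted sum \(f_{\Lambda,\lambda}<\infty\). Lemma~\ref{lem:summed-form-bound} therefore gives
\[
 \Bigl|\sum_j\langle\psi,R_{\Lambda,\lambda,j}\psi\rangle\Bigr|\le f_{\Lambda,\lambda}\|N^{1/2}\psi\|^2
\]
for \(\psi\in\operatorname{Dom}(N^{1/2})\supseteq\operatorname{Dom}(K_0^{1/2})\). Combining this with the assumed form identity and with \(K_0\ge gN\) yields
\[
 K_{\Lambda,\lambda}^\nf\ge(g-|\lambda|f_{\Lambda,\lambda})N
\]
on \(\operatorname{Dom}(K_0^{1/2})\), which is the first assertion. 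For \(|\lambda|\le\lambda_{\rm boot}\) we have \(|\lambda|\le g/M\), so \(\lambda\in J_M\), and \eqref{eq:abstract-rooted-normal-form} gives \(|\lambda|f_{\Lambda,\lambda}\le C_{\rm root}M\lambda_{\rm boot}=g/4\). This gives the chain \eqref{eq:abstract-bootstrap-gap}, provided \(\lambda\in I_\Lambda\).

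\textbf{Continuation: \([-\lambda_{\rm boot},\lambda_{\rm boot}]\subset I_\Lambda\).}
This is the main step. Suppose the inclusion fails. Since \(I_\Lambda\) is an open interval containing \(0\), it then has an endpoint \(\lambda_e\) with \(0<|\lambda_e|\le\lambda_{\rm boot}\); say \(\lambda_e>0\). On \([0,\lambda_e)\) the operator \(K_{\Lambda,\lambda}^\nf\) is unitarily equivalent to \(K_{\Lambda,\lambda}-E_{\Lambda,\lambda}\). It satisfies \(K_{\Lambda,\lambda}^\nf\ge\tfrac{3g}{4}N\ge\tfrac{3g}{4}(\1-P^\vac)\) and annihilates \(\Omega_\at\). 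Hence the spectrum of \(K_{\Lambda,\lambda}\) is \(\{E_{\Lambda,\lambda}\}\) together with a subset of \([E_{\Lambda,\lambda}+3g/4,\infty)\), with \(E_{\Lambda,\lambda}\) simple.

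Now pass to \(\lambda\to\lambda_e^-\). The family \(K_{\Lambda,\lambda}=K_0+\lambda V_\Lambda\) is norm-resolvent continuous, because \(V_\Lambda\) is bounded. The eigenvalue \(E_{\Lambda,\lambda}\) is Lipschitz with constant \(\|V_\Lambda\|\), by min-max. Under norm-resolvent convergence the spectrum is upper semicontinuous and the Riesz projection onto the isolated part is continuous (Kato). It follows that at \(\lambda_e\) the bottom of the spectrum is a simple eigenvalue separated from the rest of the spectrum by at least \(3g/4>g/2\). Thus \(\lambda_e\in\mathcal O_\Lambda\). Since \(\mathcal O_\Lambda\) is open and \(I_\Lambda\) is its connected component containing \(0\), we get \(\lambda_e\in I_\Lambda\), contradicting that \(\lambda_e\) is an endpoint of the open interval \(I_\Lambda\).

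The delicate point in this limit is that the rank-one projections \(P_{\Lambda,\lambda}\) must not collapse into the continuum. I would handle this by choosing a Riesz contour of radius \(g/4\) around \(E_{\Lambda,\lambda_e-\epsilon}\) with \(\epsilon\) small, so that the same contour encloses exactly one eigenvalue at \(\lambda_e\). The argument for a negative endpoint is symmetric.

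\textbf{Final inequality.}
Once \([-\lambda_{\rm boot},\lambda_{\rm boot}]\subset I_\Lambda\) is established, the first two stages give \eqref{eq:abstract-bootstrap-gap} on the whole interval. To obtain the last claim, conjugate the bound \(K_{\Lambda,\lambda}^\nf\ge(g-|\lambda|f_{\Lambda,\lambda})(\1-P^\vac)\), which follows from \(N\ge\1-P^\vac\), by \(U_{\Lambda,\lambda}\). Here I use \eqref{eq:projection-transport} and the invariance \(U_{\Lambda,\lambda}\operatorname{Dom}(K_0^{1/2})=\operatorname{Dom}(K_0^{1/2})\) from Proposition~\ref{prop:centered-derivative}. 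The result is
\[
 K_{\Lambda,\lambda}-E_{\Lambda,\lambda}\ge(g-|\lambda|f_{\Lambda,\lambda})(\1-P_{\Lambda,\lambda})
\]
as forms, for \(|\lambda|\le\lambda_{\rm boot}\).
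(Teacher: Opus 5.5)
Your proposal is correct and follows the paper's proof essentially step for step. Like the paper, you get the form bound from Lemma~\ref{lem:summed-form-bound} together with \(K_0\ge gN\), run a continuation argument at an endpoint of \(I_\Lambda\) using bounded perturbation theory for \(K_0+\lambda V_\Lambda\), and conjugate by \(U_{\Lambda,\lambda}\) using \eqref{eq:projection-transport} for the final inequality. The only difference is cosmetic: the paper fixes a point \(\lambda_0\) at distance less than \(g/(8\|V_\Lambda\|)\) from the endpoint and obtains a gap \(3g/4-2\varepsilon>g/2\) there directly, whereas you pass to the limit with a fixed Riesz contour. Both routes suffice.
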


\begin{proof}
Restrict the form estimate of Lemma~\ref{lem:summed-form-bound} to \(\operatorname{Dom}(K_0^{1/2})\subseteq\operatorname{Dom}(N^{1/2})\).
Together with \(K_0\ge gN\), it gives
\[
 K_{\Lambda,\lambda}^{\nf}
 \ge K_0-|\lambda|f_{\Lambda,\lambda}N
 \ge(g-|\lambda|f_{\Lambda,\lambda})N
\]
on \(I_\Lambda\cap J_M\).
Since \(C_{\rm root}\ge1\), the interval \([-\lambda_{\rm boot},\lambda_{\rm boot}]\) lies in \(J_M\), and on its intersection with \(I_\Lambda\) the coefficient is at least \(3g/4\).
By \eqref{eq:K-nf}, \eqref{eq:projection-transport}, and \(N\ge\1-P^\vac\), the Hamiltonian \(K_{\Lambda,\lambda}\) therefore has gap at least \(3g/4\) throughout this intersection.

If \(\lambda_\Lambda^+:=\sup I_\Lambda\le\lambda_{\rm boot}\), choose \(\lambda_0\in[0,\lambda_\Lambda^+)\) with \(\varepsilon:=(\lambda_\Lambda^+-\lambda_0)\|V_\Lambda\|<g/8\).
Bounded perturbation theory \cite[Chapter~IV, Section~3]{Kato} gives a simple ground-state eigenvalue of \(K_{\Lambda,\lambda_\Lambda^+}\) with gap at least \(3g/4-2\varepsilon>g/2\).
Hence \(\lambda_\Lambda^+\in\mathcal O_\Lambda\), contradicting maximality of \(I_\Lambda\).
The left endpoint is treated in the same way, proving the interval inclusion and \eqref{eq:abstract-bootstrap-gap}.

Conjugation by \(U_{\Lambda,\lambda}\), together with \(N\ge\1-P^\vac\) and \eqref{eq:projection-transport}, gives the last inequality.
\end{proof}

For the bootstrap in the exponential case, use the filter exponent \(1/2\) in Subsection~\ref{sec:compatible-filter-pair} and choose \(C_{\rm nf}\) as in \eqref{eq:explicit-C-nf} for this filter:
\begin{align}
 C_*&:=\left.C_{\rm nf}\right|_{\theta=1/2},
 \label{eq:physical-stability-constant}\\
 \lambda_*&:=\frac{g}{4C_*
                         \|\cV\|_{\mu_{\cV},1;2}}.
 \label{eq:lambda-star-bootstrap}
\end{align}
The rooted bound from Proposition~\ref{prop:uniform-normal-form} applies uniformly on the interval \(J_{\rm loc}\) in \eqref{eq:locality-coupling-ball}.
Since \(C_*\ge1\),
\begin{equation}
 I_*=[-\lambda_*,\lambda_*]\subset J_{\rm loc}.
 \label{eq:target-inside-locality-domain}
\end{equation}

\begin{corollary}
\label{cor:cutoff-gap-form}
For every finite \(\Lambda\), one has
\begin{equation}
 I_*\subset I_\Lambda.
 \label{eq:uniform-bootstrap-interval}
\end{equation}
For \(\lambda\in I_\Lambda\cap J_{\rm loc}\), the following form bound holds on \(\operatorname{Dom}(K_0^{1/2})\):
\begin{equation}
 K_{\Lambda,\lambda}^\nf
 \ge\left(g-C_*|\lambda|\|\cV\|_{\mu_{\cV},1;2}\right)N.
 \label{eq:gap-improvement-number}
\end{equation}
For \(\lambda\in I_*\), the cutoff Hamiltonian satisfies
\begin{equation}
 K_{\Lambda,\lambda}-E_{\Lambda,\lambda}
 \ge\left(g-C_*|\lambda|\|\cV\|_{\mu_{\cV},1;2}\right)
       (\1-P_{\Lambda,\lambda})
 \label{eq:gap-improvement}
\end{equation}
as a quadratic-form inequality on \(\operatorname{Dom}(K_0^{1/2})\).
In particular, \(K_{\Lambda,\lambda}\) has a simple ground state eigenvalue and gap at least \(3g/4\) on \(I_*\).
\end{corollary}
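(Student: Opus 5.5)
The plan is to apply Proposition~\ref{prop:abstract-bootstrap} with the normal-form terms supplied by Proposition~\ref{prop:uniform-normal-form}, taking the filter exponent \(\theta=1/2\). Concretely, index the local terms by \(j=(Z,n)\) with \(Z\subset\Lambda\) nonempty and \(n\ge0\). Set \(X_j:=Z^{(R_n)}\) and \(R_{\Lambda,\lambda,j}:=\pi_\at(R_{\Lambda,\lambda}(Z,n))\). Choose \(M:=\|\cV\|_{\mu_{\cV},1;2}\), so that \(J_M=[-g/M,g/M]\) coincides with \(J_{\rm loc}\) from \eqref{eq:locality-coupling-ball}, and choose \(C_{\rm root}:=C_*=C_{\rm nf}|_{\theta=1/2}\ge1\).

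The hypotheses of the proposition then need to be checked on \(I_\Lambda\cap J_M\):
\begin{enumerate}[label=(\roman*),leftmargin=*]
\item Proposition~\ref{prop:uniform-normal-form} gives two-sided annihilation of the local vacua \(P^\vac_{Z^{(R_n)}}\) on this set.
\item It also gives the form identity \eqref{eq:normal-form-identity} on \(\operatorname{Dom}(K_0^{1/2})\).
\item The rooted bound \eqref{eq:normal-form-bound} gives \(f_{\Lambda,\lambda}\le C_*M\) for all \(\lambda\in J_M\), uniformly in \(\Lambda\). This is exactly \eqref{eq:abstract-rooted-normal-form}.
\end{enumerate}
We also note that \(\lambda_{\rm boot}=g/(4C_*M)=\lambda_*\). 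Proposition~\ref{prop:abstract-bootstrap} therefore yields \(I_*\subset I_\Lambda\), which is \eqref{eq:uniform-bootstrap-interval}.

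For \eqref{eq:gap-improvement-number}, take \(\lambda\in I_\Lambda\cap J_{\rm loc}\). The first conclusion of Proposition~\ref{prop:abstract-bootstrap} gives \(K^\nf_{\Lambda,\lambda}\ge(g-|\lambda|f_{\Lambda,\lambda})N\). Since \(N\ge0\) as a form, \eqref{eq:normal-form-bound} lets us replace \(f_{\Lambda,\lambda}\) by \(C_*\|\cV\|_{\mu_{\cV},1;2}\). The coefficient may be negative when \(|\lambda|\) is large within \(J_{\rm loc}\), but the inequality remains valid in that case too. The reason is that the bound \(\lambda\sum_j R_j\ge-|\lambda|f N\) from Lemma~\ref{lem:summed-form-bound} does not depend on the sign of the coefficient.

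For \eqref{eq:gap-improvement}, take \(\lambda\in I_*\subset I_\Lambda\cap J_{\rm loc}\). Conjugate the number bound by \(U_{\Lambda,\lambda}\), using the definition \eqref{eq:K-nf}, the transport identity \eqref{eq:projection-transport}, and \(N\ge\1-P^\vac\). The coefficient is now nonnegative, since it is at least \(3g/4\), so this lower bound transfers. Proposition~\ref{prop:centered-derivative} shows that \(U_{\Lambda,\lambda}\) preserves \(\operatorname{Dom}(K_0^{1/2})\), so the form inequality holds on that domain. Simplicity of the ground state and the gap of at least \(3g/4\) then follow from \(C_*|\lambda|\|\cV\|_{\mu_{\cV},1;2}\le g/4\) on \(I_*\).

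The main point needing care is the bookkeeping. One must check that \(J_M=J_{\rm loc}\) exactly and that \(C_*\) is the constant from the \(\theta=1/2\) filter used to define \(I_\Lambda\). Once that is confirmed, the argument is a direct application of the two propositions.
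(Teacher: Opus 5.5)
Your proposal is correct and is essentially the paper's proof: it applies Proposition~\ref{prop:abstract-bootstrap} to the shells of Proposition~\ref{prop:uniform-normal-form} with $M=\|\cV\|_{\mu_{\cV},1;2}$ and $C_{\rm root}=C_*$, so that $J_M=J_{\rm loc}$ and $\lambda_{\rm boot}=\lambda_*$, and it spells out the monotone replacement of $f_{\Lambda,\lambda}$ by $C_*\|\cV\|_{\mu_{\cV},1;2}$ and the conjugation by $U_{\Lambda,\lambda}$, which the paper leaves inside that proposition. One wording point: $I_\Lambda$ is defined spectrally and does not depend on the filter; the exponent $1/2$ matters because it fixes $U_{\Lambda,\lambda}$, hence $K^\nf_{\Lambda,\lambda}$, its normal form, and $C_*$.
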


\begin{proof}
Proposition~\ref{prop:uniform-normal-form} gives \(f_{\Lambda,\lambda}\le C_*\|\cV\|_{\mu_{\cV},1;2}\).
Apply Proposition~\ref{prop:abstract-bootstrap} with \(M=\|\cV\|_{\mu_{\cV},1;2}\) and \(C_{\rm root}=C_*\), so that \(J_M=J_{\rm loc}\) and \(\lambda_{\rm boot}=\lambda_*\).
The proposition gives the interval inclusion and both form bounds.
\end{proof}

\section{The physical ground state and removal of the perturbation cutoff}
\label{sec:limit}

\subsection{The gap inequality for the physical state}

We now derive a gap inequality for the physical cutoff state \(\omega_{\Lambda,\lambda}\) from the spectral gap of \(K_{\Lambda,\lambda}\), using the covariance identity \eqref{eq:algebra-GNS-dynamics} and the state factorization \eqref{eq:cutoff-state-factorization}.
Figure~\ref{fig:physical-flow-descent} summarizes the maps and states in the two coordinate systems.

\begin{proposition}
\label{prop:decay-independent-descent}
Suppose \(\lambda\in I_\Lambda\) and \(K_{\Lambda,\lambda}-E_{\Lambda,\lambda}\ge g'(\1-P_{\Lambda,\lambda})\) for some \(g'>0\).
Then
\begin{equation}
 \operatorname{Re}\omega_{\Lambda,\lambda}
 (A^*\overline\delta_{\Phi_0+\lambda\cV_\Lambda}(A))
 \ge g'\bigl(\omega_{\Lambda,\lambda}(A^*A)
                     -|\omega_{\Lambda,\lambda}(A)|^2\bigr)
 \label{eq:physical-cutoff-gap}
\end{equation}
for every \(A\in\operatorname{Dom} (\overline\delta_{\Phi_0+\lambda\cV_\Lambda})\).
\end{proposition}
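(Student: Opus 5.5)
I will prove the inequality first for local observables and then extend it to the closed domain by a graph-norm approximation. Fix \(\lambda\in I_\Lambda\) and \(A\in\A_{\loc}\), and set \(B:=\sigma_\st(A\widehat\otimes\1)\in\A^\st\). Put
\[
 \Psi:=U_{\Lambda,\lambda}\Omega_\at,\qquad \xi:=\pi_\at(B)\Psi .
\]
By \eqref{eq:projection-transport}, \(\Psi\) spans \(\ran P_{\Lambda,\lambda}\), so \(K_{\Lambda,\lambda}\Psi=E_{\Lambda,\lambda}\Psi\). By the state factorization \eqref{eq:cutoff-state-factorization},
\[
 \langle\xi,\xi\rangle=\omega_{\Lambda,\lambda}(A^*A),
 \qquad
 \langle\Psi,\xi\rangle=\omega_{\Lambda,\lambda}(A).
\]
Hence \(\|\xi\|^2-|\langle\Psi,\xi\rangle|^2=\langle\xi,(\1-P_{\Lambda,\lambda})\xi\rangle\) is exactly the physical variance on the right of \eqref{eq:physical-cutoff-gap}.

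The key step is to identify the left side of \eqref{eq:physical-cutoff-gap} with \(\langle\xi,(K_{\Lambda,\lambda}-E_{\Lambda,\lambda})\xi\rangle\). The interaction \(\Phi_0+\lambda\cV_\Lambda\), tensored with the auxiliary interaction, generates \(\tau^{\ph,\Lambda,\lambda}\widehat\otimes\tau^{\rm aux}\). By \eqref{eq:atomic-cutoff-dynamics} and \eqref{eq:algebra-GNS-dynamics}, this dynamics is implemented after \(\sigma_\st\) by \(\E^{\I tK_{\Lambda,\lambda}}\). Since \(A\in\A_{\loc}\subset\operatorname{Dom}(\overline\delta_{\Phi_0+\lambda\cV_\Lambda})\), the map \(t\mapsto\tau_t(A\widehat\otimes\1)\) is norm-differentiable at \(0\), with derivative \(\I\,\delta_{\Phi_0+\lambda\cV_\Lambda}(A)\widehat\otimes\1\); the auxiliary part acts trivially on \(\1\). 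Consequently \(t\mapsto\E^{\I tK_{\Lambda,\lambda}}\pi_\at(B)\E^{-\I tK_{\Lambda,\lambda}}\) is norm-differentiable. Applying it to the eigenvector \(\Psi\) gives
\[
 \E^{\I tK_{\Lambda,\lambda}}\xi=\E^{\I tE_{\Lambda,\lambda}}\,\pi_\at\bigl(\tau_t(B)\bigr)\Psi .
\]
The right side is differentiable in \(t\), so Stone's theorem yields \(\xi\in\operatorname{Dom}(K_{\Lambda,\lambda})=\operatorname{Dom}(K_0)\) and
\[
 (K_{\Lambda,\lambda}-E_{\Lambda,\lambda})\xi
 =\pi_\at\sigma_\st\bigl(\delta(A)\widehat\otimes\1\bigr)\Psi,
\]
where \(\delta:=\delta_{\Phi_0+\lambda\cV_\Lambda}\) and the sign convention \(\overline\delta\) versus \(K\) is read off from \eqref{eq:algebra-GNS-dynamics}. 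Taking the inner product with \(\xi\) and using \eqref{eq:cutoff-state-factorization} once more gives
\[
 \langle\xi,(K_{\Lambda,\lambda}-E_{\Lambda,\lambda})\xi\rangle
 =\omega_{\Lambda,\lambda}\bigl(A^*\delta(A)\bigr).
\]
The left side is real, so this quantity equals its real part. The hypothesis \(K_{\Lambda,\lambda}-E_{\Lambda,\lambda}\ge g'(\1-P_{\Lambda,\lambda})\), applied to \(\xi\in\operatorname{Dom}(K_{\Lambda,\lambda})\), then gives \eqref{eq:physical-cutoff-gap} for local \(A\). This domain step is where I expect the care to be needed: one must check that the graded tensor product with \(\1\) commutes with the derivation, and that the intertwining \eqref{eq:free-atomic-derivation-covariance}, perturbed by the bounded term \(\lambda V_\Lambda\), applies to these observables.

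For general \(A\in\operatorname{Dom}(\overline\delta_{\Phi_0+\lambda\cV_\Lambda})\), I will use that \(\A_{\loc}\) is a core by Lemma~\ref{lem:local-form-core}. Choose local \(A_n\) with \(A_n\to A\) and \(\delta(A_n)\to\overline\delta(A)\) in norm. Both sides of \eqref{eq:physical-cutoff-gap} are jointly continuous in \((A,\overline\delta(A))\) for the norm topology, because \(\omega_{\Lambda,\lambda}\) is a state. Passing to the limit in the inequality for \(A_n\) gives the claim for \(A\).
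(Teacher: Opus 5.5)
Your proof is correct and follows the paper's argument. Via \eqref{eq:atomic-cutoff-dynamics}--\eqref{eq:algebra-GNS-dynamics} you show \(\xi\in\operatorname{Dom}(K_{\Lambda,\lambda})\) with \((K_{\Lambda,\lambda}-E_{\Lambda,\lambda})\xi=\pi_\at\sigma_\st(\delta(A)\widehat\otimes\1)\Psi\), and then read off both sides of the gap inequality through \eqref{eq:cutoff-state-factorization}. The only difference is that the paper runs this computation directly for every \(A\in\operatorname{Dom}(\overline\delta_{\Phi_0+\lambda\cV_\Lambda})\), since the closed derivation is contained in the generator of the dynamics, so your separate local-then-closure approximation step is valid but not needed.
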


\begin{proof}
For \(A\in\operatorname{Dom}(\overline\delta_{\Phi_0+\lambda\cV_\Lambda})\), set \(B:=\pi_\at\sigma_\st(A\widehat\otimes\1)\).
By covariance \eqref{eq:atomic-cutoff-dynamics}--\eqref{eq:algebra-GNS-dynamics}, its orbit under \(K_{\Lambda,\lambda}\) is norm differentiable, with
\[
 [K_{\Lambda,\lambda},B]
 =\pi_\at\sigma_\st\bigl(
       \overline\delta_{\Phi_0+\lambda\cV_\Lambda}(A)
                              \widehat\otimes\1\bigr).
\]
Thus \(B\in C^1(K_{\Lambda,\lambda})\), so \(B\) maps \(\operatorname{Dom}(K_{\Lambda,\lambda})\) into itself.
Apply the spectral inequality to \((B-\omega_{\Lambda,\lambda}(A))U_{\Lambda,\lambda}\Omega_\at\), which belongs to that domain and is orthogonal to the ground vector by \eqref{eq:cutoff-state-factorization}.
Using \eqref{eq:cutoff-state-factorization} and the preceding commutator identity, we write this spectral inequality as \eqref{eq:physical-cutoff-gap}, as in \cite[(3.50)--(3.52)]{NSY24}.
\end{proof}

\subsection{Removing the cutoff from the spectral flow}

On \(I_*\), the physical cutoff state is the same for every filter constructed in Subsection~\ref{sec:compatible-filter-pair}.
The unitaries \(U_{\Lambda,\lambda}\) constructed with different filters all satisfy \eqref{eq:projection-transport}, by Proposition~\ref{prop:bootstrap-flow}.
Since \(P_{\Lambda,\lambda}\) has rank one, their vectors \(U_{\Lambda,\lambda}\Omega_\at\) differ only by a phase and define the same state in product-vacuum coordinates.
Equation~\eqref{eq:cutoff-state-factorization} shows that the corresponding physical states \(\omega_0\circ\beta_{\Lambda,0,\lambda}\) are also independent of the filter.
Indicating the filter exponent by a superscript, we obtain
\begin{equation}
 \omega_0\circ\beta_{\Lambda,0,\lambda}^{(\theta)}
 =\omega_0\circ\beta_{\Lambda,0,\lambda}^{(1/2)}
 =\omega_{\Lambda,\lambda},
 \qquad \lambda\in I_*,\quad 0<\theta<1.
 \label{eq:filter-independent-cutoff-state}
\end{equation}

Fix any \(0<\theta<1\) and use its filter \(W\) from Subsection~\ref{sec:compatible-filter-pair} in the following construction.
The interval \(I_*\) remains the one fixed by the bootstrap with exponent \(1/2\).
The interaction \(\eta_{\lambda}\) is defined by \eqref{eq:physical-generator-regrouping}.
Its cocycle is obtained as the limit of the physical cutoff flows.

\begin{lemma}
\label{lem:physical-generator}
Let \(\eta_{\lambda},\eta_{\Lambda,\lambda}\) be the interactions of Lemma~\ref{lem:inverse-filter-interaction}, with the rate \(\mu_\eta\) and constant \(C_\eta:=C_{\eta,2}\) from \eqref{eq:kappa-stretched-bound}.
Then
\begin{equation}
 \sup_{\lambda\in I_*}\|\eta_{\lambda}\|_{\mu_\eta,\theta;2},
 \quad
 \sup_{\substack{\Lambda\Subset\Gamma\\\lambda\in I_*}}
          \|\eta_{\Lambda,\lambda}\|_{\mu_\eta,\theta;2}
 \le C_\eta\|\cV\|_{\mu_{\cV},1;2}.
 \label{eq:physical-generator-stretched-bound}
\end{equation}
The path \((\eta_{\lambda})\) generates a unique strongly continuous, parity-preserving cocycle \((\beta_{\lambda,\lambda'})\), satisfying \eqref{eq:main-flow-locality}.
For every exhaustion \(\Lambda_n\nearrow\Gamma\) and observable \(A\),
\begin{equation}
 \sup_{\lambda,\lambda'\in I_*}
 \|\beta_{\Lambda_n,\lambda,\lambda'}(A)-\beta_{\lambda,\lambda'}(A)\|\stackrel{n\to\infty}{\longrightarrow}0.
 \label{eq:physical-flow-cutoff-limit}
\end{equation}
\end{lemma}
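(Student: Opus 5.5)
The bound \eqref{eq:physical-generator-stretched-bound} is immediate: since \(I_*\subset J_{\rm loc}\) by \eqref{eq:target-inside-locality-domain}, I take the case \(k=2\) of \eqref{eq:kappa-stretched-bound} from Lemma~\ref{lem:inverse-filter-interaction}. That lemma also makes \(\lambda\mapsto\eta_\lambda\) and \(\lambda\mapsto\eta_{\Lambda,\lambda}\) componentwise norm-continuous, even, and self-adjoint. These uniform stretched-exponential bounds, with second support moment, are exactly the hypotheses of the nonautonomous existence and uniqueness result, Proposition~\ref{prop:nonautonomous-local-limit-package}. Applying it on the compact interval \(I_*\) gives a unique strongly continuous cocycle \((\beta_{\lambda,\lambda'})\) generated by \(\eta_\lambda\) in the convention \eqref{eq:path-cocycle-convention}. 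The cocycle is parity preserving because every \(\eta_\lambda(Y)\) is even, so the generated derivation commutes with \(\Theta\) and uniqueness forces \(\Theta\circ\beta=\beta\circ\Theta\).

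For the locality bound \eqref{eq:main-flow-locality}, I apply the finite-region approximation theorem \cite[Theorem~2.9 and Remark~2.10]{FermionicLR} to the path \(\eta_u\) over \(u\) between \(\lambda\) and \(\lambda'\), with weight \(v(R)=\E^{\mu_\eta R^\theta}\). Its input is the integrated first-support-moment bound
\[
\int_{I_*}\|\eta_u\|_{\mu_\eta,\theta;1}\,\dd u\le 2\lambda_*C_\eta\|\cV\|_{\mu_{\cV},1;2}=\frac{gC_\eta}{2C_*},
\]
which does not depend on \(\cV\). The theorem then yields \(\inf_{B\in\A_{X^{(R)}}}\|\beta_{\lambda,\lambda'}(A)-B\|\le C|X|\|A\|\E^{-\mu_\beta R^\theta}\) for some \(\mu_\beta\le\mu_\eta\). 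Since automorphisms have operator norm one, this gives \(\|\beta_{\lambda,\lambda'}\|^{\rm ql}_{\mu_\beta,\theta;1}\le C_\beta\), as in Lemma~\ref{lem:uniform-alpha-locality} without the coordinate path.

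For the cutoff limit \eqref{eq:physical-flow-cutoff-limit}, I again use Proposition~\ref{prop:nonautonomous-local-limit-package}. It converts convergence \(\eta_{\Lambda_n,\lambda}\to\eta_\lambda\) in a local weighted norm, uniformly in \(\lambda\in I_*\), together with the uniform bounds already established, into uniform convergence of the cocycles on local observables. A density and \(\varepsilon/3\) argument, using that all the maps are isometric, then extends this to all \(A\in\A\).

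The main obstacle is the generator convergence itself. For fixed \(Y\), each summand \(\Delta_{X,n}^{\rm tr}\cI^\ph_{\Lambda,\lambda}(\cV_\Lambda(X))\) converges to \(\Delta_{X,n}^{\rm tr}\cI^\ph_\lambda(\cV(X))\) once \(X\subset\Lambda_n\). For that I need \(\tau_t^{\ph,\Lambda_n,\lambda}(A)\to\tau_t^{\ph,\lambda}(A)\) for local \(A\), uniformly for \(\lambda\in I_*\) and \(t\) in compacts. I would get this from a Duhamel formula combined with the Lieb--Robinson bound of Lemma~\ref{lem:weighted-filter-locality}: the difference of the two dynamics is controlled by the terms of \(\cV\) not contained in \(\Lambda_n\), and their contribution near the support of the evolved \(A\) vanishes as \(n\to\infty\). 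The large-\(t\) part is handled by dominated convergence, using \(|W|\in L^1\) with stretched-exponential tails. Summability over \(X\) and \(n\) follows from the uniform domination \eqref{eq:inverse-shell-bound}, and trading a small part of the rate \(\mu_\eta\) turns dominated convergence into convergence in the slightly weaker weighted norm that the limit proposition requires.
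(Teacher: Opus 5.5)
Your proposal is correct and follows essentially the paper's argument. The bound comes from Lemma~\ref{lem:inverse-filter-interaction} on $J_{\rm loc}\supset I_*$; componentwise convergence of $\eta_{\Lambda_n,\lambda}(Y)$ comes from compact-time convergence of the cutoff dynamics together with the tail $\int_{|t|>T}|W(t)|\,\dd t$; Proposition~\ref{prop:nonautonomous-local-limit-package} then gives the cocycle limit, and \cite[Theorem~2.9]{FermionicLR} with the integrated bound $gC_\eta/(2C_*)$ gives locality.

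The only differences are minor. The paper gets the uniform-in-$\lambda$ convergence of $\tau_t^{\ph,\Lambda_n,\lambda}$ directly from the parameter-uniform Proposition~\ref{prop:nonautonomous-local-limit-package}, applied to constant paths in physical time with $\lambda$ as parameter, rather than from a Duhamel/Lieb--Robinson estimate. That proposition also needs only componentwise $L^1$ convergence and a polynomial $\|\cdot\|^{\pol}_{p;0}$ bound, so your local weighted-norm convergence is more than required, and your bounds imply its hypotheses rather than matching them exactly.
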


\begin{proof}
Lemma~\ref{lem:inverse-filter-interaction} gives \eqref{eq:physical-generator-stretched-bound} on the larger interval \(J_{\rm loc}\) and proves that \(\eta_{\Lambda,\lambda}\) generates \(\beta_{\Lambda,\lambda,\lambda'}\).
The common exponential bound and Proposition~\ref{prop:nonautonomous-local-limit-package} give convergence of \(\tau_t^{\ph,\Lambda_n,\lambda}(A)\) to \(\tau_t^{\ph,\lambda}(A)\), uniformly in \(\lambda\in I_*\) and in compact time intervals for every local \(A\).
Here we use its parameter-uniform conclusion for the constant paths in physical time generated by \(\Phi_0+\lambda\cV_{\Lambda_n}\), with the coupling \(\lambda\) as the additional parameter; each fixed interaction component eventually agrees with that of \(\Phi_0+\lambda\cV\), uniformly in \(\lambda\in I_*\).

For each nonempty finite output set \(Y\), the regrouping formulas \eqref{eq:physical-generator-regrouping}--\eqref{eq:physical-cutoff-generator-regrouping} contain only finitely many nonzero shells, whose input supports are subsets of \(Y\).
For each contributing \(X\subseteq Y\), \(\cV_{\Lambda_n}(X)=\cV(X)\) for all sufficiently large \(n\).
Compact-time convergence controls the filtered integrals on \(|t|\le T\), and the remaining contribution is bounded by \(2\|\cV(X)\|\int_{|t|>T}|W(t)|\,\dd t\).
Taking \(n\to\infty\) and then \(T\to\infty\) gives
\[
 \sup_{\lambda\in I_*}
 \|\eta_{\Lambda_n,\lambda}(Y)-\eta_\lambda(Y)\|
 \stackrel{n\to\infty}{\longrightarrow}0.
\]

Lemma~\ref{lem:componentwise-local-continuity}, with the uniform bound \eqref{eq:physical-generator-stretched-bound}, gives uniform convergence in every local weighted seminorm \eqref{eq:local-stretched-seminorm} with rate \(0<\nu<\mu_\eta\).
The same bound controls a polynomial interaction norm at any exponent \(p>\dvol\).
Since \(I_*\) is compact, uniform componentwise convergence implies \eqref{eq:nonautonomous-local-L1-convergence}, so Proposition~\ref{prop:nonautonomous-local-limit-package} proves \eqref{eq:physical-flow-cutoff-limit}.
The first support moment is bounded by the second, so \cite[Theorem~2.9 and Remark~2.10]{FermionicLR}, with spatial weight \(\E^{\mu_\eta R^\theta}\), gives \eqref{eq:main-flow-locality} for arbitrary observables, with uniform constants because
\[
 \int_{I_*}\|\eta_u\|_{\mu_\eta,\theta;2}\,\dd u
 \le2\lambda_*C_\eta\|\cV\|_{\mu_{\cV},1;2}
 =\frac{C_\eta g}{2C_*}. \qedhere
\]
\end{proof}

For a fixed filter \(W\), any local interaction representatives whose terms sum absolutely to \(\mathscr S_{\Lambda,\lambda}^\ph\) generate the same bounded cutoff cocycle \(\beta_{\Lambda,\lambda,\lambda'}\).
Under the uniform bounds and local convergence hypotheses of Proposition~\ref{prop:nonautonomous-local-limit-package}, their limiting cocycles agree as well.
The proof of Remark~\ref{rem:transport-regularity} uses this observation to compare the tracial-shell and lossless interaction representatives.

\subsection{The limiting physical ground state and its gap}

\begin{proposition}
\label{prop:gap-limit}
Let \(\varphi_n,\varphi\) be states and \(\Xi_n,\Xi\) even self-adjoint interactions generating strongly continuous dynamics.
Suppose \(\varphi_n\stackrel{n\to\infty}{\longrightarrow}\varphi\) pointwise on the algebra and \(\delta_{\Xi_n}(A)\stackrel{n\to\infty}{\longrightarrow}\delta_\Xi(A)\) in norm for every local \(A\).
If, for some \(g'>0\),
\[
 \operatorname{Re}\varphi_n(A^*\delta_{\Xi_n}(A))
 \ge g'\bigl(\varphi_n(A^*A)-|\varphi_n(A)|^2\bigr)
 \quad(A\in\A_{\loc}),
\]
then the same inequality holds for \(\varphi,\Xi\) on the local algebra.
If the local algebra is a graph core for the closed derivation generated by \(\Xi\) and its GNS vectors form an operator core as in Lemma~\ref{lem:local-form-core}, the inequality holds on the full closed-derivation domain and gives a ground state with GNS gap at least \(g'\) and a simple ground state eigenvalue.
\end{proposition}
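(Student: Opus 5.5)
The plan is to pass to the limit termwise on the local algebra, and then extend the inequality by the core properties. Fix \(A\in\A_{\loc}\). Since \(A\) and \(A^*A\) are fixed elements and \(\varphi_n\to\varphi\) pointwise, we have \(\varphi_n(A^*A)\to\varphi(A^*A)\) and \(\varphi_n(A)\to\varphi(A)\), so the right-hand side converges to \(g'(\varphi(A^*A)-|\varphi(A)|^2)\). For the left-hand side, split
\[
 \varphi_n(A^*\delta_{\Xi_n}(A))-\varphi(A^*\delta_\Xi(A))
 =\varphi_n\bigl(A^*(\delta_{\Xi_n}(A)-\delta_\Xi(A))\bigr)
 +(\varphi_n-\varphi)(A^*\delta_\Xi(A)).
\]
The first term is bounded by \(\|A\|\|\delta_{\Xi_n}(A)-\delta_\Xi(A)\|\to0\), because states are contractive. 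The second term tends to zero by pointwise convergence applied to the fixed element \(A^*\delta_\Xi(A)\in\A\). Taking real parts and limits gives the inequality for \((\varphi,\Xi)\) on \(\A_{\loc}\).

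To extend the inequality to \(\operatorname{Dom}(\overline\delta_\Xi)\), take \(A\) in the domain. By the graph-core property, choose local \(A_k\to A\) with \(\delta_\Xi(A_k)\to\overline\delta_\Xi(A)\). Both sides of the inequality are jointly continuous in \((A,\delta(A))\) in norm, since \(\varphi(A_k^*\delta(A_k))\) converges by the product estimate above. The inequality therefore passes to the closure.

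For the spectral conclusion, the case \(A=\1\) and the positivity of the right-hand side show that \(\varphi\) satisfies \eqref{eq:ground-form}. Hence \(\varphi\) is a ground state with a positive GNS Hamiltonian \(L\) satisfying \(\pi(\overline\delta_\Xi(A))\Omega=L\pi(A)\Omega\) for \(A\) in the domain. I would then identify the left-hand side with \(\|L^{1/2}\pi(A)\Omega\|^2\) and the right-hand side with \(g'\|(\1-|\Omega\rangle\langle\Omega|)\pi(A)\Omega\|^2\). The form inequality \(L\ge g'(\1-|\Omega\rangle\langle\Omega|)\) then holds on the vectors \(\pi(A)\Omega\), \(A\in\A_{\loc}\). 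By the operator-core statement of Lemma~\ref{lem:local-form-core}, these vectors form a form core for \(L\), so the inequality extends to \(\operatorname{Dom}(L^{1/2})\). Proposition~\ref{prop:form-spectrum} then converts it into \(\ker L=\C\Omega\) and \(\spec(L)\subset\{0\}\cup[g',\infty)\).

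I expect the main obstacle to be this last step, namely checking that the hypotheses of Proposition~\ref{prop:form-spectrum} and Lemma~\ref{lem:local-form-core} apply to the limiting pair \((\varphi,\Xi)\) rather than to a specific model. I would invoke them exactly as stated in the appendix, citing that the hypothesis of the present proposition supplies the core assumption. The limit passage itself is routine.
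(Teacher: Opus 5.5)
Your proposal is correct and follows the paper's proof step for step. You pass to the termwise limit on \(\A_{\loc}\) using state convergence and norm convergence of \(\delta_{\Xi_n}(A)\), extend to \(\operatorname{Dom}(\overline\delta_\Xi)\) by graph-core approximation, obtain the ground-state property from nonnegativity of the variance, and conclude with Proposition~\ref{prop:form-spectrum} and the assumed GNS core. (Your reference to the case \(A=\1\) adds nothing, since both sides vanish there; the ground-state condition comes from \(\varphi(A^*A)\ge|\varphi(A)|^2\) for every \(A\) in the domain.)
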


\begin{proof}
The local gap inequality passes to the limit by state convergence and norm convergence of \(\delta_{\Xi_n}(A)\), as in \cite[proof of Theorem~2.8, (5.26)--(5.28)]{NSY24}.
Graph-norm approximation gives the inequality for every \(A\in\operatorname{Dom}(\overline\delta_\Xi)\).
Its nonnegative right-hand side shows that \(\varphi\) is a ground state.
Proposition~\ref{prop:form-spectrum}, with the assumed GNS core, gives a simple ground-state eigenvalue and gap at least \(g'\).
\end{proof}

For the physical cutoffs, rooted summability gives, for \(A\in\A_X\),
\begin{equation}
 \|(\delta_{\Phi_0+\lambda\cV_{\Lambda_n}}
           -\delta_{\Phi_0+\lambda\cV})(A)\|
 \le2|\lambda|\|A\|
       \sum_{\substack{Z\cap X\ne\varnothing\\Z\not\subset\Lambda_n}}
          \|\cV(Z)\|\stackrel{n\to\infty}{\longrightarrow}0.
 \label{eq:common-cutoff-derivation-limit}
\end{equation}
The omitted terms form the tail of an absolutely summable series.
This estimate applies in both decay regimes.

\begin{proof}[Proof of Theorem~\ref{thm:main}]
\emph{Exponential case.}
Suppose first that \(\theta_{\rm in}=1\).
Corollary~\ref{cor:cutoff-gap-form} shows that the spectral gap of \(K_{\Lambda,\lambda}\) is at least \(g_{\lambda}=g-C_*|\lambda|\|\cV\|_{\mu_{\cV},1;2}\ge3g/4\).
Proposition~\ref{prop:decay-independent-descent} transfers this inequality to \(\omega_{\Lambda,\lambda}\).
Using the filter exponent \(1/2\), Lemma~\ref{lem:physical-generator} gives pointwise convergence of these states to
\[
 \omega_{\lambda}:=\omega_0\circ\beta_{0,\lambda}^{(1/2)}.
\]
Proposition~\ref{prop:gap-limit}, \eqref{eq:common-cutoff-derivation-limit}, and the core properties in Lemma~\ref{lem:local-form-core} give \eqref{eq:main-gap} on the full closed-derivation domain and show that \(\omega_{\lambda}\) is a ground state with GNS gap at least \(g_{\lambda}\).
It is pure because \(\omega_0\) is pure and \(\beta_{0,\lambda}^{(1/2)}\) is an automorphism.

For any \(0<\theta<1\), apply Lemma~\ref{lem:physical-generator} with its corresponding filter on the same interval \(I_*\).
The constant \(C_*\) and the states \(\omega_{\lambda}\) were fixed with the filter of exponent \(1/2\).
Equation~\eqref{eq:filter-independent-cutoff-state} and the cocycle limit \eqref{eq:physical-flow-cutoff-limit} give, for every \(A\in\A\),
\[
 \omega_0(\beta_{0,\lambda}^{(\theta)}(A))
 =\lim_{n\to\infty}\omega_{\Lambda_n,\lambda}(A)=\omega_{\lambda}(A).
\]
Lemma~\ref{lem:inverse-filter-interaction} gives the generator bounds for every support moment at exponent \(\theta\), and Lemma~\ref{lem:physical-generator} gives the flow bound.
These constants depend on the data of Theorem~\ref{thm:main} and on \(\theta\); the interval \(I_*\) and the states \(\omega_{\lambda}\) do not depend on \(\theta\).

\emph{Stretched-exponential case.}
Let \(0<\theta_{\rm in}<1\) and equip \(\Gamma\) with the metric \(\rho(x,y):=d(x,y)^{\theta_{\rm in}}\).
The inequality \((u+v)^{\theta_{\rm in}} \le u^{\theta_{\rm in}}+v^{\theta_{\rm in}}\) proves the triangle inequality for \(\rho\).
Its balls satisfy
\[
 |\ball_R^\rho(x)|
 =|\ball_{R^{1/\theta_{\rm in}}}^d(x)|
 \le C_\Gamma(1+R)^{\dvol/\theta_{\rm in}}.
\]
Thus the preceding case applies with volume-growth exponent \(\dvol/\theta_{\rm in}\).
Write a superscript \(\rho\) on a norm when its distances are measured in this metric.
Then
\[
 \|\cV\|^\rho_{\mu_{\cV},1;2}
 =\|\cV\|_{\mu_{\cV},\theta_{\rm in};2},
 \qquad
 \|h\|_{\mu_h}^{\mathrm{Sch},\rho}
 =\|h\|_{\mu_h,\theta_{\rm in}}^{\rm Sch}=C_h.
\]
The algebra, the quadratic interaction, and its one-particle gap are unchanged.
Applying the exponential case in this metric gives an interval \(I_*\) and states \(\omega_{\lambda}\), both independent of \(\theta\), and the gap bound \eqref{eq:main-gap}, with \(C_*\) depending only on the data of Theorem~\ref{thm:main}.
Its baseline filter exponent \(1/2\) corresponds to spatial exponent \(\theta_{\rm in}/2\) in the original metric.
The closed derivation generated by \(\Phi_\lambda\) is the closure of the same operator on \(\A_{\loc}\) in both metrics, so its domain is unchanged.

For any \(0<\theta<\theta_{\rm in}\), apply the exponential case with locality exponent \(\theta/\theta_{\rm in}\) in the \(\rho\)-metric.
The generator bound has the desired spatial weight, since \((\diam_\rho X)^{\theta/\theta_{\rm in}}=(\diam_d X)^\theta\).
For the flow estimate at radius \(R\ge1\), use the integer radius \(\lfloor R^{\theta_{\rm in}}\rfloor\) in the \(\rho\)-metric.
Its neighborhood is contained in \(X^{(R)}\), and
\[
 \bigl\lfloor R^{\theta_{\rm in}}\bigr\rfloor^{\theta/\theta_{\rm in}}
 \ge2^{-\theta/\theta_{\rm in}}R^\theta.
\]
Multiplying the decay rate obtained in the \(\rho\)-metric by \(2^{-\theta/\theta_{\rm in}}\) gives \eqref{eq:main-flow-locality}; radius zero is covered by isometry.
Purity and state transport follow from the preceding case.
\end{proof}

\begin{proof}[Smoothness under the additional assumptions of Remark~\ref{rem:transport-regularity}]
Assume the cell condition, and choose \(m\in\N\) such that \(\cV(X)=0\) for \(|X|>m\).
Fix \(a>\dvol\), and choose the weight \(\kappa_0\) from \cite[Theorem~2.6]{FermionicLR}.
Then
\[
 \|\cV\|^{\#}_{a;\kappa_0}
 \le \E^{\kappa_0m}
 \left(\sup_{R\ge0}(1+R)^a
                 \E^{-\mu_{\cV}R^{\theta_{\rm in}}}\right)
       \|\cV\|_{\mu_{\cV},\theta_{\rm in};2}
 <\infty.
\]
The quadratic interaction \(\Phi_h\) has body order at most two, so its stretched-exponential bound likewise implies \(\|\Phi_h\|^{\#}_{a;\kappa_0}<\infty\).
To apply \cite[Corollary~6.1(iii)]{FermionicLR}, choose a strict common bound for the energy-interaction norms:
\[
 L>\|\Phi_h\|^{\#}_{a;\kappa_0}
       +\lambda_*\|\cV\|^{\#}_{a;\kappa_0}.
\]
The affine energy path \(\lambda\mapsto\Phi_h+\lambda\cV\) and the constant source \(\lambda\mapsto\cV\) are smooth in this norm, and the energy path has norm strictly less than \(L\) on \(I_*\).
Every filter \(W\) used for the transport in Theorem~\ref{thm:main} has all absolute time moments.
The filtered-flow result \cite[Corollary~6.1(iii)]{FermionicLR} therefore gives a coupling cocycle for which \(\lambda\mapsto\beta_{0,\lambda}(A)\) is norm-\(C^\infty\) for local \(A\).
With the same filter \(W\), the cutoff interaction representatives from \cite[Corollary~6.1(i)]{FermionicLR} sum in norm to \(\mathscr S_{\Lambda,\lambda}^\ph\), so their cocycles are \(\beta_{\Lambda,\lambda,\lambda'}\).
The cocycle obtained from \cite[Corollary~6.1(ii)]{FermionicLR} is therefore the norm limit on local observables of \(\beta_{\Lambda,\lambda,\lambda'}\).
Equation~\eqref{eq:physical-flow-cutoff-limit} gives the same limit for the cocycle in Theorem~\ref{thm:main}.
Thus the two cocycles coincide.
Applying \(\omega_0\) gives the same smoothness for \(\omega_{\lambda}(A)\).
\end{proof}

\section{The polynomial stability proof}
\label{sec:card-extension}

We prove Theorem~\ref{thm:card-stability} in the setting of Section~\ref{sec:local-cutoffs}.
The normal-form construction of Section~\ref{sec:normal-form} and the bootstrap, gap-transfer, and gap-limit arguments of Sections~\ref{sec:form-bound}--\ref{sec:limit} apply once we prove the polynomial filtering, locality, and convergence estimates below.
These estimates depend on the unperturbed interaction and the coordinate change through \(\|\Phi_0\|^{\#}_{p_\Phi;\kappa_0}\) and \(\|\Phi_{k_\st}\|^{\pol}_{p_\Phi;1}\), respectively.
We prove these bounds below for \(\Phi_0=\Phi_h\); Section~\ref{sec:product-reference-proof} applies the same estimates with the identity coordinate map.

We use the local representatives of evolved interactions constructed in \cite[Theorems~2.6--2.7]{FermionicLR}.
For each \(\cV_\Lambda(Z)\), these representatives decompose its filtered evolution into local terms supported in sets \(Y\supseteq Z\), satisfying the weighted bound \eqref{eq:card-filtered-source-total}.
The filtered generator interactions and their cutoff convergence follow from \cite[Corollary~6.1]{FermionicLR}, and the locality bound for their flows from \cite[Theorem~2.9]{FermionicLR}.
The physical perturbation terms are filtered first and then transformed by \(\sigma_{\Lambda,\lambda}\); the support-size norm controls the filtered interaction, and the locality bound for \(\sigma_{\Lambda,\lambda}\) controls its action on each local term.

Use the filters from Subsection~\ref{sec:compatible-filter-pair} with time-decay exponent \(1/2\), denoted by \(w_{\pol},W_{\pol}\).
The filtered maps and cutoff flows below are formed with these filters.
Their Fourier support properties and \(W_{\pol}'=w_{\pol}-\delta_0\) give the projection-transport identity \eqref{eq:projection-transport} and the commutator identity \eqref{eq:filtered-commutator}.
Moreover,
\[
 \int_\R(1+|t|)^m\bigl(|w_{\pol}(t)|+|W_{\pol}(t)|\bigr)\,\dd t
 <\infty\qquad(m\ge0),
\]
so both filters satisfy the moment assumptions of \cite[Theorem~2.7]{FermionicLR}.

\begin{proof}[Proof of Theorem~\ref{thm:card-stability}]
\phantomsection\label{proof:card-stability}
Set \(\Phi_0=\Phi_h\) and \(a:=p_\Phi\), and let \(\kappa_0=\kappa_0(\Dcell)\) be the weight from \cite[Theorem~2.6]{FermionicLR}.
Set
\[
 J_{\rm loc}
 :=\{\lambda\in\R\,|\,|\lambda|\|\cV\|^{\#}_{a;\kappa_0}\le g\}.
\]
Since \(\Phi_0(X)=0\) for \(|X|>2\), the quadratic-interaction bound \eqref{eq:quadratic-interaction-bounds} with \(k=0\) gives
\[
 \|\Phi_0\|^{\#}_{a;\kappa_0}
 \le 2r^2\E^{2\kappa_0}C_h.
\]
The triangle inequality and the definition of \(J_{\rm loc}\) give
\begin{equation}
 \|\Phi_0+\lambda\cV_\Lambda\|^{\#}_{a;\kappa_0},
 \quad
 \|\Phi_0+\lambda\cV\|^{\#}_{a;\kappa_0}
 \le \|\Phi_0\|^{\#}_{a;\kappa_0}+g
 \qquad(\lambda\in J_{\rm loc}).
 \label{eq:polynomial-energy-bound}
\end{equation}
This common energy bound is independent of \(\Lambda\) and \(\cV\).
Proposition~\ref{prop:exponential-stable-locality} bounds \(\|\Phi_{k_\st}\|^{\pol}_{a;1}\) in terms of the one-particle and geometric data.
We apply \cite[Theorem~2.7 and Corollary~6.1]{FermionicLR} with \(\lambda\in J_{\rm loc}\) as an external parameter and these energy interactions constant in physical time.

\paragraph{Physical generators and their cutoff limits.}
For a time-independent energy interaction \(\Phi\), the maps of \cite[Definition~2.5 and Theorem~2.6]{FermionicLR} assign to each nonempty \(X\Subset\Gamma\) and \(A\in\A_X^+\) a local operator family \(\cT^\Phi_{0,t;X}A\) supported on sets containing \(X\), with
\[
 \tau_t^\Phi(A)
 =\sum_{Y\supseteq X}(\cT^\Phi_{0,t;X}A)(Y).
\]
This series converges absolutely in norm, and the induced map on source interactions is
\[
 (\cT^\Phi_{0,t}\Psi)(Y)
 :=\sum_{\varnothing\ne X\subseteq Y}
       (\cT^\Phi_{0,t;X}\Psi(X))(Y).
\]
The absolute rooted estimates in that theorem justify grouping the terms in this way.
For the full physical path and its cutoffs, use the common Hamiltonian bound \eqref{eq:polynomial-energy-bound} to choose the same cell scales, partitions, and time cutoffs, as in \cite[Theorem~2.7 and Corollary~6.1]{FermionicLR}.
With the maps \(\cT^\Phi_{0,t}\) constructed from these choices, put
\begin{align*}
 \eta_{\Lambda,\lambda}(Y)
 &:=\int_\R W_{\pol}(t)
   \bigl(\cT^{\Phi_0+\lambda\cV_\Lambda}_{0,t}
                         \cV_\Lambda\bigr)(Y)\,\dd t,\\
 \eta_{\lambda}(Y)
 &:=\int_\R W_{\pol}(t)
   \bigl(\cT^{\Phi_0+\lambda\cV}_{0,t}
                         \cV\bigr)(Y)\,\dd t.
\end{align*}
Since \(W_{\pol}\) has all absolute time moments, \cite[Corollary~6.1]{FermionicLR} shows that \(\eta_{\Lambda,\lambda}\) and \(\eta_{\lambda}\) are even, self-adjoint, componentwise norm-continuous interactions satisfying
\begin{equation}
 \|\eta_{\Lambda,\lambda}\|^{\pol}_{a;k}
 +\|\eta_{\lambda}\|^{\pol}_{a;k}
 \le C_k\|\cV\|^{\#}_{a;\kappa_0}
 \qquad(\lambda\in J_{\rm loc},\ k\ge0).
 \label{eq:card-all-generators}
\end{equation}
These interactions generate unique coupling cocycles; denote the full cocycle by \(\beta_{\lambda,\lambda'}\).

At fixed \(\Lambda\), the exact representation identity in \cite[Corollary~6.1(i)]{FermionicLR} gives
\begin{equation}
 \sum_Y\eta_{\Lambda,\lambda}(Y)
 =\cI^\ph_{\Lambda,\lambda}(V_\Lambda^\ph)
 =\mathscr S^\ph_{\Lambda,\lambda}
 \label{eq:card-cutoff-generator-identities}
\end{equation}
with absolute convergence in norm.
By uniqueness in Proposition~\ref{prop:nonautonomous-local-limit-package}, these interactions generate the inner flow \(\beta_{\Lambda,\lambda,\lambda'}\) used in the normal-form construction.

For any exhaustion \(\Lambda_n\nearrow\Gamma\), the physical source terms \(\cV_{\Lambda_n}(X)\) equal \(\cV(X)\) eventually for each finite \(X\).
The energy and source paths therefore converge componentwise, uniformly on \(J_{\rm loc}\).
The convergence conclusion of \cite[Corollary~6.1]{FermionicLR} therefore gives, for every nonempty finite \(X\) and \(0\le q<a\),
\[
 \sup_{\lambda\in J_{\rm loc}}
 \|\eta_{\Lambda_n,\lambda}-\eta_{\lambda}\|^{\pol,\loc}_{q;X}
 \stackrel{n\to\infty}{\longrightarrow}0,
\]
with the seminorm \eqref{eq:polynomial-local-seminorm}, and
\begin{equation}
 \sup_{\lambda,\lambda'\in J_{\rm loc}}
 \|\beta_{\Lambda_n,\lambda,\lambda'}(A)-\beta_{\lambda,\lambda'}(A)\|
 \stackrel{n\to\infty}{\longrightarrow}0\qquad(A\in\A).
 \label{eq:polynomial-flow-cutoff-limit}
\end{equation}
For every local \(A\), the same corollary gives
\[
 \sum_Y[\eta_{\lambda}(Y),A]
 =\sum_X[\cI^\ph_{\lambda}(\cV(X)),A],
\]
with absolute convergence of both commutator series, proving \eqref{eq:spectral-flow-derivation}.

\paragraph{Localization of the centered terms.}
Equation~\eqref{eq:card-all-generators}, with \(k=1\), gives
\[
 \int_{J_{\rm loc}}\|\eta_{\Lambda,u}\|^{\pol}_{a;1}\,\dd u
 \le 2C_1g.
\]
Using this integrated generator bound and the bound on \(\|\Phi_{k_\st}\|^{\pol}_{a;1}\), Lemma~\ref{lem:uniform-alpha-locality} with \(v(R)=(1+R)^a\) gives locality of \(\sigma_{\Lambda,\lambda}\), uniformly in the cutoff and perturbation strength.

For fixed \(\Lambda,\lambda,Z\), the filtered representatives of the physical source are
\[
 F_Z(Y):=\int_\R w_{\pol}(t)
 \bigl(\cT^{\Phi_0+\lambda\cV_\Lambda}_{0,t;Z}
                         \cV_\Lambda(Z)\bigr)(Y)\,\dd t.
\]
They vanish unless \(Y\supseteq Z\), and \(F_{\Lambda,\lambda,Z}=\sum_{Y\supseteq Z}F_Z(Y)\) in norm.
Since \(w_{\pol}\) has all absolute time moments, the filtered single-source estimate in \cite[Theorem~2.7(iii)]{FermionicLR}, with \(p=a\) and \(k=1\), gives
\begin{equation}
 \sum_{Y\supseteq Z}|Y|(1+\diam Y)^a\|F_Z(Y)\|
 \le C\E^{\kappa_0|Z|}(1+\diam Z)^a
                         \|\cV_\Lambda(Z)\|.
 \label{eq:card-filtered-source-total}
\end{equation}

For \(R\ge0\), define a local approximant by
\[
 B_R:=\sum_{\substack{Y\supseteq Z\\\diam Y\le R/2}}
 \F_{Y^{(R/2)}}^{\rm tr}
             \bigl(\sigma_{\Lambda,\lambda}(F_Z(Y)\widehat\otimes\1)\bigr).
\]
Every set \(Y\) in this sum lies in the finite set \(Z^{(R/2)}\), so the sum is finite.
Moreover, \(Y^{(R/2)}\subset Z^{(R)}\), and hence \(B_R\in\A_{Z^{(R)}}^\st\).
For \(\diam Y\le R/2\), apply the locality bound \eqref{eq:uniform-alpha-locality} for \(\sigma_{\Lambda,\lambda}\) to \(F_Z(Y)\widehat\otimes\1\) with radius \(R/2\).
For \(\diam Y>R/2\), use \(\|\sigma_{\Lambda,\lambda}(F_Z(Y)\widehat\otimes\1)\|=\|F_Z(Y)\|\).
Summing gives
\begin{align*}
 \|\sigma_{\Lambda,\lambda}(F_{\Lambda,\lambda,Z}\widehat\otimes\1)-B_R\|
 &\le C(1+R)^{-a}\sum_Y|Y|\|F_Z(Y)\|
        +\sum_{\diam Y>R/2}\|F_Z(Y)\|\\*
 &\le C(1+R)^{-a}\sum_Y|Y|(1+\diam Y)^a\|F_Z(Y)\|.
\end{align*}
In the second line we used \(1+\diam Y>(1+R)/2\) in the omitted sum.
By \eqref{eq:G-Lambda-s}, \(G_{\Lambda,\lambda}(Z)=\sigma_{\Lambda,\lambda}(F_{\Lambda,\lambda,Z}\widehat\otimes\1) -\omega_{\Lambda,\lambda}(F_{\Lambda,\lambda,Z})\1\).
Thus \(B_R-\omega_{\Lambda,\lambda}(F_{\Lambda,\lambda,Z})\1\) approximates \(G_{\Lambda,\lambda}(Z)\) with the same error.
Applying \eqref{eq:card-filtered-source-total} and the vacuum-expectation bound \eqref{eq:vacuum-expectation-approximation} gives
\begin{equation}
 \|G_{\Lambda,\lambda}(Z)
       -\EE^\vac_{Z^{(R)}}(G_{\Lambda,\lambda}(Z))\|
 \le C\E^{\kappa_0|Z|}(1+\diam Z)^a
       \|\cV_\Lambda(Z)\|(1+R)^{-a}.
 \label{eq:card-centered-locality}
\end{equation}
In Proposition~\ref{prop:rooted-vacuum-normal-form}, take
\[
 c_\Lambda(Z)=C\E^{\kappa_0|Z|}(1+\diam Z)^a
                   \|\cV_\Lambda(Z)\|,
 \qquad q_G(R)=(1+R)^{-a}.
\]
The bound \(\|G_{\Lambda,\lambda}(Z)\|\le2\|w_{\pol}\|_1\|\cV_\Lambda(Z)\|\) gives \(\|G_{\Lambda,\lambda}(Z)\|\le c_\Lambda(Z)\) after choosing \(C\ge2\|w_{\pol}\|_1\).
The total coefficient sum \(\sum_Zc_\Lambda(Z)\) is finite.
The definition \eqref{eq:card-norm} gives
\[
 \sup_{x\in\Gamma}\sum_{Z\ni x}c_\Lambda(Z)
 =C\|\cV_\Lambda\|^{\#}_{a;\kappa_0}
 \le C\|\cV\|^{\#}_{a;\kappa_0}.
\]
The coefficients \(b_n\) in Proposition~\ref{prop:rooted-vacuum-normal-form} satisfy \(b_n\le C(1+R_n)^{-a}\).
Since \(a>\dvol\) and the radii \(R_n\) grow dyadically, this gives \(\sum_n(1+R_n)^{\dvol}b_n<\infty\).
We may therefore choose \(C_*\ge1\), depending only on \(a,g\), the geometry, the onsite dimension, \(\|\Phi_0\|^{\#}_{a;\kappa_0}\), and \(\|\Phi_{k_\st}\|^{\pol}_{a;1}\), such that
\begin{equation}
 f_{\Lambda,\lambda}\le C_*\|\cV\|^{\#}_{a;\kappa_0},
 \qquad \lambda\in J_{\rm loc}.
 \label{eq:card-normal-form}
\end{equation}
Proposition~\ref{prop:rooted-vacuum-normal-form} also gives \(R_{\Lambda,\lambda}(Z,n)P_{Z^{(R_n)}}^\vac =P_{Z^{(R_n)}}^\vac R_{\Lambda,\lambda}(Z,n)=0\) on \(I_\Lambda\cap J_{\rm loc}\).
Apply Proposition~\ref{prop:abstract-bootstrap} with \(M=\|\cV\|^{\#}_{a;\kappa_0}\) and \(C_{\rm root}=C_*\).
It gives \(I_*\subset I_\Lambda\) and, for \(\lambda\in I_*\), a spectral gap of \(K_{\Lambda,\lambda}\) at least \(g-C_*|\lambda|\|\cV\|^{\#}_{a;\kappa_0}\).

\paragraph{The ground-state gap and locality at exponent \(a\).}
Proposition~\ref{prop:decay-independent-descent} transfers the cutoff gap inequality to \(\omega_{\Lambda,\lambda}=\omega_0\circ\beta_{\Lambda,0,\lambda}\).
For every \(A\in\A\), the cocycle convergence \eqref{eq:polynomial-flow-cutoff-limit} gives \(\omega_{\Lambda_n,\lambda}(A)\stackrel{n\to\infty}{\longrightarrow}\omega_{\lambda}(A)\), uniformly for \(\lambda\in I_*\), where \(\omega_{\lambda}:=\omega_0\circ\beta_{0,\lambda}\).
The state \(\omega_{\lambda}\) is pure, as in the proof of Theorem~\ref{thm:main}.
For local \(A\), \eqref{eq:common-cutoff-derivation-limit} gives \(\delta_{\Phi_0+\lambda\cV_{\Lambda_n}}(A)\stackrel{n\to\infty}{\longrightarrow}\delta_{\Phi_{\lambda}}(A)\) in norm.
Proposition~\ref{prop:gap-limit} therefore passes the physical cutoff gap inequality \eqref{eq:physical-cutoff-gap} to \(\omega_{\lambda}\) on \(\A_{\loc}\).
The support-size bound on the energy interaction gives \(\|\Phi_{\lambda}\|^{\pol}_{a;1}<\infty\), with \(a>\dvol\).
The graph-core property in Lemma~\ref{lem:local-form-core} extends this inequality to \eqref{eq:polynomial-main-gap} for every \(\lambda\in I_*\) and \(A\in\operatorname{Dom}(\overline\delta_{\Phi_{\lambda}})\).

Equation~\eqref{eq:card-all-generators} proves \eqref{eq:card-generator}.
Moreover, with \(C_\eta:=C_{\eta,1}\),
\[
 |I_*|\sup_{\lambda\in I_*}\|\eta_{\lambda}\|^{\pol}_{a;1}
 \le \frac{gC_\eta}{2C_*}.
\]
Let \(\beta^{X,R}_{\lambda,\lambda'}\) be the cocycle generated by the terms \(\eta_u(Y)\) with \(Y\subset X^{(R)}\).
For \(A\in\A_X\), its value \(\beta^{X,R}_{\lambda,\lambda'}(A)\) belongs to \(\A_{X^{(R)}}\), and \cite[Theorem~2.9]{FermionicLR} gives
\[
 \|\beta_{\lambda,\lambda'}(A)-\beta^{X,R}_{\lambda,\lambda'}(A)\|
 \le |X|\|A\|\bigl(\E^{gC_\eta/C_*}-1\bigr)(1+R)^{-a}.
\]
Thus \eqref{eq:card-flow} holds with \(C_\beta:=\E^{gC_\eta/C_*}-1\), for arbitrary local observables and both parameter orderings.
The propagation constants depend only on \(a\), the geometric data, and the common energy bound \eqref{eq:polynomial-energy-bound} for \(\Phi_0+\lambda\cV\).
Together with the fixed quadratic path in Proposition~\ref{prop:exponential-stable-locality}, these bounds show that \(C_*\), \(C_\eta\), and \(C_\beta\) depend only on the data listed in Theorem~\ref{thm:card-stability}.

The affine path \(\lambda\mapsto\Phi_0+\lambda\cV\) and the constant source path \(\lambda\mapsto\cV\) are \(C^\infty\) in \(\|\cdot\|^{\#}_{a;\kappa_0}\) on \(\R\).
The definition of \(I_*\) and \(C_*\ge1\) give
\[
 \sup_{\lambda\in I_*}\|\Phi_0+\lambda\cV\|^{\#}_{a;\kappa_0}
 \le\|\Phi_0\|^{\#}_{a;\kappa_0}+g/4
 <\|\Phi_0\|^{\#}_{a;\kappa_0}+g.
\]
Thus the path lies strictly within the common energy bound \eqref{eq:polynomial-energy-bound} used to choose the maps \(\cT^\Phi_{0,t}\).
Since \(W_{\pol}\) has all absolute time moments, \cite[Corollary~6.1(iii)]{FermionicLR} gives \(C^\infty\) dependence of \(\eta_{\lambda}\) in every \(\|\cdot\|^{\pol}_{a;k}\)-norm and of \(\beta_{\lambda_0,\lambda}(A)\) in operator norm for local \(A\) and fixed \(\lambda_0\in I_*\).
Applying the bounded functional \(\omega_0\) shows that \(\lambda\mapsto\omega_{\lambda}(A)\) is \(C^\infty\).

For the stronger support-size estimate \eqref{eq:card-generator-support-weight}, apply \cite[Theorem~2.7(vi)]{FermionicLR} to the same physical interaction \(\Phi_0+\lambda\cV\), source \(\cV\), and filter \(W_{\pol}\), with \(p=a\) and time-decay exponent \(\alpha=1/2\).
Equation~\eqref{eq:W-stretched-time} gives, for some \(b>0\),
\[
 \int_\R |W_{\pol}(t)|\E^{b(1+|t|)^{1/2}}\,\dd t<\infty.
\]
The support-size exponent given by that theorem is \((1+2\dvol(a+1)/(a-\dvol))^{-1}=\theta_{\#}\).
The common energy bound on \(J_{\rm loc}\) therefore gives \eqref{eq:card-generator-support-weight} for the already chosen representatives \(\eta_{\lambda}\), with \(c_{\#},C_{\#}\) depending only on the data listed in Theorem~\ref{thm:card-stability}.
\end{proof}

\begin{proof}[Proof of Remark~\ref{rem:transport-symmetry}]
\phantomsection\label{proof:transport-symmetry}
Suppose first that \(\zeta\) is linear.
Invariance of the interaction terms implies that \(\zeta\) commutes with the energy dynamics.
It also preserves the unique tracial state and every local algebra, so uniqueness of the trace-preserving conditional expectation gives \(\zeta\circ\F_Y^{\rm tr}=\F_Y^{\rm tr}\circ\zeta\).
For Theorem~\ref{thm:main}, each term in \eqref{eq:physical-generator-regrouping} is therefore invariant.
For Theorem~\ref{thm:card-stability}, covariance of the local representatives in \cite[Remark~4.5]{FermionicLR} gives termwise invariance after filtering.
Uniqueness of the coupling cocycle proves \(\zeta\circ\beta_{\lambda,\lambda'}=\beta_{\lambda,\lambda'}\circ\zeta\).

For a conjugate-linear \(\zeta\), invariance of the cutoff interaction gives
\[
 \zeta\circ\tau_t^{\ph,\Lambda,\lambda}
 =\tau_{-t}^{\ph,\Lambda,\lambda}\circ\zeta,
 \qquad
 \zeta(\mathscr S_{\Lambda,\lambda}^\ph)
 =-\mathscr S_{\Lambda,\lambda}^\ph.
\]
The second identity follows from \eqref{eq:physical-inverse-filter}--\eqref{eq:physical-cutoff-generator} by changing \(t\) to \(-t\), since \(W\) is real and odd.
As \(\zeta(\I)=-\I\), the defining equation \(\partial_\lambda u_{\Lambda,\lambda}=-\I\mathscr S_{\Lambda,\lambda}^\ph u_{\Lambda,\lambda}\), with \(u_{\Lambda,0}=\1\), implies \(\zeta(u_{\Lambda,\lambda})=u_{\Lambda,\lambda}\) by uniqueness.
Thus \(\zeta\) commutes with \(\beta_{\Lambda,\lambda,\lambda'}\).
Passing to the cutoff limit gives the same commutation relation for \(\beta_{\lambda,\lambda'}\), by \eqref{eq:physical-flow-cutoff-limit} in the stretched-exponential case and \eqref{eq:polynomial-flow-cutoff-limit} in the polynomial case.
\end{proof}

\section{Stability of interacting systems with product ground states}
\label{sec:product-reference-proof}

We first prove the GNS gap from the excitation-number bound, then apply the normal-form argument with the identity coordinate map in each decay regime.

\begin{proof}[Proof of Theorem~\ref{thm:product-reference-stability}]
\phantomsection\label{proof:product-reference-stability}
The projections \(\pi_0(P_x)\) commute.
If \(A\in\A_X\), \(\pi_0(\1-P_x)\pi_0(A)\Omega_0=0\) for \(x\notin X\), since \(P_x\) is even, \([P_x,A]=0\), and \(\pi_0(P_x)\Omega_0=\Omega_0\).
Thus the excitation form \eqref{eq:product-reference-number} has a dense domain containing all local GNS vectors.
Its increasing finite sums define a closed form and hence the positive self-adjoint operator \(N\).

For finite \(\Lambda\), put \(P_\Lambda:=\prod_{x\in\Lambda}\pi_0(P_x)\).
If \(\Lambda\supset X\), the rank-one support projections and the product state give
\[
 P_\Lambda\pi_0(A)\Omega_0=\omega_0(A)\Omega_0.
\]
Since local GNS vectors are dense, \(P_\Lambda\) converges strongly along any exhaustion to \(|\Omega_0\rangle\langle\Omega_0|\).
The commuting-projection inequality
\[
 \sum_{x\in\Lambda}\pi_0(\1-P_x)\ge\1-P_\Lambda
\]
therefore proves
\[
 N\ge\1-|\Omega_0\rangle\langle\Omega_0|,
 \qquad \ker N=\C\Omega_0.
\]
The product state is pure: an operator in the commutant of \(\pi_0(\A)\) commutes with the strong limit \(|\Omega_0\rangle\langle\Omega_0|\), so it is scalar on \(\Omega_0\), and cyclicity makes it scalar everywhere.
Since \(K_0\Omega_0=0\), assumption \eqref{eq:product-reference-coercivity} gives a simple ground state eigenvalue zero and a gap at least \(g\).

With the identity coordinate map specified in Section~\ref{sec:local-cutoffs}, the cutoff Hamiltonian is
\[
 K_{\Lambda,\lambda}=K_0+\lambda\pi_0(V_\Lambda^\ph).
\]
Lemma~\ref{lem:operator-facts} and Proposition~\ref{prop:bootstrap-flow} construct the candidate spectral flow and prove the ground-state transport identity \eqref{eq:projection-transport} on \(I_\Lambda\).
The even rank-one product projections \(P_x\) allow us to apply Lemmas~\ref{lem:vacuum-localization} and~\ref{lem:one-local-term}.
The bounds on \(N\), together with \eqref{eq:product-reference-coercivity}, give the excitation-number and form-domain hypotheses required by Lemma~\ref{lem:summed-form-bound} and Proposition~\ref{prop:abstract-bootstrap}.
The dynamics of the unperturbed interaction \(\Phi_0\) is implemented by \(K_0\), as required in Lemma~\ref{lem:operator-facts}(b).
It remains to prove the centered-term locality bounds required by Proposition~\ref{prop:rooted-vacuum-normal-form} in the two decay regimes.

\emph{Part (a).}
First let \(\theta_{\rm in}=1\), choose \(\mu:=\frac12\min\{\mu_0,\mu_{\cV}\}\), and use the baseline filter exponent \(1/2\).
On the interval \(J_{\rm loc}\) in \eqref{eq:locality-coupling-ball},
\[
 \|\Phi_0+\lambda\cV_\Lambda\|_{\mu,1;2},
 \quad \|\Phi_0+\lambda\cV\|_{\mu,1;2}
 \le C_0+g.
\]
Apply Lemma~\ref{lem:weighted-filter-locality} with the common energy-interaction bound \(C_0+g\), choosing \(v_{\rm LR}=2(C_0+g)/\mu\) in its proof.
Lemma~\ref{lem:inverse-filter-interaction} bounds the generator and its integrated norm, and Lemma~\ref{lem:uniform-alpha-locality} applies with \(\Phi_{k_\st}=0\).
Proposition~\ref{prop:transported-D-locality} gives the centered-term bound.
Use the resulting uniform locality constants to evaluate \(C_*\) in \eqref{eq:physical-stability-constant}.
Propositions~\ref{prop:uniform-normal-form} and~\ref{prop:abstract-bootstrap} then give \(I_*\subset I_\Lambda\) and the linear gap bound \eqref{eq:gap-improvement} at every cutoff.
All these constants depend only on the data in part~(a); this uses that the expectation maps have norm one independently of the onsite state.

The ground-vector state of \(K_{\Lambda,\lambda}\) is already the physical state \(\omega_{\Lambda,\lambda}=\omega_0\circ\beta_{\Lambda,0,\lambda}\).
Proposition~\ref{prop:decay-independent-descent} gives the physical gap inequality \eqref{eq:physical-cutoff-gap}.
Lemma~\ref{lem:physical-generator} gives pointwise convergence of \(\omega_{\Lambda_n,\lambda}\) to \(\omega_\lambda:=\omega_0\circ\beta_{0,\lambda}\).
Together with \eqref{eq:common-cutoff-derivation-limit}, Proposition~\ref{prop:gap-limit} gives the gap inequality for \(\omega_\lambda\) on the local algebra.
Lemma~\ref{lem:local-form-core} extends it to \(\operatorname{Dom}(\overline\delta_{\Phi_0+\lambda\cV})\).
The state \(\omega_\lambda\) is pure because \(\omega_0\) is pure and \(\beta_{0,\lambda}\) is an automorphism.
For every other filter exponent \(0<\theta<1\), use Lemma~\ref{lem:physical-generator} on the same interval.
Equation~\eqref{eq:filter-independent-cutoff-state}, together with convergence of \(\beta_{\Lambda_n,0,\lambda}(A)\), shows that \(\omega_0\circ\beta_{0,\lambda}(A)\) is independent of the filter for every local \(A\).

For \(0<\theta_{\rm in}<1\), use the metric \(\rho=d^{\theta_{\rm in}}\) as in the proof of Theorem~\ref{thm:main}.
The volume-growth exponent becomes \(\dvol/\theta_{\rm in}\), and the norms of part~(a) satisfy
\[
 \|\Phi_0\|^\rho_{\mu_0,1;2}
 =\|\Phi_0\|_{\mu_0,\theta_{\rm in};2},
 \qquad
 \|\cV\|^\rho_{\mu_{\cV},1;2}
 =\|\cV\|_{\mu_{\cV},\theta_{\rm in};2}.
\]
The state, \(N\), \(K_0\), and the coercivity assumption are unchanged.
Applying the case \(\theta_{\rm in}=1\) in the metric \(\rho\), and using radius \(\lfloor R^{\theta_{\rm in}}\rfloor\) as in the proof of Theorem~\ref{thm:main}, gives the stated locality estimates for every \(0<\theta<\theta_{\rm in}\).

\emph{Part (b).}
Use the interval \(J_{\rm loc}\) from Section~\ref{sec:card-extension}.
The hypothesis on \(\Phi_0\) gives
\[
 \|\Phi_0+\lambda\cV_\Lambda\|^{\#}_{p_\Phi;\kappa_0},
 \quad
 \|\Phi_0+\lambda\cV\|^{\#}_{p_\Phi;\kappa_0}
 \le C_0+g
 \qquad(\lambda\in J_{\rm loc}).
\]
Here \(\sigma_\st=\id\), so we take \(\Phi_{k_\st}=0\).
The estimates of Section~\ref{sec:card-extension} therefore apply with Hamiltonian norm bound \(C_0+g\), with constants depending only on the data in part~(b).
The generator bounds and exact cutoff identity \eqref{eq:card-all-generators}--\eqref{eq:card-cutoff-generator-identities}, centered-term estimate \eqref{eq:card-centered-locality}, and rooted bound \eqref{eq:card-normal-form} hold with \(\sigma_\st=\id\).
Proposition~\ref{prop:abstract-bootstrap} gives the stated interval and cutoff gap.
The cutoff states \(\omega_{\Lambda,\lambda}\) are physical, and the cocycle convergence \eqref{eq:polynomial-flow-cutoff-limit} gives their pointwise convergence to \(\omega_\lambda:=\omega_0\circ\beta_{0,\lambda}\).
Together with \eqref{eq:common-cutoff-derivation-limit}, Proposition~\ref{prop:gap-limit} gives the gap inequality on the local algebra, and Lemma~\ref{lem:local-form-core} extends it to \(\operatorname{Dom}(\overline\delta_{\Phi_0+\lambda\cV})\).
The state \(\omega_\lambda\) is pure because \(\omega_0\) is pure and \(\beta_{0,\lambda}\) is an automorphism.

The argument in Section~\ref{sec:card-extension} gives the approximation estimate \eqref{eq:card-flow} for \(\beta_{\lambda,\lambda'}(A)\), with error decaying as \((1+R)^{-p_\Phi}\).
It also gives estimate \eqref{eq:card-generator-support-weight} for the generator \(\eta_\lambda\), including the stretched-exponential weight in the support size.
For smoothness, the strict bound needed in \cite[Corollary~6.1(iii)]{FermionicLR} follows from
\[
 \sup_{\lambda\in I_*}\|\Phi_0+\lambda\cV\|^{\#}_{p_\Phi;\kappa_0}
 \le C_0+g/4<C_0+g.
\]
The maps \(\lambda\mapsto\Phi_0+\lambda\cV\) and \(\lambda\mapsto\cV\) are \(C^\infty\) in \(\|\cdot\|^{\#}_{p_\Phi;\kappa_0}\), and \(W_{\pol}\) has every absolute polynomial moment.
Thus \cite[Corollary~6.1(iii)]{FermionicLR} gives the stated smoothness of \(\eta_\lambda\), \(\beta_{\lambda_0,\lambda}(A)\), and \(\omega_\lambda(A)\), for local \(A\) and fixed \(\lambda_0\in I_*\).

For the spin-system formulation, use the ordinary product-state conditional expectations \(\EE_X^\vac=\id_{\A_X}\otimes\omega_0|_{\A_{X^c}}\) and the tracial expectations on tensor-product algebras.
The product-state expectations are contractive, \(\A_X\)-bimodular, and preserve \(\omega_0\); together with the rank-one product projections \(P_x\), these properties give the vacuum-localization and excitation-number estimates of Lemmas~\ref{lem:vacuum-localization} and~\ref{lem:summed-form-bound}.
The normal-form and bootstrap arguments therefore apply.
The locality estimates used in both parts hold for spin systems by \cite[Remark~6.6]{FermionicLR}, and the graph-core proof in Lemma~\ref{lem:local-form-core} applies with the same commutator estimates.
The cutoff limits and the preceding argument therefore give all stated conclusions on the spin algebra.
\end{proof}

\appendix

\section{Auxiliary geometric and analytic results}
\label{app:basic-background}

This appendix collects the geometric estimates, non-autonomous CAR dynamics and local limits, ground-state form and core results, and weighted one-particle functional calculus used in the proof.

\subsection{CAR conventions and metric-space geometry}
\label{app:CAR-geometry}

Inner products are linear in the second variable and the annihilation map \(f\mapsto a(f)\) is antilinear.
When labels \(u=(x,i)\) and \(v=(y,j)\) include indices for the fermionic modes, Majorana generators, or the two copies, we set \(d(u,v):=d(x,y)\).

For a parity-homogeneous observable \(A\), let \(|A|\in\{0,1\}\) be defined by \(\Theta(A)=(-1)^{|A|}A\).
Whenever two CAR systems \(\A_j=\CAR(\mathfrak h_j)\) are stacked, let \(\jmath_j\) denote their canonical embeddings into \(\CAR(\mathfrak h_1\oplus\mathfrak h_2)\).
We set
\begin{equation*}
 \A_1\widehat\otimes\A_2
 :=\CAR(\mathfrak h_1\oplus\mathfrak h_2),
 \qquad
 A\widehat\otimes B:=\jmath_1(A)\jmath_2(B).
\end{equation*}
The linear span of these simple tensors is dense.
For homogeneous factors, multiplication and involution are given by
\begin{align*}
 (A_1\widehat\otimes B_1)(A_2\widehat\otimes B_2)
 &=(-1)^{|B_1||A_2|}A_1A_2\widehat\otimes B_1B_2,\\
 (A\widehat\otimes B)^*
 &=(-1)^{|A||B|}A^*\widehat\otimes B^*
\end{align*}
and these formulas extend linearly and continuously.
For parity-preserving automorphisms \(\alpha_j\) of \(\A_j\), their graded tensor product is the automorphism determined by
\[
 (\alpha_1\widehat\otimes\alpha_2)(A\widehat\otimes B)
 :=\alpha_1(A)\widehat\otimes\alpha_2(B).
\]
Its extension to the full algebra follows from the defining CAR relations.
For interactions \(\Phi\) and \(\Psi\) on the two layers, their graded tensor sum is defined termwise by
\[
 (\Phi\widehat\otimes\1+\1\widehat\otimes\Psi)(X)
 :=\Phi(X)\widehat\otimes\1+\1\widehat\otimes\Psi(X).
\]
If at least one of two states \(\rho_1,\rho_2\) is even, their graded product state is determined by
\begin{equation*}
 (\rho_1\widehat\otimes\rho_2)(A\widehat\otimes B)
 :=\rho_1(A)\rho_2(B)
\end{equation*}
on homogeneous simple tensors.
Its positivity and existence under this evenness hypothesis follow from the CAR joint-extension theorem \cite{ArakiMoriyaJoint03}.

We use the standard CAR facts about quasifree states, their GNS representations, Bogoliubov dynamics, and second quantization from \cite[Sections~5.2.3--5.2.4]{BR97}; see also \cite{Araki70}.

\begin{lemma}
\label{lem:metric-space-geometry}
For every nonempty finite \(X\subset\Gamma\), \(R\ge0\), and \(y\in\Gamma\),
\begin{align}
 |X^{(R)}|&\le C_\Gamma|X|(1+R)^{\dvol},
 &\diam X^{(R)}&\le\diam X+2R,
 \label{eq:metric-thickening}\\
 \1_{\{y\in X^{(R)}\}}
 &\le\sum_{x\in\ball_R(y)}\1_{\{x\in X\}}.
 \label{eq:metric-root-count}
\end{align}
In particular, \(X^{(R)}\) is finite.
For \(p>\dvol\),
\begin{equation}
 \sup_{x\in\Gamma}\sum_{y:\,d(x,y)>R}(1+d(x,y))^{-p}
 \le C_p(1+R)^{\dvol-p}.
 \label{eq:Fp-tail}
\end{equation}
\end{lemma}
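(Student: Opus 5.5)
The plan is to prove the three parts of Lemma~\ref{lem:metric-space-geometry} separately. Each part reduces to the uniform volume-growth bound \eqref{eq:volume-growth}, combined with elementary metric manipulations.

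For the thickening bounds \eqref{eq:metric-thickening}, write \(X^{(R)}=\bigcup_{x\in X}\ball_R(x)\). The inclusion \(\supseteq\) is immediate. For \(\subseteq\), note that \(X\) is finite, so the infimum defining \(\dist(y,X)\) is attained: if \(\dist(y,X)\le R\), then some \(x\in X\) satisfies \(d(x,y)\le R\). The union bound and \eqref{eq:volume-growth} then give
\[
 |X^{(R)}|\le\sum_{x\in X}|\ball_R(x)|\le C_\Gamma|X|(1+R)^{\dvol}.
\]
In particular, \(X^{(R)}\) is finite. For the diameter, take \(y,y'\in X^{(R)}\) and choose \(x,x'\in X\) with \(d(x,y)\le R\) and \(d(x',y')\le R\). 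The triangle inequality gives \(d(y,y')\le R+\diam X+R\).

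For the root-count inequality \eqref{eq:metric-root-count}, it suffices to treat the case \(y\in X^{(R)}\). By the attainment just noted, there is \(x\in X\) with \(d(x,y)\le R\), that is, \(x\in\ball_R(y)\). The sum on the right is therefore at least one, and the left side is at most one.

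For the tail estimate \eqref{eq:Fp-tail}, I will use a dyadic shell decomposition. Fix \(x\) and split the sites \(y\) with \(d(x,y)>R\) into shells \(S_j:=\{y\,|\,2^{j}(1+R)<1+d(x,y)\le2^{j+1}(1+R)\}\) for \(j\ge0\). Every \(y\) with \(d(x,y)>R\) satisfies \(1+d(x,y)>1+R\), so these shells cover the sum. On \(S_j\), the summand is at most \(2^{-jp}(1+R)^{-p}\). Moreover, \(S_j\subseteq\ball_{2^{j+1}(1+R)}(x)\), so by \eqref{eq:volume-growth}
\[
 |S_j|\le C_\Gamma\bigl(1+2^{j+1}(1+R)\bigr)^{\dvol}\le C_\Gamma 4^{\dvol}2^{j\dvol}(1+R)^{\dvol}.
\]
Summing over the shells gives
\[
 \sum_{y:\,d(x,y)>R}(1+d(x,y))^{-p}\le C_\Gamma4^{\dvol}(1+R)^{\dvol-p}\sum_{j\ge0}2^{j(\dvol-p)}.
\]
The geometric series converges precisely because \(p>\dvol\), and its value, together with the prefactor, gives the constant \(C_p\). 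The bound is uniform in \(x\), as required.

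No step is a genuine obstacle. The only point requiring care is the attainment of \(\dist(y,X)\) on the finite set \(X\), which is used in both \eqref{eq:metric-thickening} and \eqref{eq:metric-root-count}.
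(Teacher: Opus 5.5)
Your proof is correct and follows the paper's argument: covering \(X^{(R)}\) by the balls \(\ball_R(x)\), \(x\in X\), plus the triangle inequality for the thickening and root-count bounds, and a dyadic annulus decomposition for the tail estimate. Normalizing your shells by \(1+R\) rather than \(R\) is a minor streamlining, since it handles all \(R\ge0\) at once and removes the paper's separate treatment of \(0\le R<1\).
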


\begin{proof}
The three geometric bounds follow by covering \(X^{(R)}\) with the balls \(\ball_R(x)\), \(x\in X\), and by the triangle inequality.
For \(R\ge1\), decomposition into dyadic annuli gives
\begin{equation*}
 \sum_{y:\,d(x,y)>R}(1+d(x,y))^{-p}
 \le C\sum_{j\ge0}
 (1+2^{j+1}R)^{\dvol}(1+2^jR)^{-p},
\end{equation*}
which proves \eqref{eq:Fp-tail}.
For \(0\le R<1\), the volume bound controls the sites in \(\ball_1(x)\), and the estimate at \(R=1\) controls the remaining sum.
\end{proof}

Polynomial volume growth allows additional powers of the support size to be absorbed by a small decrease in the exponential decay rate.

\begin{lemma}
\label{lem:support-moment-rate-trade}
Let \(\Phi\) be an interaction, and fix decay rates \(0<\nu<\mu\) and support-moment orders \(k,\ell\ge0\).
Then
\begin{equation}
 \|\Phi\|_{\nu,1;k+\ell}
 \le C_\Gamma^\ell\Bigl[\sup_{R\ge0}
 (1+R)^{\dvol\ell}\E^{-(\mu-\nu)R}\Bigr]
      \|\Phi\|_{\mu,1;k}.
 \label{eq:support-moment-rate-trade}
\end{equation}
For a stretched-exponential exponent \(0<\theta<1\) and a decay rate \(a>0\), one also has
\begin{equation}
 \|\Phi\|_{a,\theta;k+\ell}
 \le C_\Gamma^\ell\Bigl[\sup_{R\ge0}
 (1+R)^{\dvol\ell}
 \E^{aR^\theta-\mu R}\Bigr]
      \|\Phi\|_{\mu,1;k}.
 \label{eq:support-moment-stretched-trade}
\end{equation}
\end{lemma}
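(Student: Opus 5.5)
The plan is to prove both inequalities by a single pointwise trade between support size and diameter, using only the volume bound \eqref{eq:volume-growth}. For a nonempty finite \(X\) and any \(x\in X\), one has \(X\subseteq\ball_{\diam X}(x)\). Hence
\[
 |X|\le C_\Gamma(1+\diam X)^{\dvol},
 \qquad
 |X|^{\ell}\le C_\Gamma^{\ell}(1+\diam X)^{\dvol\ell}.
\]
This is the only geometric input. After it, the argument reduces to a scalar supremum in the variable \(R=\diam X\).

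For \eqref{eq:support-moment-rate-trade}, fix \(x\in\Gamma\) and estimate each term of the rooted sum defining \(\|\Phi\|_{\nu,1;k+\ell}\). Writing \(R=\diam X\), the term satisfies
\[
 |X|^{k+\ell}\E^{\nu R}\|\Phi(X)\|
 \le C_\Gamma^\ell\bigl[(1+R)^{\dvol\ell}\E^{-(\mu-\nu)R}\bigr]
 |X|^{k}\E^{\mu R}\|\Phi(X)\|.
\]
The bracket is bounded by its supremum over \(R\ge0\). That supremum is finite because \(\mu>\nu\), so the exponential beats the polynomial. We then sum over \(X\ni x\) and take the supremum over \(x\). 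Since all terms are nonnegative, this gives the claim, and no convergence issues arise.

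For \eqref{eq:support-moment-stretched-trade} the same computation applies with the weight \(\E^{aR^\theta}\) in place of \(\E^{\nu R}\). The bracket becomes \((1+R)^{\dvol\ell}\E^{aR^\theta-\mu R}\). Its supremum is finite because \(\theta<1\), so \(aR^\theta-\mu R\to-\infty\) linearly and dominates the logarithm of the polynomial factor. The function is continuous on \([0,\infty)\), so the supremum is attained or approached on a compact set together with a decaying tail.

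I expect no real obstacle. The only point needing care is the containment \(X\subseteq\ball_{\diam X}(x)\), which uses \(x\in X\), and recording that the constants depend only on \(C_\Gamma,\dvol,\ell\) and the rates.
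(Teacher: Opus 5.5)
Your proposal is correct and is essentially the paper's own proof. The paper likewise uses \(X\subset\ball_{\diam X}(x)\) to get \(|X|\le C_\Gamma(1+\diam X)^{\dvol}\), inserts this bound for the extra factor \(|X|^\ell\) in the rooted sum, and notes that both suprema are finite because the bracketed functions tend to zero as \(R\to\infty\).
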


\begin{proof}
For nonempty \(X\) and \(x\in X\), \(X\subset\ball_{\diam X}(x)\), and hence
\begin{equation*}
 |X|\le C_\Gamma(1+\diam X)^{\dvol}.
\end{equation*}
Insert this estimate for the additional factor \(|X|^\ell\) in the interaction sum.
The functions \((1+R)^{\dvol\ell}\E^{-(\mu-\nu)R}\) and \((1+R)^{\dvol\ell}\E^{aR^\theta-\mu R}\) tend to zero as \(R\to\infty\), so both suprema are finite.
\end{proof}

\subsection{Weighted interaction topologies and non-autonomous dynamics}
\label{app:nonautonomous-background}

For a nonempty set \(X\Subset\Gamma\), define the rooted seminorm used for local convergence of interactions by
\begin{equation}
 \|\Phi\|^{\rm loc}_{\mu,\theta;2;X}
 :=\max_{x\in X}\sum_{Y\ni x}
 |Y|^2\E^{\mu\diam(Y)^\theta}\|\Phi(Y)\|.
 \label{eq:local-stretched-seminorm}
\end{equation}
For polynomial weights, set
\begin{equation}
 \|\Xi\|^{\pol,\loc}_{p;X}
 :=\max_{x\in X}\sum_{Y\ni x}
 (1+\diam Y)^p\|\Xi(Y)\|.
 \label{eq:polynomial-local-seminorm}
\end{equation}
The interaction norms in the main theorems control \(\|\Phi\|^{\pol}_{p;1}\): for \(p\ge0\), \(\mu,\kappa>0\), and \(0<\theta\le1\),
\begin{align*}
 \|\Phi\|^{\pol}_{p;1}
 \le\left(\sup_{r\ge0}(1+r)^p\E^{-\mu r^\theta}\right)
       \|\Phi\|_{\mu,\theta;2},\qquad
 \|\Phi\|^{\pol}_{p;1}
 \le\left(\sup_{m\ge1}m\E^{-\kappa m}\right)
       \|\Phi\|^{\#}_{p;\kappa}.
\end{align*}
Finite \(\|\Phi\|^{\pol}_{0;0}\)-norm makes the sum in \eqref{eq:local-derivation} absolutely convergent on local observables.
Proposition~\ref{prop:nonautonomous-local-limit-package} constructs the dynamics of componentwise norm-continuous paths with a uniform \(\|\cdot\|^{\pol}_{p;0}\)-bound for some \(p>\dvol\).
For a constant interaction with \(\|\Phi\|^{\pol}_{p;1}<\infty\) for some \(p>\dvol\), Lemma~\ref{lem:local-form-core} shows that its dynamics has generator \(\I\overline\delta_\Phi\), with the closure defined in Section~\ref{sec:setting}.
We use the convention
\begin{equation*}
 \frac{\dd}{\dd t}\tau_t^\Phi(A)
 =\I\tau_t^\Phi(\overline\delta_\Phi(A)),
 \qquad A\in\operatorname{Dom}(\overline\delta_\Phi).
\end{equation*}
The free interaction \(\Phi_h\) defined in the main setup generates the quasifree dynamics of \(h\).
The quadratic-interaction estimates \eqref{eq:quadratic-interaction-bounds} give the required first support moment in both locality regimes.

\begin{lemma}
\label{lem:componentwise-local-continuity}
Let \((\Xi_s)_{s\in J}\) be a componentwise norm-continuous interaction path with a uniform bound in \(\|\cdot\|_{\mu,\theta;2}\), where \(\mu>0\) and \(0<\theta\le1\).
Then, for every nonempty \(X\Subset\Gamma\), \(0<\nu<\mu\), and \(s_0\in J\),
\begin{equation}
 \lim_{s\to s_0}
 \|\Xi_s-\Xi_{s_0}\|^{\rm loc}_{\nu,\theta;2;X}=0.
 \label{eq:local-weighted-continuity}
\end{equation}
If instead the path has a uniform \(\|\cdot\|^{\pol}_{p;0}\)-bound, with \(p>0\), it is continuous in every local seminorm \(\|\cdot\|^{\pol,\loc}_{q;X}\), \(0\le q<p\).

Suppose that further paths \(\Xi_s^{(n)}\) have the same uniform bound and converge componentwise to \(\Xi_s\), uniformly in \(s\in J\).
Then they converge uniformly in \(s\) in each of the respective weaker local seminorms.
\end{lemma}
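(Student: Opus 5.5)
The proof is a tail-splitting argument; the uniform support-moment bound is what makes the tails small. I treat the stretched-exponential case first and the polynomial case afterwards.

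\emph{Stretched-exponential case.} Fix \(X\), \(0<\nu<\mu\), and \(\varepsilon>0\), and write \(M:=\sup_{s\in J}\|\Xi_s\|_{\mu,\theta;2}\). For \(x\in X\) and \(D\ge0\), I split the rooted sum in \(\|\Xi_s-\Xi_{s_0}\|^{\rm loc}_{\nu,\theta;2;X}\) into sets \(Y\ni x\) with \(\diam Y\le D\) and sets with \(\diam Y>D\).

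The tail is controlled by the weight ratio. For \(\diam Y>D\),
\[
 |Y|^2\E^{\nu\diam(Y)^\theta}\le \E^{-(\mu-\nu)D^\theta}|Y|^2\E^{\mu\diam(Y)^\theta},
\]
so the tail contributes at most \(2M\E^{-(\mu-\nu)D^\theta}\), uniformly in \(s\). I choose \(D\) so that this is below \(\varepsilon/2\).

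The head is finite. Every \(Y\ni x\) with \(\diam Y\le D\) lies in \(\ball_D(x)\), which is finite by \eqref{eq:volume-growth}. So only finitely many sets occur, and the union over the finitely many \(x\in X\) is still finite. Componentwise norm continuity then makes the head sum tend to zero as \(s\to s_0\). Taking the maximum over \(x\in X\) gives \eqref{eq:local-weighted-continuity}.

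\emph{Polynomial case.} The argument is the same with the ratio \((1+D)^{q-p}\) in place of \(\E^{-(\mu-\nu)D^\theta}\). The bound \(|Y|\ge1\) lets the \(\|\cdot\|^{\pol}_{p;0}\)-bound control the tail, since the seminorm \(\|\cdot\|^{\pol,\loc}_{q;X}\) carries no support-size factor. This tail ratio tends to zero because \(q<p\).

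\emph{Approximating paths.} The same split works with the same \(D\), since the tail bound is uniform in \(n\) and \(s\). On the fixed finite head set, uniform componentwise convergence in \(s\) gives a head sum that tends to zero uniformly in \(s\).

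There is no serious obstacle. The only points needing care are that the head region is finite and that the tail estimate is uniform, and both follow directly from \eqref{eq:volume-growth} and the assumed uniform bound.
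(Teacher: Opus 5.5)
Your proof is correct and follows essentially the same argument as the paper. You split the local seminorm at diameter \(D\) and bound the tail uniformly (in \(s\) and \(n\)) by \(2M\E^{-(\mu-\nu)D^\theta}\) or \(2M(1+D)^{q-p}\) using the common bound. The finitely many sets of diameter at most \(D\) meeting \(X\) are then handled by componentwise continuity or uniform componentwise convergence. One small point: your appeal to \(|Y|\ge1\) in the polynomial case is unnecessary, since \(\|\cdot\|^{\pol}_{p;0}\) has no support-size factor either, but it does no harm.
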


\begin{proof}
Let \(B\) be a common bound in the stronger interaction norm.
For a difference of two such interactions, the contribution of sets with diameter greater than \(R\) to the weaker local seminorm is at most
\[
 2B\E^{-(\mu-\nu)R^\theta}
 \quad\text{or}\quad
 2B(1+R)^{q-p},
\]
respectively.
These bounds tend to zero uniformly in the path parameters and approximation index.
Only finitely many sets of diameter at most \(R\) meet the finite root region \(X\), by Lemma~\ref{lem:metric-space-geometry}.
Componentwise continuity or uniform componentwise convergence therefore controls the remaining finite sum.
Taking first the parameter or approximation limit and then \(R\to\infty\) proves continuity and convergence in the weaker local seminorms.
\end{proof}

We use the following parameter-uniform form of \cite[Proposition~2.4]{FermionicLR}.

\begin{proposition}
\label{prop:nonautonomous-local-limit-package}
Let \(J\subset\R\) be a compact interval and \(\cP\) a nonempty parameter set.
For \(n\in\N\cup\{\infty\}\) and \(z\in\cP\), let \((\Xi_u^{(n,z)})_{u\in J}\) be componentwise norm-continuous paths of even self-adjoint interactions.
Suppose that, for some \(p>\dvol\) and \(0\le M<\infty\),
\[
 \sup_{\substack{n\in\N\cup\{\infty\}\\z\in\cP,\ u\in J}}
       \|\Xi_u^{(n,z)}\|^{\pol}_{p;0}\le M.
\]
Each path \(\Xi=\Xi^{(n,z)}\) generates a unique strongly continuous cocycle \((\alpha_{s,t}^\Xi)_{s,t\in J}\) of parity-preserving automorphisms satisfying, for \(A\in\A_{\loc}\),
\begin{equation}
 \alpha_{s,t}^\Xi(A)
 =A+\I\int_s^t\alpha_{s,u}^\Xi(\delta_{\Xi_u}(A))\,\dd u.
 \label{eq:nonautonomous-integral-equation}
\end{equation}
If, for every nonempty \(Z\Subset\Gamma\),
\begin{equation}
 \sup_{z\in\cP}\int_J
 \|\Xi_u^{(n,z)}(Z)-\Xi_u^{(\infty,z)}(Z)\|\,\dd u
 \stackrel{n\to\infty}{\longrightarrow}0,
 \label{eq:nonautonomous-local-L1-convergence}
\end{equation}
then, for every \(A\in\A\),
\begin{equation}
 \sup_{z\in\cP}\sup_{s,t\in J}
 \|\alpha_{s,t}^{\Xi^{(n,z)}}(A)
   -\alpha_{s,t}^{\Xi^{(\infty,z)}}(A)\|\stackrel{n\to\infty}{\longrightarrow}0.
 \label{eq:nonautonomous-cocycle-local-limit}
\end{equation}
\end{proposition}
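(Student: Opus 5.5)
The plan is to build every cocycle as a limit of finite-volume inner cocycles and to track that all constants depend only on \(M\), \(p\), \(C_\Gamma\), \(\dvol\) and \(J\). This parameter uniformity is what upgrades \cite[Proposition~2.4]{FermionicLR} to the present statement.

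\textbf{Existence and uniqueness.} For finite \(\Lambda\) and a path \(\Xi=\Xi^{(n,z)}\), set \(H_\Lambda(u):=\sum_{Y\subset\Lambda}\Xi_u(Y)\). This is a norm-continuous family of bounded even self-adjoint operators. Its unitary propagator defines an inner parity-preserving cocycle \(\alpha^{\Xi,\Lambda}_{s,t}\) solving \eqref{eq:nonautonomous-integral-equation} with \(\delta_{\Xi_u}\) replaced by \([H_\Lambda(u),\cdot\,]\).
\begin{itemize}
\item The key input is the locality estimate of \cite{FermionicLR} for such cocycles under a uniform \(\|\cdot\|^{\pol}_{p;0}\)-bound with \(p>\dvol\). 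For \(A\in\A_X\) it gives an approximant of \(\alpha^{\Xi,\Lambda}_{s,t}(A)\) in \(\A_{X^{(R)}}\) with error \(C|X|\|A\|\,q(R)\). Here \(q\) is decreasing with \(q(R)\to0\), and \(C,q\) depend only on \(M,p\), the geometry and \(|J|\); they are uniform in \(\Lambda,z,n\).
\item Comparing \(\Lambda\subset\Lambda'\) by Duhamel gives the Cauchy property on local observables. This uses the same commutator estimate as in the convergence step below, with \(D_u(Z)=\Xi_u(Z)\1_{\{Z\not\subset\Lambda\}}\).
\item Density and isometry extend the limit to all of \(\A\). This yields a strongly continuous cocycle of automorphisms satisfying \eqref{eq:nonautonomous-integral-equation}.
\item For uniqueness, take any strongly continuous solution \(\tilde\alpha\) of \eqref{eq:nonautonomous-integral-equation}. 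Differentiate \(u\mapsto\tilde\alpha_{s,u}\circ\alpha^{\Xi,\Lambda}_{u,t}\). Here \(\alpha^{\Xi,\Lambda}_{u,t}(A)\) is local for \(A\in\A_\Lambda\), so the derivation may be applied to it. Compare with \(\alpha\) and let \(\Lambda\nearrow\Gamma\).
\end{itemize}

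\textbf{Convergence \eqref{eq:nonautonomous-cocycle-local-limit}.} Since all cocycles are isometric, it suffices to treat \(A\in\A_X\) with \(X\) finite, uniformly in \(z\), and then argue by density. I would work at finite \(\Lambda\), where both generators are bounded. Write \(D_u(Z):=\Xi^{(n,z)}_u(Z)-\Xi^{(\infty,z)}_u(Z)\). With the cocycle convention, differentiating \(u\mapsto\alpha^{n,\Lambda}_{s,u}\circ\alpha^{\infty,\Lambda}_{u,t}\) gives
\[
 \|\alpha^{n,\Lambda}_{s,t}(A)-\alpha^{\infty,\Lambda}_{s,t}(A)\|
 \le\int_J\sum_{Z\subset\Lambda}\|[D_u(Z),C_u]\|\,\dd u,
 \qquad C_u:=\alpha^{\infty,\Lambda}_{u,t}(A).
\]
Then let \(\Lambda\nearrow\Gamma\) using the existence step. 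The right-hand side will be bounded uniformly in \(\Lambda\), as follows.
\begin{itemize}
\item Decompose \(C_u=\sum_{k\ge0}C_{u,k}\) into shells \(C_{u,k}\in\A_{X^{(R_k)}}\) using tracial conditional expectations, with dyadic radii \(R_k\). By the locality bound, \(\|C_{u,k}\|\le C|X|\|A\|q(R_{k-1})\).
\item Then \(\sum_Z\|[D_u(Z),C_{u,k}]\|\le2\|C_{u,k}\|\sum_{Z\cap X^{(R_k)}\ne\varnothing}\|D_u(Z)\|\).
\item This inner sum is at most \(2M|X^{(R_k)}|\le 2MC_\Gamma|X|(1+R_k)^{\dvol}\).
\end{itemize}
Provided \(\sum_k(1+R_k)^{\dvol}q(R_{k-1})<\infty\), the shell series is dominated uniformly in \(n,z,\Lambda,u\). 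It then suffices to show that each fixed shell contribution tends to zero. For that, split the sets \(Z\) meeting \(X^{(R_k)}\) as follows.
\begin{itemize}
\item Sets with \(\diam Z>R'\) contribute at most \(2M C_\Gamma|X^{(R_k)}|(1+R')^{-p}\) to \(\sum_Z\|D_u(Z)\|\), which is small for large \(R'\).
\item The finitely many sets with \(\diam Z\le R'\) are controlled, after integrating in \(u\), by hypothesis \eqref{eq:nonautonomous-local-L1-convergence}, uniformly in \(z\).
\end{itemize}
Taking first \(n\to\infty\), then \(R'\to\infty\), then the tail in \(k\), proves \eqref{eq:nonautonomous-cocycle-local-limit}.

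\textbf{Main obstacle.} The delicate point is the decay rate \(q\). The argument needs \(\sum_k(1+R_k)^{\dvol}q(R_k)<\infty\), and the only input is the zeroth-moment bound with \(p>\dvol\), which only gives locality decay \((1+R)^{\dvol-p}\)-type rates. To close the sum, I would first try the lossless estimate of \cite[Theorem~2.9]{FermionicLR}, which gives \(q(R)\sim(1+R)^{-p}\). If it is not applicable at zeroth moment, I would instead use the cruder fact that \(q(R)\to0\) together with a truncation. For fixed \(\varepsilon>0\), keep only finitely many shells and bound the remainder \(C_u-\F^{\rm tr}_{X^{(R_K)}}(C_u)\) by \(\varepsilon\) via a direct Duhamel comparison with the cocycle of the truncated difference. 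This avoids needing summable decay at all.
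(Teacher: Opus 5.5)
You have a genuine gap. Your Duhamel identity at finite \(\Lambda\) is correct, but the bound you put on its integrand does not close under the stated hypotheses. Estimating \(\sum_Z\|[D_u(Z),C_{u,k}]\|\) by \(4M\|C_{u,k}\|\,|X^{(R_k)}|\) ignores where \(C_{u,k}\) actually fails to commute with \(D_u(Z)\), and it costs a volume factor \((1+R_k)^{\dvol}\). You therefore need approximation decay \(q\) faster than \((1+R)^{-\dvol}\).

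A uniform \(\|\cdot\|^{\pol}_{p;0}\)-bound with only \(p>\dvol\) does not supply that rate. A zeroth-moment Lieb--Robinson bound gives \(q(R)\sim(1+R)^{\dvol-p}\), so \(\sum_k(1+R_k)^{\dvol}q(R_{k-1})\) converges only for \(p>2\dvol\). The lossless rate of \cite[Theorem~2.9]{FermionicLR} does not help either. Wherever the paper uses it (Lemma~\ref{lem:uniform-alpha-locality}, the proof of Theorem~\ref{thm:card-stability}), its constants are controlled by an integrated \emph{first} support moment \(\|\cdot\|^{\pol}_{p;1}\), which Proposition~\ref{prop:nonautonomous-local-limit-package} does not assume.

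Your fallback does not repair this. \(C_u-\F^{\rm tr}_{X^{(R_K)}}(C_u)\) is small in norm, but its commutator with \(\sum_{Z\subset\Lambda}D_u(Z)\) involves every \(Z\) meeting its support, which can be all of \(\Lambda\). Norm smallness therefore only gives a bound of order \(M|\Lambda|\varepsilon\). A ``Duhamel comparison with the cocycle of the truncated difference'' needs exactly the missing uniform estimate on how distant terms affect the evolution of \(A\). Your existence step (Cauchy in \(\Lambda\)) and uniqueness step reuse the same estimate and inherit the gap. The paper simply takes both from \cite[Proposition~2.4]{FermionicLR}.

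The paper's proof never needs locality of the evolved observable in the \(n\)-comparison:
\begin{enumerate}
\item For \(A\in\A_X\), it compares each \(\alpha^{\Xi^{(n,z)}}_{s,t}(A)\) with the cocycle generated only by the terms \(\Xi^{(n,z)}_u(Y)\), \(Y\subset X^{(R)}\). The finite-region comparison in the proof of \cite[Proposition~2.4]{FermionicLR} makes this error small for large \(R\). Its constants depend only on \(M,p,|J|\) and the geometry, hence the smallness is uniform in \(n\) and \(z\).
\item At fixed \(R\), the two finite-region cocycles have bounded generators differing in finitely many terms. Duhamel bounds their difference by \(2\|A\|\int_J\sum_{Z\subset X^{(R)}}\|D_u(Z)\|\,\dd u\), which tends to zero uniformly in \(z\) by \eqref{eq:nonautonomous-local-L1-convergence}.
\item One then lets \(R\to\infty\) and extends to all of \(\A\) by density and isometry.
\end{enumerate}

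If you prefer to keep your own Duhamel formula, replace the shell-and-count step by the pointwise Lieb--Robinson commutator bound of \cite{NSY18,NSY19}:
\[\|[D_u(Z),C_u]\|\le C\|A\|\|D_u(Z)\|\sum_{x\in X}\sum_{y\in Z}(1+d(x,y))^{-p},\]
which holds uniformly in \(\Lambda\). This works because the \(F\)-norm of each path with \(F(r)=(1+r)^{-p}\) is bounded by its \(\|\cdot\|^{\pol}_{p;0}\)-norm, and \(p>\dvol\) gives the required convolution property. Summing over \(Z\) rooted at \(y\) yields the dominating bound \(2CM|X|\|A\|\sup_x\sum_y(1+d(x,y))^{-p}\). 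Splitting the sets \(Z\) by diameter and by distance from \(X\) then finishes the argument as you intended.
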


\begin{proof}
Existence, uniqueness, and \eqref{eq:nonautonomous-integral-equation} follow from \cite[Proposition~2.4]{FermionicLR}.
The finite-region comparison in its proof has constants depending only on the common norm bound \(M\), \(p\), \(|J|\), and the geometry, so the same estimate holds after taking the supremum over \(z\in\cP\).
For \(A\in\A_X\), use the comparison with the finite region \(X^{(R)}\).
At fixed \(R\), \eqref{eq:nonautonomous-local-L1-convergence} makes the finite sum of interaction differences tend to zero uniformly in \(z\) as \(n\to\infty\), and then \(R\to\infty\) gives \eqref{eq:nonautonomous-cocycle-local-limit}.
Approximating \(A\in\A\) by local observables extends the convergence, since all cocycles are isometric.
\end{proof}

\subsection{Gap inequalities and GNS operator cores}
\label{app:GNS-background}

Under the locality hypothesis of Lemma~\ref{lem:local-form-core}, \eqref{eq:ground-form} agrees with the standard algebraic ground-state condition \cite[Definition~5.3.18 and Proposition~5.3.19]{BR97}.
Indeed, testing on \(\1+zA\) for \(z\in\C\) gives \(\omega(\delta_\Phi(A))=0\), and applying this identity to \(A^*A\) shows that \(\omega(A^*\delta_\Phi(A))\) is real.
The graph-core property extends the inequality to \(\operatorname{Dom}(\overline\delta_\Phi)\).
The cited characterization then shows that \(\omega\) is invariant under \(\tau^\Phi\) and has a positive GNS Hamiltonian.

\begin{proposition}
\label{prop:form-spectrum}
Let \(\Phi\) generate strongly continuous dynamics \(\tau^\Phi\), and let \(\omega\) be a ground state of these dynamics.
Write \((\pi_\omega,\cH_\omega,\Omega_\omega)\) for its GNS triple and \(L_\omega\ge0\) for its GNS Hamiltonian.
Fix \(g'>0\) and suppose \(\pi_\omega(\A_{\loc})\Omega_\omega\) is a core for \(L_\omega^{1/2}\).
Then \eqref{eq:GNS-form-gap} is equivalent to the quadratic-form inequality
\begin{equation}
 L_\omega\ge
 g'\left(\1-|\Omega_\omega\rangle\langle\Omega_\omega|\right).
 \label{eq:GNS-operator-gap}
\end{equation}
When these equivalent conditions hold, \(\ker L_\omega=\C\Omega_\omega\) and \(\spec(L_\omega)\subset\{0\}\cup[g',\infty)\).
\end{proposition}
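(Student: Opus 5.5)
The plan is to identify the left side of \eqref{eq:GNS-form-gap} with the form of \(L_\omega\) on local GNS vectors, and then pass to the full form domain using the assumed core. Since \(\omega\) is a ground state of \(\tau^\Phi\), the dynamics is implemented by \(\E^{\I tL_\omega}\) with \(L_\omega\Omega_\omega=0\) and \(L_\omega\ge0\) \cite[Proposition~5.3.19]{BR97}. For \(A\in\A_{\loc}\) we have \(A\in\operatorname{Dom}(\overline\delta_\Phi)\), and differentiating \(\pi_\omega(\tau_t^\Phi(A))\Omega_\omega=\E^{\I tL_\omega}\pi_\omega(A)\Omega_\omega\) at \(t=0\) shows \(\pi_\omega(A)\Omega_\omega\in\operatorname{Dom}(L_\omega)\) with \(L_\omega\pi_\omega(A)\Omega_\omega=\pi_\omega(\delta_\Phi(A))\Omega_\omega\), up to the sign fixed by the convention in Appendix~\ref{app:nonautonomous-background}. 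Hence
\[
 \omega(A^*\delta_\Phi(A))=\langle\pi_\omega(A)\Omega_\omega,L_\omega\pi_\omega(A)\Omega_\omega\rangle=\|L_\omega^{1/2}\psi_A\|^2,
 \qquad \psi_A:=\pi_\omega(A)\Omega_\omega .
\]
This quantity is real. The right side of \eqref{eq:GNS-form-gap} equals \(g'\langle\psi_A,(\1-|\Omega_\omega\rangle\langle\Omega_\omega|)\psi_A\rangle\), since \(\|\psi_A\|^2=\omega(A^*A)\) and \(\langle\Omega_\omega,\psi_A\rangle=\omega(A)\).

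With this identification, \eqref{eq:GNS-form-gap} says that \eqref{eq:GNS-operator-gap} holds on the subspace \(\mathcal D:=\pi_\omega(\A_{\loc})\Omega_\omega\). The implication \eqref{eq:GNS-operator-gap}\(\Rightarrow\)\eqref{eq:GNS-form-gap} follows by restricting to \(\mathcal D\subset\operatorname{Dom}(L_\omega^{1/2})\).

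For the converse, take \(\psi\in\operatorname{Dom}(L_\omega^{1/2})\). By the core hypothesis there are \(\psi_n\in\mathcal D\) with \(\psi_n\to\psi\) and \(L_\omega^{1/2}\psi_n\to L_\omega^{1/2}\psi\). Both sides of the inequality are continuous in the form norm, because the right side is a bounded form. Passing to the limit therefore gives \eqref{eq:GNS-operator-gap} on the whole form domain. This closure step is the only delicate point, and it is exactly where the core assumption enters. The main work is the domain statement that local GNS vectors lie in \(\operatorname{Dom}(L_\omega)\), and I would justify it via the generator of the implementing unitary group together with \(\A_{\loc}\subset\operatorname{Dom}(\overline\delta_\Phi)\).

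The spectral conclusions then follow from \eqref{eq:GNS-operator-gap}. If \(L_\omega\psi=0\), then
\[
 0=\|L_\omega^{1/2}\psi\|^2\ge g'\|\psi-\langle\Omega_\omega,\psi\rangle\Omega_\omega\|^2,
\]
so \(\psi\in\C\Omega_\omega\), and hence \(\ker L_\omega=\C\Omega_\omega\). Since \(L_\omega\Omega_\omega=0\), the operator \(L_\omega\) commutes with \(|\Omega_\omega\rangle\langle\Omega_\omega|\) and is reduced by \(\Omega_\omega^\perp\). On \(\Omega_\omega^\perp\) the form bound gives \(L_\omega\ge g'\), so by the min-max principle the spectrum of the restriction lies in \([g',\infty)\). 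Combining the two pieces gives \(\spec(L_\omega)\subset\{0\}\cup[g',\infty)\).
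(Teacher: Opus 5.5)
Your proposal is correct and follows essentially the same route as the paper: you identify \(\omega(A^*\delta_\Phi(A))\) with \(\langle\pi_\omega(A)\Omega_\omega,L_\omega\pi_\omega(A)\Omega_\omega\rangle\) via the GNS implementation, extend the inequality from local vectors by form-norm continuity using the core hypothesis, and read off the spectral conclusions. The only cosmetic difference is that the paper passes to centered vectors \((\pi_\omega(A)-\omega(A))\Omega_\omega\) and a form core on \(\Omega_\omega^\perp\), whereas you keep uncentered vectors and absorb the centering into the bounded form of \(\1-|\Omega_\omega\rangle\langle\Omega_\omega|\); this is equally valid.
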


\begin{proof}
Differentiating the canonical GNS implementation gives
\begin{equation}
 L_\omega\pi_\omega(A)\Omega_\omega
 =\pi_\omega(\delta_\Phi(A))\Omega_\omega,
 \qquad A\in\A_{\loc}.
 \label{eq:GNS-generator-on-local-vectors}
\end{equation}
Thus \eqref{eq:GNS-form-gap} is \eqref{eq:GNS-operator-gap} restricted to the centered vectors \((\pi_\omega(A)-\omega(A))\Omega_\omega\).
By the form-core assumption and \(L_\omega^{1/2}\Omega_\omega=0\), the centered local vectors form a core for the form restricted to \(\Omega_\omega^\perp\).
Extending the inequality by continuity in the form norm gives \eqref{eq:GNS-operator-gap}, and restriction proves the converse.
The spectral theorem then gives a simple eigenvalue at zero and gap at least \(g'\) \cite[(2.20)--(2.22)]{NSY24}.
\end{proof}

\begin{lemma}
\label{lem:local-form-core}
Let \(\Phi\) be an even self-adjoint interaction with \(\|\Phi\|^{\pol}_{p;1}<\infty\) for some \(p>\dvol\), and let \(\I\delta\) be the generator of its dynamics.
Then \(\A_{\loc}\) is a graph core for \(\delta\), so \(\delta=\overline{\delta_\Phi|_{\A_{\loc}}}\).
If \(\omega\) is a ground state of \(\Phi\), then \(\pi_\omega(\A_{\loc})\Omega_\omega\) is an operator core, and hence a form core, for the GNS Hamiltonian \(L_\omega\).
\end{lemma}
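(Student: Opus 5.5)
The plan has two parts: the graph-core claim for \(\delta\), and then the operator core for \(L_\omega\).

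\textbf{Graph core for \(\delta\).} I would first show that \(\A_{\loc}\) is invariant under a family of local approximations of \(\tau_t^\Phi\). The goal is that, for \(A\in\A_{\loc}\), \(\tau_t^\Phi(A)\) lies in the graph-norm closure of \(\A_{\loc}\).

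Fix \(A\in\A_X\). Let \(\Phi_{X^{(R)}}\) be the restriction of \(\Phi\) to subsets of \(X^{(R)}\), and let \(\tau_t^{R}\) be its inner dynamics, generated by the bounded Hamiltonian \(H_R=\sum_{Y\subset X^{(R)}}\Phi(Y)\). Then \(\tau_t^R(A)\in\A_{X^{(R)}}\) is local. The finite-region comparison behind Proposition~\ref{prop:nonautonomous-local-limit-package} (from \cite[Proposition~2.4]{FermionicLR}) gives \(\tau_t^R(A)\to\tau_t^\Phi(A)\) in norm as \(R\to\infty\), uniformly for \(t\) in compact sets.

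For graph-norm convergence I also need \(\delta_\Phi(\tau^R_t(A))\to\delta(\tau_t^\Phi(A))=\tau_t^\Phi(\delta_\Phi(A))\). I would write
\[
 \delta_\Phi(\tau_t^R(A))=\tau_t^R(\delta_{\Phi_{X^{(R)}}}(A))+\sum_{Y\not\subset X^{(R)}}[\Phi(Y),\tau_t^R(A)].
\]
The first term converges to \(\tau_t^\Phi(\delta_\Phi(A))\), because \(\delta_{\Phi_{X^{(R)}}}(A)\to\delta_\Phi(A)\) in norm. The boundary sum is bounded using locality of \(\tau_t^R(A)\) relative to \(X^{(R/2)}\). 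The decomposition \(\tau_t^R(A)=B+(\tau^R_t(A)-B)\) with \(B\in\A_{X^{(R/2)}}\) leaves two contributions: terms meeting \(X^{(R/2)}\) but leaving \(X^{(R)}\), controlled by the tail \((1+R/2)^{-p}\) weighted by \(|X^{(R/2)}|\lesssim (1+R)^{\dvol}\); and all other terms, controlled by the approximation error times \(\|\Phi\|^{\pol}_{p;1}\) summed over sites of \(X^{(R)}\).

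This step is the main obstacle, because the volume factor \((1+R)^{\dvol}\) has to be beaten by polynomial decay of order \(p>\dvol\). The first support moment in \(\|\Phi\|^{\pol}_{p;1}\) is used here, and I would invoke the lossless propagation bound \cite[Theorem~2.9]{FermionicLR} with weight \((1+R)^p\) and the \(|X|\)-linear prefactor. If the naive count fails, I would instead use the regrouped representatives \(\cT^\Phi_{0,t;X}A\), which carry rooted weighted bounds and make the commutator sum absolutely convergent.

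Given this, a Laplace-transform (resolvent) argument finishes the graph core. For \(A\in\operatorname{Dom}(\delta)\) set \((1-\I\epsilon\delta)^{-1}A=\epsilon^{-1}\int_0^\infty e^{-t/\epsilon}\tau_t(A)\,\dd t\). Approximating \(A\) by local elements and the integrand by the \(\tau^R_t\)-orbits, with Riemann sums, shows that \(\operatorname{ran}(1-\I\epsilon\delta_\Phi|_{\A_{\loc}})\) is dense. Equivalently, one can invoke the standard core theorem for dense, invariant-up-to-graph-limit domains \cite[Corollary~3.1.7]{BR97}.

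\textbf{Operator core for \(L_\omega\).} Since \(\omega\) is a ground state, it is \(\tau^\Phi\)-invariant, and \(e^{\I tL_\omega}\pi_\omega(A)\Omega_\omega=\pi_\omega(\tau_t(A))\Omega_\omega\). Hence \(D_0:=\pi_\omega(\operatorname{Dom}\delta)\Omega_\omega\) is invariant under \(e^{\I tL_\omega}\), is dense by cyclicity, and lies in \(\operatorname{Dom}(L_\omega)\) with \(L_\omega\pi_\omega(A)\Omega_\omega=\pi_\omega(\delta(A))\Omega_\omega\) as in \eqref{eq:GNS-generator-on-local-vectors}. By Nelson's invariant-domain core theorem, \(D_0\) is an operator core for \(L_\omega\).

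The bound \(\|L_\omega\pi_\omega(A)\Omega_\omega\|\le\|\delta(A)\|\) shows that graph-norm approximation of \(A\in\operatorname{Dom}\delta\) by local \(A_n\) (from the first part) yields \(L_\omega\)-graph approximation of the vectors. So \(\pi_\omega(\A_{\loc})\Omega_\omega\) is an operator core. Every operator core is also a form core for \(L_\omega^{1/2}\).
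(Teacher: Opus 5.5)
Your argument is correct. Its skeleton matches the paper's: finite-volume evolutions of a local observable converge to \(\tau_t^\Phi(A)\) in the graph norm of \(\delta\), an abstract core criterion turns this into the graph-core property, and the GNS statement follows from \(L_\omega\pi_\omega(A)\Omega_\omega=\pi_\omega(\delta(A))\Omega_\omega\). The ingredients differ at each stage, as follows.

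For graph convergence, you split \(\delta_\Phi(\tau_t^R(A))\) into the bulk term \(\tau_t^R(\delta_{\Phi_{X^{(R)}}}(A))\) and a boundary sum, which you control with the approximation bound of \cite[Theorem~2.9]{FermionicLR}. Both boundary pieces are \(O((1+R)^{\dvol-p})\), so your naive count already suffices and the fallback to the representatives \(\cT^\Phi_{0,t;X}A\) is unnecessary. The paper instead proves a commutator bound \(\|[\Phi(Y),A_{\Omega,t}]\|\le C_T|X|\|A\|\|\Phi(Y)\|\sum_{x\in X}\sum_{y\in Y}(1+d(x,y))^{-p}\), uniform in the finite region \(\Omega\). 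It then concludes by dominated convergence and closedness of \(\delta\), which gives no rate but avoids the bulk/boundary bookkeeping.

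For the core criterion, the paper smooths in time with \(f\in C_c^\infty(\R)\), using that \(B\mapsto\int f(t)\tau_t^\Phi(B)\,\dd t\) is bounded from the norm to the graph norm, and then applies an approximate identity. Your invariant-domain route works once you apply \cite[Corollary~3.1.7]{BR97} to the graph closure of \(\A_{\loc}\) in \(\operatorname{Dom}(\delta)\). That closure is \(\tau^\Phi\)-invariant because each \(\tau_t^\Phi\) is a graph-norm isometry.

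For the GNS core, you apply Nelson's invariant-domain theorem to \(\pi_\omega(\operatorname{Dom}\delta)\Omega_\omega\) and transfer with \(\|L_\omega\pi_\omega(A)\Omega_\omega\|\le\|\delta(A)\|\). The paper instead observes that the graph core makes \((\delta+\I)\A_{\loc}\) dense, hence \((L_\omega+\I)\pi_\omega(\A_{\loc})\Omega_\omega\) dense by cyclicity. It then applies the bounded map \((L_\omega+\I)^{-1}\) into the graph-norm space, which is slightly shorter.

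Both routes need \(\A_{\loc}\subset\operatorname{Dom}(\delta)\) with \(\delta=\delta_\Phi\) there. You use this tacitly; the paper reads it off from \eqref{eq:nonautonomous-integral-equation}.
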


\begin{proof}
For \(A\in\A_X\) and finite \(\Omega\supset X\), the local observables \(A_{\Omega,t}:=\tau_t^{\Phi_\Omega}(A)\) converge in norm to \(\tau_t^\Phi(A)\), uniformly on compact time intervals, by Proposition~\ref{prop:nonautonomous-local-limit-package}.
Applying the commutator and weighted matrix estimates of \cite[Lemma~3.1, (38)--(39), and the proof of Theorem~2.9]{FermionicLR} to \(\Phi_\Omega\) gives
\[
 \|[\Phi(Y),A_{\Omega,t}]\|
 \le C_T|X|\|A\|\|\Phi(Y)\|
       \sum_{x\in X}\sum_{y\in Y}(1+d(x,y))^{-p},
 \qquad |t|\le T,
\]
independently of \(\Omega\), and this bound is summable in \(Y\) because
\[
 \sum_Y\|\Phi(Y)\|\sum_{x\in X}\sum_{y\in Y}(1+d(x,y))^{-p}
 \le |X|\sup_x\sum_y(1+d(x,y))^{-p}\|\Phi\|^{\pol}_{0;0}<\infty.
\]
By \eqref{eq:nonautonomous-integral-equation}, every local observable \(B\) belongs to \(\operatorname{Dom}(\delta)\) and satisfies \(\delta(B)=\delta_\Phi(B)\).
Dominated convergence and closedness of \(\delta\) therefore give \(A_{\Omega,t}\to\tau_t^\Phi(A)\) in graph norm.
For \(f\in C_c^\infty(\R)\), the local observables \(\int f(t)A_{\Omega,t}\,\dd t\) converge in graph norm to \(\int f(t)\tau_t^\Phi(A)\,\dd t\).
The map \(B\mapsto\int f(t)\tau_t^\Phi(B)\,\mathrm dt\) is bounded from the algebra norm to the graph norm, with operator norm at most \(\|f\|_1+\|f'\|_1\).
Since local observables are norm-dense, these convolutions of any \(B\in\A\) can also be approximated by local observables in graph norm.
Applying a time approximate identity to \(B\in\operatorname{Dom}(\delta)\) then gives approximation of \(B\) by local observables in graph norm.

Equation~\eqref{eq:GNS-generator-on-local-vectors} extends to \(A\in\operatorname{Dom}(\delta)\) by graph approximation.
Since \(\I\delta\) generates an isometric group, \(\delta+\I\) has a bounded inverse.
The graph-core property therefore makes \((\delta+\I)\A_{\loc}\) norm-dense in \(\A\).
The GNS identity \eqref{eq:GNS-generator-on-local-vectors} and cyclicity show that \((L_\omega+\I)\pi_\omega(\A_{\loc})\Omega_\omega\) is dense in \(\cH_\omega\).
Applying \((L_\omega+\I)^{-1}\) gives graph-norm density in \(\operatorname{Dom}(L_\omega)\).
\end{proof}

\subsection{Weighted functional calculus for the one-particle operator}
\label{app:free-functional-calculus}

The proof of the following lemma is the weighted-conjugation argument of Combes and Thomas \cite{CombesThomas73}, carried out here to obtain a two-sided weighted Schur bound with constants controlled on a general polynomial-growth metric space.

\begin{lemma}
\label{lem:weighted-free-functional-calculus}
Under \eqref{eq:h-locality} and \eqref{eq:one-particle-gap}, there is \(\mu_1\in(0,\mu_h)\), depending only on the one-particle locality and gap data, such that
\begin{equation}
 \|\sgnh\|_{\mu_1,\theta_{\rm in}}^{\rm Sch}
 +g^{-1}\bigl\||h|\bigr\|_{\mu_1,\theta_{\rm in}}^{\rm Sch}
 \le C(C_\Gamma,\dvol,r,C_h/g,\mu_h,\theta_{\rm in}).
 \label{eq:free-functional-locality}
\end{equation}
\end{lemma}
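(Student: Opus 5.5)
We will write \(\sgnh\) and \(|h|\) through resolvent integrals and control the resolvent entries by a Combes--Thomas conjugation in a bounded weight. Since the spectrum of \(h\) avoids \((-g,g)\) and \(\|h\|\le C_h\), choose the contour \(\mathcal C\) as the boundary of the rectangle with vertical sides \(\operatorname{Re}z=\pm g/2\) and \(\operatorname{Re}z=C_h+g\), and horizontal sides \(\operatorname{Im}z=\pm g\). Taking the two rectangles \(\mathcal C_\pm\), one encircling \(\spec(h)\cap[g,C_h]\) and the other \(\spec(h)\cap[-C_h,-g]\), we have
\[
\sgnh=\frac1{2\pi\I}\Bigl(\oint_{\mathcal C_+}-\oint_{\mathcal C_-}\Bigr)(z-h)^{-1}\,\dd z,\qquad |h|=\frac1{2\pi\I}\Bigl(\oint_{\mathcal C_+}-\oint_{\mathcal C_-}\Bigr)z(z-h)^{-1}\,\dd z .
\]
Every point of these contours lies at distance at least \(g/2\) from \(\spec(h)\). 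After rescaling \(h\mapsto h/g\), the contour length is bounded by a constant times \(C_h/g+1\). It therefore suffices to prove a uniform weighted Schur bound for \((z-h)^{-1}\), \(z\in\mathcal C_\pm\), at some rate \(\mu_1\).

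\textbf{Combes--Thomas step.} Fix \(x_0\in\Gamma\). For a cutoff \(L>0\), let \(\phi(y):=\mu_1\min\{d(x_0,y),L\}^{\theta_{\rm in}}\) and \(e^{\phi}\) the corresponding diagonal multiplication operator. The cutoff keeps \(e^{\pm\phi}\) bounded, so the conjugation is legitimate. Since \(t\mapsto t^{\theta_{\rm in}}\) is subadditive and truncation is \(1\)-Lipschitz, we get \(|\phi(y)-\phi(y')|\le\mu_1 d(y,y')^{\theta_{\rm in}}\). Then \(h_\phi:=e^{\phi}he^{-\phi}\) satisfies
\[
\|h_\phi-h\|\le\|h_\phi-h\|^{\rm Sch}_{0}\le\sup_y\sum_{y'}\bigl(e^{\mu_1 d(y,y')^{\theta_{\rm in}}}-1\bigr)\|h_{yy'}\|,
\]
where the first inequality is the Schur test. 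Using \(e^{s}-1\le (s/\mu_h)\,e^{\mu_h s/\mu_1}\cdot\mu_1\), or more simply \(e^{\mu_1 t}-1\le(\mu_1/\mu_h)e^{\mu_h t}\) for \(\mu_1\le\mu_h\), this is at most \((\mu_1/\mu_h)C_h\). Choosing \(\mu_1:=\mu_h g/(4C_h)\) gives \(\|h_\phi-h\|\le g/4\), so \(\|(z-h_\phi)^{-1}\|\le 4/g\) on the contour.

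\textbf{From operator norm to Schur norm.} This is the main obstacle. Conjugation controls only the operator norm, which bounds \(\sup_{x,y}e^{\mu_1 d^{\theta_{\rm in}}}\|((z-h)^{-1})_{xy}\|\) uniformly in \(L\), and hence pointwise decay of the entries at rate \(\mu_1\). To pass to the Schur norm, we sacrifice half of the rate and use volume growth, namely that \(\sup_x\sum_y e^{-\frac{\mu_1}2 d(x,y)^{\theta_{\rm in}}}\) is bounded by a constant depending only on \(C_\Gamma,\dvol,\mu_1,\theta_{\rm in}\). This gives \(\|(z-h)^{-1}\|^{\rm Sch}_{\mu_1/2,\theta_{\rm in}}\le C/g\), where the block norms carry an \(r\)-dependence through the fibre dimension \(2r\). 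Rename \(\mu_1/2\) as \(\mu_1\).

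Integrate over the contour. Its length is \(O(C_h+g)\), the integrand \(z\) is bounded by \(O(C_h+g)\), and the resolvent bound is \(C/g\). This yields \(\|\sgnh\|^{\rm Sch}\lesssim (C_h+g)/g\) and \(g^{-1}\||h|\|^{\rm Sch}\lesssim (C_h+g)^2/g^2\). Both depend only on \(C_h/g\) and on the listed data, which is exactly the dependence claimed in \eqref{eq:free-functional-locality}.
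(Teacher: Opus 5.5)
Your proposal is correct and follows essentially the paper's argument: Combes--Thomas conjugation with capped weights, a Neumann series for the conjugated resolvent on a rectangular contour at distance at least $g/2$ from the spectrum, pointwise decay of the resolvent entries turned into a Schur bound by halving the rate and using volume growth, and then the Riesz contour integral. The remaining differences are minor:
\begin{itemize}
\item You treat $\theta_{\rm in}<1$ directly through subadditivity of $t\mapsto t^{\theta_{\rm in}}$, whereas the paper passes to the metric $d^{\theta_{\rm in}}$.
\item You write $|h|$ as its own contour integral, instead of using $|h|=h\sgnh$ and submultiplicativity of the weighted Schur norm.
\item You make $\mu_1$ explicit via $e^{\mu_1t}-1\le(\mu_1/\mu_h)e^{\mu_h t}$. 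The first inequality you wrote before ``or more simply'' is false for small $s$ when $\mu_1<\mu_h$, but it is not needed.
\end{itemize}
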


\begin{proof}
First let \(\theta_{\rm in}=1\), put \(P:=\1_{(-\infty,0)}(h)\), and choose the positively oriented contour
\[
 \cC:=\partial\bigl([-C_h-g,-g/2]+\I[-g,g]\bigr),
 \qquad
 \dist_{\cC}:=\inf_{z\in\cC}\dist(z,\spec(h))\ge g/2.
\]
For \(0<\mu<\mu_h\) and \(L\in\N\), the bounded multiplication weights \(D_{x,\mu,L}(y):=\E^{\mu\min\{d(x,y),L\}}\) satisfy the Combes--Thomas conjugation bound \cite{CombesThomas73}
\begin{equation}
 \sup_{x\in\Gamma,\,L\in\N}\|D_{x,\mu,L}^{-1}hD_{x,\mu,L}-h\|
 \le\varepsilon_h(\mu),
 \qquad
 \varepsilon_h(\mu):=C_h\sup_{\rho\ge0}
       (\E^{\mu\rho}-1)\E^{-\mu_h\rho}.
 \label{eq:capped-weight-error}
\end{equation}
Choose \(0<\mu_1<\mu_h/2\) so that \(\varepsilon_h(2\mu_1)<\dist_{\cC}/2\), which is possible since \(\varepsilon_h(\mu)\stackrel{\mu\searrow0}{\longrightarrow}0\).
A Neumann series for the conjugated resolvent gives
\[
 \|D_{x,2\mu_1,L}^{-1}(z-h)^{-1}D_{x,2\mu_1,L}\|
 \le 2/\dist_{\cC},
 \qquad z\in\cC.
\]
Letting \(L\to\infty\) gives \(\|(z-h)^{-1}_{xy}\|\le2\dist_{\cC}^{-1}\E^{-2\mu_1d(x,y)}\).
Multiplying by \(\E^{\mu_1d(x,y)}\) and summing over \(x\) or \(y\) gives
\[
 \sup_{z\in\cC}\|(z-h)^{-1}\|_{\mu_1}^{\rm Sch}
 \le\frac{2}{\dist_{\cC}}\sup_x\sum_y\E^{-\mu_1d(x,y)}<\infty.
\]
The contour formula for \(P\), together with \(\sgnh=\1-2P\), \(|h|=h\sgnh\), and submultiplicativity of the weighted Schur norm, proves \eqref{eq:free-functional-locality}.
The choice of \(\mu_1\) and the ratio \(\operatorname{length}(\cC)/\dist_{\cC}\) depend only on \(C_h/g\) and \(\mu_h\).
Volume growth bounds \(\sup_x\sum_y\E^{-\mu_1d(x,y)}\) in terms of \(C_\Gamma,\dvol,\mu_1\).
For \(0<\theta_{\rm in}<1\), apply the argument in the metric \(\rho=d^{\theta_{\rm in}}\), which has polynomial volume growth with exponent \(\dvol/\theta_{\rm in}\).
The exponential Schur norms for \(\rho\) equal the \(\theta_{\rm in}\)-weighted Schur norms for \(d\).
\end{proof}

Polynomial weights also allow functional calculus without loss of the spatial exponent.
We use the modified differential-norm method of Sun \cite{Sun07}, as discussed in \cite[Section~5]{GrochenigKlotz13}.
We prove \eqref{eq:polynomial-Schur-square} for operators on \(\Hsp_{\mathrm M}\), then apply it to the sign-function series.

\begin{lemma}
\label{lem:polynomial-free-functional-calculus}
Assume \eqref{eq:volume-growth} with \(\dvol>0\), and let \(p>0\).
Let \(h=h^*\) on \(\Hsp_{\mathrm M}\) satisfy \eqref{eq:one-particle-gap} and \(C_h:=\|h\|^{\mathrm{Sch,pol}}_p<\infty\).
Then
\begin{equation}
 \|\operatorname{sgn}(h)\|^{\mathrm{Sch,pol}}_p
 +g^{-1}\bigl\||h|\bigr\|^{\mathrm{Sch,pol}}_p
 \le C(p,\dvol,C_\Gamma,r,C_h/g).
 \label{eq:polynomial-free-functional-locality}
\end{equation}
\end{lemma}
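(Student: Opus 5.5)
The plan is to reduce everything to a polynomially weighted Schur estimate for the resolvent of \(h\) on a contour, and then integrate. As in Lemma~\ref{lem:weighted-free-functional-calculus}, write \(P:=\1_{(-\infty,0)}(h)\). Then
\[
 \sgnh=\1-2P,\qquad P=\frac1{2\pi\I}\oint_{\cC}(z-h)^{-1}\,\dd z,\qquad |h|=h\,\sgnh,
\]
where \(\cC\) is the same rectangle around \([-C_h-g,-g/2]\), at distance at least \(g/2\) from \(\spec(h)\). The weight \(v(r)=(1+r)^p\) satisfies \(v(r+s)\le v(r)v(s)\). Hence the polynomial Schur norm \(\|\cdot\|^{\mathrm{Sch,pol}}_p\) is submultiplicative and defines a Banach algebra \(\mathcal S_p\) containing \(h\). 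The bound on \(|h|\) then follows from the bound on \(\sgnh\) by submultiplicativity. Everything therefore reduces to showing
\[
 \sup_{z\in\cC}\|(z-h)^{-1}\|^{\mathrm{Sch,pol}}_p\le C(p,\dvol,C_\Gamma,r,C_h/g)/g,
\]
with a constant independent of \(z\). After rescaling \(h\mapsto h/g\) we may assume \(g=1\).

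The difficulty is that a polynomial weight admits no Combes--Thomas conjugation, and a Neumann series in \(\mathcal S_p\) would need \(\|h\|^{\mathrm{Sch,pol}}_p\) small. I would instead use Sun's differential-norm method, which gives inverse-closedness of \(\mathcal S_p\) in \(\cB(\Hsp_{\mathrm M})\) with quantitative bounds. Write \(\|A\|_0:=\|A\|^{\mathrm{Sch,pol}}_0\) for the unweighted Schur norm. The key estimate is
\begin{equation}
 \|A^2\|^{\mathrm{Sch,pol}}_p\le C\,\|A\|^{\mathrm{Sch,pol}}_p\,\|A\|_{\rm op}^{\,\vartheta}\,\|A\|_p^{\,1-\vartheta},
 \qquad \vartheta=\vartheta(p,\dvol)\in(0,1],
 \label{eq:polynomial-Schur-square}
\end{equation}
which I would prove in this form or a close variant. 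To prove it, split \((1+d(x,y))^p\le 2^p\bigl((1+d(x,z))^p+(1+d(z,y))^p\bigr)\) in \((A^2)_{xy}=\sum_z A_{xz}A_{zy}\). This bounds \(\|A^2\|_p\) by \(C\|A\|_p\|A\|_0\). Next, control \(\|A\|_0\) by interpolation. Cut \(A\) at distance \(R\). The far part is bounded by \((1+R)^{-p}\|A\|_p\). For the near part, a row has at most \(C_\Gamma(1+R)^{\dvol}\) entries, each at most \(\|A\|_{\rm op}\), so Cauchy--Schwarz on the row gives \(\sum_{d\le R}\|A_{xy}\|\le C(1+R)^{\dvol/2}\|A\|_{\rm op}\). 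Optimizing over \(R\) yields \(\|A\|_0\le C\|A\|_{\rm op}^{\vartheta}\|A\|_p^{1-\vartheta}\) with \(\vartheta=2p/(2p+\dvol)\), which works for every \(p>0\).

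Iterating \eqref{eq:polynomial-Schur-square} along powers \(A^{2^k}\) gives the Brandenburg/Sun spectral-radius estimate
\[
 \|A^n\|^{\mathrm{Sch,pol}}_p\le C\,n^{\gamma}\,\|A\|_{\rm op}^{\,n-c}\,(\|A\|^{\mathrm{Sch,pol}}_p)^{c}
\]
for some \(\gamma,c\) depending on \(p,\dvol\). Now write \((z-h)^{-1}\) as a Neumann series in a contraction, namely
\[
 (z-h)^{-1}=\kappa\,\bigl(\1-(\1-\kappa(z-h)^*(z-h))\bigr)^{-1}(z-h)^*,
\]
with \(\kappa\) chosen so that \(A:=\1-\kappa|z-h|^2\) has \(\|A\|_{\rm op}\le q<1\) and \(q=q(C_h/g)\). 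Here I use \(\dist(z,\spec h)\ge1/2\) and \(\|h\|\le C_h\). Then \(\|A\|^{\mathrm{Sch,pol}}_p\) is bounded by a polynomial in \(C_h\) and \(|z|\). The polynomial-in-\(n\) growth against \(q^n\) makes the series converge in \(\mathcal S_p\), uniformly for \(z\in\cC\). Integrating over \(\cC\), whose length is of order \(C_h/g\), completes the proof. The main obstacle I expect is the bookkeeping in the iteration of \eqref{eq:polynomial-Schur-square}: the constants must depend only on \(p,\dvol,C_\Gamma,r\), and the matrix blocks of size \(2r\) contribute only factors of \(r\).
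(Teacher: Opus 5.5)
Your argument is correct, and it runs on the same engine as the paper's proof. Both use Sun's squaring inequality (your exponent $2p/(2p+\dvol)$ is the paper's $1-\vartheta$), iterate it dyadically for a self-adjoint operator of norm at most $q<1$, and sum a Neumann-type series.

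The difference is the outer layer. You go through the Riesz projection and write each resolvent as $\kappa(\1-A_z)^{-1}(z-h)^*$ with $A_z=\1-\kappa(z-h)^*(z-h)$. This forces you to track uniformity in $z\in\cC$. With $\kappa\sim(C_h+g)^{-2}$ you get $\|A_z\|\le q=1-c(g/C_h)^2$ and $\|A_z\|_{\cS_p}/q$ bounded, so it works. In exchange you get uniform $\cS_p$-bounds for all resolvents on the contour, hence control of the whole holomorphic functional calculus of $h$.

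The paper never forms resolvents. It writes $\operatorname{sgn}(h)=\frac{h}{2C_h}(\1-\mathsf b)^{-1/2}$ with $\mathsf b=\1-h^2/(4C_h^2)$ and $\frac34\1\le\mathsf b\le\mathsf q\1$, then sums the binomial series. Its coefficients are at most one and its base satisfies $\|\mathsf b\|_{\cS_p}/\mathsf q\le 5/3$. One parameter-free series thus handles exactly the two functions needed.

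One step needs correcting. Iterating your squaring inequality does not give polynomial growth $n^\gamma$. Set $a_k:=\|A^{2^k}\|_{\cS_p}/q^{2^k}$ and use $\|A^{2^k}\|\le q^{2^k}$. Your inequality then gives $a_{k+1}\le D\,a_k^{2-\vartheta}$ (with your normalization of $\vartheta$). After a binary decomposition of $n$ this yields only the subexponential bound $\|A^n\|_{\cS_p}\le q^n\exp\bigl(Cn^{\log_2(2-\vartheta)}\bigr)$, of the same type as \eqref{eq:polynomial-Schur-powers}. Since $\log_2(2-\vartheta)<1$, this is still summable against $q^n$, so your conclusion is unaffected; only the stated intermediate estimate is an overclaim.

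A smaller point concerns the interpolation step. The factor $(1+R)^{\dvol/2}$ comes from Cauchy--Schwarz combined with the row bound $\sum_y\|A_{xy}\|^2\le 2r\|A\|^2$, which uses the finite block dimension. The entrywise bound $\|A_{xy}\|\le\|A\|$ alone only gives $(1+R)^{\dvol}$. That would also suffice, just with a smaller $\vartheta$.
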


\begin{proof}
For \(q\in\{0,p\}\), let \(\cS_q\) be the space of bounded operators \(k\) on \(\Hsp_{\mathrm M}\) with finite weighted Schur norm \(\|k\|^{\mathrm{Sch,pol}}_q\), and write \(\|k\|_{\cS_q}:=\|k\|^{\mathrm{Sch,pol}}_q\).
The space \(\cS_q\) is complete.
The triangle inequality for the metric gives
\begin{align}
 \|k\ell\|_{\cS_p}
 &\le\|k\|_{\cS_p}\|\ell\|_{\cS_p},
 \label{eq:polynomial-Schur-product}\\
 \|k\ell\|_{\cS_p}
 &\le 2^{\max\{p-1,0\}}\bigl(\|k\|_{\cS_p}\|\ell\|_{\cS_0}
             +\|k\|_{\cS_0}\|\ell\|_{\cS_p}\bigr).
 \label{eq:polynomial-Schur-differential-product}
\end{align}
Indeed, \(1+d(x,y)\le(1+d(x,z))(1+d(z,y))\) proves the first bound.
The second follows from \((1+d(x,y))^p\le 2^{\max\{p-1,0\}}((1+d(x,z))^p+(1+d(z,y))^p)\) by summing the matrix product in both Schur directions.

The finite block dimension gives
\[
 \sum_y\|k_{xy}\|^2
 \le\operatorname{Tr}_{\C^{2r}}((kk^*)_{xx})\le2r\|k\|^2\,,\qquad
 \sum_x\|k_{xy}\|^2
 \le\operatorname{Tr}_{\C^{2r}}((k^*k)_{yy})\le2r\|k\|^2.
\]
Split each row and column at distance \(R\ge1\).
Cauchy--Schwarz bounds the row sum over \(d(x,y)\le R\) by \(|\ball_R(x)|^{1/2}(\sum_y\|k_{xy}\|^2)^{1/2}\), and the remaining sum is at most \((1+R)^{-p}\|k\|_{\cS_p}\).
Applying the column estimate in the same way and using volume growth gives
\begin{equation}
 \|k\|_{\cS_0}
 \le C R^{\dvol/2}\|k\|
       +R^{-p}\|k\|_{\cS_p}.
 \label{eq:polynomial-Schur-interpolation-split}
\end{equation}
For \(k\ne0\), choose \(R:=(\|k\|_{\cS_p}/\|k\|)^{1/(p+\dvol/2)}\).
By the Schur test, \(R\ge1\), so
\[
 \|k\|_{\cS_0}
 \le C\|k\|_{\cS_p}^{\vartheta}\|k\|^{1-\vartheta},
 \qquad \vartheta:=\frac{\dvol}{2p+\dvol}\in(0,1).
\]
Combining this with \eqref{eq:polynomial-Schur-differential-product} gives, for a constant \(D_p\ge1\) depending only on \(p,\dvol,C_\Gamma,r\),
\begin{equation}
 \|k^2\|_{\cS_p}
 \le D_p\|k\|_{\cS_p}^{1+\vartheta}\|k\|^{1-\vartheta}.
 \label{eq:polynomial-Schur-square}
\end{equation}

Set \(\mathsf b:=\1-h^2/(4C_h^2)\) and \(\mathsf q:=1-g^2/(4C_h^2)\).
Since \(g\le\|h\|\le C_h\),
\[
 \tfrac34\1\le \mathsf b\le \mathsf q\1,\qquad
 \tfrac34\le \mathsf q<1,\qquad \|\mathsf b\|_{\cS_p}\le\tfrac54.
\]
Using \(\|\mathsf b^{2^j}\|\le \mathsf q^{2^j}\) in \eqref{eq:polynomial-Schur-square} gives the recurrence
\[
 \mathsf q^{-2^{j+1}}\|\mathsf b^{2^{j+1}}\|_{\cS_p}
 \le D_p\left(\mathsf q^{-2^j}\|\mathsf b^{2^j}\|_{\cS_p}\right)^{1+\vartheta}.
\]
Induction therefore yields
\begin{equation}
 \mathsf q^{-2^j}\|\mathsf b^{2^j}\|_{\cS_p}
 \le D_p^{((1+\vartheta)^j-1)/\vartheta}
       \left(\frac{\|\mathsf b\|_{\cS_p}}{\mathsf q}\right)^{(1+\vartheta)^j}
 \le\exp\bigl(C(1+\vartheta)^j\bigr).
 \label{eq:polynomial-Schur-dyadic-powers}
\end{equation}
The constant in the last bound depends only on \(p,\dvol,C_\Gamma,r\).
For \(n\ge1\), write \(n=\sum_{j=0}^J\varepsilon_j2^j\), with \(\varepsilon_j\in\{0,1\}\) and \(2^J\le n\).
Submultiplicativity and \(\sum_{j=0}^J(1+\vartheta)^j\le (1+\vartheta)n^{\log_2(1+\vartheta)}/\vartheta\) give
\begin{equation}
 \|\mathsf b^n\|_{\cS_p}
 \le \mathsf q^n\exp(Cn^{\log_2(1+\vartheta)}).
 \label{eq:polynomial-Schur-powers}
\end{equation}
The series
\begin{equation}
 \frac h{2C_h}\sum_{n=0}^\infty
       \frac{\binom{2n}{n}}{4^n}\mathsf b^n
 \label{eq:polynomial-sign-series}
\end{equation}
converges absolutely in \(\cS_p\): its scalar coefficients are at most one, and \(\log_2(1+\vartheta)<1\) gives \(\sum_n \mathsf q^n\exp(Cn^{\log_2(1+\vartheta)})<\infty\).
In operator norm, the binomial series equals \((h/(2C_h))(\1-\mathsf b)^{-1/2}=\operatorname{sgn}(h)\).
Its weighted norm is bounded in terms of the data in \eqref{eq:polynomial-free-functional-locality}, since \(\mathsf q=1-(g/(2C_h))^2\).
Finally, \(|h|=h\operatorname{sgn}(h)\) and \eqref{eq:polynomial-Schur-product} give \eqref{eq:polynomial-free-functional-locality}.
\end{proof}

\section*{Acknowledgments}

The author thanks Lennart Becker, Marius Lemm, Marius Wesle, and Tom Wessel for many helpful remarks and discussions.
He thanks Domenico Monaco and Gianluca Panati at Sapienza University of Rome and Massimo Moscolari at Politecnico di Milano for their hospitality during his two-month sabbatical visits to their respective institutions.
This work was supported by the Deutsche Forschungsgemeinschaft (DFG, German Research Foundation) -- Project numbers 470903074 and 465199066.

\section*{Declaration on the use of generative AI}

During the preparation of this manuscript, the author used OpenAI's ChatGPT Pro (GPT-5.6 Sol) to assist in exploring and critically examining mathematical arguments and literature.
The final version was checked for consistency and polished using GPT-6 Astra.
The author reviewed and revised all AI-assisted material.
The author takes full responsibility for the content of the manuscript.

\end{document}